\newif\ifStat
\newif\ifNote
\newif\ifComments
\newif\ifThmitalic
		\newcommand{\Tnote}[1]{{\color{blue} [Tselil: #1]}}
		\newcommand{\kz}[1]{\noindent{{\color{red}\textbf{\#\#\# Kangjie:} \textsf{#1} \#\#\#}}}
		\newcommand{\RMK}[1]{{\color{blue}{[Comment: #1]}}}
		\newcommand{\TODO}[1]{{\color{red}{[#1]}}}
		\newcommand{\rmk}[1]{{\color{blue}{[#1]}}}
		\newcommand{\kz}[1]{}
		\newcommand{\Tnote}[1]{}
		\newcommand{\RMK}[1]{}
		\newcommand{\TODO}[1]{}
		\newcommand{\rmk}[1]{}
\renewcommand{\hat}{\widehat}
\renewcommand{\tilde}{\widetilde}
\renewcommand{\bar}{\overline}
\renewcommand{\top}{{\sf T}}
\newcommand{\iidsim}{\overset{\scriptsize\iid}{\sim}}
\newcommand*{\rom}[1]{\expandafter\@slowromancap\romannumeral #1@}
\newcommand{\<}{\langle}
\renewcommand{\>}{\rangle}
\newcommand{\cov}{\mathrm{Cov}}
\newcommand{\var}{\mathrm{Var}}
\newcommand{\op}{\mathrm{op}}
\newcommand{\iid}{\mathrm{i.i.d.}}
\newcommand{\beq}{\begin{equation}}
\newcommand{\eeq}{\end{equation}}
\newcommand{\poly}{\mathrm{poly}}
\DeclareMathOperator*{\argmax}{\mbox{argmax}}
\newcommand{\norm}[1]{\left\|{#1}\right\|}
\newcommand{\R}{\mathbb{R}}
\newcommand{\E}{\mathbb{E}}
\newcommand{\veps}{\varepsilon}
\renewcommand{\P}{\mathbb{P}}
\renewcommand{\d}{\textup{d}}
\newcommand{\bone}{\mathrm{\bf 1}}
\newcommand{\cI}{\mathcal{I}}
\newcommand{\cC}{\mathcal{C}}
\newcommand{\cD}{\mathcal{D}}
\newcommand{\cS}{\mathcal{S}}
\newcommand{\cV}{\mathcal{V}}
\newcommand{\cA}{\mathcal{A}}
\newcommand{\cE}{\mathcal{E}}
\newcommand{\cF}{\mathcal{F}}
\def\normal{{\sf N}}
\newcommand{\sN}{\mathsf{N}}
\DeclareMathOperator*{\pliminf}{p-lim\, inf}
\DeclareMathOperator*{\plim}{p-lim}
\newcommand{\eps}{\varepsilon}
\newcommand{\sym}{\mathrm{sym}}
\theoremstyle{plain}
\newtheorem{thm}{Theorem}
\newtheorem{claim}{Claim}
\newtheorem{defn}{Definition}
\newtheorem{lem}[thm]{Lemma}
\newtheorem*{lemma*}{Lemma}
\newtheorem{cor}[thm]{Corollary}
\newtheorem{prop}[thm]{Proposition}
\newtheorem{ass}{Assumption}
\newtheorem{fact}{Fact}
\theoremstyle{definition}
\newtheorem{exm}{Example}
\newtheorem{rem}{Remark}
\newtheorem{conj}{Conjecture}
\newtheorem{thm}{Theorem}[section]
\newtheorem{defn}{Definition}[section]
\newtheorem{lem}[thm]{Lemma}
\newtheorem*{lemma*}{Lemma}
\newtheorem{prop}[thm]{Proposition}
\theoremstyle{definition}
\newtheorem{rem}{Remark}[section]
\newcommand{\asat}{\alpha_{\star}}
\newcommand{\aalg}{\alpha_{\mathrm{alg}}}
\newcommand{\aup}{\alpha_{\mathrm{up}}}
\newcommand{\alow}{\alpha_{\mathrm{low}}}
\DeclareMathOperator{\Ent}{H}
\def\anon{0}
\def\authornotes{0}
    \newcommand\KZ[1]{\textcolor{red}{[KZ: #1]}}
    \newcommand{\shuangping}[1]{\footnote{\color{purple}Shuangping: {#1}}}
    \newcommand{\tselil}[1]{\footnote{\color{ForestGreen}Tselil: {#1}}}
    \newcommand{\tnote}[1]{{\color{ForestGreen}[Tselil: #1]}}
    \newcommand{\snote}[1]{{\color{Purple}[Shuangping: #1]}}
    \newcommand\KZ[1]{}
    \newcommand{\shuangping}[1]{}
    \newcommand{\tselil}[1]{}
    \newcommand{\tnote}[1]{}
    \newcommand{\snote}[1]{}
\author{
Shuangping Li\thanks{Stanford University. \texttt{fifalsp@stanford.edu}.} 
\and Tselil Schramm\thanks{Stanford University.  \texttt{tselil@stanford.edu}. Supported by NSF CAREER award \# 2143246.}
\and Kangjie Zhou\thanks{Stanford University. \texttt{kangjie@stanford.edu}. Supported by the NSF through award DMS-2031883 and the Simons Foundation through Award 814639 for the Collaboration on the Theoretical Foundations of Deep Learning and by the NSF grant CCF-2006489.}
}
\author{Author name(s) withheld for double-blind review}
\title{Discrepancy Algorithms for the Binary Perceptron}
\date{\today}
\begin{document}

\maketitle

\begin{abstract}
The binary perceptron problem asks us to find a sign vector in the intersection of independently chosen random halfspaces with intercept $-\kappa$.
We analyze the performance of the canonical discrepancy minimization algorithms of Lovett-Meka and Rothvoss/Eldan-Singh for the asymmetric binary perceptron problem.
We obtain new algorithmic results in the $\kappa = 0$ case and in the large-$|\kappa|$ case.
In the $\kappa\to-\infty$ case, we additionally characterize the storage capacity and complement our algorithmic results with an almost-matching overlap-gap lower bound.

\end{abstract}
\thispagestyle{empty}

{ \hypersetup{hidelinks} \tableofcontents }
\thispagestyle{empty}

\setcounter{page}{1}


\section{Introduction}
\subsection{The Binary Perceptron Model}
The \emph{Binary Perceptron} is a simple model of a two-layer neural network as a Boolean function.
Related models have been studied since the 1960's \cite{Joseph60,Wendel62,Cover65}, and systematic study of the Binary Perceptron began in statistical physics in the 1980's \cite{gardner1988space}.
Given a fixed margin parameter $\kappa \in \R$ and a data matrix $X \in \R^{M \times N}$ where $X_{ij} \iidsim \normal (0, 1)$, the associated Binary Perceptron is the function $f_X(\theta) = \Ind(X\theta \ge \kappa \sqrt{N})$, and the \emph{solution set} $S$ is the set of all $\theta \in \{ \pm 1 \}^N$ on which $f_X(\theta) = 1$:
\begin{equation}\label{eq:ising_perc}
	S = S_{M, N} (X) := \left\{ \theta \in \{ \pm 1 \}^N: \ \frac{1}{\sqrt{N}} \left\langle X_i, \theta \right\rangle \ge \kappa, \ \forall i \in [M] \right\},
\end{equation}
where $X_i^\top$ is the $i$-th row of $X$. 
Written this way, the Binary Perceptron can be thought of as a dense constraint satisfaction problem (CSP) with constraints defined by random half-spaces in $\R^N$, rather than by Boolean predicates with fixed-size support.
One also sometimes studies the associated \emph{Symmetric Binary Perceptron} model, in which $f^{\sym}_X(\theta) = \Ind(|X \theta| \le |\kappa|\sqrt{N})$ \cite{aubin2019storage}. 

\medskip

Just as with other random CSPs, the first natural question is whether the solution set $S$ tends to be non-empty at a given constraint density.
The \emph{storage capacity} of the Binary Perceptron is defined as the ratio $M_{\star}(N, \kappa)/N$, where $M_\star = M_\star(N, \kappa)$ is the largest $M$ such that $S_{M, N} (X)$ is non-empty.
It was recently shown that the Binary Perceptron exhibits a sharp threshold sequence \cite{xu2021sharp,NS23}: that is, there exists a sequence $\asat(N,\kappa)$ such that $\P \left( S_{\alpha N, N} (X) \neq \varnothing \right)$ transitions from $1-o_N(1)$ to $o_N(1)$ in an $o_N(1)$ window around $\alpha = \asat(N,\kappa)$. The value of $\asat(N,\kappa)$ is not yet known for general values of $\kappa$, though Krauth and M\`{e}zard \cite{KM89} conjectured that  $\asat(N,\kappa)$ concentrates around an explicit constant $\asat(\kappa)$, with $\asat(0) \approx 0.83$. This has been rigorously verified very recently in \cite{ding2019capacity,Huang24}, under an explicit numerical condition.

\medskip
The second natural question is, given that the solution set is nonempty, is there an algorithm that finds some $\theta \in S$ in time $\poly(M,N)$?
In statistical physics, the existence of efficient algorithms for finding solutions to CSPs is thought to be closely linked to the geometry of the solution set $S$ (when thought of as a subset of $\{\pm 1\}^N)$.
If $S$ is poorly connected, in that solutions tend to be far apart in the hypercube, then the intuition is that algorithms (and natural processes) should have a hard time finding, or traveling between, solutions in $S$.

The Binary Perceptron is particularly interesting in this context.
On the one hand, efficient algorithms are known at some densities and for some value of $\kappa$. 
For example, Kim and Roche \cite{KR98} give a polynomial-time algorithm for the case $\kappa=0$ at densities $\alpha \le 0.005$.\footnote{The algorithm consists of iteratively choosing the $t$-th coordinate $\theta_t$ as a weighted majority of the $\{X_{it}\}_{i=1}^M$, with weights varying over time $t$. The analysis is somewhat miraculous.} 
In simulation, belief propagation algorithms appear to succeed up to $\alpha \approx 0.74$ \cite{BZ06,BBBZ07}.
No efficient algorithms are known to work at densities closer to the storage capacity $\asat(0) \approx 0.83$.
Thus the model has an apparent information-computation gap (a regime of $\alpha$ for which solutions typically exist in $S$, but we do not know any efficient algorithms to find them), though it is possible this gap could be closed after further algorithmic study.

On the other hand, the Binary Perceptron is conjectured \cite{KM89,HWK13,HK14} (and, in the symmetric case, rigorously known  \cite{PX21,ALS22contiguity}) to exhibit what is called \emph{strong freezing} at \textbf{every} density $0<\alpha<\asat (\kappa)$: an overwhelming $1-o(1)$ fraction of $\theta\in S$ are totally isolated, with all $\theta' \in S \setminus \{\theta\}$ at distance $\Omega(N)$ from $\theta$.
This apparent contradiction between the physical intuition (which would suggest that efficient algorithms should not exist at any fixed density) and the algorithmic reality (algorithms are known to succeed at some densities, but not all) has inspired a more nuanced study of the computational landscape, on both the algorithms and the computational complexity side.
It has been shown that polynomial-time algorithms for the Symmetric Binary Perceptron find solutions which are in the $o(1)$-measure ``clustered'' portion of $S$ \cite{ALS22algs} for $\alpha$ small.
At the same time, for $\alpha$ close to $\asat(\kappa)$ (again in the symmetric case), $S$ is known to exhibit the multi-Overlap Gap Property (m-OGP) \cite{gamarnik2022algorithms}, a more nuanced form of clustering which is known to rule out sufficiently smooth algorithms (more on this below).
Still, the computational landscape is far from well-understood, and at present there is no setting of $\kappa$, in either the symmetric or asymmetric case, at which the density where algorithms are known to succeed is even within a constant factor of the OGP bound (or any other computational lower bound).

\subsection{Main Results}
Our main topic of study is the performance of computationally efficient algorithms for the problem of discrepancy minimization, slightly repurposed to solve the Asymmetric Binary Perceptron.
Though the algorithms we employ were discovered a decade ago, to our knowledge their application in the context of the Asymmetric Binary Perceptron is new (for the Symmetric Binary Perceptron, discrepancy minimization algorithms give guarantees as a black box, as observed in e.g. \cite{gamarnik2022algorithms}).
The simplicity of the algorithms allows us to obtain relatively sharp guarantees.
Along the way, we also obtain bounds on the storage capacity in the case $\kappa < 0$, and prove that at some densities the model has an overlap gap property, which is an obstruction to some kinds of algorithms.
\Cref{fig:results} summarizes our results in the context of the information-computation landscape.

\begin{figure}[ht!]
\begin{center}

\begin{tikzpicture}
    \draw[->] (0,0) -- (10,0) node[anchor=north west] {$\alpha$};
    \draw[-] (0,-0.1) -- (0,0.1);
    
    \draw[dashed] (5,0) -- (5,1.5);
    \node[anchor=north] at (5,0) {\small $\qquad\aup$};
    \node[anchor=south] at (7,.5) {UNSAT};
    \node[anchor=south] at (4.75,.5) {\textbf{?}};

    \draw[dashed,color=blue,ultra thick] (4.5,0) -- (4.5,1.5);
    \node[anchor=north,color=blue] at (4.5,0) {\small $\aalg\qquad$};
    \node[anchor=north] at (4.75,-0.3) {\small $\frac{2}{\pi}\frac{1}{\kappa^2}$};
    \draw[dashed] (4.75,-0.3) -- (4.75,0);
    \node[anchor=south] at (2.5,.5) {EASY};
    \draw[<->] (4.5,1.5) -- (5,1.5) node[anchor=south] {\small $o(\frac{1}{\kappa^2})$};
    
    \node[anchor=south west] at (-2.5,0.5) {$\kappa \to\infty $};
\end{tikzpicture}

\vspace{4mm}
\begin{tikzpicture}
    \draw[->] (0,0) -- (10,0) node[anchor=north west] {$\alpha$};
    \draw[-] (0,-0.1) -- (0,0.1);
    
    \draw[dashed,ultra thick,color=blue] (7.25,0) -- (7.25,1.5);
    \draw[dashed,ultra thick,color=blue] (7.75,0) -- (7.75,1.5);
    \node[anchor=north,color=blue] at (7.75,0) {\small $\qquad\aup$};
    \node[anchor=north,color=blue] at (7.25,0) {\small $\alow\qquad$};
    \node[anchor=south] at (9,.5) {UNSAT};
    \node[anchor=north] at (7.5,-0.3) {\small $\frac{\log 2}{\Phi(\kappa)}$};
    \draw[dashed] (7.5,-0.3) -- (7.5,0);
    \draw[<->] (7.25,1.5) -- (7.75,1.5) node[anchor=south] {\small $o(\frac{1}{\Phi(\kappa)})$};
    \node[anchor=south] at (7.5,.5) {\textbf{?}};
    
    \draw[dashed,color=blue,ultra thick] (3,0) -- (3,1.5);
    \node[anchor=north,color=blue] at (3,0) {\small $\begin{array}{c}\alpha_{\mathrm{OGP}}\quad\\ \qquad\sim \frac{\log^2|\kappa|}{\kappa^2\Phi(\kappa)}\end{array}$};
    \node[anchor=south] at (5,.4) {HARD {\small (OGP)}};
    \node[anchor=south] at (2.2,.5) {\textbf{???}};

    \draw[dashed,color=blue,ultra thick] (1.5,0) -- (1.5,1.5);
    \node[anchor=north,color=blue] at (1.5,0) {\small $\begin{array}{c}\alpha_{\mathrm{alg}}\\ \sim \frac{1}{\kappa^2\Phi(\kappa)}\end{array}$};
    \node[anchor=south] at (0.6,.5) {EASY};
    
    \node[anchor=south west] at (-2.5,0.5) {$\kappa \to-\infty $};
\end{tikzpicture}

\vspace{7mm}

\begin{tikzpicture}
    \draw[->] (0,0) -- (10,0) node[anchor=north west] {$\alpha$};
    \draw[-] (0,-0.1) -- (0,0.1);
    
    \draw[dashed] (8.3,0) -- (8.3,1.5);
    \node[anchor=north] at (8.3,0) {\small $\begin{array}{c}\alpha_{\star}\\\quad \approx0.83\end{array}$};
    \node[anchor=south] at (9.2,.5) {UNSAT};

    \draw[dashed] (7,0) -- (7,1.5);
    \node[anchor=north] at (7,0) {\small $\begin{array}{c}\alpha_{\mathrm{BP}} \\ \approx 0.74\end{array}$};
    \node[anchor=south] at (7.6,.5) {\textbf{???}};
    \node[anchor=south] at (4,.43) {EASY {\small(empirically)}};
    
    \draw[dashed,color=blue, ultra thick] (1.2,0) -- (1.2,1.5);
    \node[anchor=north,color=blue] at (1.2,0) {\small $\begin{array}{c}\alpha_{\mathrm{alg}} \\\quad \approx 0.1\end{array}$};
    \node[anchor=south] at (0.62,.62) {\footnotesize EASY};
    
    \draw[dashed] (0.1,0) -- (0.1,1.5);
    \node[anchor=north] at (0.05,0) {\small$\begin{array}{c} \alpha_{\mathrm{KR}} \\ =0.005\end{array}$};
    \node[anchor=south west] at (-2.5,0.5) {$\kappa = 0$};
    
\end{tikzpicture}
\end{center}
\caption{The information-computation landscape for the Binary Perceptron. 
Our results appear in blue.
When  $\kappa \to \infty$, the upper bound $\aup$ on the storage capacity is due to \cite{stojnic2013discrete}.
In the case $\kappa = 0$, $\alpha_{\mathrm{KR}}$ is the density at which the Kim-Roche algorithm provably succeeds \cite{KR98}; in simulation, belief-propagation algorithms succeed up to density $\alpha_{\mathrm{BP}}$ \cite{BBBZ07,BZ06}, and the satisfiability threshold $\asat$ was conjectured in \cite{KM89} and rigorously established in \cite{ding2019capacity,Huang24}. 
}
\label{fig:results}
\end{figure}

\paragraph{Our Algorithm.} 
At a high level, our algorithm builds on the idea of solving a relaxed version of the binary perceptron problem using a continuous optimization procedure---specifically, a linear program with a random objective. 
The LP solution often lands near a corner of the hypercube, revealing partial information about the final sign vector. 
We then iteratively round the remaining coordinates to $\pm 1$ using discrepancy-style methods, ensuring feasibility throughout.

While these techniques are standard in discrepancy minimization, applying them in this average-case, high-dimensional setting requires significantly more care. 
A key insight is that by characterizing the empirical distribution of the LP solution’s margins and the fraction of saturated coordinates, we can control how much “slack” we have after the first step, enabling a warm start for the rounding procedure.

More formally, our algorithm consists of two stages. 
The first stage solves a linear program and outputs a solution $\hat{\theta}$ that satisfies the margin constraints, $X \hat{\theta} \ge \kappa \sqrt{N} 1$, and lies in the convex relaxation of the hypercube, $\hat{\theta} \in [-1, 1]^N$. 
This stage is closely related to the LP-based discrepancy algorithms of \cite{rothvoss2017constructive,eldan2018efficient}. 
In the second stage, we use the Edge-Walk algorithm of Lovett and Meka \cite{lovett2015constructive} to sequentially round all the coordinates of $\hat{\theta}$ to $\pm 1$, producing a genuine solution to the binary perceptron problem. 
As shown in \Cref{fig:results}, we will discuss below the performance of our two-stage algorithm in three different regimes of $\kappa$.

\paragraph{Results for Large Positive $\kappa$.} We discuss our results here when $\kappa$ is large and positive.

\begin{thm}\label{thm:informal_positive}[Informal, see \cref{thm:alg_kappa_positive}]
When $\kappa \to \infty$ and $\alpha = \frac{2}{\pi \kappa^2} (1+o_\kappa(1)) $, our algorithm finds a solution for the binary perceptron model with margin $\kappa$ with high probability. 
\end{thm}
Comparing this to previous satisfiability results, it was shown in \cite{stojnic2013discrete} that for $\kappa \to \infty$,
\begin{equation*}
	\aup (\kappa) = \frac{2}{\pi} \cdot \E_{G \sim \normal(0, 1)} \left[ ( \kappa - G )_+^2 \right]^{-1} \sim \frac{2}{\pi \kappa^2}
\end{equation*}
is an upper bound on $\asat (\kappa)$. With this, our \cref{thm:alg_kappa_positive} immediately implies that for large positive $\kappa$, the computational-to-statistical gap is negligible (if it exists at all). Therefore, our algorithm succeeds in finding a $\kappa$-margin solution for nearly all $\alpha$ in the satisfiability region, achieving sharpness up to a small $o_\kappa(1)$ multiplicative window.



\paragraph{Results for Large Negative $\kappa$.}
The information-computation landscape of the binary perceptron model becomes more complex as $\kappa \to -\infty$. For this regime, our first main result characterizes the asymptotics of the satisfiability threshold $\asat (\kappa)$.
We use $\Phi$ to denote the Gaussian CDF.

\begin{thm}[Informal, see \Cref{sec:sat}]
    When $\kappa \to - \infty$, the satisfiability threshold for the binary perceptron model satisfies $\asat (\kappa) \sim \frac{\log 2}{\Phi(\kappa)}$.
\end{thm}

The performance of our algorithm in this setting is described by the following theorem:

\begin{thm}[Informal, see \cref{thm:kappaverynegative}]
When $\kappa \to -\infty$ and $\alpha = O \left( \frac{1}{\Phi(\kappa) \kappa^2} \right)$, our algorithm finds a solution for the binary perceptron model with high probability. 
\end{thm}

We notice that there is a large $\Omega (\kappa^2)$ multiplicative gap between the satisfiability threshold and our algorithm as $\kappa$ becomes large a negative. To further understand the algorithmic landscape of this problem, we establish hardness results against stable algorithms (defined in \Cref{sec:mogp}) via the multi-Overlap Gap Property (m-OGP):

\begin{thm}[Informal, see \Cref{thm:mogp} and \Cref{thm:hardness}]
    As $\kappa \to -\infty$, the binary perceptron model exhibits m-OGP when $\alpha = \Omega\left (\frac{\log^2 |\kappa|}{\Phi(\kappa) \kappa^2} \right)$. Consequently, stable algorithms are unable to find a solution when $\alpha = \Omega\left (\frac{\log^2 |\kappa|}{\Phi(\kappa) \kappa^2} \right)$.
\end{thm}

This result shows that our algorithmic threshold aligns with the OGP hardness result up to a polylogarithmic factor in $\kappa$. It remains unclear whether better algorithms exist or if our hardness result against stable algorithms could be further strengthened. This also highlights an interesting phenomenon: the information-computation gap appears significantly larger compared to the case when $\kappa$ is positive. This phenomenon is discussed in more detail below.

\paragraph{Results for $\kappa = 0$.}
Finally, we focus on the regime $\kappa = 0$:
\begin{thm}[Informal, see \cref{thm:fullcoloring0} and \cref{thm:fullcoloring02}]
When $\kappa =0$ and $\alpha \leq 0.1$, our algorithm finds a solution for the binary perceptron model with high probability. 
\end{thm}

In comparison, the previously best-known algorithm \cite{KR98} works only when $\alpha \leq 0.005$.

\paragraph{Some remarks on our results.}
Our main results on the satisfiability threshold $\asat(\kappa)$ complement those established in prior works \cite{ding2019capacity,Huang24}. 
While these works rigorously verify that $\asat(0)$ matches the threshold conjectured by Krauth and Mézard, their analyses rely on intricate arguments and explicit numerical assumptions. 
In contrast, our proof is fully unconditional and proceeds via a clean application of the first and second moment methods. 
This not only simplifies the analysis, but also enables us to precisely characterize the asymptotic behavior of $\asat(\kappa)$ as $|\kappa| \to \infty$. 
As we will see, this asymptotic characterization is crucial for understanding how the information-computation gap scales in the large-$|\kappa|$ regime.

Beyond the sharp thresholds themselves, the overall computational landscape that emerges from our results is more interesting. 
It is easy to see that $\asat(\kappa) \to 0$ as $\kappa \to \infty$: the volume of each constraint decreases with $\kappa$, so as $\kappa$ increase the constraints become harder to satisfy. 
On the other hand, our results suggest that as $\kappa \to \infty$, the problem becomes algorithmically easier: if we imagine $\aalg^\star(\kappa)$ as the maximum limiting density at which polynomial-time algorithms provably succeed\footnote{We use $\aalg^\star(\kappa)$ merely as a device for discussion, and we will not define it rigorously.}, then we show that $\aalg^\star(\kappa)/\asat(\kappa) \to 1$ as $\kappa \to \infty$, and conversely, our overlap-gap result suggests that $\aalg^\star(\kappa)/\asat(\kappa)\to 0$ as $\kappa \to -\infty$. 

It would be interesting to understand this phenomenon further.
Perhaps one may borrow intuition from the \emph{Spherical Perceptron} problem, where there is a clear qualitative difference between the cases when $\kappa > 0$ and $\kappa \le 0$.
In the spherical problem, we are asked to find $\theta$ on the unit sphere, $\|\theta\|=1$, satisfying $X \theta \ge \kappa$. 
In the $\kappa > 0$ case the problem is convex: given any solution $\theta' \in \{\theta \in \mathbb{R}^N \mid \|\theta\|\le 1,\, X\theta \ge \kappa\}$, the unit vector $\theta^* = \theta'/\|\theta'\|$ will continue to satisfy the half-space constraints.
When $\kappa \le 0$, this no longer holds: the landscape becomes more complex \cite{franz2017universality}, and prior work \cite{montanari2024tractability} suggests that an information-computation gap may emerge as $\kappa \to -\infty$.

In the binary setting there is no immediate convexity, even in the $\kappa > 0$ case.
But one can instead study a convex relaxation: the set of $\theta'$ in the interior of the hypercube, satisfying the constraints with a bit of extra margin $\eps > 0$, that is $\theta' \in \{\theta \mid \theta \in [-1,1]^N,\,X\theta \ge (\kappa + \eps) \sqrt{N}\}$.
Instead of choosing an arbitrary $\theta'$ and scaling it up, imagine instead choosing the $\theta'$ that correlates best with some random direction $v\in \R^N$ (this is a linear program and can be done in polynomial time). 
This selection mechanism implies that $\theta'$ will be at a corner of the polytope.
Of course, this $\theta'$ need not be a sign vector.
Heuristically, the fraction of Boolean polytope constraints is $\frac{2}{2+\alpha}$ (there are $2N$ binary hyperplane constraints and $M = \alpha N$ half-space constraints).
If $\alpha$ is sufficiently small, it is reasonable to expect that the corner will be comprised mostly of hypercube constraints, and a majority of the coordinates of $\theta'$ will be in $\{\pm 1\}$ (of course, the value of $\kappa$ impacts how tight each half-space constraint is, but for the sake of simplicity let's put this aside for now). 
Then it may not be too difficult to round $\theta'$ to a nearby sign vector which is still feasible, given that we afforded ourselves the $+\eps$ margin.

As $\kappa \to \infty$, $\asat(\kappa) \to 0$, so any $\alpha < \asat(\kappa)$ is naturally small; on the other hand, as $\kappa$ decreases and $\asat(\kappa)$ increases, we might expect that we will only succeed for $\alpha$ well below the satisfiability threshold.
We have to note that this is a serious oversimplification, since constraint count is only part of the picture: as $\kappa$ increases, each half-space constraint becomes more restrictive. 
Accounting for this, we can squint and reconcile the fact that when $\kappa \to -\infty$, our algorithm succeeds for $\alpha = O(\frac{1}{\kappa^2}\asat(\kappa)) = O(\frac{1}{|\kappa|}\exp(\kappa^2/2))$, which can be much larger than $1$ (though there are more half-space constraints, they are further away from the origin and less likely to be tight when projecting $v$ to a corner).

This intuition is based entirely on the existence of specific (clusters of) solutions. 
It does not offer an explanation for why the problem becomes harder in general as $\kappa \to -\infty$.
It would be interesting to understand, via the lens of solution geometry or otherwise, what changes qualitatively about the computational complexity of the problem as $\kappa$ is decreased.


\subsection{Related Work}
\subsubsection{Algorithms from Discrepancy Minimization}
The \emph{discrepancy minimization} problem asks, given a set system of $M$ subsets of $N$ elements encoded by an incidence matrix $X \in \mathbb{R}^{M \times N}$, to choose a coloring of the elements $\theta \in \{\pm 1\}^N$ so that the maximum discrepancy among all sets is minimized:
\newcommand{\disc}{\mathrm{disc}}
\[
\disc(X) = \min_{\theta \in \{\pm 1\}^N} \|X\theta\|_\infty.
\]
Discrepancy minimization typically takes in a worst-case $X$, and asks for a good upper bound on $\disc(X)$.
A classic result of Spencer shows that $\disc(X) \le C\sqrt{N \log \left(\frac{M}{N} + 2\right)}$ \cite{spencer1985six}.
Spencer's proof was non-constructive, giving no sense of how to efficiently find a $\theta \in \{\pm1\}^N$ achieving the promised discrepancy bound.
Bansal later showed that such $\theta$ could be found efficiently \cite{Bansal10}, though his proof was also not constructive as it relied on Spencer's existence result.
In the decade since, algorithmic discrepancy has become popular.
There is now a variety of beautiful algorithms for discrepancy minimization, both in this classic setting and with a variety of twists (for example, \cite{lovett2015constructive,rothvoss2017constructive,eldan2018efficient,BS20, ALS21,BJM23}).

The Symmetric Binary Perceptron problem can be re-interpreted as a weighted, average-case discrepancy feasibility problem.
One samples $X \sim \normal(0,1)^{M \times N}$, and instead of asking for an upper bound on $\disc(X)$, we adopt a target $\kappa > 0$ and ask whether at a given density $\alpha = M/N$, 
\[
\mathbb{P}_X\left[ \disc(X) \le \kappa \sqrt{N}\right] \ge \frac{1}{2} ?
\]
Existing discrepancy minimization algorithms work for the Symmetric Binary Perceptron out-of-the-box (at some small enough densities $\alpha >0$).
For example, it is observed in \cite{gamarnik2022algorithms} that the online discrepancy minimization algorithm of Bansal and Spencer \cite{BS20} finds a solution to the Symmetric Binary Perceptron at densities $\alpha = O(\kappa^2)$, in the limit as $\kappa\to0$.

We re-purpose two of these algorithms to obtain new algorithmic results for the Asymmetric Binary Perceptron.
Though the Asymmetric Binary Perceptron is superficially different than discrepancy minimization, in effect our goal is to find a point in the intersection of $\{\pm 1\}^N$ with $M$ random half-spaces.
Some of the classic discrepancy minimization algorithms are almost already phrased in this way.
The random projection algorithm independently proposed by Rothvoss and Eldan \& Singh \cite{rothvoss2017constructive,eldan2018efficient} chooses a random direction $g \sim \normal (0,I_N)$, and projects it to the intersection of the hypercube and the slab constraints $|\langle X_i, \cdot  \rangle|\le \kappa\sqrt{N}$; this may be trivially modified to an intersection with half-space constraints $\langle X_i,\cdot \rangle \ge \kappa \sqrt{N}$.
The edge-walk algorithm of Lovett \& Meka \cite{lovett2015constructive} is a discrete-time Brownian motion $\theta_t$, starting from a feasible $\theta_0$, freezing motion in any direction which becomes tight either for a slab constraint $|\langle X_i, \theta_t\rangle| \le \kappa \sqrt{N}$ or for a hypercube constraint $|\langle \theta_t,e_j\rangle| \ge 1-o(1)$.
Again, this may be trivially modified to use half-space constraints instead of slab constraints.

However, while the algorithms themselves are similar in spirit, our analysis must go far beyond what is typically done in the discrepancy literature. In classical settings, the goal is usually to show that some constant fraction of coordinates hit $\{\pm 1\}$, under the assumption that the continuous relaxation has large Gaussian width. Such arguments are often coarse and designed for worst-case inputs. In contrast, our setting requires analyzing the average-case behavior of these algorithms on Gaussian random matrices $X$, with the goal of tightly characterizing both the feasibility and the asymptotic performance of the output.

To this end, we analyze a composition\footnote{
One application of the random projection or edge-walk algorithm will produce a partial coloring, or a vector $\theta\in \R^N$ with at least half of its entries as signs. 
The algorithms must be iteratively composed with themselves to produce a hypercube vector.
For technical reasons that will become clear later, for us it is more convenient to apply the random projection algorithm first, and then apply the edge-walk algorithm.
} 
of the random projection algorithm with the edge-walk algorithm in a manner that is specialized to Gaussian $X$, obtaining asymptotically sharp bounds in the $\kappa >0,\kappa \to \infty$ case, as well as new results in the $\kappa = 0$ and $\kappa < 0, |\kappa| \to \infty$ case. Achieving this requires a much sharper analysis than in prior discrepancy work. Specifically, we characterize the exact asymptotic fraction of coordinates that are frozen to $\pm 1$, and more importantly, the precise empirical distribution of the slack vector $X \theta$, where $\theta$ is the LP solution produced by the first stage. This level of understanding is essential to ensure that the edge-walk rounding step can be iterated without violating constraints across $O(\log N)$ rounds.

While our focus is on the asymmetric binary perceptron, we note that a rich body of literature on discrepancy minimization has also explored related average-case discrepancy problems, such as the discrepancy of random matrices and the feasibility of random integer programs, albeit in somewhat different settings. For instance, \cite{franks2020discrepancy,altschuler2022discrepancy} studied the discrepancy of random rectangular matrices, while \cite{chandrasekaran2014integer} examined the integer feasibility of random polytopes. The latter established a randomized polynomial-time algorithm for finding an integer point in a polytope, drawing a connection between integer feasibility and linear discrepancy. In addition, \cite{venkat2023efficient} investigated efficient algorithms for certifying lower bounds on the discrepancy
of random matrices, and \cite{borst2023integrality} analyzed integrality gaps in random integer programs through the lens of discrepancy. Furthermore, \cite{turner2020balancing} focused on balancing Gaussian vectors in high dimensions.

\subsubsection{Computational Complexity}

In \cite{gamarnik2022algorithms}, the authors prove that the Symmetric Binary Perceptron displays a \emph{multi-overlap-gap} property (m-OGP).
In brief, the OGP is a more refined statement about the solution geometry of $S$, or more specifically, about how the solution geometry of $S(X)$ changes when $X\to X'$ is perturbed slightly. 
In the most basic type of OGP, one shows that if $X'$ is an $\eps$-perturbation of $X$, with high probability there do not exist any pairs of solutions $\theta$ to $X$ and $\theta'$ to $X'$ with correlation $\frac{1}{N}\iprod{\theta,\theta'}$ in some range $[a(\eps),b(\eps)]$.
Intuitively, this captures the notion that the solution set $S(X)$ is unstable, in that it is far from Lipschitz in $X$.
A looser instability criterion for $S(X)$ can be captured by more sophisticated OGP statements, such as multi-OGP and branching OGP, which concern the solutions to an ensemble of multiple correlated instances, $X_1,\ldots,X_m$.

The OGP is known to rule out all algorithms that are Lipschitz or otherwise stable (in some technical sense) to perturbations in the input (see, e.g., \cite{rahman2017local, wein2022optimal, gamarnik2022random, huang2022tight} and the references therein). Many algorithms are stable in this sense, as demonstrated in the broader literature on random computational problems, including approximate message passing \cite{gamarnik2021overlap, huang2022tight}, low-degree polynomials \cite{gamarnik2020low, bresler2022algorithmic}, gradient descent \cite{huang2022tight}, and (as proven in \cite{gamarnik2022algorithms}) the Kim-Roche algorithm.
On the other hand, other commonly used algorithms, such as semidefinite programs, are not stable (e.g. for the maximum eigenvector problem).

In the $\kappa < 0, |\kappa| \to \infty$ regime, we prove that the Asymmetric Binary Perceptron displays a type of multi-OGP, proving that our algorithm is sharp within a $\log^2 |\kappa|$ factor.
Our proof is similar to that of \cite{gamarnik2022algorithms}.

\subsubsection{Other Algorithms for the Perceptron Model} 

\paragraph{Kim-Roche Algorithm.}
In the $\kappa = 0$ case, \cite{KR98} proposes a multi-scale majority algorithm that succeeds when $\alpha \leq 0.005$. It is an online algorithm with a conceptually clear approach: in each step, or stage, examine a block of adjacent columns of $X$ that have not yet been observed and choose the corresponding components of $\theta$ to reduce the number of rows of $X$ with a small cumulative inner product with $\theta$. Ultimately, with high probability, this algorithm outputs a vector $\theta$ of full dimension $N$ such that no row of $X$ has a small inner product with $\theta$. Despite the simplicity of the idea, the analysis is rather involved, and the parameter choices are very specifically tuned to the problem. Indeed, in each step, the choice of each $\theta_i$ depends on the entries of the matrix columns, making the inner products of the rows of $X$ with $\theta$ weakly dependent. This adds complexity to the analysis.
A Kim-Roche-based algorithm was also analyzed in \cite{ALS22algs} for general $\kappa$, but without sharp bounds in $\kappa$, as the primary purpose was not algorithmic.

\paragraph{AMP-Type Algorithms for Spherical Perceptron.}
Recently, there has been another line of works \cite{alaoui2020algorithmic, montanari2024exceptional} on designing efficient iterative algorithms for the spherical perceptron problem. Their approach is based on the so-called \emph{incremental} AMP (IAMP) scheme \cite{montanari2021optimization}, which belongs to the broader family of approximate message passing (AMP) algorithms. Roughly speaking, each IAMP step is composed of two operations: applying a Lipschitz function entry-wisely to current iterate and multiplying it by $X$ or $X^\top$. Of course, the choice of these Lipschitz functions governs the high-dimensional limit of the IAMP algorithm, thus crucial to its success in finding a solution of the spherical perceptron. In \cite{alaoui2020algorithmic, montanari2024exceptional}, the authors exploit the delicate structure of the Parisi variational principle associated with spherical perceptron to construct these Lipschitz functions. The corresponding IAMP algorithm, combined with a rounding procedure, provably outputs a feasible solution within the \emph{full replica symmetry breaking} (full-RSB) region of the Parisi variational problem. However, their analysis of the Parisi variational principle is quite complicated and does not yield an explicit algorithmic threshold.

\paragraph{Heuristic Algorithms Based on Belief Propagation.}
There have also been some heuristic algorithmic advances for the binary perceptron model. These include \cite{BZ06}, which proposes a version of the belief propagation (BP) algorithm with an additional reinforcement term. For $\kappa = 0$, simulations suggest that the algorithm works with capacities up to $\alpha_{\mathrm{BP}} \approx 0.74$, compared to the theoretical limit of $\asat \approx 0.83$. Further developments have yielded simpler algorithms with comparable performance \cite{BBBZ07, baldassi2009generalization, baldassi2015max}.

\subsection{Proof Overview}
We now briefly describe the approaches we use for the proofs of our main results, with detailed proofs given in subsequent sections.

\paragraph{Design rationale for our algorithm.}
Our algorithm combines an initial linear programming (LP) step---adapted from the random projection approach of Rothvoss and Eldan \& Singh \cite{rothvoss2017constructive,eldan2018efficient}---with a sequence of edge-walk rounding steps inspired by the work of Lovett and Meka \cite{lovett2015constructive}. While it is also possible in principle to iterate either the LP step or the edge-walk step alone, we chose this particular combination for both technical and analytical reasons.

Both the LP and edge-walk procedures can be viewed as forms of partial coloring: each step pushes more coordinates of $\theta$ to $\pm 1$ while maintaining feasibility. However, we are able to obtain a much sharper characterization of the LP step when the constraint matrix $X$ is Gaussian---thanks to tools such as Gordon’s inequality. In contrast, after the first LP step, the residual matrix (conditioned on previous outputs) is no longer Gaussian, and our sharp probabilistic tools no longer apply. This limits our ability to iterate the LP step analytically.

On the other hand, while the edge-walk algorithm is more flexible in that it does not rely on the Gaussianity of $X$, it presents its own challenges: each application requires selecting parameters to balance feasibility and rounding progress, which effectively amounts to solving a variational optimization problem. In our setting, we do not have closed-form solutions or tight bounds for these parameters, so iterating the edge-walk step directly becomes difficult to analyze sharply.

By combining the two methods, we take advantage of the best features of each: the first LP step yields a precise understanding of the asymptotic fraction of coordinates that are $\pm 1$, and the precise empirical distribution of the slack vector $X \theta$ under Gaussian input, and the subsequent edge-walk steps allow us to finish the rounding process without requiring the same distributional assumptions. 

\paragraph{Analysis of Our Algorithm.} Our algorithm combines an initial linear programming (LP) step with an iterative rounding step. To get a sharp characterization of the performance of our algorithm, we require precise asymptotics for statistics of the coordinates $\theta_i$ of the LP solution, e.g. the fraction that belong to $\{ \pm 1 \}$. Previous works \cite{rothvoss2017constructive, eldan2018efficient} provide constants that are insufficient for our purposes. Instead, we reduce the problem to analyzing a maximin problem and apply Gordon's inequality (or the Gaussian comparison inequality). This inequality enables us to reduce the analysis of an $M \times N$ Gaussian matrix to one-dimensional Gaussian vectors of lengths $M$ and $N$, whose asymptotics can be understood. This approach allows us to precisely describe the limiting behavior of the proportion of coordinates $\theta_i$ in ${\pm 1}$ and to characterize the limiting empirical distribution of the margins $\langle X_i, \theta \rangle$. See \Cref{thm:LP_behavior}.

More precisely, we reduce the optimal value of the LP to a two-dimensional problem---a max-min optimization over parameters $\rho$ and $\gamma$, which depend on the system parameters $\kappa$ and $\alpha$. The parameter $\rho$ is closely tied to the norm of the LP solution. Intuitively, it captures how many coordinates of $\theta$ are likely to lie in $\{\pm 1\}$. The parameter $\gamma$ is related to the norm of the dual variables associated with the inequality constraints, and encodes the typical “slackness” of those constraints. Together, $\rho$ and $\gamma$ govern the asymptotics of the optimal solution and allow us to precisely track both the fraction of saturated coordinates and the empirical distribution of the margins.

To get a sharp analysis of the iterative rounding steps, we require an analysis which is specialized to our average-case setting.
Following the edge-walk algorithm, we run the random walk algorithm on progressively smaller subsystems (with fewer columns), ensuring that the total additional margins accumulated remain bounded. 
Our challenge lies in the fact that the choice of columns depends on the matrix entries, disrupting the independence of Gaussian entries. A straightforward union bound over all column choices is inadequate. In fact, dependencies can increase the norms of certain matrix entries, but not collectively across all rows. Through careful analysis, we show that the row norms of the matrix restricted to smaller subsystems do not empirically differ significantly from the corresponding i.i.d. Gaussian case, which suffices for our algorithm.

Finally, we specialize our analysis to $\kappa < 0$, $\kappa = 0$ and $\kappa > 0$, respectively---the edge walk algorithm takes some parameters as input which govern the slack of certain constraints, and we optimize these somewhat for each case. As a consequence, we obtain new algorithmic results for binary perceptron in each of these settings.

\paragraph{Satisfiability Phase Transition for $\kappa < 0$.} We perform a straightforward first-moment calculation to obtain the phase transition upper bound. For the lower bound, a standard approach in random constraint satisfaction problems (CSPs) is to apply the second moment method. Specifically, one constructs a non-negative random variable $Z$ such that $Z > 0$ if and only if the problem has solutions. By the Paley--Zygmund inequality, calculating the first two moments of $Z$ provides a lower bound on the probability that the solution set is non-empty: \begin{equation}
    \P (Z > 0) \ge \frac{\E[Z]^2}{\E [Z^2]}.
\end{equation}
The simplest choice of $Z$ for the binary perceptron is the number of solutions. However, an explicit calculation shows that $\E [Z^2] \gg \E[Z]^2$ because, under certain rare events, there are atypically large number of solutions.

A natural idea is to weight each solution $\theta$ according to $\{ \langle X_i, \theta \rangle / \sqrt{N} \}_{i=1}^{M}$, the normalized margins on the data. Weighted second moment methods have already been used in both discrete and continuous random CSPs \cite{achlioptas2005rigorous, montanari2024tractability}. In our case, define 
\begin{equation}
    Z = \sum_{\theta \in \{ \pm 1 \}^N} \prod_{i=1}^{M} f \left( \frac{1}{\sqrt{N}} \left\langle X_i, \theta \right\rangle \right),
\end{equation}
where the weight function $f$ satisfies $f(t) = 0$ if and only if $t < \kappa$. We show that $f$ can be chosen such that $\E [Z^2] = O (\E [Z]^2)$ for $\alpha$ below a certain threshold, which we characterize in the limit $\kappa \to - \infty$. This calculation establishes a non-trivial lower bound on $\P (Z > 0)$. Furthermore, we combine this estimate with existing results on the sharp threshold property to show that $\P (Z > 0)$ actually converges to one, rather than just being bounded away from zero.

\paragraph{Algorithmic Hardness via OGP.} For our hardness result regarding stable algorithms, our proof closely follows the approach in \cite{gamarnik2022algorithms}. Using a first-moment calculation, we establish that, with high probability, a certain forbidden structure does not exist. This excludes the possibility of a stable algorithm through proof by contradiction: if such a stable algorithm were to exist, it would be possible to generate the forbidden structure using the algorithm, contradicting the first-moment computation.

\subsection{Organization}
Our paper is organized as follows:

\begin{itemize}
    \item \cref{sec:sat} discusses the satisfiability phase transition for the case when \(\kappa < 0\). We establish both upper bounds (\cref{thm:neg_upper_bd}) and lower bounds (\cref{thm:neg_lower_bd}) on the satisfiability threshold \(\alpha_\star(\kappa)\) using the first and second moment methods, respectively.

    \item \cref{sec:sat2} addresses the satisfiability phase transition for \(\kappa \geq 0\). We present known results about the upper bound on \(\alpha_\star(\kappa)\) and discuss the lower bound based on our algorithmic results.
    
    \item \cref{sec:alg} introduces our efficient algorithm for finding a \(\kappa\)-margin solution in \(S_{M,N}(X)\). We detail the two-stage process of the algorithm: first, solving a linear program, and then rounding the solution using the edge-walk algorithm of Lovett and Meka. 
    
    More specifically, \cref{sec:gordon} describes the linear programming stage, where we draw on Gaussian comparison inequalities to characterize the limiting behavior of the linear programming algorithm in \cref{thm:LP_behavior}. \cref{sec:rounding} explains the rounding stage, including the application of the edge-walk algorithm and the performance guarantees we obtain. 
The edge-walk algorithm requires us to choose a schedule of margin relaxations, and we have chosen different schedules depending on the value of $\kappa$ so as to optimize our results.
\cref{sec:algthreshold} presents the thresholds for our algorithm as a function of the margin, covering cases of large negative margins in \cref{thm:kappaverynegative}, zero margins in \cref{thm:fullcoloring0}, and large positive margins in \cref{thm:alg_kappa_positive}. \cref{sec:aux_lemmas} collects several auxiliary lemmas that support the theoretical analysis of our algorithm.

    \item \cref{sec:mogp} explores the multi-overlap gap property (m-OGP) in the \(\kappa \to -\infty\) regime. We demonstrate the presence of m-OGP in 
    \cref{sec:ogpstructure}, and discuss its implications for the failure of stable algorithms in 
    \cref{sec:fail_stable_alg}.

    \item \cref{app:improved} contains an improved partial coloring lemma (\cref{thm:improved_Lovett_Meka}) for the margin zero case, which allows us to extend our algorithm to work when $\alpha \leq 0.1$ (\cref{thm:fullcoloring02}).
\end{itemize}

\subsection{Notation}\label{sec:notation}
For any positive integer $n$, we let $[n] = \{1,2,\ldots, n\}$.
 For a scalar $a$, we 
write $a_+ = \max\{a, 0\}$ and $a_- = \max\{-a, 0\}$.
 We use $\norm{u}$ or $\norm{u}_2$ to denote the $\ell^2$-norm of a vector $u$, and we use $\| M \|_\op$ to denote the operator norm of a matrix $M$. 
For an event $\calE$, we use $\Ind[\calE]$ to denote the indicator of $\calE$. For $d \in \mathbb{N}$, we denote by $I_d$ the identity matrix of dimension $d$. For $x \in [0, 1]$, we denote by $\Ent (x) = - x \log x - (1 - x) \log (1 - x)$ the entropy function. We denote by $\delta_y$ the Dirac measure at $y \in \R$. For any $\theta, \theta' \in \{ \pm 1 \}^N$, $d_{H} (\theta, \theta')$ denotes their Hamming distance.

We will use $O_n(\cdot)$ and $o_n(\cdot)$ for the standard big-$O$ and small-$o$ notation, 
where $n$ is the asymptotic variable.
For two vectors $u, v$ of the same dimension, we may write $u \ge v$ or $u \le v$ to mean entry-wise inequality.
We always use $\Phi$ and $\phi$ to denote the CDF and PDF of standard normal distribution, respectively. 
We write $X \perp\!\!\!\perp Y$ if $X$ and $Y$ are two independent random variables (or random vectors).
For a sequence of random variables $\{ X_n \}$ and $c \in \R$, we write $\plim_{n \to \infty} X_n = c$ if $X_n$ converges to $c$ in probability.
Throughout this paper, we assume the asymptotics $\lim_{N \to \infty} (M/N) = \alpha \in (0, \infty)$.

We denote by $\normal(\mu, \Sigma)$ the Gaussian distribution with mean vector $\mu$ and covariance matrix $\Sigma$. For $V \subset \R^N$ a linear subspace, we denote by $\normal(V)$ the standard multi-dimensional Gaussian distribution supported on $V$. Namely, $G \sim \normal(V)$ if $G = \sum_{i=1}^{d} G_i v_i$ where $d = \dim V$, $G_i \iidsim \normal(0, 1)$, and $\{ v_1, \cdots, v_d \}$ is an orthonormal basis for $V$.

\section{Satisfiability Phase Transitions}
\subsection{Satisfiability Phase Transition for $\kappa < 0$}\label{sec:sat}

We now establish upper and lower bounds on the satisfiability threshold $\asat (\kappa)$ in the case $\kappa < 0$.
We will make use of the first- and second-moment methods, respectively.

\begin{thm}[Upper bound on $\asat (\kappa)$]\label{thm:neg_upper_bd}
	Define $\aup (\kappa) = - \log 2 / \log (1 - \Phi(\kappa))$, then $\asat (\kappa) \le \aup (\kappa)$.
Equivalently, if $\alpha > \aup (\kappa)$, with high probability we have $S_{M, N} (X) = \varnothing$.
\end{thm}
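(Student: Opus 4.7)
The plan is to prove this via a standard first-moment calculation on $|S_{M,N}(X)|$ followed by Markov's inequality. This is the canonical way to obtain upper bounds on satisfiability thresholds for random CSPs, and it will work cleanly here because the rows of $X$ are independent and each row interacts with any fixed $\theta$ in a particularly simple Gaussian way.

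The first step is to fix an arbitrary $\theta \in \{\pm 1\}^N$ and observe that for each $i \in [M]$, the random variable $\frac{1}{\sqrt{N}}\langle X_i, \theta\rangle$ is distributed as $\normal(0,1)$, since $X_i \sim \normal(0, I_N)$ and $\|\theta\|^2 = N$. Hence for any single $i$,
\[
\P\!\left(\tfrac{1}{\sqrt{N}}\langle X_i, \theta\rangle \ge \kappa\right) = 1 - \Phi(\kappa).
\]
Since the rows $X_1, \ldots, X_M$ are independent, these $M$ events are mutually independent, so $\P(\theta \in S_{M,N}(X)) = (1-\Phi(\kappa))^M$. Summing over the $2^N$ sign vectors by linearity of expectation yields
\[
\E\bigl[|S_{M,N}(X)|\bigr] = 2^N \bigl(1 - \Phi(\kappa)\bigr)^M.
\]

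Next, I take logarithms and look at the asymptotics under $M/N \to \alpha$. Using $\kappa < 0$ so that $\log(1-\Phi(\kappa)) < 0$, the condition $\E|S_{M,N}(X)| \to 0$ becomes
\[
\log 2 + \alpha \log\!\bigl(1-\Phi(\kappa)\bigr) < 0,
\]
which (after dividing by the negative quantity $\log(1-\Phi(\kappa))$ and flipping the inequality) is exactly $\alpha > \aup(\kappa) = -\log 2 / \log(1-\Phi(\kappa))$. Finally, Markov's inequality gives $\P(|S_{M,N}(X)| \ge 1) \le \E|S_{M,N}(X)| = o(1)$, so with high probability $S_{M,N}(X)$ is empty, establishing $\asat(\kappa) \le \aup(\kappa)$.

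I do not anticipate any genuine obstacle: the independence of rows sidesteps the usual difficulty of first-moment computations (no conditioning or dependence to account for), and the spherical symmetry of the Gaussian together with $\|\theta\|^2 = N$ removes any dependence of the per-constraint success probability on $\theta$. The only minor care needed is to verify the sign of $\log(1-\Phi(\kappa))$ when rearranging into the stated form of $\aup(\kappa)$, which holds throughout $\kappa < 0$.
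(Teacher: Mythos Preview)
Your proposal is correct and is essentially identical to the paper's proof: both compute $\E[|S_{M,N}(X)|] = 2^N(1-\Phi(\kappa))^M$ by the first-moment method and conclude via Markov's inequality that this expectation is exponentially small once $\alpha > \aup(\kappa)$. If anything, you spell out the per-row Gaussian calculation more explicitly than the paper does.
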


\begin{proof}
	We use the first moment method.
By direct calculation,
	\begin{equation*}
		\E \left[ \left\vert S_{M, N} (X) \right\vert \right] = 2^N \left( 1 - \Phi(\kappa) \right)^M = \exp \left( N \left( \log 2 + \frac{M}{N} \log \left( 1 - \Phi(\kappa) \right) \right) \right).
	\end{equation*}
	Therefore, as long as $\alpha > \aup (\kappa)$, $\E \left[ \left\vert S_{M, N} (X) \right\vert \right]$ will be exponentially small as $M, N \to \infty$, which concludes the proof.
\end{proof}

\begin{defn}[Lower bound on $\asat (\kappa)$]\label{def:neg_lower_bd}
		For $\kappa < 0$, let $c_*=c_*(\kappa) > 0$ be the unique positive solution of the equation 
	\begin{align*}
		c (1 - \Phi(\kappa + c)) = \phi (\kappa + c)\, .
	\end{align*}
	(Existence and uniqueness are guaranteed by \cref{lem:property_weight} $(a)$.) For $\alpha > 0$ and $q \in [-1, 1]$, define
	\begin{equation}
		\Psi \left( q; \alpha \right) = \alpha \log \E_q \left[ \exp(-c_*(\kappa) (G_1 + G_2)) \bone \left\{ G_1 \ge \kappa, G_2 \ge \kappa \right\} \right] + \Ent \left( \frac{1+q}{2} \right) + \log 2,\label{eq:psi}
	\end{equation}
	where $\E_q$ denotes expectation with respect to
	$(G_1, G_2)^\top \sim \normal \left( 0, \begin{bmatrix}
		1 & q \\
		q & 1
	\end{bmatrix} \right)$, and $\Ent(\cdot)$ is the entropy function.
	Finally, we define 
	\begin{align}
		\alpha_{\rm low} (\kappa):=\sup\Big\{
		\alpha > 0 \mbox{ s.t.  } \Psi \left( q; \alpha \right) \mbox{ is uniquely maximized at } 
		q = 0 \mbox{ and } \Psi'' \left(0; \alpha \right) < 0\Big\}\,,
	\end{align}	
	where in the above display, the second derivative of $\Psi$ is taken with respect to $q$.
\end{defn}

To establish the lower bound, we apply a weighted second-moment method. The definition above ensures that the weight function is chosen appropriately so that the second-moment argument is valid. Specifically, we use a weight proportional to $\exp (- c_*(\kappa) x) \bone \{ x \ge \kappa \}$. The condition that $\Psi \left( q; \alpha \right) $ is uniquely maximized at $q = 0$ guarantees that the dominant contribution to the second moment arises from pairs of solutions with approximately zero overlap. This structure is essential to ensure that the second moment is comparable to the square of the first moment.

\begin{thm}[Lower bound on $\asat (\kappa)$]\label{thm:neg_lower_bd}
	Let $\alow (\kappa)$ be defined as per \cref{def:neg_lower_bd}, then $\asat (\kappa) \ge \alow (\kappa)$.
Equivalently, if $\alpha < \alow (\kappa)$, with high probability we have $S_{M, N} (X) \neq \varnothing$.
\end{thm}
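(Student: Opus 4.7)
The plan is to apply a tilted second moment method, then boost the resulting $\Omega(1)$ probability to $1 - o(1)$ using the sharp threshold of \cite{xu2021sharp, NS23}. The vanilla second moment is known to fail for the perceptron, so we introduce the exponential tilt
\[
g(x) := e^{-c_*(x-\kappa)} \mathbf{1}\{x \geq \kappa\}, \qquad Z := \sum_{\theta \in \{\pm 1\}^N} \prod_{i=1}^M g\!\left(\frac{\langle X_i, \theta\rangle}{\sqrt{N}}\right),
\]
with $c_* = c_*(\kappa)$ as in \cref{def:neg_lower_bd}. Since $0 \le g(x) \le \mathbf{1}\{x \geq \kappa\}$ pointwise, $\{Z > 0\} \subseteq \{S_{M,N}(X) \neq \varnothing\}$, so it suffices to show $\liminf_N \mathbb{P}(Z > 0) > 0$.

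Because $\langle X_i, \theta\rangle/\sqrt{N} \sim \normal(0, 1)$ and $(\langle X_i, \theta\rangle, \langle X_i, \theta'\rangle)/\sqrt{N}$ is jointly Gaussian with correlation $q = N^{-1}\langle \theta, \theta'\rangle$, standard moment computations give $\mathbb{E}[Z] = 2^N \mathbb{E}[g(G)]^M$ and
\[
\mathbb{E}[Z^2] = 2^N \sum_{k=0}^N \binom{N}{k} \mathbb{E}_{q_k}\!\left[g(G_1) g(G_2)\right]^M, \qquad q_k := 1 - 2k/N.
\]
Using $g(x) g(y) = e^{2c_* \kappa - c_*(x+y)} \mathbf{1}\{x, y \geq \kappa\}$ to convert $\mathbb{E}_q[g(G_1) g(G_2)]$ into the expectation appearing in $\Psi$, Stirling's approximation for the binomial yields the Laplace representation
\[
\frac{\mathbb{E}[Z^2]}{(\mathbb{E}[Z])^2} \sim \int_{-1}^{1} \frac{\sqrt{N}}{\sqrt{2\pi(1-q^2)}} \exp\!\left(N[\Psi(q;\alpha) - \Psi(0;\alpha)]\right) dq.
\]
Under $\alpha < \alow(\kappa)$, $q = 0$ is by definition the unique maximizer of $\Psi(\cdot;\alpha)$ on $[-1,1]$ and $\Psi''(0;\alpha) < 0$, so the Laplace integral converges to $1/\sqrt{|\Psi''(0;\alpha)|}$: the quadratic expansion at $q = 0$ supplies a $1/\sqrt{N}$ factor that cancels the $\sqrt{N}$ prefactor, while the contribution from $|q| \geq \delta$ decays exponentially in $N$. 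The boundary terms $q_k = \pm 1$ (where Stirling fails) must be handled directly, by a bound of the form $2^{-N}\bigl(\mathbb{E}_{\pm 1}[g(G_1)g(G_2)]/\mathbb{E}[g(G)]^2\bigr)^M$, which is exponentially small since $\Psi(\pm 1;\alpha) < \Psi(0;\alpha)$. Paley--Zygmund then yields $\mathbb{P}(Z > 0) \geq c(\alpha, \kappa) > 0$.

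The role of $c_*$ is precisely to make $q = 0$ a critical point of $\Psi(\cdot;\alpha)$: since $\Ent((1+q)/2)$ has vanishing derivative at $q = 0$, the condition $\Psi'(0;\alpha) = 0$ reduces via a Price-type identity for bivariate Gaussians to $\mathbb{E}[G g(G)] = 0$, and completing the square in this Gaussian integral shows it is equivalent to $c_*(1 - \Phi(\kappa + c_*)) = \phi(\kappa + c_*)$, the defining equation for $c_*$. Without this tilt, a nonzero linear term in $q$ near $0$ would push the maximum of $\Psi$ away from the origin and cause the second-moment ratio to blow up exponentially. The main technical obstacle is therefore the uniform Laplace analysis: one must leverage the hypothesis to obtain \emph{global} (rather than merely local) control of $\Psi$ on $[-1,1]$, establish explicit exponential decay of $\Psi(q;\alpha) - \Psi(0;\alpha)$ away from $0$, and handle the degeneracy of the bivariate Gaussian at $q = \pm 1$ separately. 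Finally, once $\liminf_N \mathbb{P}(S_{M,N}(X) \neq \varnothing) > 0$ is established, the sharp threshold of \cite{xu2021sharp, NS23} upgrades this constant lower bound to $\lim_N \mathbb{P}(S_{M,N}(X) \neq \varnothing) = 1$ for every $\alpha < \alow(\kappa)$, yielding $\asat(\kappa) \geq \alow(\kappa)$.
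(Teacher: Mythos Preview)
Your proposal is correct and follows essentially the same approach as the paper: a tilted second moment with weight $e^{-c_* x}\mathbf{1}\{x\ge\kappa\}$ (your $g$ differs only by the immaterial constant factor $e^{c_*\kappa}$), Paley--Zygmund, a Laplace/Stirling analysis of $\E[Z^2]/(\E Z)^2$ organized by overlap, and the sharp-threshold upgrade. Your explanation of why $c_*$ forces $e'(0)=0$ via Stein/Price is exactly the content of \cref{lem:property_weight}(b); the paper simply cites this from \cite{montanari2024tractability} rather than rederiving it, and handles the Stirling-failure zone near $q=\pm1$ with a slightly wider window $Q_N\setminus Q_{N,K}$ of $O(1)$ terms rather than just the two endpoints.
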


\begin{proof}
	For future convenience, we denote $\alpha_{M, N} = M/N$.
	By the sharp threshold property of Binary Perceptron \cite{talagrand1999self, xu2021sharp, NS23}, it suffices to show that
	\begin{equation}
		\liminf_{M, N \to \infty} \P \left( S_{M, N} (X) \neq \varnothing \right) > 0.
	\end{equation}
	To this end, we employ a reweighted second moment argument.
The argument is similar to that of \cite{montanari2024tractability}'s lower bound on the satisfiability phase transition in the Asymmetric Spherical Perceptron.
	We make use of the following characterizations that they have made of terms from \Cref{eq:psi}.

\begin{lem}[Lemma 1, Appendix A.1 in \cite{montanari2024tractability}]\label{lem:property_weight}
	The followings hold:
	\begin{enumerate}
		\item[(a)] For any $\kappa < 0$, there exists a unique $c_* = c_* (\kappa) > 0$ satisfying $c \left( 1 - \Phi (\kappa +c) \right) = \phi (\kappa + c)$.
Moreover, $c_* = O_{\kappa} (\exp(- \kappa^2/2))$ as $\kappa \to -\infty$.
		
        \item[(b)] Let $\kappa$ and $c_*$ be as described in (a), and set $f (x) = \exp (- c_*(\kappa) x) \bone \{ x \ge \kappa \}$.
Define
	\begin{equation}\label{eq:def_of_e(q)}
		e (q) = \E_q [f(G_1) f(G_2)],
	\end{equation}
    then it follows that $e'(0) = 0$, $e(0) = 1 + o_{\kappa} (1)$, and $e(1) - e(0) = (1 + o_{\kappa} (1)) \Phi (\kappa)$.
    
    \item[(c)] Uniformly for $q \in [-1, 1]$, we have
    \begin{equation}\label{eq:asymptotics_for_e'(q)}
    \begin{split}
    e'(q) \, = \frac{1+ o_{\kappa}(1)}{2 \pi \sqrt{1-q^{2}}} \exp \left(-\frac{\kappa^{2}}{1+q}\right)+ O_{\kappa} \left(\exp (-\kappa^2) \right).
    	\end{split}
    \end{equation}
    Further, for any $q \in [0, 1)$ there exists a constant $C \in (0, +\infty)$ such that
    \begin{equation}\label{eq:bound_on_e''(q)}
        \sup_{u \in [-q, q]} \vert e''(u) \vert \le \vert \kappa \vert^C \exp \left( - \frac{\kappa^2}{1+q} \right).
    \end{equation}

	\end{enumerate}
\end{lem}

Now, let $f$ be as defined in \cref{lem:property_weight} $(b)$, and define the random variable
	\begin{equation}
		Z_{M, N} = \sum_{\theta \in \{ \pm 1 \}^N} \prod_{i=1}^{M} f \left( \frac{1}{\sqrt{N}} \left\langle X_i, \theta \right\rangle \right).
	\end{equation}
	Since $f(x) > 0$ if and only if $x \ge \kappa$, it follows that $Z_{M, N} > 0 \Longleftrightarrow S_{M, N} (X) \neq \varnothing$.
By the Paley-Zygmund inequality, we have
	\begin{equation}
		\P \left( Z_{M, N} > 0 \right) = \E \left[ \bone \left\{ Z_{M, N} > 0 \right\} \right] \ge \frac{\E \left[ Z_{M, N} \bone \left\{ Z_{M, N} > 0 \right\} \right]^2}{\E \left[ Z_{M, N}^2 \right]} = \frac{\E \left[ Z_{M, N} \right]^2}{\E \left[ Z_{M, N}^2 \right]},
	\end{equation} 
	where the intermediate bound follows from Cauchy-Schwarz inequality.
Our aim is to show that $\alpha < \alow (\kappa)$ implies $\E [ Z_{M, N}^2 ] = O_{M, N} (\E [Z_{M, N}]^2)$, so that
	\begin{equation}
		\liminf_{M, N \to \infty} \P \left( S_{M, N} (X) \neq \varnothing \right) = \liminf_{M, N \to \infty} \P \left( Z_{M, N} > 0 \right) > 0.
	\end{equation}
	To this end, we compute and compare the first and second moments of $Z_{M, N}$.
By direct calculation,
	\begin{align*}
		\E \left[ Z_{M, N} \right] = \, & \sum_{\theta \in \{ \pm 1 \}^N} \E \left[ \prod_{i=1}^{M} f \left( \frac{1}{\sqrt{N}} \left\langle X_i, \theta \right\rangle \right) \right] = 2^N \E \left[ f(G) \right]^M \\
		= \, & \exp \left( N \left( \log 2 + \alpha_{M, N} \log \E \left[ f(G) \right] \right) \right) \\
		= \, & \exp \left( N \Psi \left( 0; \alpha_{M, N} \right) / 2 \right).
	\end{align*}
	To calculate $\E [Z_{M, N}^2]$, we first write $Z_{M, N}^2$ as a double summation:
	\begin{align*}
		Z_{M, N}^2 = \, \sum_{\theta_1 \in \{ \pm 1 \}^N} \sum_{\theta_2 \in \{ \pm 1 \}^N} \prod_{i=1}^{M} f \left( \frac{1}{\sqrt{N}} \left\langle X_i, \theta_1 \right\rangle \right) f \left( \frac{1}{\sqrt{N}} \left\langle X_i, \theta_2 \right\rangle \right).
	\end{align*}
	Next, denote $q = \langle \theta_1, \theta_2 \rangle / N$, we observe the following two useful facts:
	\begin{enumerate}
		\item The number of pairs $(\theta_1, \theta_2) \in \{ \pm 1 \}^N \times \{ \pm 1 \}^N$ satisfying $q = \langle \theta_1, \theta_2 \rangle / N$ is
			\begin{equation*}
				2^N \times \binom{N}{\frac{1+q}{2} N}.
			\end{equation*}
		\item For $i = 1, \cdots, M$, we have
			\begin{equation*}
				\left( \frac{1}{\sqrt{N}} \left\langle X_i, \theta_1 \right\rangle, \frac{1}{\sqrt{N}} \left\langle X_i, \theta_2 \right\rangle \right) \iidsim \normal \left( 0, \begin{bmatrix}
				1 & q \\
				q & 1
				\end{bmatrix} \right).
			\end{equation*}
	\end{enumerate}
	Therefore, it follows that
	\begin{align}
		\E \left[ Z_{M, N}^2 \right] = \, & \sum_{\theta_1 \in \{ \pm 1 \}^N} \sum_{\theta_2 \in \{ \pm 1 \}^N} \E \left[ \prod_{i=1}^{M} f \left( \frac{1}{\sqrt{N}} \left\langle X_i, \theta_1 \right\rangle \right) f \left( \frac{1}{\sqrt{N}} \left\langle X_i, \theta_2 \right\rangle \right) \right] \\
		= \, & \sum_{q \in Q_N} 2^N \times \binom{N}{\frac{1+q}{2} N} \times \E_q \left[ f(G_1) f(G_2) \right]^M,
	\end{align}
	where $Q_N = \{ -1, -1 + \frac{2}{N}, \ldots, 1 - \frac{2}{N}, 1 \}$ is the set of all possible values of $q$.
Now for any $K \in \mathbb{N}$, $K \ge 4$ that does not depend on $N$, define 
	\begin{equation*}
		Q_{N, K} = \{ -1 + 2\tfrac{K}{N}, -1 + 2\tfrac{K+1}{N}, \ldots, 1 - 2 \tfrac{K+1}{N}, 1 - 2\tfrac{K}{N} \}.
	\end{equation*}
	The following estimate is due to Stirling's formula:
	\begin{align}
		\binom{N}{\frac{1+q}{2} N} = \, & \left( 1 + o_K(1) \right) \frac{\sqrt{2}}{\sqrt{\pi N (1 - q^2)}} \times \exp \left( N \Ent \left( \frac{1 + q}{2} \right) \right), \quad q \in Q_{N, K}, \\
		\binom{N}{\frac{1+q}{2} N} \le \, & \binom{N}{K - 1} \le \frac{N^{K-1}}{(K-1)!}, \quad q \in Q_N \backslash Q_{N, K}.
	\end{align}
	It then follows that
	\begin{align}
		&\E \left[ Z_{M, N}^2 \right] \le \,  \sum_{q \in Q_{N, K}} \frac{(1 + o_K (1)) \sqrt{2}}{\sqrt{\pi N (1 - q^2)}} \times \exp \left( N \left( \log 2 + \Ent \left( \frac{1 + q}{2} \right) + \alpha_{M, N} \log \E_q \left[ f(G_1) f(G_2) \right] \right) \right) \\
		& \hspace{1.6cm} + \sum_{q \in Q_N \backslash Q_{N, K}} \frac{N^{K-1}}{(K-1)!} \times \exp \left( N \left( \log 2 + \alpha_{M, N} \log \E_q \left[ f(G_1) f(G_2) \right] \right) \right) \\
		& \stackrel{(i)}{\le} \,  \sum_{q \in Q_{N, K}} \frac{(1 + o_K (1)) \sqrt{2}}{\sqrt{\pi N (1 - q^2)}} \times \exp \left( N \Psi \left( q; \alpha_{M, N} \right) \right) + 2 N^{K - 1} \exp \left( N \sup_{q \in Q_N \backslash Q_{N, K}} \Psi \left( q; \alpha_{M, N} \right) \right),
	\end{align}
	where $(i)$ follows from the fact that $\vert Q_N \backslash Q_{N, K} \vert = 2K \le 2(K-1)!$. 
Now since $\lim_{M, N \to \infty} \alpha_{M, N} = \alpha < \alow (\kappa)$, we know that $\alpha_{M, N} < \alow (\kappa)$ for sufficiently large $M, N$.
According to \cref{def:neg_lower_bd}, $\sup_{q \in Q_N \backslash Q_{N, K}} \Psi \left( q; \alpha_{M, N} \right) \le \Psi \left( 0; \alpha_{M, N} \right) - \veps$ for some $\veps > 0$. 
We thus obtain that 
	\begin{align}
		\E \left[ Z_{M, N}^2 \right] \le \, \left( 1 +o_N (1) \right) \sum_{q \in Q_{N, K}} \frac{(1 + o_K (1)) \sqrt{2}}{\sqrt{\pi N (1 - q^2)}} \times \exp \left( N \Psi \left( q; \alpha_{M, N} \right) \right).
	\end{align}
	Similarly, since $\sup_{1/2 \le \vert q \vert \le 1} \Psi(q; \alpha_{M, N}) \le \Psi (0; \alpha_{M, N}) - \veps$, it follows that
	\begin{align}
		\E \left[ Z_{M, N}^2 \right] \le \, & \left( 1 +o_N (1) \right) \sum_{q \in Q_{N, K} \cap [-1/2, 1/2]} \frac{(1 + o_K (1)) \sqrt{2}}{\sqrt{\pi N (1 - q^2)}} \times \exp \left( N \Psi \left( q; \alpha_{M, N} \right) \right) \\
		\le \, & \left( 1 + o_N (1) \right) \sum_{q \in Q_{N, K} \cap [-1/2, 1/2]} \frac{\sqrt{2}}{\sqrt{\pi N (1 - q^2)}} \times \exp \left( N \Psi \left( q; \alpha_{M, N} \right) \right) \\
		\le \, & \left( 1 + o_N (1) \right) \sqrt{\frac{N}{2 \pi}} \times \int_{-1/2}^{1/2} \frac{1}{\sqrt{1 - q^2}} \exp \left( N \Psi \left( q; \alpha_{M, N} \right) \right) \d q.
	\end{align}
	Using Laplace's method (see, e.g., Chapter 4.2 of \cite{de1981asymptotic}), we obtain that
	\begin{align}
		\E \left[ Z_{M, N}^2 \right] \le \, & \left( 1 + o_N (1) \right) \sqrt{\frac{N}{2 \pi}} \times \sqrt{\frac{2 \pi}{- N \Psi'' \left( 0; \alpha_{M, N} \right)}} \times \exp \left( N \Psi \left( 0; \alpha_{M, N} \right) \right) \\
		= \, & \left( 1 + o_N (1) \right) \times \sqrt{\frac{1}{-  \Psi'' \left( 0; \alpha_{M, N} \right)}} \times \exp \left( N \Psi \left( 0; \alpha_{M, N} \right) \right) \\
		= \, & \left( 1 + o_N (1) \right) \times \sqrt{\frac{1}{-  \Psi'' \left( 0; \alpha_{M, N} \right)}} \times \E \left[ Z_{M, N} \right]^2 \\
		= \, & O_{M, N} \left( \E \left[ Z_{M, N} \right]^2 \right).
	\end{align}
	This completes the proof.
\end{proof}

We now compare $\alow (\kappa)$ and $\aup (\kappa)$, establishing that they are within a $1 + o_{\kappa} (1)$ multiplicative factor of each other as $\kappa \to - \infty$.
To be precise, we have the following:
\begin{thm}
	For any $\veps > 0$, there exists a $\kappa_0 = \kappa_0 (\veps) < 0$, such that for all $\kappa \le \kappa_0$,
	\begin{equation}
		(1 - \veps) \frac{\log 2}{\Phi(\kappa)} \le \alow (\kappa) \le \aup (\kappa) \le (1 + \veps) \frac{\log 2}{\Phi(\kappa)}.
	\end{equation}
	In other words, both $\alow (\kappa)$ and $\aup (\kappa)$ are $(1+o_{\kappa} (1)) \log 2 / \Phi(\kappa)$ as $\kappa \to - \infty$.
\end{thm}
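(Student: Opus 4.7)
The plan is to prove both inequalities by direct asymptotic analysis. For the upper bound on $\aup(\kappa)$, I would expand $-\log(1-\Phi(\kappa)) = \Phi(\kappa) + \Phi(\kappa)^2/2 + O(\Phi(\kappa)^3)$ as $\kappa \to -\infty$ (valid since $\Phi(\kappa) \to 0$), which immediately yields
\[
\aup(\kappa) = \frac{\log 2}{\Phi(\kappa)(1+O(\Phi(\kappa)))} = (1+o_\kappa(1)) \frac{\log 2}{\Phi(\kappa)}.
\]

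For the lower bound, I would fix $\veps > 0$, set $\alpha := (1-\veps) \log 2 / \Phi(\kappa)$, and verify for all sufficiently negative $\kappa$ the two conditions in \cref{def:neg_lower_bd}: (i) $\Psi''(0;\alpha) < 0$, and (ii) $q=0$ is the unique maximizer of $\Psi(\cdot;\alpha)$ on $[-1,1]$. For (i), the chain rule together with $e'(0)=0$ from \cref{lem:property_weight}(b) gives $\Psi''(0;\alpha) = \alpha\, e''(0)/e(0) - 1$ (the $-1$ coming from $\tfrac{d^2}{dq^2} H((1+q)/2)\big|_{q=0}$), and \cref{lem:property_weight}(c) at $q=0$ yields $|e''(0)| \le |\kappa|^C e^{-\kappa^2}$, so $\alpha\,|e''(0)|/e(0) = O(|\kappa|^{C+1} e^{-\kappa^2/2}) \to 0$ and (i) holds.

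For (ii) I would partition $[-1,1]$ into three regions and bound $\Psi(q;\alpha) - \Psi(0;\alpha)$ in each. Near the origin (say $|q| \le \delta$ for some fixed small $\delta>0$), \cref{lem:property_weight}(c) yields $|e(q) - e(0)| = O(q^2 |\kappa|^C e^{-\kappa^2/(1+\delta)})$, so the $\log(e(q)/e(0))$ contribution to $\Psi$ is exponentially smaller than the quadratic entropy loss $\log 2 - H((1+q)/2) \sim q^2/2$. For $q$ bounded away from both $0$ and $\pm 1$ (say $\delta \le |q| \le 1-\eta$), integrating the asymptotic for $e'(q)$ in \cref{lem:property_weight}(c) gives $|e(q) - e(0)| = O(e^{-\kappa^2/(1+|q|)}/\kappa^2) = o(\Phi(\kappa))$ (since $\kappa^2/(1+|q|) > \kappa^2/2$), so $\alpha \log(e(q)/e(0)) \to 0$, while $\log 2 - H((1+q)/2)$ is bounded below by a positive constant. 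Finally for $|q| \ge 1-\eta$, monotonicity of $e$ on $[0,1]$ (from \cref{lem:property_weight}(c) for $q$ bounded away from $0$, combined with the Taylor bound for $q$ near $0$) and \cref{lem:property_weight}(b) give $\log(e(q)/e(0)) \le (1+o(1))\Phi(\kappa)$ on $[0,1]$, with an analogous bound near $q = -1$ via $e(-1) = 1 - 2\Phi(\kappa)(1+o(1))$; thus $\alpha \log(e(q)/e(0)) \le (1-\veps/2) \log 2$, and choosing $\eta = \eta(\veps)$ small enough so that $\log 2 - H(\eta/2) > (1-\veps/2) \log 2$ closes the argument.

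The main obstacle is the region $|q| \ge 1-\eta$, where both $\alpha \log(e(q)/e(0))$ and $\log 2 - H((1+q)/2)$ are of order $\Theta(\log 2)$ and must be separated using the $\veps$-slack built into the choice of $\alpha$; this forces the use of the sharp leading-order constant in $e(1) - e(0) = (1+o(1))\Phi(\kappa)$ from \cref{lem:property_weight}(b), rather than a bare order-of-magnitude estimate. The other regions are comparatively routine once the estimates from \cref{lem:property_weight} are in hand.
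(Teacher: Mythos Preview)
Your proposal is correct and follows essentially the same approach as the paper: both verify the two conditions of \cref{def:neg_lower_bd} at $\alpha = (1-\veps)\log 2/\Phi(\kappa)$ by partitioning $[-1,1]$ into a near-zero region (handled via the $e''$ bound and $\log 2 - \Ent((1+q)/2) \ge q^2/2$), an intermediate region (handled via the $e'$ asymptotic), and a boundary region near $q=+1$ (handled via the sharp constant in $e(1)-e(0)=(1+o_\kappa(1))\Phi(\kappa)$). The only differences are cosmetic --- the paper uses the fixed cutoffs $\pm 1/2$ and treats $[1/2,1]$ via showing $F'(q)<0$ on $[1/2,1-\eta]$ rather than bounding the integral directly --- and one small imprecision: your justification of monotonicity of $e$ on all of $[0,1]$ via a ``Taylor bound near $0$'' does not give a sign, but you only need $e(q)\le e(1)$ on $[1-\eta,1]$, which follows directly from $e'>0$ there, and the negative boundary $q\in[-1,-(1-\eta)]$ is in fact trivial since $|e(q)-e(0)|=O(e^{-\kappa^2})$ there.
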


\begin{proof}
	Since $- \log (1 - x) \sim x$ as $x \to 0$, it follows immediately that $\aup (\kappa) \sim \log 2 / \Phi(\kappa)$ as $\kappa \to - \infty$.
It then remains to show that $\alow (\kappa) \ge (1 - \veps) \log 2 / \Phi(\kappa)$ for sufficiently negative $\kappa$.
For any fixed $\veps > 0$, we will prove that
	\begin{equation}
		\alpha < (1 - \veps) \frac{\log 2}{\Phi(\kappa)} \implies \Psi \left( q; \alpha \right) \mbox{ is uniquely maximized at } 
		q = 0 \mbox{ and } \Psi'' \left(0; \alpha \right) < 0,
	\end{equation}
	where $\Psi ( q; \alpha )$ is defined in \cref{def:neg_lower_bd}.
Using the notation of \cref{lem:property_weight}, we write
	\begin{equation}
		\Psi \left( q; \alpha \right) = \alpha \log e(q) + \Ent \left( \frac{1+q}{2} \right) + \log 2,
	\end{equation}
	which leads to 
	\begin{align*}
		\Psi \left( q; \alpha \right) - \Psi \left( 0; \alpha \right) = \, & \alpha \log \frac{e(q)}{e(0)} + \Ent \left( \frac{1+q}{2} \right) - \log 2 \\
		= \, & \alpha \log \left( 1 + \frac{e(q) - e(0)}{e(0)} \right) + \Ent \left( \frac{1+q}{2} \right) - \log 2 \\
		\stackrel{(i)}{\le} \, & \alpha \frac{e(q) - e(0)}{e(0)} - \left( \log 2 - \Ent \left( \frac{1+q}{2} \right) \right) \\
		= \, & \frac{\alpha}{e(0)} \int_{0}^{q} e'(s) \d s - \left( \log 2 - \Ent \left( \frac{1+q}{2} \right) \right),
	\end{align*}
	where in $(i)$ we use the well-known inequality $\log(1 + x) \le x$.
	We thus only need to show that $\Psi'' \left(0; \alpha \right) < 0$, and that
	\begin{equation}\label{eq:cnt_fct_max}
		\frac{\alpha}{e(0)} \int_{0}^{q} e'(s) \d s < \log 2 - \Ent \left( \frac{1+q}{2} \right), \quad \forall q \in [-1, 1] \backslash \{ 0 \}.
	\end{equation}
	By direct calculation, we get
	\begin{align*}
		\Psi'' \left(0; \alpha \right) = \alpha \frac{e''(0) e(0) - e'(0)^2}{e(0)^2} - 1 = \alpha \frac{e''(0)}{e(0)} - 1, 
	\end{align*}
	since from \cref{lem:property_weight} (b) we know that $e'(0) = 0$.
Further, we know that $e(0) = 1 + o_{\kappa} (1)$ and $\vert e''(0) \vert \le \vert \kappa \vert^C \exp(- \kappa^2)$ from \cref{lem:property_weight}, thus leading to
	\begin{equation*}
		\Psi'' \left(0; \alpha \right) \le (1 - \veps) \frac{\log 2}{\Phi(\kappa)} \cdot \vert \kappa \vert^C \exp \left( -\kappa^2 \right) - 1 \le \vert \kappa \vert^{C'} \exp \left( - \frac{\kappa^2}{2} \right) - 1 < 0
	\end{equation*}
	for sufficiently large $\kappa < 0$.
The rest of this proof will be devoted to proving \cref{eq:cnt_fct_max}.
We consider three cases:
	\begin{itemize}
		\item [(1)] $q \in [-1/2, 1/2] \backslash \{ 0 \}$.
Notice from \cref{lem:property_weight} (b) that $e'(0) = 0$, $e(0) = 1 + o_{\kappa} (1)$, we get
			\begin{align*}
				\frac{\alpha}{e(0)} \int_{0}^{q} e'(s) \d s = \, & \left( 1 + o_{\kappa} (1) \right) \alpha \int_{0}^{q} \d s \int_{0}^{s} e''(u) \d u \\
				\stackrel{(i)}{\le} \, & \alpha \int_{0}^{q} \d s \int_{0}^{s} \vert \kappa \vert^C \exp \left( - \frac{\kappa^2}{1+q} \right) \d u \\
				\le \, & \frac{\log 2}{\Phi (\kappa)} \cdot \vert \kappa \vert^C \exp \left( - \frac{2 \kappa^2}{3} \right) \cdot \frac{q^2}{2} \\
				\le \, & \vert \kappa \vert^{C'} \exp \left( - \frac{\kappa^2}{6} \right) \cdot q^2 = o_{\kappa} (1) q^2,
			\end{align*}
			where $(i)$ is due to \cref{lem:property_weight} (c).
It is easy to check that
			\begin{equation}
				\log 2 - \Ent \left( \frac{1+q}{2} \right) \ge \frac{q^2}{2}, \quad \forall q \in [-1, 1].
			\end{equation}
			Hence, \cref{eq:cnt_fct_max} holds for all $q \in [-1/2, 1/2] \backslash \{ 0 \}$.
		\item [(2)] $q \in [-1, -1/2]$.
In this case, with the aid of \cref{eq:asymptotics_for_e'(q)} one gets that
\begin{align*}
	\frac{\alpha}{e(0)} \int_{0}^{q} e'(s) \d s = \, & (1 + o_{\kappa} (1)) \alpha \int_{0}^{q} \left( \frac{1 + o_{\kappa}(1)}{2 \pi \sqrt{1-s^{2}}} \exp \left(-\frac{\kappa^{2}}{1+s}\right)+ O_{\kappa} \left(\exp (-\kappa^2) \right) \right) \d s \\
	\stackrel{(i)}{\le} \, & C \alpha \left( \int_{0}^{\vert q \vert} \frac{\d s}{\sqrt{1 - s^2}} + \vert q \vert \right) \exp \left( - \kappa^2 \right) \stackrel{(ii)}{\le} \, C \alpha \exp \left( - \kappa^2 \right) \\
	\le \, & \frac{C}{\Phi(\kappa)} \exp \left( - \kappa^2 \right) \le \, C \vert \kappa \vert \exp \left( - \frac{\kappa^2}{2} \right) = o_{\kappa} (1),
\end{align*}
where $(i)$ holds since $\exp(- \kappa^2/(1+s)) \le \exp(- \kappa^2)$ when $s \le 0$, and $(ii)$ follows from the fact
\begin{equation*}
	\int_{0}^{\vert q \vert} \frac{\d s}{\sqrt{1 - s^2}} \le \int_{0}^{1} \frac{\d s}{\sqrt{1 - s^2}} = \frac{\pi}{2}.
\end{equation*}
Therefore, for sufficiently negative $\kappa$ we have
\begin{equation*}
	\frac{\alpha}{e(0)} \int_{0}^{q} e'(s) \d s = o_{\kappa} (1) < \log 2 - \Ent \left( \frac{1}{4} \right) \le \log 2 - \Ent \left( \frac{1+q}{2} \right), \quad \forall q \in \left[ -1, -\frac{1}{2} \right].
\end{equation*}

	\item [(3)] $q \in [1/2, 1]$.
Define an auxiliary function
		\begin{equation}
			F(q) = \frac{\alpha}{e(0)} \int_{0}^{q} e'(s) \d s + \Ent \left( \frac{1+q}{2} \right).
		\end{equation}
		Then, we know from \cref{lem:property_weight} that
		\begin{align*}
			F'(q) = \, & \frac{\alpha e'(q)}{e(0)} - \frac{1}{2} \log \left( \frac{1+q}{1-q} \right) = \, (1 + o_{\kappa} (1)) \alpha e'(q) - \frac{1}{2} \log \left( \frac{1+q}{1-q} \right) \\
			\stackrel{(i)}{\le} \, & \left( 1 - \frac{\veps}{2} \right) \frac{\log 2}{\Phi (\kappa)} \frac{1}{2 \pi \sqrt{1-q^{2}}} \exp \left(-\frac{\kappa^{2}}{1+q}\right) - \frac{1}{2} \log \left( \frac{1+q}{1-q} \right) \\
			\le \, & \left( 1 - \frac{\veps}{2} \right) \frac{\log 2}{\sqrt{2 \pi}} \frac{\vert \kappa \vert}{\sqrt{1-q^{2}}} \exp \left(-\frac{(1 - q)\kappa^{2}}{2(1+q)}\right) - \frac{1}{2} \log \left( \frac{1+q}{1-q} \right),
		\end{align*}
		where $(i)$ follows from \cref{eq:asymptotics_for_e'(q)} and the fact that
		\begin{equation*}
			O_{\kappa} \left(\exp (-\kappa^2) \right) = o_{\kappa} \left( \frac{1+ o_{\kappa}(1)}{2 \pi \sqrt{1-q^{2}}} \exp \left(-\frac{\kappa^{2}}{1+q}\right) \right), \quad \text{uniformly for} \ q \in \left[ \frac{1}{2}, 1 \right].
		\end{equation*}
		For any fixed $\eta \in (0, 1/2)$, we have $F'(q) < 0$ for all $q \in [1/2, 1 - \eta]$ and large enough $\kappa$.
As a consequence, the supremum of $F(q)$ over $[1/2, 1]$ can only be attained at $1/2$ or in $[1- \eta, 1]$, which further leads to the estimate:
		\begin{align}
			\sup_{q \in [1/2, 1]} F(q) \le \max \left\{ F \left( \frac{1}{2} \right), \ \sup_{q \in [1 - \eta, 1]} F(q)  \right\}.
		\end{align}
		Note that from our conclusion in case (1), $F(1/2) < \log 2$.
As for the second term, since $e'(q) \ge 0$ when $q \in [0, 1]$, we deduce that for all $q \in [1 - \eta, 1]$:
		\begin{align*}
			F(q) \le \, & \alpha \frac{e(1) - e(0)}{e(0)} + \Ent \left( 1 - \frac{\eta}{2} \right) = \left( 1 + o_{\kappa} (1) \right) \alpha \Phi(\kappa) + \Ent \left( 1 - \frac{\eta}{2} \right) \\
			\le \, & \left( 1 - \frac{\veps}{2} \right) \log 2 + \Ent \left( 1 - \frac{\eta}{2} \right) < \log 2
		\end{align*}
		if we choose $\eta$ to be such that $\Ent(1 - \eta/2) < (\veps / 2) \log 2$.
This proves $\sup_{q \in [1 - \eta, 1]} F(q) < \log 2$.
We finally conclude that
		\begin{align*}
			\sup_{q \in [1/2, 1]} F(q) < \log 2 \implies \frac{\alpha}{e(0)} \int_{0}^{q} e'(s) \d s < \log 2 - \Ent \left( \frac{1+q}{2} \right), \quad \forall q \in \left[ \frac{1}{2}, 1 \right].
		\end{align*}
	\end{itemize}
	Now we have proved that \cref{eq:cnt_fct_max} holds for all $q \in [-1, 1] \backslash \{ 0 \}$.
This completes the proof.
\end{proof}

\subsection{Satisfiability Phase Transition for $\kappa \ge 0$} \label{sec:sat2}

It has been proved in \cite{stojnic2013discrete} that for $\kappa \ge 0$,
\begin{equation*}
	\aup (\kappa) = \frac{2}{\pi} \cdot \E_{G \sim \normal(0, 1)} \left[ ( \kappa - G )_+^2 \right]^{-1}
\end{equation*}
is an upper bound on $\asat (\kappa)$ (in fact, for all $\kappa \in \R$, although this upper bound is worse than the one from first moment calculation for large negative $\kappa$).
Further, as $\kappa \to +\infty$, we have $\aup (\kappa) \sim (2 / \pi) \kappa^{-2}$.
Later in \cref{thm:alg_kappa_positive}, we will see that the algorithmic threshold is, likewise, $\alpha_{\rm alg} (\kappa)\sim (2 / \pi) \kappa^{-2}$ for large positive $\kappa$, implying that the computational-to-statistical gap (if any) is at most $1 + o_{\kappa} (1)$ as $\kappa \to \infty$.

\section{An Efficient Algorithm} \label{sec:alg} 
Our algorithm for finding a $\kappa$-margin solution in $S_{M, N} (X)$ includes two stages:
in the first step, we compute a solution to the linear program
\[
			\hat\theta = \argmax_\theta \ \left\langle v, \theta \right\rangle, \ \mbox{subject to} \ X \theta \ge \kappa \sqrt{N} \mathrm{1}, \ \mbox{and} \ \theta \in [-1, 1]^N
\]
for $v \sim \normal(0,I_N)$ a random direction independent of $X$.
Existing analyses \cite{rothvoss2017constructive,eldan2018efficient} imply that under very mild (worst-case) conditions on $X$, $\hat\theta$ should have a constant fraction of its entries set to signs $\pm 1$, so long as $\alpha < \alpha_0$, some fixed constant.
Here, since $X$ is Gaussian and independent of the random direction $v$, in \cref{sec:gordon}, we will be able to make use of Gaussian comparison inequalities to precisely characterize $\hat \theta$ at most densities $\alpha$, including getting a very good bound ($1-o_\kappa(1)$) on the fraction of entries of $\hat \theta$ which are fixed to $\pm 1$, and characterizing the empirical distribution of the margins $\{\iprod{X_i,\hat\theta}\}_{i\in [M]}$.

Suppose $S$ is the subset of entries of $\hat\theta$ that are fixed to signs, and denote by $\hat \theta_S$ the restriction of $\hat\theta$ to the corresponding coordinates, and by $X_S$ the restriction of $X$ to the corresponding columns.
After this first step, $X_S \hat\theta_S + X_{\bar{S}}\hat\theta_{\bar{S}} = X \hat\theta \ge \kappa \sqrt{N} \mathrm{1} $.
Following the usual approach in discrepancy algorithms, we will iterate on the unfixed coordinates of $\hat\theta$, by solving for a $\theta'$ in $\{ \pm 1 \}^{\bar{S}}$ satisfying $X_{\bar{S}} \theta' \ge \kappa \sqrt{N} \mathrm{1} - X_S\hat{\theta}_S$.
However, we cannot simply recycle our Gaussian comparison inequality analysis for the linear program, since now the hyperplane constraints $X_{\bar{S}}$ are not Gaussian any longer and have a nontrivial dependence on the intercept via $\hat\theta_S$.

Instead, we will round the remaining coordinates to signs using the Edge-Walk algorithm of Lovett and Meka \cite{lovett2015constructive}, using $\hat\theta_{\bar{S}}$ as a feasible starting point.
Again, an off-the-shelf worst-case analysis of \cite{lovett2015constructive} would allow us to conclude that the rounding step succeeds at some constant density.
Here, in order to obtain a sharper result, we use the precise information we have about the empirical distribution of the margins from step 1.

\subsection{Stage 1: Linear Programming}\label{sec:gordon}
The first stage of our algorithm is based on linear programming and inspired by the algorithms for discrepancy minimization in \cite{rothvoss2017constructive, eldan2018efficient}.
It is also the binary version of the linear programming algorithm for spherical perceptron in \cite{montanari2024tractability}:
\begin{enumerate}
	\item Draw $v \sim \normal(0, I_N)$ independent of the data matrix $X$.
	\item Solve the following linear programming problem:
		\begin{equation}\label{eq:LP_def}
			\mbox{maximize} \ \left\langle v, \theta \right\rangle, \ \mbox{subject to} \ X \theta \ge \kappa \sqrt{N} \mathrm{1}, \ \mbox{and} \ \theta \in [-1, 1]^N.
		\end{equation}
\end{enumerate}
Denote by $\hat{\theta}$ the output of this algorithm.
We will show that with high probability, a non-zero constant fraction of the coordinates $\{ \hat{\theta}_i \}_{i \in [N]}$ are in $\{ \pm 1 \}$, and obtain exact asymptotics for this fraction in the limit $M/N \to \alpha$.
As a side result, we also derive the limiting empirical distribution of the margins $\{ \langle X_i, \hat{\theta} \rangle / \sqrt{N} \}_{i=1}^{M}$, which turns out to be crucial for the second (rounding) stage of our algorithm.

In order to state our main result, we need the following definition.
\begin{defn}\label{def:LP_auxiliary}
	Let $G$ be a standard normal random variable.
For $\rho \in [0, 1]$, define
	\begin{equation}
		\varphi (\rho) = \, \inf_{t \ge 0} \left\{ \frac{\rho^2 t}{2} + \frac{1}{2 t} \E \left[ \min \left( \vert G \vert, t \right)^2 \right] + \E \left[ \left( \vert G \vert - t \right)_+ \right] \right\}
	\end{equation}
	and $t(\rho)$ as the unique $t \ge 0$ that achieves the above infimum (existence and uniqueness are guaranteed by \cref{lem:property_phi}).
Further, assume that
	\begin{equation}\label{eq:abs_bound_alpha}
		\alpha < \sup_{\rho \in [0, 1]} \left\{ \frac{\varphi(\rho)^2}{\E \left[ \left( \kappa - \rho G \right)_+^2 \right]} \right\},
	\end{equation}
	and let $(\rho, \gamma) = (\rho(\alpha, \kappa), \gamma(\alpha, \kappa))$ be the solution to the maximin problem
	\begin{equation}\label{eq:key_max_min}
		\max_{\rho \in [0, 1]} \min_{\gamma \ge 0} \left\{ - \gamma \sqrt{\alpha} \sqrt{\E \left[ \left( \kappa - \rho G \right)_+^2 \right]} + \sqrt{1 + \gamma^2} \varphi(\rho) \right\}
	\end{equation}
	(existence and uniqueness guaranteed by \cref{lem:property_maxmin}).
Finally, we define 
	\begin{equation}\label{def:tP}
		t(\alpha, \kappa) = t(\rho(\alpha, \kappa)) \ \mbox{and} \ P(\alpha, \kappa) = \operatorname{Law} \left( \max \left( \rho(\alpha, \kappa) G, \kappa \right) \right).
	\end{equation}
\end{defn}

The definitions of $\rho$, $\gamma$, and $t$ arise naturally from the analysis of the underlying optimization problem. Specifically, $\rho$ corresponds to the normalized norm of the variable $\theta$, while $\gamma$ captures the normalized norm of the dual variable associated with the constraint $\frac{1}{\sqrt{N}} X \theta \ge \kappa \mathbf{1}$. The parameter $t$, introduced in \cref{lem:calculus_1}, serves as an effective truncation level for a given vector $y$, balancing quadratic and linear penalties. This reduces the original high-dimensional optimization to a one-dimensional problem that preserves the essential trade-off structure.

\begin{thm}[Limiting behavior of the linear programming algorithm]\label{thm:LP_behavior} Under the settings of \cref{def:LP_auxiliary}, we have
	\begin{align}
		\label{eq:tight_fraction}
		\pliminf_{M/N \to \alpha} \frac{1}{N} \left\vert \left\{ i \in [N] : \hat{\theta}_i \in \{ \pm 1 \} \right\} \right\vert \ge \, & 2 \Phi \left( -t(\alpha, \kappa) \right), \\
		\label{eq:margin_conv}
		\plim_{M/N \to \alpha} W_2 \left( \frac{1}{M} \sum_{i=1}^{M} \delta_{\langle X_i, \hat{\theta} \rangle / \sqrt{N}}, \ P(\alpha, \kappa) \right) = \, & 0.
	\end{align}
\end{thm}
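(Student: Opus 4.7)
The plan is to apply the Convex Gaussian Min-Max Theorem (CGMT) to the linear program \eqref{eq:LP_def} and read off both conclusions from the solution of a scalar saddle-point problem that coincides with \eqref{eq:key_max_min}.

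\textbf{Step 1 (Min-max formulation and CGMT).} I would first rewrite \eqref{eq:LP_def} as a saddle-point problem by introducing a Lagrange multiplier $u \in \R_{\ge 0}^M$ for the halfspace constraints, obtaining
\[
\max_{\theta \in [-1,1]^N}\min_{u \ge 0}\ \langle v,\theta\rangle + u^\top X\theta - \kappa\sqrt{N}\, u^\top\mathbf{1}.
\]
An a priori bound on $\|u\|$ (from LP duality together with the fact that \eqref{eq:abs_bound_alpha} forces the LP value to be $\Theta(N)$) lets me compactify to $\{u \ge 0 : \|u\|\le R\}$, after which Sion's minimax theorem swaps the order to $\min_u \max_\theta$, putting the $X$-dependence entirely in the bilinear coupling $u^\top X\theta$. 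CGMT then furnishes the auxiliary problem
\[
\mathrm{AO} \;=\; \min_{\substack{u\ge 0\\ \|u\|\le R}}\,\max_{\theta \in [-1,1]^N}\ \|u\|\,g^\top\theta + \|\theta\|\,h^\top u + \langle v,\theta\rangle - \kappa\sqrt{N}\,u^\top\mathbf{1},
\]
with $g \sim \normal(0,I_N)$, $h \sim \normal(0,I_M)$ independent of $v$.

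\textbf{Step 2 (Reduction to \eqref{eq:key_max_min}).} Decompose $u = \gamma\sigma$ with $\gamma = \|u\|\ge 0$ and $\sigma$ in the nonnegative unit sphere, and split $\theta$ by $\rho = \|\theta\|/\sqrt{N} \in [0,1]$. The $\sigma$-dependent piece of the AO is linear, and the elementary identity $\min_{\sigma\ge 0,\,\|\sigma\|=1}\sigma^\top b = -\|(-b)_+\|$ applied to $b = \|\theta\|h - \kappa\sqrt{N}\mathbf{1}$ produces $-\gamma\|(\kappa\sqrt{N}\mathbf{1}-\|\theta\|h)_+\|$, which by the LLN concentrates around $-\gamma N\sqrt{\alpha\,\E[(\kappa-\rho G)_+^2]}$. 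For the inner maximization over $\theta$ of norm $\rho\sqrt{N}$, introducing a multiplier $t \ge 0$ separates the problem coordinatewise into $\max_{|\theta_i|\le 1} a_i\theta_i - \tfrac{t}{2}\theta_i^2$ with $a = \gamma g + v \sim \normal(0,(\gamma^2+1)I_N)$; the soft-threshold solution $\theta_i^\star = \sgn(a_i)\min(|a_i|/t, 1)$ combined with LLN recovers exactly $\sqrt{\gamma^2+1}\,\varphi(\rho)\cdot N$ in the limit, where the rescaling $\tau = t/\sqrt{\gamma^2+1}$ converts the Lagrangian expression into the form appearing in \cref{def:LP_auxiliary}. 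After a minimax swap (justified by the convex--concave structure together with the regularity in \cref{lem:property_maxmin}), one obtains $\mathrm{AO}/N \to$ the optimum of \eqref{eq:key_max_min}, attained at the unique $(\rho(\alpha,\kappa),\gamma(\alpha,\kappa))$ with threshold $t(\alpha,\kappa) = t(\rho(\alpha,\kappa))$.

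\textbf{Step 3 (Transfer from AO to the LP).} The structure of the AO optimizer makes both claims transparent at that level. A coordinate $\theta_i^\star$ saturates iff $|a_i|/\sqrt{\gamma^2+1} > t(\alpha,\kappa)$, i.e.\ iff $|G_i| > t(\alpha,\kappa)$ for $G_i := a_i/\sqrt{\gamma^2+1} \sim \normal(0,1)$, so the saturated fraction concentrates to $2\Phi(-t(\alpha,\kappa))$ by the LLN. On the dual side, the optimal $\sigma$ is supported exactly on $\{i : h_i < \kappa/\rho\}$, and these correspond to tight halfspace constraints with margin $\kappa$; the slack constraints carry margins distributed as $\rho G \mid \rho G \ge \kappa$, and mixing the two recovers $\Law(\max(\rho G, \kappa)) = P(\alpha,\kappa)$. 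To lift the two statements from the AO to the actual LP solution $\hat\theta$, I would use the localized form of CGMT (in the style used for the Spherical Perceptron in \cite{montanari2024tractability}): for any bounded Lipschitz test function $\psi$ and small $\varepsilon > 0$, perturb the objective by $\varepsilon\sum_{i=1}^M\psi(\langle X_i,\theta\rangle/\sqrt{N})$ (or by an analogous coloring penalty on $\theta$), show by the Step 2 computation that the perturbed AO value shifts by $\alpha\varepsilon\E_{P(\alpha,\kappa)}[\psi]+o(\varepsilon)$, and invoke CGMT to conclude the same shift for the PO, forcing the claimed convergence.

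The main technical obstacle is this last distributional transfer: CGMT in its basic form controls only the optimum value. To pass from value-equality to distributional convergence one needs strict concavity of the scalar maximin around its optimum (which is where \cref{lem:property_phi} and \cref{lem:property_maxmin} enter, to guarantee that the perturbation analysis is first-order stable), plus uniform integrability (via Gaussian tails on $\langle X_i,\hat\theta\rangle/\sqrt{N}$) to upgrade test-function convergence to $W_2$. A secondary subtlety is the a priori boundedness of $\|u^\star\|$ needed to compactify for CGMT; this can be handled by LP duality together with a direct lower bound on the LP value using the strict feasibility guaranteed by \eqref{eq:abs_bound_alpha}.
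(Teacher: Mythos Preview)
Your CGMT framework and the scalar reduction in Steps~1--2 are correct and essentially match the paper's analysis; the paper also applies Gordon's inequality (\cref{lem:gordon}) and reduces to the same saddle point \eqref{eq:key_max_min}. The difference is in Step~3, and there your proposed ``tilting'' transfer has a genuine gap for \cref{eq:tight_fraction}.

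Adding a Lipschitz penalty $\varepsilon\sum_i\psi(\theta_i)$ and differentiating can at best establish weak convergence of the empirical measure of $\hat\theta_i$ to $\Law(\sgn(G)\min(|G|/t,1))$. But the quantity in \cref{eq:tight_fraction} is $\tfrac{1}{N}|\{i:\hat\theta_i\in\{\pm1\}\}|$, which is the mass the empirical measure places on the closed zero-Lebesgue set $\{\pm1\}$. Portmanteau for closed sets gives only an \emph{upper} bound on this mass, and there is no Lipschitz $\psi$ with $\psi\le\mathbf{1}_{\{\pm1\}}$ other than $\psi\equiv 0$, so the perturbation argument cannot produce the required lower bound. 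The paper avoids this by passing to the dual LP and using complementary slackness: $\hat\theta_i\in\{\pm1\}$ iff the dual box multipliers satisfy $\hat\alpha_i+\hat\beta_i>0$. It then applies the \emph{exclusion-set} form of localized CGMT to the dual problem, comparing the dual value to the dual value restricted to $\{\alpha+\beta:\text{support size}<(2\Phi(-t_*)-\varepsilon)N\}$, and showing the restricted value is strictly larger. This is the missing idea.

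For \cref{eq:margin_conv} your approach is closer to viable but still needs one correction: perturbing by $\varepsilon\sum_j\psi(\langle X_j,\theta\rangle/\sqrt{N})$ destroys the bilinear $u^\top X\theta$ coupling that CGMT requires, so you cannot apply Gordon directly to the perturbed problem. The paper's fix (Part~2 of its proof) is to introduce $u=X\theta/\sqrt{N}$ as an explicit decision variable with an equality constraint, dualize that constraint with $\mu\in\R^M$, and then run Gordon on the $\mu^\top X\theta$ term. After that, one can either tilt in $u$ or---as the paper does---restrict to the exclusion set $\{u:W_2(\tfrac{1}{M}\sum\delta_{u_i},P_*)>\varepsilon\}$ and show the LP value drops strictly. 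Both routes work once the $u$-variable is in place, but the bilinearity issue must be addressed explicitly.
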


\begin{proof}
    We will make use of Gordon's inequality to analyze the linear program from \Cref{eq:LP_def}; this idea has been useful in a number of related contexts (e.g. \cite{stojnic2013another,miolane2021distribution}), including in analyzing the objective value of the same linear program over the unit sphere \cite{montanari2024tractability}.
    Gordon's inequality will allow us to replace the dependence on $X$ with a simpler dependence on $1$-dimensional Gaussian vectors, which will allow us to show that the objective value of the LP is asymptotically equivalent to a very well-concentrated random variable.
    Then, we again use Gordon's inequality to analyze the objective value of the LP over the restricted set of solutions not satisfying \Cref{eq:tight_fraction,eq:margin_conv}, and notice that the objective value over this set drops considerably.
    As a consequence, we conclude that the optimal solution must satisfy \Cref{eq:tight_fraction,eq:margin_conv} with high probability. 
    
	For notational convenience, we recast $(\rho(\alpha, \kappa), \gamma(\alpha, \kappa), t(\alpha, \kappa), P(\alpha, \kappa))$ as $(\rho_*, \gamma_*, t_*, P_*)$.
	\vspace{0.5em}
	
	\noindent {\bf Part 1. Proof of \cref{eq:tight_fraction}.}
	To begin with, note that the linear program~\eqref{eq:LP_def} can be rewritten as
	\begin{equation}\label{eq:LP1_primal}
		\mbox{maximize} \ v^\top \theta, \ \mbox{subject to} \ \frac{1}{\sqrt{N}} X \theta \ge \kappa 1, \ \theta \ge -1, \ 1 \ge \theta.
	\end{equation}
	Since its domain is compact, the above problem always has a solution.
By duality, we know that the following dual problem also has a solution and their values are equal:
	\begin{equation}\label{eq:LP1_dual}
		\mbox{minimize} \ \left( \alpha + \beta \right)^\top 1 - \kappa \lambda^\top 1, \ \mbox{subject to} \ \frac{1}{\sqrt{N}} X^\top \lambda = \beta - \alpha - v, \ \lambda, \alpha, \beta \ge 0.
	\end{equation}
	Denote by $(\hat{\lambda}, \hat{\alpha}, \hat{\beta})$ the solution to \cref{eq:LP1_dual}.
By complementary slackness, we know that
	\begin{equation}
		\hat{\theta}_i \in \{ \pm 1 \} \Longleftrightarrow \hat{\alpha}_i > 0 \ \mbox{or} \ \hat{\beta}_i > 0 \Longleftrightarrow \hat{\alpha}_i + \hat{\beta}_i > 0.
	\end{equation}
	Therefore, it suffices to show that for all $\veps > 0$:
	\begin{equation}
		\lim_{M/N \to \alpha} \P \left( \left\vert \left\{ i \in [N]: \hat{\alpha}_i + \hat{\beta}_i > 0 \right\} \right\vert \ge \left( 2 \Phi \left( -t_* \right) - \veps \right) N \right) = 1.
	\end{equation}
	To this end, define the set
	\begin{equation}
		N_0^{\veps} = \left\{ z \in \R^N: z \ge 0, \left\vert \left\{ i \in [N]: z_i > 0 \right\} \right\vert \ge \left( 2 \Phi \left( -t_* \right) - \veps \right) N \right\}.
	\end{equation}
	We will show that
	\begin{equation}
		\lim_{M/N \to \alpha} \P \left( \hat{\alpha} + \hat{\beta} \in N_0^{\veps} \right) = 1.
	\end{equation}
	Let us consider the following two optimization problems with the same objective function but different constraints:
	\begin{align}
		\xi_{M,N} = \, & \min \left\{ \left( \alpha + \beta \right)^\top 1 - \kappa \lambda^\top 1 \right\}, \ \mbox{s.t.} \ \frac{1}{\sqrt{N}} X^\top \lambda = \beta - \alpha - v, \ \lambda, \alpha, \beta \ge 0, \\
		\xi_{M,N}^{\veps} = \, & \min \left\{ \left( \alpha + \beta \right)^\top 1 - \kappa \lambda^\top 1 \right\}, \ \mbox{s.t.} \ \frac{1}{\sqrt{N}} X^\top \lambda = \beta - \alpha - v, \ \lambda, \alpha, \beta \ge 0, \ \alpha + \beta \notin N_0^{\veps}.
	\end{align}
	Since $\xi_{M,N}^{\veps} > \xi_{M,N}$ implies $\hat{\alpha} + \hat{\beta} \in N_0^{\veps}$, it suffices to show that
	\begin{equation}\label{eq:W2_conv_1}
		\lim_{M/N \to \alpha} \P \left( \xi_{M,N}^{\veps} > \xi_{M,N} \right) = 1.
	\end{equation}
	To establish \cref{eq:W2_conv_1}, we proceed with the following two steps:
	\vspace{0.5em}
	
	\noindent {\bf Step 1. Analysis of $\xi_{M, N}$.} By strong duality, we know that $\xi_{M, N}$ is just the value of the linear program \cref{eq:LP_def}, which can be recast into the following max-min (or min-max) form:
	\begin{align}
		\xi_{M, N} = \, & \max_{\norm{\theta}_2 \le \sqrt{N}} \min_{\lambda, \alpha, \beta \ge 0} \left\{ v^\top \theta + \lambda^\top \left( \frac{1}{\sqrt{N}} X \theta - \kappa 1 \right) + \alpha^\top \left( \theta + 1 \right) + \beta^\top \left( 1 - \theta \right) \right\} \\
		\stackrel{(i)}{=} \, & \min_{\lambda, \alpha, \beta \ge 0} \max_{\norm{\theta}_2 \le \sqrt{N}} \left\{ v^\top \theta + \lambda^\top \left( \frac{1}{\sqrt{N}} X \theta - \kappa 1 \right) + \alpha^\top \left( \theta + 1 \right) + \beta^\top \left( 1 - \theta \right) \right\},
	\end{align}
	where $(i)$ follows from Sion's minimax theorem.
Conditioning on $v$ (independent of $X$) and applying \cref{lem:gordon}, we deduce that
	\begin{equation}\label{eq:gordon_res_1}
		\P \left( \xi_{M, N} \le t \right) \le 2 \P \left( \xi_{M, N}^{(1)} \le t \right), \quad \P \left( \xi_{M, N} \ge t \right) \le 2 \P \left( \xi_{M, N}^{(1)} \ge t \right), \quad \forall t \in \R,
	\end{equation}
	where
	\begin{align}
		\xi_{M, N}^{(1)} = \, & \min_{\lambda, \alpha, \beta \ge 0} \max_{\norm{\theta}_2 \le \sqrt{N}} \left\{ v^\top \theta + \frac{1}{\sqrt{N}} \norm{\lambda}_2 \theta^\top g + \frac{1}{\sqrt{N}} \norm{\theta}_2 \lambda^\top h - \kappa \lambda^\top 1 + \alpha^\top \left( \theta + 1 \right) + \beta^\top \left( 1 - \theta \right) \right\}, \\
		\mbox{and} \, \, & h \sim \normal(0, I_M), \ g \sim \normal(0, I_N), \ (g, h) \perp\!\!\!\perp v.
	\end{align}
	Next, we note that for any fixed $\rho \in [0, 1]$ with $\norm{\theta}_2 = \rho \sqrt{N}$, the inner maximum over $\theta$ equals
	\begin{equation*}
		\rho \sqrt{N} \norm{v + \frac{1}{\sqrt{N}} \norm{\lambda}_2 g + \alpha - \beta}_2
		 + \lambda^\top \left( \rho h - \kappa 1 \right) + \alpha^\top 1 + \beta^\top 1.
	\end{equation*}
	Similarly, given $\norm{\lambda}_2 = \gamma \sqrt{N}$ and $\lambda \ge 0$, the outer minimum always equals
	\begin{equation*}
		\rho \sqrt{N} \norm{v + \gamma g + \alpha - \beta}_2
		 - \sqrt{N} \gamma \norm{\left( \kappa 1 - \rho h \right)_+}_2 + (\alpha + \beta)^\top 1.
	\end{equation*}
	As a consequence, we obtain that
	\begin{align}
		\xi_{M, N}^{(1)} = \, & \min_{\substack{ \gamma \ge 0 \\ \alpha, \beta \ge 0}} \max_{\rho \in [0, 1]} \left\{ \rho \sqrt{N} \norm{v + \gamma g + \alpha - \beta}_2
		 - \sqrt{N} \gamma \norm{\left( \kappa 1 - \rho h \right)_+}_2 + (\alpha + \beta)^\top 1 \right\} \\
		 \stackrel{d}{=} \, & N \min_{\substack{ \gamma \ge 0 \\ \alpha, \beta \ge 0}} \max_{\rho \in [0, 1]} \left\{ \frac{\rho}{\sqrt{N}} \norm{ \sqrt{1 + \gamma^2} g + \alpha - \beta}_2
		 - \frac{\gamma \sqrt{\alpha_{M, N}}}{\sqrt{M}} \norm{\left( \kappa 1 - \rho h \right)_+}_2 + \frac{1}{N}\sum_{i=1}^{N}(\alpha_i + \beta_i) \right\} \\
		 = \, & N \max_{\rho \in [0, 1]} \min_{\substack{ \gamma \ge 0 \\ \alpha, \beta \ge 0}} \left\{ \frac{\rho}{\sqrt{N}} \norm{ \sqrt{1 + \gamma^2} g + \alpha - \beta}_2
		 - \frac{\gamma \sqrt{\alpha_{M, N}}}{\sqrt{M}} \norm{\left( \kappa 1 - \rho h \right)_+}_2 + \frac{1}{N}\sum_{i=1}^{N}(\alpha_i + \beta_i) \right\},
	\end{align}
	where $\alpha_{M, N} = M/N$, and the last equality again follows from Sion's minimax theorem: one can easily verify that the objective function is convex-concave.
To further simplify the above expression for $\xi_{M, N}^{(1)}$, we note that for any $B \ge 0$, $B \in \R^N$,
	\begin{equation*}
		\min_{\substack{\alpha, \beta \ge 0 \\ \alpha + \beta = B}} \norm{ \sqrt{1 + \gamma^2} g + \alpha - \beta}_2 = \norm{\left( \sqrt{1 + \gamma^2} |g| - B \right)_+}_2,
	\end{equation*}
	thus leading to
	\begin{align}
		\frac{\xi_{M, N}^{(1)}}{N} = \, & \max_{\rho \in [0, 1]} \min_{\substack{ \gamma \ge 0 \\ B \ge 0}} \left\{ \frac{\rho}{\sqrt{N}} \norm{\left( \sqrt{1 + \gamma^2} |g| - B \right)_+}_2
		 - \frac{\gamma \sqrt{\alpha_{M, N}}}{\sqrt{M}} \norm{\left( \kappa 1 - \rho h \right)_+}_2 + \frac{1}{N}\sum_{i=1}^{N} B_i \right\} \\
		 \stackrel{(i)}{=} \, & \max_{\rho \in [0, 1]} \min_{\substack{ \gamma \ge 0 \\ A \ge 0}} \left\{
		 - \frac{\gamma \sqrt{\alpha_{M, N}}}{\sqrt{M}} \norm{\left( \kappa 1 - \rho h \right)_+}_2 + \frac{\rho \sqrt{1 + \gamma^2}}{\sqrt{N}} \norm{\left( |g| - A \right)_+}_2 + \frac{\sqrt{1 + \gamma^2}}{N}\sum_{i=1}^{N} A_i \right\},
	\end{align}
	where in $(i)$ we make the change of variables $A = B / \sqrt{1 + \gamma^2}$.
    We can now apply some calculus to remove the dependence on $A$; the following lemma summarizes the outcome of the calculation.

\begin{lem}\torestate{\label{lem:calculus_1}
	Let $\rho \ge 0$, $y \in \R^N, y \ge 0$.
Then, we have
	\begin{align}
		\min_{A \ge 0} \left\{ \frac{\rho}{\sqrt{N}} \norm{\left( y - A \right)_+}_2 + \frac{1}{N}\sum_{i=1}^{N} A_i \right\} = \, \min_{t \ge 0} \left\{ \frac{\rho^2 t}{2} + \frac{1}{2 t N} \norm{\min \left( y, t \right)}_2^2 + \frac{1}{N} \sum_{i=1}^{N} \left( y_i - t \right)_+ \right\}.
	\end{align}}
\end{lem}

We will give the proof of \Cref{lem:calculus_1} in \Cref{sec:aux_lemmas}.
Applying \cref{lem:calculus_1} yields that
	\begin{align*}
		\frac{\xi_{M, N}^{(1)}}{N}
		 = \, & \max_{\rho \in [0, 1]} \min_{\gamma \ge 0} \left\{
		 - \frac{\gamma \sqrt{\alpha_{M, N}}}{\sqrt{M}} \norm{\left( \kappa 1 - \rho h \right)_+}_2 + \sqrt{1+\gamma^2} \varphi_N (\rho; g) \right\},
	\end{align*}
	where we define
    \begin{equation}
        \varphi_N (\rho; g) = \min_{t \ge 0} \left( \frac{\rho^2 t}{2} + \frac{1}{2 t N} \sum_{i=1}^{N} \min \left( |g_i|, t \right)^2 + \frac{1}{N} \sum_{i=1}^{N} (|g_i| - t)_+ \right),
    \end{equation}
    (see also \cref{lem:ULLN_1}).

    We now wish to simplify the inner minimization; here, we re-use the calculation appearing in \cite[Lemma 3]{montanari2024tractability}:

\begin{lem}\label{lem:calculus_2}
Suppose $a, b \geq 0$.
Then we have
$$
\begin{aligned}
\inf _{x \geq 0}\left\{a \sqrt{1+x^2}-b x\right\}= \, & \begin{cases}\sqrt{a^2-b^2}, & \text { if } a \geq b \\
-\infty, & \text { if } a<b\end{cases}, \\
\underset{x \geq 0}{\operatorname{argmin}}\left\{a \sqrt{1+x^2}-b x\right\}= \, & \begin{cases}b / \sqrt{a^2-b^2}, & \text { if } a>b \\
\infty, & \text { if } a \leq b\end{cases}.
\end{aligned}
$$
\end{lem}
    
	According to \cref{lem:calculus_2}, the inner minimum (infimum) is finite if any only if
	\begin{equation*}
		\rho \in E_{M, N} (\rho) = \left\{ \rho \in [0, 1]: \varphi_N (\rho; g)^2 \ge \frac{\alpha_{M, N}}{M} \norm{\left( \kappa 1 - \rho h \right)_+}_2^2 \right\}.
	\end{equation*}
	By \cref{lem:ULLN_1}, \cref{lem:ULLN_2}, and our assumption $\alpha_{M, N} \to \alpha$ where
	\begin{equation*}
		\operatorname{int} E(\rho) \neq \varnothing, \quad E(\rho) = \left\{ \rho \in [0, 1]: \varphi (\rho)^2 \ge \alpha \E \left[ \left( \kappa - \rho G \right)_+^2 \right] \right\},
	\end{equation*}
	we know that with high probability, $\operatorname{int} E_{M, N} (\rho) \neq \varnothing$.
Consequently,
	\begin{align*}
		\frac{\xi_{M, N}^{(1)}}{N} = \, & \max_{\rho \in E_{M, N} (\rho)} \left\{ \sqrt{\varphi_N (\rho; g)^2 - \frac{\alpha_{M, N}}{M} \norm{\left( \kappa 1 - \rho h \right)_+}_2^2} \right\} \\
		= \, & \max_{\rho \in [0, 1]} \left\{ \sqrt{ \left( \varphi_N (\rho; g)^2 - \frac{\alpha_{M, N}}{M} \norm{\left( \kappa 1 - \rho h \right)_+}_2^2 \right)_+} \right\} \\
		\stackrel{p}{\to} \, & \max_{\rho \in [0, 1]} \left\{ \sqrt{ \left( \varphi (\rho)^2 - \alpha \E \left[ \left( \kappa - \rho G \right)_+^2 \right] \right)_+} \right\} \\
		= \, & \max_{\rho \in E(\rho)} \left\{ \sqrt{ \varphi (\rho)^2 - \alpha \E \left[ \left( \kappa - \rho G \right)_+^2 \right] } \right\} \\
		= \, & \max_{\rho \in [0, 1]} \min_{\gamma \ge 0} \left\{ - \gamma \sqrt{\alpha} \sqrt{\E \left[ \left( \kappa - \rho G \right)_+^2 \right]} + \sqrt{1 + \gamma^2} \varphi(\rho) \right\} \\
        = \, & \sqrt{ \varphi \left( \rho_* \right)^2 - \alpha \E \left[ \left( \kappa - \rho_* G \right)_+^2 \right] }
	\end{align*}
	as $M, N \to \infty$ with $M/N \to \alpha$.
Combining this fact with \cref{eq:gordon_res_1}, we deduce that
	\begin{equation}\label{eq:xi_lower_bd}
	\begin{split}
		\plim_{M/N \to \alpha} \frac{\xi_{M, N}}{N} = \plim_{M/N \to \alpha} \frac{\xi_{M, N}^{(1)}}{N} = \, & \max_{\rho \in [0, 1]} \min_{\gamma \ge 0} \left\{ - \gamma \sqrt{\alpha} \sqrt{\E \left[ \left( \kappa - \rho G \right)_+^2 \right]} + \sqrt{1 + \gamma^2} \varphi(\rho) \right\} \\
		= \, & \sqrt{ \varphi \left( \rho_* \right)^2 - \alpha \E \left[ \left( \kappa - \rho_* G \right)_+^2 \right] }.
	\end{split}
	\end{equation}
	\vspace{0.5em}
	
	\noindent{\bf Step 2. Analysis of $\xi_{M,N}^{\veps}$.} Similar to the previous step, we have
	\begin{align}
		\xi_{M, N}^{\veps} \ge \, \min_{\substack{\lambda, \alpha, \beta \ge 0 \\ \alpha + \beta \notin N_0^{\veps}}} \max_{\norm{\theta}_2 \le \sqrt{N}} \left\{ v^\top \theta + \lambda^\top \left( \frac{1}{\sqrt{N}} X \theta - \kappa 1 \right) + \alpha^\top \left( \theta + 1 \right) + \beta^\top \left( 1 - \theta \right) \right\}.
	\end{align}
	Define for $h \sim \normal(0, I_M)$, $g \sim \normal(0, I_N)$, $(g, h) \perp\!\!\!\perp v$:
	\begin{equation}
		\xi_{M,N}^{\veps, (1)} = \min_{\substack{\lambda, \alpha, \beta \ge 0 \\ \alpha + \beta \notin N_0^{\veps}}} \max_{\norm{\theta}_2 \le \sqrt{N}} \left\{ v^\top \theta + \frac{1}{\sqrt{N}} \norm{\lambda}_2 \theta^\top g + \frac{1}{\sqrt{N}} \norm{\theta}_2 \lambda^\top h - \kappa \lambda^\top 1 + \alpha^\top \left( \theta + 1 \right) + \beta^\top \left( 1 - \theta \right) \right\}.
	\end{equation}
	Then, we deduce similarly from \cref{lem:gordon} $(i)$ that
	\begin{equation*}
		\P \left( \xi_{M, N}^{\veps} \le t \right) \le 2 \P \left( \xi_{M,N}^{\veps, (1)} \le t \right), \quad \forall t \in \R.
	\end{equation*}
	It then suffices to obtain a lower bound for $\xi_{M,N}^{\veps, (1)}$.
Following a similar argument to the previous step, one gets that
	\begin{align*}
		\xi_{M,N}^{\veps, (1)} \ge \, & \min_{\substack{ \gamma \ge 0 \\ \alpha, \beta \ge 0 \\ \alpha + \beta \notin N_0^{\veps}}} \max_{\rho \in [0, 1]} \left\{ \rho \sqrt{N} \norm{v + \gamma g + \alpha - \beta}_2
		 - \sqrt{N} \gamma \norm{\left( \kappa 1 - \rho h \right)_+}_2 + (\alpha + \beta)^\top 1 \right\} \\
		 \stackrel{d}{=} \, & N \min_{\substack{ \gamma \ge 0 \\ \alpha, \beta \ge 0 \\ \alpha + \beta \notin N_0^{\veps}}} \max_{\rho \in [0, 1]} \left\{ \frac{\rho}{\sqrt{N}} \norm{ \sqrt{1 + \gamma^2} g + \alpha - \beta}_2
		 - \frac{\gamma \sqrt{\alpha_{M, N}}}{\sqrt{M}} \norm{\left( \kappa 1 - \rho h \right)_+}_2 + \frac{1}{N}\sum_{i=1}^{N}(\alpha_i + \beta_i) \right\} \\
		 \ge \, & N \max_{\rho \in [0, 1]} \min_{\substack{ \gamma \ge 0 \\ \alpha, \beta \ge 0 \\ \alpha + \beta \notin N_0^{\veps}}} \left\{ \frac{\rho}{\sqrt{N}} \norm{ \sqrt{1 + \gamma^2} g + \alpha - \beta}_2
		 - \frac{\gamma \sqrt{\alpha_{M, N}}}{\sqrt{M}} \norm{\left( \kappa 1 - \rho h \right)_+}_2 + \frac{1}{N}\sum_{i=1}^{N}(\alpha_i + \beta_i) \right\} \\
		 = \, & N \max_{\rho \in [0, 1]} \min_{\substack{ \gamma \ge 0 \\ B \ge 0, B \notin N_0^{\veps}}} \left\{ \frac{\rho}{\sqrt{N}} \norm{\left( \sqrt{1 + \gamma^2} |g| - B \right)_+}_2
		 - \frac{\gamma \sqrt{\alpha_{M, N}}}{\sqrt{M}} \norm{\left( \kappa 1 - \rho h \right)_+}_2 + \frac{1}{N}\sum_{i=1}^{N} B_i \right\} \\
		 = \, & N \max_{\rho \in [0, 1]} \min_{\substack{ \gamma \ge 0 \\ A \ge 0, A \notin N_0^{\veps} }} \left\{
		 - \frac{\gamma \sqrt{\alpha_{M, N}}}{\sqrt{M}} \norm{\left( \kappa 1 - \rho h \right)_+}_2 + \frac{\rho \sqrt{1 + \gamma^2}}{\sqrt{N}} \norm{\left( |g| - A \right)_+}_2 + \frac{\sqrt{1 + \gamma^2}}{N}\sum_{i=1}^{N} A_i \right\}.
	\end{align*}
	This calculation implies that
	\begin{equation}
		\frac{\xi_{M,N}^{\veps, (1)}}{N} \ge \max_{\rho \in [0, 1]} \min_{\gamma \ge 0} \left\{
		 - \frac{\gamma \sqrt{\alpha_{M, N}}}{\sqrt{M}} \norm{\left( \kappa 1 - \rho h \right)_+}_2 + \sqrt{1 + \gamma^2} \min_{A \ge 0, A \notin N_0^{\veps}} \left\{
		 \frac{\rho}{\sqrt{N}} \norm{\left( |g| - A \right)_+}_2 + \frac{1}{N}\sum_{i=1}^{N} A_i \right\} \right\}.
	\end{equation}
	For future convenience, we define
	\begin{equation}
		\varphi_{N, \veps} (\rho; g) = \,\min_{A \ge 0, A \notin N_0^{\veps}} \left\{
		 \frac{\rho}{\sqrt{N}} \norm{\left( |g| - A \right)_+}_2 + \frac{1}{N}\sum_{i=1}^{N} A_i \right\}.
	\end{equation}
	By definition, we always have $\varphi_{N, \veps} (\rho; g) \ge \varphi_{N} (\rho; g)$, so there exists $\rho \in [0, 1]$ such that
	\begin{equation*}
		\varphi_{N, \veps} (\rho; g)^2 \ge \frac{\alpha_{M, N}}{M} \norm{\left( \kappa 1 - \rho h \right)_+}_2^2.
	\end{equation*}
	Using a similar argument as in the previous step, we deduce that
	\begin{align*}
		\frac{\xi_{M,N}^{\veps, (1)}}{N} \ge \, & \max_{\rho \in [0, 1]} \min_{\gamma \ge 0} \left\{
		 - \frac{\gamma \sqrt{\alpha_{M, N}}}{\sqrt{M}} \norm{\left( \kappa 1 - \rho h \right)_+}_2 + \sqrt{1 + \gamma^2} \varphi_{N, \veps} (\rho; g) \right\} \\
		 = \, & \max_{\rho \in [0, 1]} \left\{ \sqrt{ \left( \varphi_{N, \veps} (\rho; g)^2 - \frac{\alpha_{M, N}}{M} \norm{\left( \kappa 1 - \rho h \right)_+}_2^2 \right)_+} \right\} \\
		 \ge \, & \sqrt{ \varphi_{N, \veps} \left( \rho_*; g \right)^2 - \frac{\alpha_{M, N}}{M} \norm{\left( \kappa 1 - \rho_* h \right)_+}_2^2 }
	\end{align*} 
	with high probability.
In what follows we compute the limit of the right hand side of the above equation.
Using a similar argument as in the proof of \cref{lem:calculus_1}, we get
	\begin{align}
		\varphi_{N, \veps} (\rho; g) = \, & \min_{A \ge 0, A \notin N_0^{\veps}} \left\{
		 \frac{\rho}{\sqrt{N}} \norm{\left( |g| - A \right)_+}_2 + \frac{1}{N}\sum_{i=1}^{N} A_i \right\} \\
		 = \, & \min_{A \ge 0, A \notin N_0^{\veps}} \min_{t \ge 0} \left\{ \frac{\rho^2 t}{2} + \frac{1}{2 t N} \sum_{i=1}^{N} \left( \vert g_i \vert - A_i \right)_+^2 + \frac{1}{N}\sum_{i=1}^{N} A_i \right\} \\
		= \, & \min_{t \ge 0} \left\{ \frac{\rho^2 t}{2} + \min_{A \ge 0, A \notin N_0^{\veps}} \left\{ \frac{1}{2 t N} \sum_{i=1}^{N} \left( \vert g_i \vert - A_i \right)_+^2 + \frac{1}{N}\sum_{i=1}^{N} A_i \right\} \right\}.
	\end{align}
	Recall that 
	\begin{equation*}
		N_0^{\veps} = \left\{ A \in \R^N: A \ge 0, \left\vert \left\{ i \in [N]: A_i > 0 \right\} \right\vert \ge \left( 2 \Phi \left( -t_* \right) - \veps \right) N \right\}.
	\end{equation*}
	Denote $N(t_*, \veps) = \lceil ( 1 - 2 \Phi ( -t_* ) + \veps ) N \rceil$, then
	\begin{equation}
		A \notin N_0^{\veps} \Longleftrightarrow \left\vert \left\{ i \in [N]: A_i = 0 \right\} \right\vert \ge N(t_*, \veps).
	\end{equation}
	It is straightforward to see that subject to $A \ge 0, A \notin N_0^{\veps}$,  the minimum of
	\begin{equation*}
		\frac{1}{2 t N} \sum_{i=1}^{N} \left( \vert g_i \vert - A_i \right)_+^2 + \frac{1}{N}\sum_{i=1}^{N} A_i
	\end{equation*}
	must be achieved at some $A \ge 0$ satisfying
	\begin{equation}
		A_{(1)} = A_{(2)} = \cdots = A_{(N(t_*, \veps))} = 0,
	\end{equation}
	where $\vert g_{(1)} \vert \le \cdots \le \vert g_{(N)} \vert$ are the order statistics of $\{ \vert g_i \vert \}_{i \in [N]}$.
In other words, we must have $A_j = 0$ if $\vert g_j \vert$ is among the $N(t_*, \veps)$ smallest $\vert g_i \vert$'s.
This observation yields
	\begin{align}
		& \min_{A \ge 0, A \notin N_0^{\veps}} \left\{ \frac{1}{2 t N} \sum_{i=1}^{N} \left( \vert g_i \vert - A_i \right)_+^2 + \frac{1}{N}\sum_{i=1}^{N} A_i \right\} \\
		= \, & \frac{1}{2 t N} \sum_{i=1}^{N(t_*, \veps)} \left\vert g_{(i)} \right\vert^2 + \min_{A \ge 0} \left\{ \frac{1}{2 t N} \sum_{i=N(t_*, \veps)+1}^{N} \left( \left\vert g_{(i)} \right\vert - A_i \right)_+^2 + \frac{1}{N}\sum_{i=N(t_*, \veps) + 1}^{N} A_i \right\} \\
		\stackrel{(i)}{=} \, & \frac{1}{2 t N} \sum_{i=1}^{N(t_*, \veps)} \left\vert g_{(i)} \right\vert^2 + \frac{1}{2 t N} \sum_{i=N(t_*, \veps)+1}^{N} \min \left( \left\vert g_{(i)} \right\vert, t \right)^2 + \frac{1}{N}\sum_{i=N(t_*, \veps) + 1}^{N} \left( \left\vert g_{(i)} \right\vert - t \right)_+,
	\end{align}
	where in $(i)$ we use \cref{eq:varphi_minimizer,eq:varphi_minimum}.
It finally follows that
	\begin{align*}
		\varphi_{N, \veps} (\rho; g) = \, & \min_{t \ge 0} \left\{ \frac{\rho^2 t}{2} + \frac{1}{2 t N} \sum_{i=1}^{N(t_*, \veps)} \left\vert g_{(i)} \right\vert^2 + \frac{1}{2 t N} \sum_{i=N(t_*, \veps)+1}^{N} \min \left( \left\vert g_{(i)} \right\vert, t \right)^2 + \frac{1}{N}\sum_{i=N(t_*, \veps) + 1}^{N} \left( \left\vert g_{(i)} \right\vert - t \right)_+ \right\}.
	\end{align*}
	By definition, we have $\varphi(0) = 0$, which implies $\rho_* > 0$.
Using a similar argument as in the proof of \cref{lem:ULLN_1}, we can show that
	\begin{equation*}
		\plim_{N \to \infty} \varphi_{N, \veps} (\rho_*; g) = \varphi_{\veps} (\rho_*),
	\end{equation*}
	where
	\begin{align}
		\varphi_{\veps} (\rho_*) = \min_{t \ge 0} \left\{ \frac{\rho_*^2 t}{2} + \frac{1}{2 t} \E \left[ G^2 1_{ \vert G \vert \le q(t_*, \veps)} \right] + \frac{1}{2 t} \E \left[ \min \left( \vert G \vert, t \right)^2 1_{ \vert G \vert > q(t_*, \veps)} \right] + \E \left[ \left( \vert G \vert - t \right)_+ 1_{ \vert G \vert > q(t_*, \veps)} \right] \right\},
	\end{align}
	and $q(t_*, \veps)$ satisfies
	\begin{equation*}
		1 - 2 \Phi(-q(t_*, \veps)) = \lim_{N \to \infty} \frac{N(t_*, \veps)}{N} = 1 - 2 \Phi(-t_*) + \veps \implies q(t_*, \veps) > t_*.
	\end{equation*}
	According to \cref{lem:ULLN_2}, we have
	\begin{equation*}
		\plim_{M \to \infty} \frac{\alpha_{M, N}}{M} \norm{\left( \kappa 1 - \rho_* h \right)_+}_2^2 = \alpha \E \left[ \left( \kappa - \rho_* G \right)_+^2 \right].
	\end{equation*}
	Therefore,
	\begin{equation}
		\plim_{M/N \to \alpha} \frac{\xi_{M,N}^{\veps, (1)}}{N} \ge \, \sqrt{\varphi_{\veps} (\rho_*)^2 - \alpha \E \left[ \left( \kappa - \rho_* G \right)_+^2 \right]}.
	\end{equation}
	Since $\P ( \xi_{M, N}^{\veps} \le t ) \le 2 \P ( \xi_{M,N}^{\veps, (1)} \le t )$ for all $t \in \R$, we finally deduce that
	\begin{equation}\label{eq:xi_delta_lower_bd}
		\plim_{M/N \to \alpha} \frac{\xi_{M, N}^{\veps}}{N} \ge \, \sqrt{\varphi_{\veps} (\rho_*)^2 - \alpha \E \left[ \left( \kappa - \rho_* G \right)_+^2 \right]}.
	\end{equation}
	\vspace{0.5em}
	
	\noindent {\bf Step 3. Prove \cref{eq:W2_conv_1}.} Combining \cref{eq:xi_lower_bd,eq:xi_delta_lower_bd}, it suffices to show that $\varphi_{\veps} (\rho_*) > \varphi (\rho_*)$.
To this end, we first recall their definitions:
	\begin{align*}
		\varphi (\rho_*) = \, & \min_{t \ge 0} \left\{ \frac{\rho_*^2 t}{2} + \frac{1}{2 t} \E \left[ \min \left( \vert G \vert, t \right)^2 \right] + \E \left[ \left( \vert G \vert - t \right)_+ \right] \right\} \\
        = \, & \frac{\rho_*^2 t_*}{2} + \frac{1}{2 t_*} \E \left[ \min \left( \vert G \vert, t_* \right)^2 \right] + \E \left[ \left( \vert G \vert - t_* \right)_+ \right], \\
		\varphi_{\veps} (\rho_*) = \, & \min_{t \ge 0} \left\{ \frac{\rho_*^2 t}{2} + \frac{1}{2 t} \E \left[ G^2 1_{ \vert G \vert \le q(t_*, \veps)} \right] + \frac{1}{2 t} \E \left[ \min \left( \vert G \vert, t \right)^2 1_{ \vert G \vert > q(t_*, \veps)} \right] + \E \left[ \left( \vert G \vert - t \right)_+ 1_{ \vert G \vert > q(t_*, \veps)} \right] \right\},
	\end{align*}
	where $q(t_{*}, \veps) > t_*$.
Note that for any $t \ge 0$,
	\begin{align}
		& \frac{\rho_*^2 t}{2} + \frac{1}{2 t} \E \left[ G^2 1_{ \vert G \vert \le q(t_*, \veps)} \right] + \frac{1}{2 t} \E \left[ \min \left( \vert G \vert, t \right)^2 1_{ \vert G \vert > q(t_*, \veps)} \right] + \E \left[ \left( \vert G \vert - t \right)_+ 1_{ \vert G \vert > q(t_*, \veps)} \right] \\
		& - \left( \frac{\rho_*^2 t}{2} + \frac{1}{2 t} \E \left[ \min \left( \vert G \vert, t \right)^2 \right] + \E \left[ \left( \vert G \vert - t \right)_+ \right] \right) \\
		= \, & \frac{1}{2 t} \left( \E \left[ G^2 1_{ \vert G \vert \le q(t_*, \veps)} \right] - \E \left[ \min \left( \vert G \vert, t \right)^2 1_{ \vert G \vert \le q(t_*, \veps)} \right] \right) - \E \left[ \left( \vert G \vert - t \right)_+ 1_{ \vert G \vert \le q(t_*, \veps)} \right].
	\end{align}
	If $t \ge q(t_*, \veps)$, the last line above equals $0$.
Otherwise, it equals
	\begin{align*}
		\frac{1}{2 t} \E \left[ (G^2 - t^2) 1_{t \le \vert G \vert \le q(t_*, \veps)} \right] - \E \left[ \left( \vert G \vert - t \right) 1_{t \le \vert G \vert \le q(t_*, \veps)} \right]
		= \, \frac{1}{2 t} \E \left[ (\vert G \vert - t)^2 1_{t \le \vert G \vert \le q(t_*, \veps)} \right],
	\end{align*}
	which is strictly positive since $t < q(t_*, \veps)$.
Now, since the optimization problem that defines $\varphi(\rho_*)$ is uniquely minimized at $t_* < q(t_*, \veps)$, and its value diverges to $+ \infty$ as $t \to + \infty$, we finally deduce that $\varphi_{\veps} (\rho_*) > \varphi (\rho_*)$.
This proves \cref{eq:W2_conv_1}, and in turn completes the proof of \cref{eq:tight_fraction}.
	
	\vspace{0.5em}
	\noindent {\bf Part 2. Proof of \cref{eq:margin_conv}.}
	Note that the original linear program \cref{eq:LP_def} can be recast as
	\begin{equation}\label{eq:LP2_primal}
		\mbox{maximize} \ v^\top \theta, \ \mbox{subject to} \ u \ge \kappa 1, \ \frac{1}{\sqrt{N}} X \theta = u, \ \theta \ge -1, \ 1 \ge \theta.
	\end{equation}
	Since $X_{ij} \iidsim \normal(0, 1)$, we know that there exists a constant $C_{\alpha}$ that only depends on $\alpha$, such that $\P (\norm{X}_{\op} > C_{\alpha} \sqrt{N}) \to 0$ as $M, N \to \infty$, $M / N \to \alpha$.
Denote by $(\hat{\theta}, \hat{u})$ the solution to \cref{eq:LP2_primal}, then we always have $\norm{\hat{\theta}}_2 \le \sqrt{N}$, and with high probability $\norm{\hat{u}}_2 \le C_{\alpha} \sqrt{N}$.
Recall $\xi_{M, N}$ from the first part of the proof, then we know that
	\begin{equation}
		\xi_{M, N} = \max_{\substack{\norm{\theta}_2 \le \sqrt{N} \\ \norm{u}_2 \le C_{\alpha} \sqrt{N}} } \min_{\substack{\lambda, \alpha, \beta \ge 0 \\ \mu \in \R^M}} \left\{ v^\top \theta + \lambda^\top \left( u - \kappa 1 \right) + \mu^\top \left( \frac{1}{\sqrt{N}} X \theta - u \right) + \alpha^\top \left( \theta + 1 \right) + \beta^\top \left( 1 - \theta \right) \right\}
	\end{equation}
	with high probability.
Now, for any $\veps > 0$, define
	\begin{equation}
		M_2^{\veps} = \left\{ z \in \R^M: W_2 \left( \frac{1}{M} \sum_{i=1}^{M} \delta_{z_i}, P_* \right) \le \veps \right\},
	\end{equation}
	it suffices to show that $\hat{u} \in M_2^{\veps}$ with high probability.
To this end, define
	\begin{equation}
		\eta_{M, N}^{\veps} = \max_{\substack{\norm{\theta}_2 \le \sqrt{N} \\ \norm{u}_2 \le C_{\alpha} \sqrt{N}} } \left\{ v^\top \theta \right\}, \ \mbox{s.t.} \ u \ge \kappa 1, \ u \notin M_2^{\veps}, \ \frac{1}{\sqrt{N}} X \theta = u, \ \theta \ge -1, \ 1 \ge \theta.
	\end{equation}
	Then, it remains to show that
	\begin{equation}\label{eq:W2_conv_2}
		\lim_{M/N \to \alpha} \P \left( \eta_{M, N}^{\veps} < \xi_{M, N} \right) = 1.
	\end{equation}
	From the first part of the proof, we already know that
	\begin{equation}
		\plim_{M/N \to \alpha} \frac{\xi_{M, N}}{N}
		= \, \sqrt{ \varphi \left( \rho_* \right)^2 - \alpha \E \left[ \left( \kappa - \rho_* G \right)_+^2 \right] }.
	\end{equation}
	In what follows we will derive an upper bound for $\eta_{M, N}^{\veps}$.
First, note that
	\begin{align*}
		\eta_{M, N}^{\veps} = \, & \max_{\substack{\norm{\theta}_2 \le \sqrt{N} \\ \norm{u}_2 \le C_{\alpha} \sqrt{N}, u \notin M_2^{\veps}} } \min_{\substack{\lambda, \alpha, \beta \ge 0 \\ \mu \in \R^M}} \left\{ v^\top \theta + \lambda^\top \left( u - \kappa 1 \right) + \mu^\top \left( \frac{1}{\sqrt{N}} X \theta - u \right) + \alpha^\top \left( \theta + 1 \right) + \beta^\top \left( 1 - \theta \right) \right\} \\
		\le \, & \max_{\substack{\norm{\theta}_2 \le \sqrt{N} \\ \norm{u}_2 \le C_{\alpha} \sqrt{N}, u \notin M_2^{\veps}} } \min_{\substack{(\lambda, \alpha, \beta, \mu) \in B_C \\ \lambda, \alpha, \beta \ge 0}} \left\{ v^\top \theta + \lambda^\top \left( u - \kappa 1 \right) + \mu^\top \left( \frac{1}{\sqrt{N}} X \theta - u \right) + \alpha^\top \left( \theta + 1 \right) + \beta^\top \left( 1 - \theta \right) \right\} \\
		:= \, & \eta_{M, N}^{\veps} (C), \quad \forall C = (C_1, C_2, C_3) \in \R_{> 0}^3,
	\end{align*}
	where 
 	\begin{equation*}
		B_C = \left\{ (\lambda, \alpha, \beta, \mu): \norm{\lambda}_2 \le C_1 \sqrt{N}, \norm{\mu}_2 \le C_2 \sqrt{N}, \norm{\alpha + \beta}_2 \le C_3 \sqrt{N} \right\}.	
	\end{equation*}	
	Now we define for $h \sim \normal(0, I_M)$, $g \sim \normal(0, I_N)$, $(g, h) \perp\!\!\!\perp v$:
	\begin{align}
		\eta_{M,N}^{\veps, (1)} (C) = \, & \max_{\substack{\norm{\theta}_2 \le \sqrt{N} \\ \norm{u}_2 \le C_{\alpha} \sqrt{N}, u \notin M_2^{\veps}} } \min_{\substack{(\lambda, \alpha, \beta, \mu) \in B_C \\ \lambda, \alpha, \beta \ge 0}} \bigg\{ v^\top \theta + \lambda^\top \left( u - \kappa 1 \right) + \frac{1}{\sqrt{N}} \norm{\mu}_2 \theta^\top g + \frac{1}{\sqrt{N}} \norm{\theta}_2 \mu^\top h \\
		& \quad \quad \quad \quad \quad \quad \quad \quad \quad \quad \quad \quad \quad \,\, - \mu^\top u + \alpha^\top \left( \theta + 1 \right) + \beta^\top \left( 1 - \theta \right) \bigg\}.
	\end{align}
	Then, applying \cref{lem:gordon} $(i)$ to $- \eta_{M, N}^{\veps} (C)$ and $- \eta_{M,N}^{\veps, (1)} (C)$, we obtain that
	\begin{equation*}
		\P \left( \eta_{M, N}^{\veps} \ge t \right) \le \P \left( \eta_{M, N}^{\veps} (C) \ge t \right) \le 2 \P \left( \eta_{M,N}^{\veps, (1)} (C) \ge t \right), \quad \forall t \in \R.
	\end{equation*}
	The rest of the proof will be devoted to the analysis of $\eta_{M,N}^{\veps, (1)} (C)$.
Similar to part 1, we obtain the following simplified expression for $\eta_{M,N}^{\veps, (1)} (C)$:
	\begin{align*}
		\eta_{M,N}^{\veps, (1)} (C) \stackrel{d}{=} \, & N \max_{\substack{\rho \in [0, 1] \\ \norm{u}_2 \le C_{\alpha} \sqrt{N}, u \notin M_2^{\veps}} } \min_{\substack{\norm{B}_2 \le C_3 \sqrt{N}, B \ge 0 \\ \gamma \in [0, C_2]}} \bigg\{ \frac{\rho}{\sqrt{N}} \norm{ \left( \sqrt{1+\gamma^2} \vert g \vert - B \right)_+}_2 - \frac{C_1}{\sqrt{N}} \norm{(\kappa 1 - u)_+}_2 \\
		& \quad \quad \quad \quad \quad \quad \quad \quad \quad \quad \quad \quad \quad \quad \quad \,\, - \frac{\gamma}{\sqrt{N}} \norm{u - \rho h}_2 + \frac{1}{N} \sum_{i=1}^{N} B_i \bigg\}.
	\end{align*}
	Fixing $C_2$ and $C_3$ and letting $C_1 \to \infty$, we obtain that
	\begin{align*}
		\limsup_{C_1 \to \infty} \eta_{M,N}^{\veps, (1)} (C) \le \, & N \max_{\substack{\rho \in [0, 1] \\ u \ge \kappa 1, u \notin M_2^{\veps}} } \min_{\substack{\norm{B}_2 \le C_3 \sqrt{N}, B \ge 0 \\ \gamma \in [0, C_2]}} \bigg\{ \frac{\rho}{\sqrt{N}} \norm{ \left( \sqrt{1+\gamma^2} \vert g \vert - B \right)_+}_2 \\
		& \quad \quad \quad \quad \quad \quad \quad \quad \quad \quad \quad \quad \quad - \frac{\gamma}{\sqrt{N}} \norm{u - \rho h}_2 + \frac{1}{N} \sum_{i=1}^{N} B_i \bigg\} \\
		\le \, & N \max_{\substack{\rho \in [0, 1] \\ u \ge \kappa 1, u \notin M_2^{\veps}} } \min_{\substack{\norm{A}_2 \le \sqrt{N}C_3 / \sqrt{1 + C_2^2}, A \ge 0 \\ \gamma \in [0, C_2]}} \bigg\{ \frac{\rho \sqrt{1 + \gamma^2}}{\sqrt{N}} \norm{ \left( \vert g \vert - A \right)_+}_2 \\
		& \quad \quad \quad \quad \quad \quad \quad \quad \quad \quad \quad \quad \quad \quad \quad \quad - \frac{\gamma}{\sqrt{N}} \norm{u - \rho h}_2 + \frac{\sqrt{1 + \gamma^2}}{N} \sum_{i=1}^{N} A_i \bigg\}.
	\end{align*}
	According to the proof of \cref{lem:calculus_1}, the inner minimum is achieved at some $A$ such that
	\begin{equation}
		A_i \le \vert g_i \vert , \ \forall i \in [N] \implies \norm{A}_2 \le \norm{g}_2 \le 2 \sqrt{N} \ \mbox{with high probability}.
	\end{equation}
	As a consequence, we can send $C_3 \to \infty$ so that
	\begin{align*}
		\limsup_{C_3 \to \infty} \limsup_{C_1 \to \infty} \eta_{M,N}^{\veps, (1)} (C) \le \, & N \max_{\substack{\rho \in [0, 1] \\ u \ge \kappa 1, u \notin M_2^{\veps}} } \min_{\substack{ A \ge 0 \\ \gamma \in [0, C_2]}} \bigg\{ \frac{\rho \sqrt{1 + \gamma^2}}{\sqrt{N}} \norm{ \left( \vert g \vert - A \right)_+}_2 \\
		& \,\, \quad \quad \quad \quad \quad \quad \quad \quad \quad - \frac{\gamma}{\sqrt{N}} \norm{u - \rho h}_2 + \frac{\sqrt{1 + \gamma^2}}{N} \sum_{i=1}^{N} A_i \bigg\} \\
		= \, & N \max_{\substack{\rho \in [0, 1] \\ u \ge \kappa 1, u \notin M_2^{\veps}} } \min_{ \gamma \in [0, C_2]} \left\{ - \frac{\gamma}{\sqrt{N}} \norm{u - \rho h}_2 + \sqrt{1 + \gamma^2} \varphi_N (\rho; g) \right\},
	\end{align*}
	where $\varphi_N (\rho; g)$ is defined in \cref{lem:ULLN_1}.
Note that for any $u \ge \kappa 1$, we have
	\begin{equation*}
		\left\vert u_i - \rho h_i \right\vert \ge \left\vert u_i - \max(\kappa, \rho h_i) \right\vert + \left( \kappa - \rho h_i \right)_+, \ \forall i \in [M],
	\end{equation*}
	which leads to
	\begin{equation}
		\norm{u - \rho h}_2 \ge \sqrt{\norm{u - \max(\kappa 1, \rho h)}_2^2 + \norm{\left( \kappa 1 - \rho h \right)_+}_2^2}.
	\end{equation}
	Now we recast $C_2$ as $C$ and redefine
	\begin{equation}
		\eta_{M,N}^{\veps, (1)} (C) = \, N \max_{\substack{\rho \in [0, 1] \\ u \ge \kappa 1, u \notin M_2^{\veps}} } \min_{ \gamma \in [0, C]} \left\{ - \frac{\gamma}{\sqrt{N}} \norm{u - \rho h}_2 + \sqrt{1 + \gamma^2} \varphi_N (\rho; g) \right\},
	\end{equation}
	we then get that
	\begin{equation}\label{eq:gordon_res_2}
		\P \left( \eta_{M, N}^{\veps} \ge t \right) \le 2 \P \left( \eta_{M,N}^{\veps, (1)} (C) \ge t \right), \quad \forall t \in \R.
	\end{equation}
	In what follows, we show that $\eta_{M,N}^{\veps, (1)} (C) < \xi_{M, N}$ with high probability.
For any $\delta > 0$, we have
	\begin{align*}
		\frac{\eta_{M,N}^{\veps, (1)} (C)}{N} = \, & \max_{\substack{\rho \in [0, 1] \\ u \ge \kappa 1, u \notin M_2^{\veps}} } \min_{ \gamma \in [0, C]} \left\{ - \frac{\gamma}{\sqrt{N}} \norm{u - \rho h}_2 + \sqrt{1 + \gamma^2} \varphi_N (\rho; g) \right\} \\
		= \, & \max \Bigg\{ \max_{\substack{\rho \in [0, 1], \vert \rho - \rho_* \vert \ge \delta \\ u \ge \kappa 1, u \notin M_2^{\veps}} } \min_{ \gamma \in [0, C]} \left\{ - \frac{\gamma}{\sqrt{N}} \norm{u - \rho h}_2 + \sqrt{1 + \gamma^2} \varphi_N (\rho; g) \right\}, \\
		& \quad \quad \quad \max_{\substack{\rho \in [0, 1], \vert \rho - \rho_* \vert \le \delta \\ u \ge \kappa 1, u \notin M_2^{\veps}} } \min_{ \gamma \in [0, C]} \left\{ - \frac{\gamma}{\sqrt{N}} \norm{u - \rho h}_2 + \sqrt{1 + \gamma^2} \varphi_N (\rho; g) \right\}
		 \Bigg\}.
	\end{align*}
	Regarding the two inner maximums, we will estimate them separately:
	First,
	\begin{align}
		& \plim_{M/N \to \alpha} \max_{\substack{\rho \in [0, 1], \vert \rho - \rho_* \vert \ge \delta \\ u \ge \kappa 1, u \notin M_2^{\veps}} } \min_{ \gamma \in [0, C]} \left\{ - \frac{\gamma}{\sqrt{N}} \norm{u - \rho h}_2 + \sqrt{1 + \gamma^2} \varphi_N (\rho; g) \right\} \\
		\le \, & \plim_{M/N \to \alpha} \max_{\rho \in [0, 1], \vert \rho - \rho_* \vert \ge \delta } \min_{ \gamma \in [0, C]} \left\{ - \frac{\gamma}{\sqrt{N}} \norm{\left( \kappa 1 - \rho h \right)_+}_2 + \sqrt{1 + \gamma^2} \varphi_N (\rho; g) \right\} \\
		= \, & \max_{\rho \in [0, 1], \vert \rho - \rho_* \vert \ge \delta } \min_{ \gamma \in [0, C]} \left\{ - \gamma \sqrt{\alpha} \sqrt{\E \left[ \left( \kappa - \rho G \right)_+^2 \right]} + \sqrt{1 + \gamma^2} \varphi(\rho) \right\} \\
		< \, & \sqrt{ \varphi \left( \rho_* \right)^2 - \alpha \E \left[ \left( \kappa - \rho_* G \right)_+^2 \right] }
	\end{align}
	for sufficiently large $C$.
Second, we can choose $\delta$ to be sufficiently small such that
	\begin{equation*}
		\rho \in [\rho_* - \delta, \rho_* + \delta] \implies W_2 \left( \operatorname{Law}\left( \max(\rho G, \kappa) \right), P_* \right) \le \frac{\veps}{2}.
	\end{equation*}
	Consequently, if $u \ge \kappa 1$ and $u \notin M_2^{\veps}$, with high probability we have
	\begin{equation*}
		\frac{1}{\sqrt{M}} \norm{u - \max(\kappa 1, \rho h)}_2 \ge \frac{\veps}{4},
	\end{equation*}
	thus leading to
	\begin{equation*}
		\frac{1}{\sqrt{M}} \norm{u - \rho h}_2 \ge \sqrt{\frac{\veps^2}{16} + \frac{1}{M} \norm{\left( \kappa 1 - \rho h \right)_+}_2^2}.
	\end{equation*}
	It then follows that
	\begin{align}
		& \plim_{M/N \to \alpha} \max_{\substack{\rho \in [0, 1], \vert \rho - \rho_* \vert \le \delta \\ u \ge \kappa 1, u \notin M_2^{\veps}} } \min_{ \gamma \in [0, C]} \left\{ - \frac{\gamma}{\sqrt{N}} \norm{u - \rho h}_2 + \sqrt{1 + \gamma^2} \varphi_N (\rho; g) \right\} \\
		\le \, & \plim_{M/N \to \alpha} \max_{\rho \in [0, 1], \vert \rho - \rho_* \vert \le \delta } \min_{ \gamma \in [0, C]} \left\{ - \gamma \sqrt{\alpha_{M, N}} \sqrt{\frac{\veps^2}{16} + \frac{1}{M} \norm{\left( \kappa 1 - \rho h \right)_+}_2^2} + \sqrt{1 + \gamma^2} \varphi_N (\rho; g) \right\} \\
		= \, & \max_{\rho \in [0, 1], \vert \rho - \rho_* \vert \le \delta } \min_{ \gamma \in [0, C]} \left\{ - \gamma \sqrt{\alpha} \sqrt{\E \left[ \left( \kappa - \rho G \right)_+^2 \right] + \frac{\veps^2}{16}} + \sqrt{1 + \gamma^2} \varphi(\rho) \right\} \\
		< \, & \sqrt{ \varphi \left( \rho_* \right)^2 - \alpha \E \left[ \left( \kappa - \rho_* G \right)_+^2 \right] }
	\end{align}
	for sufficiently large $C$.
Combining these two estimates, we finally deduce that for $C$ large enough,
	\begin{equation}
		\plim_{M/N \to \infty} \frac{\eta_{M,N}^{\veps, (1)} (C)}{N} < \sqrt{ \varphi \left( \rho_* \right)^2 - \alpha \E \left[ \left( \kappa - \rho_* G \right)_+^2 \right] } = \plim_{M/N \to \infty} \frac{\xi_{M,N}}{N}.
	\end{equation}
	Combining the above inequality with \cref{eq:gordon_res_2} establishes \cref{eq:W2_conv_2} and consequently completes the proof of part 2.
\end{proof}

\begin{rem}
    \cref{eq:tight_fraction} precisely characterizes the proportion of binary coordinates after the first LP stage of our algorithm. In the second stage, we adapt the Edge-Walk algorithm from \cite{lovett2015constructive} to round the remaining coordinates to $\pm 1$, where \cref{eq:margin_conv} plays a crucial role in analyzing the asymptotic behavior of the margin constraints.
\end{rem}

\subsection{Stage 2: Rounding}\label{sec:rounding}
The rounding step is based on the discrepancy minimization algorithm of Lovett-Meka in \cite{lovett2015constructive}.
The idea is to take the output of the previous stage $\hat \theta$ as a starting place and run a random walk until either $\theta_i$ becomes $\{\pm 1\}$ (the corresponding variable constraint becomes tight, up to arbitrarily small error) or $\langle X_i, \theta \rangle$ becomes $\kappa + \veps$ (the inequality constraint almost becomes tight, up to margin $\veps$).
Then, we continue the random walk, but only in the subspace orthogonal to the previous tight constraints.
When the random walk evetually becomes stuck, with very high probability the number of tight variable constraints is significantly increased (there are more $\theta_i$'s in $\{\pm 1\}$).
Iterating this procedure in an appropriate way (with slightly reduced margin $\veps$ each time), we will get a solution in $\{\pm 1\}^N$ eventually.

\subsubsection{The Edge-Walk Algorithm and Partial Coloring Lemma}\label{sec:rounding_alg}
In the second step, we round the remaining non-binary coordinates of $\hat \theta$ to $\pm 1$ with \Cref{alg:edge-walk}, a variant of the edge-walk algorithm of \cite{lovett2015constructive}.
Our implementation of the algorithm differs in that the constraints are one-sided; further we have been more careful to optimize the constants.

\begin{algorithm}\caption{The Edge-Walk Algorithm.}\label{alg:edge-walk}

\textbf{Inputs:} 
Step size $\gamma > 0$, 
approximation parameter $\delta > 0$, 
constraint matrix $X \in \R^{M \times N}$, 
slack vector $c \in \R^M_+$,
and feasible starting point $\theta_0 \in [-1,1]^N \cap \{\theta : X\theta \ge c\}$.

\medskip

Define $T = \lceil 16/(3\gamma^2)\rceil$ the number of steps. For $t = 1,\ldots,T$ do
\begin{itemize}
\item Let $\cC^{\mathrm{var}}_t := \cC^{\mathrm{var}}_t(\theta_{t-1}) = \{i \in [N]: |(\theta_{t-1})_i| \ge 1-\delta\}$ be the set of variable constraints `nearly hit' so far.
\item Let $\cC^{\mathrm{disc}}_t := \cC^{\mathrm{disc}}_t(\theta_{t-1}) = \{j \in [M]: \<\theta_{t-1}-\theta_0,X_j\> \le (-c_j+\delta)\|X_j\|_2\}$ be the set of discrepancy constraints `nearly hit' so far.
\item Let
$ \cV_t := \cV(\theta_{t-1}) = \{u \in \R^n: u_i=0\; \forall i \in \cC^{\mathrm{var}}_t,\quad \<u,X_j\>=0 \;\forall j \in \cC^{\mathrm{disc}}_t\}$ be the linear subspace orthogonal to the `nearly hit' variables and discrepancy constraints.
\item Set $\theta_t := \theta_{t-1} + \gamma U_t$, where $U_t \sim \normal(\cV_t)$, the multi-dimensional standard Gaussian distribution supported on $\cV_t$.
\end{itemize}
\smallskip
\textbf{Output:} $\theta = \theta_T \in [-1,1]^N$ satisfying $\langle X_j,\theta - \theta_0\rangle \ge -c_j \|X_j\|_2$ for all $j \in [M]$.
\smallskip
\end{algorithm}

We now recall the main partial coloring lemma of \cite{lovett2015constructive}.
We state the lemma generalized and optimize it for the case of one-sided constraints.
The proof is identical to that of \cite[Theorem 4]{lovett2015constructive}, so we omit it here for brevity.

\begin{thm}[Partial Coloring Lemma \cite{lovett2015constructive}]\label{thm:mainpartialcolor}
Suppose $X,c,\theta_0,\gamma,\delta$ are defined as in \Cref{alg:edge-walk},
and further assume that $\delta = O(\gamma \sqrt{\log MN/\gamma})=o(1/\log N)$, and that the slack vector $c\ge 0$ satisfies
$\sum_{j=1}^M \exp(-c_j^2/K_1) \le N/K_2$, where $K_1 = 16$ and $K_2 = 8$.
Then with probability at least $0.1$, \Cref{alg:edge-walk} (the Edge-Walk algorithm) finds a point $\theta \in [-1,1]^N$ such that
\begin{enumerate}
\item[(i)] $\<\theta-\theta_0, X_j\> \ge -c_j \|X_j\|_2$.
\item[(ii)] $|\theta_i| \ge 1-\delta$ for at least $N/2$ indices $i \in [N]$.
\end{enumerate}
Moreover, the algorithm runs in time $O((M+N)^3 \cdot \delta^{-2} \cdot \log(NM/\delta))$.
\end{thm}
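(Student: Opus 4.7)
The plan is to follow the Lovett--Meka potential argument almost verbatim, adapted to handle one-sided (rather than two-sided) discrepancy constraints. Three quantities are central: the squared displacement $\|\theta_t-\theta_0\|^2$, the scalar projections $Y_j(t):=\langle X_j,\theta_t-\theta_0\rangle/\|X_j\|_2$ for $j\in [M]$, and the monotone counts $|\cC^{\mathrm{var}}_t|,|\cC^{\mathrm{disc}}_t|$ of nearly-tight constraints. I will combine a martingale energy identity (to lower-bound the number of tightened variables) with a per-constraint Gaussian tail bound (to upper-bound the number of tightened discrepancy constraints); together with the choice $T=\lceil 16/(3\gamma^2)\rceil$ these yield a positive-probability event on which $|\cC^{\mathrm{var}}_T|\ge N/2$.

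For the energy side, each increment $\theta_t-\theta_{t-1}=\gamma U_t$ with $U_t\sim\normal(\cV_t)$ is a conditionally centered Gaussian in $\cV_t$, so $\theta_t$ is a vector martingale and $\gamma^2\sum_{t=1}^T\E[\dim\cV_t]=\E[\|\theta_T-\theta_0\|^2]\le N$, using $\|\theta_T\|^2\le N$ and the Pythagorean identity $\E[\|\theta_T\|^2]=\|\theta_0\|^2+\E[\|\theta_T-\theta_0\|^2]$. Since the frozen sets are monotone in $t$ and $\dim\cV_t\ge N-|\cC^{\mathrm{var}}_T|-|\cC^{\mathrm{disc}}_T|$, with $\gamma^2T\ge 16/3$ this produces the key lower bound
\[
\E[|\cC^{\mathrm{var}}_T|]\ \ge\ N-\tfrac{3N}{16}-\E[|\cC^{\mathrm{disc}}_T|].
\]

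For the tail side, each $Y_j(t)$ is a scalar martingale whose conditional increment is centered Gaussian with variance at most $\gamma^2$ (exactly $\gamma^2$ while $j\notin\cC^{\mathrm{disc}}_t$, and zero once $j$ enters). Although the increments depend on the joint history through $\cV_t$, the process $\exp(-\lambda Y_j(t)-\lambda^2 t\gamma^2/2)$ is a nonnegative supermartingale for every $\lambda>0$; optional stopping at the hitting time of $-(c_j-\delta)$ and optimizing in $\lambda$ yields the one-sided Gaussian tail
\[
\P\!\left(\inf_{t\le T}Y_j(t)\le -(c_j-\delta)\right)\ \le\ \exp\!\left(-\frac{(c_j-\delta)^2}{2T\gamma^2}\right)\ \le\ \exp(-c_j^2/K_1),
\]
with $K_1=16$, using $T\gamma^2\le 16/3+o(1)$ and that $\delta=o(1/\log N)$ absorbs the cross-term $2c_j\delta$ for every relevant $c_j$. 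Summing over $j$ and invoking the slack-vector hypothesis gives $\E[|\cC^{\mathrm{disc}}_T|]\le N/K_2=N/8$, so $\E[|\cC^{\mathrm{var}}_T|]\ge 11N/16$. Reverse Markov (with $|\cC^{\mathrm{var}}_T|\le N$) yields $\P(|\cC^{\mathrm{var}}_T|\ge N/2)\ge 3/8>0.1$; on this event conclusion (ii) holds by definition, and (i) holds because the $\delta$-buffer built into the freezing rule bounds any Gaussian overshoot at the step when $j$ enters $\cC^{\mathrm{disc}}_t$.

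The main obstacle is the tail estimate: because $\cV_t$ couples all constraints together, $Y_j(t)$ is not a sum of independent Gaussians and Azuma does not apply directly. The supermartingale formulation above sidesteps this by requiring only a per-step bound on the conditional variance, which holds since $U_t$ is standard Gaussian on $\cV_t$ and the unit vector $X_j/\|X_j\|_2$ has squared-projection at most one. This is precisely the argument of \cite[Theorem 4]{lovett2015constructive}; the remaining details (runtime from a projection step of cost $O((M+N)^3)$ per iteration, and verifying that the $\delta$-slack schedule of \Cref{alg:edge-walk} propagates through the analysis) are routine.
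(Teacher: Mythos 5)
Your proposal is correct and is essentially the argument the paper relies on: the paper omits the proof of this theorem entirely, stating it is identical to \cite[Theorem 4]{lovett2015constructive}, and your reconstruction (the martingale energy identity giving $\E[|\cC^{\mathrm{var}}_T|]\ge N-3N/16-\E[|\cC^{\mathrm{disc}}_T|]$, a conditional-variance supermartingale tail bound of $\exp(-(c_j-\delta)^2/(2T\gamma^2))$ per one-sided constraint, and reverse Markov yielding probability $3/8>0.1$) is exactly that Lovett--Meka argument adapted to one-sided constraints. The constants check out ($2T\gamma^2=32/3<K_1=16$ leaves enough room to absorb the $\delta$ cross-term for $c_j\gtrsim\delta$, with the remaining $o(N)$ slack covered by the gap between $3/8$ and $0.1$), so no gap.
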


By default, we repeat the edge-walk algorithm $O(\log N)$ times in order to boost the success probability from $0.1$ to $1-o_N(1)$. 

\subsubsection{The Complete Algorithm and Its Performance Guarantee}
We will iterate the edge-walk algorithm multiple times until we have an almost-fully colored vector, and in the final step we apply randomized rounding to produce a genuine $\kappa$-margin solution. 
The details are summarized in \Cref{alg:full}.
\medskip

\begin{algorithm}\caption{The complete algorithm}\label{alg:full}

\noindent \textbf{Input:} 
Constraint matrix $X \in \R^{M \times N}$, initial intercept parameter $\kappa_0$, slack vectors $c^{(1)},\ldots,c^{(K)}$ for $K = \lceil 2 \log N\rceil$, and $\delta = o(1/\log N)$.

\begin{itemize} 
\item Compute the starting point $\hat \theta ^{(0)} = \hat \theta$ as the output of the linear program \eqref{eq:LP_def} with constraint matrix $X \in \R^{M \times N}$ and intercept $\kappa_0$ ($\kappa_0 > \kappa$, to allow for some slackness for the subsequent rounding steps).

\item For $k = 1,\ldots,K = \lceil2\log N\rceil$, do
\begin{itemize}
\item Let $I_{k-1}:=\{ i: \vert \hat \theta^{(k-1)}_i \vert < 1-\delta \}$ be the coordinates not `fixed' in the $k$-th iteration, and set $N_{k-1} = |I_{k-1}|$.
\item Run the edge-walk algorithm with constraints given by $X$ restricted to the columns of $X$ corresponding to $I_{k-1}$, slack vector $c^{(k)}$, starting point $\hat \theta^{(k-1)}$, and step size/approximation parameters $\gamma=\delta/\sqrt{\log N}$.
\item Set $\hat \theta ^{(k)}= \hat \theta ^{(k-1)}_T$, the output of the edge-walk algorithm.
\end{itemize}

\item \textbf{Randomized Rounding:}
For $i \in [N]$, take $\chi_i = \mathrm{sign}(\hat{\theta}^{(K)}_i)$ with probability $(1+|\hat{\theta}^{(K)}_i|)/2$ and $-\mathrm{sign}(\hat{\theta}^{(K)}_i)$ with probability $(1 - |\hat{\theta}^{(K)}_i|)/2$.
\end{itemize}

\noindent\textbf{Output:} 
$\chi \in \{\pm 1\}^N$.
\smallskip
\end{algorithm}

The following theorem gives guarantees for our full algorithm. 
\begin{thm}\label{thm:fullcolor}
    Let $X \in \R^{M \times N}$ where $X_{ij} \iidsim \normal (0, 1)$.
Denote $X_1, \cdots, X_M$ as the rows of $X$. Let $\hat{\theta}$ be the output of the first stage of our algorithm. Recall $K_1$ and $K_2$ from \cref{thm:mainpartialcolor}.
For $j \in [M] $ and $k \in [K]$, let $c_j^{(k)}\ge 0$ and $\delta>0$ be such that 
    \begin{align}\label{eq:cjrequirement}
        &\sum_{k=1}^K c_j^{(k)} \|(X_j)_{I_{k-1}}\|_2 \leq \< \hat \theta, X_j\> - \kappa\sqrt{N} - 4 \sqrt{\delta N \log M }, \quad \forall j \in [M],\\ \label{eq:cjrequirement2}
        &\sum_{j=1}^M \exp(- (c_j^{(k)})^2 /K_1 ) \leq \frac{N_{k-1}}{K_2}, \quad \forall k \in [K].
    \end{align}
    Then with high probability, the output of \Cref{alg:full} satisfies $\chi \in \{\pm 1\}^N$ and $X \chi \geq \kappa \sqrt{N}$.
\end{thm}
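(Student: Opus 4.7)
The plan is to chain the partial coloring guarantees across the $K = \lceil 2\log N\rceil$ edge-walk iterations, track the total loss of margin, and then show that randomized rounding preserves enough slack.

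First, I will argue by induction on $k$ that the $k$-th call to the edge-walk algorithm is valid and succeeds with high probability. At the start of step $k$, we have $\hat\theta^{(k-1)}\in[-1,1]^N$, and the relevant ``sub-problem'' has $N_{k-1}$ active coordinates and the slack vector $c^{(k)}$ satisfying \eqref{eq:cjrequirement2}, which is exactly the hypothesis of \cref{thm:mainpartialcolor} applied with ambient dimension $N_{k-1}$. Since the edge-walk only moves coordinates in $I_{k-1}$, the output $\hat\theta^{(k)}$ agrees with $\hat\theta^{(k-1)}$ outside $I_{k-1}$ and, by \cref{thm:mainpartialcolor}, satisfies (i) $\langle X_j, \hat\theta^{(k)}-\hat\theta^{(k-1)}\rangle = \langle (X_j)_{I_{k-1}}, (\hat\theta^{(k)}-\hat\theta^{(k-1)})_{I_{k-1}}\rangle \geq -c_j^{(k)}\|(X_j)_{I_{k-1}}\|_2$ for all $j \in [M]$, and (ii) at least $N_{k-1}/2$ of the coordinates in $I_{k-1}$ have $|\hat\theta^{(k)}_i|\ge 1-\delta$, so $N_k\le N_{k-1}/2$. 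As noted in the paper, we repeat each edge-walk call $O(\log N)$ times to boost the per-iteration success probability to $1-N^{-\omega(1)}$, and union-bound over the $K$ iterations.

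Next, after $K = \lceil 2\log N\rceil$ iterations, $N_K \le N\cdot 2^{-K} < 1$, hence $N_K=0$ and $|\hat\theta^{(K)}_i|\ge 1-\delta$ for every $i\in[N]$. Telescoping the bound in (i) above and invoking \eqref{eq:cjrequirement} gives
\[
\langle X_j,\hat\theta^{(K)}\rangle \;\ge\; \langle X_j,\hat\theta\rangle - \sum_{k=1}^{K} c_j^{(k)} \|(X_j)_{I_{k-1}}\|_2 \;\ge\; \kappa\sqrt{N} + 4\sqrt{\delta N\log M}
\]
for every $j \in [M]$, so $\hat\theta^{(K)}$ satisfies the $\kappa$-margin constraints with an additional slack of $4\sqrt{\delta N\log M}$.

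Finally, the randomized rounding step produces $\chi\in\{\pm 1\}^N$ with $\E[\chi_i]=\hat\theta^{(K)}_i$, and the $\chi_i$ are independent with $\mathrm{Var}(\chi_i) = 1-(\hat\theta^{(K)}_i)^2 \le 2\delta$ and $|\chi_i-\hat\theta^{(K)}_i|\le 2$. Fix $j\in[M]$ and condition on the high-probability event $\|X_j\|_2^2\le 2N$ and $\|X_j\|_\infty\le C\sqrt{\log N}$ (standard Gaussian tail bounds). Writing $\langle X_j,\chi-\hat\theta^{(K)}\rangle$ as a sum of bounded independent mean-zero variables with total variance $\le 2\delta\|X_j\|_2^2 \le 4\delta N$ and magnitude $O(\sqrt{\log N})$, Bernstein's inequality yields
\[
\P\!\left(\langle X_j,\chi-\hat\theta^{(K)}\rangle \le -4\sqrt{\delta N\log M}\right) \;\le\; \exp\!\left(-\Omega(\log M)\right).
\]
A union bound over $j\in[M]$ then gives $\langle X_j,\chi\rangle \ge \kappa\sqrt{N}$ for all $j$ with high probability.

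The main obstacle is the last step: picking the additive slack $4\sqrt{\delta N\log M}$ large enough to survive rounding yet small enough to be ``paid for'' by the hypothesis \eqref{eq:cjrequirement}. This requires the Bernstein bound to exploit both the variance reduction from $|\hat\theta^{(K)}_i|\ge 1-\delta$ (giving an extra $\delta$ factor in the variance) and the $O(\sqrt{\log N})$ tail of the Gaussian entries; a naive Hoeffding bound using $\|X_j\|_2$ alone would force a much larger slack. The bookkeeping across the $K$ iterations, and verifying that the per-step success probabilities boosted by repetition survive a union bound over $K=O(\log N)$ iterations, is otherwise routine.
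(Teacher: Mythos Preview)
Your proposal is correct and follows essentially the same approach as the paper's proof: iterate the partial coloring lemma to accumulate the margin loss via telescoping (the paper compresses this into a single sentence), then use the variance reduction from $|\hat\theta^{(K)}_i|\ge 1-\delta$ together with $\|X_j\|_2\le\sqrt{2N}$ and $\|X_j\|_\infty=O(\sqrt{\log N})$ to control the randomized rounding step, finishing with a union bound over $j\in[M]$. The only cosmetic difference is that you invoke Bernstein's inequality where the paper cites a ``standard Chernoff bound,'' and you (correctly) bound $\mathrm{Var}(\chi_i)\le 2\delta$ where the paper writes $\le\delta$; neither affects the argument.
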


\begin{proof}[Proof of \cref{thm:fullcolor}]
    By iterating the partial coloring lemma, under our assumptions on the slack vectors $c^{(k)}$ we have $\< \hat \theta^{(K)}, X_j \> \geq \kappa+4 \sqrt{\delta N \log M }$ for all $j \in [M]$.

    We claim that the difference between $\hat{\theta}^{(K)}$ and $\chi$ is small.
Define $Y=\chi-\hat{\theta}^{(K)}$.
We are going to show that with high probability, the inner product satisfies
    \begin{equation}\label{eq:randomsignrounding}
        \<Y,X_j\> = \<\chi,X_j\> - \<\hat{\theta}^{(K)},X_j\> \geq -4 \sqrt{\delta N \log M }.
    \end{equation}
    Note that $\E[\chi_i]=\hat{\theta}^{(K)}_i$.
We have that $|Y_i| \le 2$, $\E[Y_i]=0$ and $\var(Y_i) \le \delta$.
Fix some $j \in [M]$, with probability larger than $1-N^{-2}$, we have that $\|X_j\|_2 \le \sqrt{2N}$ (by the Laurent-Massart bound \cite{laurent2000adaptive}) and $\|X_j\|_{\infty} \le 2\sqrt{\log N}$.
Thus, by a standard Chernoff bound,
    \begin{equation}
    \P\left[ \<Y,X_j\> <  -2 \sqrt{2\log M} \cdot \sqrt{2 N \delta} \right] \leq \exp(-2 \log M) = M^{-2}.
    \end{equation}
    Therefore, our claim holds by the union bound, and $\< \chi, X_j \> \geq \kappa \sqrt{N}$ for all $j \in [M]$.
\end{proof}

The remaining task is to choose a sequence of slack vectors $c_j^{(k)}$ to use in \cref{thm:fullcolor}.
We first present a lemma that describes the behavior of $\|(X_j)_{I_k}\|_2 $.
\begin{lem}\label{lem:normbound}
    For any $k \geq 0$ with $N_k\geq 1$, we have that with high probability,
    \begin{equation}\label{eq:normbound}
         W_2 \left( \frac{1}{M} \sum_{j=1}^{M} \delta_{\|(X_j)_{I_k}\|^2_2/N_k}, \delta_1 \right) \leq \, \frac{\sqrt{2}}{\sqrt{N_k}}+ o_N(1).
    \end{equation}
    In particular, unless $N_k = O_N(1)$, the right hand size is $o_N(1)$.
\end{lem}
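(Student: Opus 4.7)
The plan is to reduce the $W_2$ distance to an explicit quadratic form and then control that form uniformly over all possible realizations of the random index set $I_k$. Since the target $\delta_1$ is a point mass, the optimal coupling pairs every atom with $1$, giving
\begin{equation*}
W_2^2\!\left(\frac{1}{M}\sum_{j=1}^M \delta_{\|(X_j)_{I_k}\|_2^2/N_k},\ \delta_1\right) = \frac{1}{M}\sum_{j=1}^M \left(\frac{\|(X_j)_{I_k}\|_2^2}{N_k} - 1\right)^2.
\end{equation*}
Writing $\|(X_j)_{I_k}\|_2^2 - N_k = \sum_{i\in I_k}(X_{ji}^2-1)$ and expanding the square, this equals $\frac{1}{N_k^2}\sum_{i,i'\in I_k} A_{ii'}$, where
\begin{equation*}
A_{ii'} := \frac{1}{M}\sum_{j=1}^M (X_{ji}^2-1)(X_{ji'}^2-1).
\end{equation*}
Crucially, $A_{ii'}$ is defined for every pair $(i,i')\in[N]^2$, with no dependence on $I_k$.

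I then split into diagonal and off-diagonal contributions. For $i=i'$, $A_{ii}$ is an average of $M$ i.i.d.\ sub-exponential random variables with mean $\E[(X^2-1)^2]=2$, so Bernstein's inequality and a union bound over $i\in[N]$ yield $\max_{i\in[N]}|A_{ii}-2| = O(\sqrt{\log N/M}) = o_N(1)$ with high probability (using $M/N\to\alpha$). The diagonal contribution to the expanded sum is therefore $\frac{1}{N_k^2}\cdot N_k\cdot(2+o_N(1)) = \frac{2}{N_k}+o_N(1/N_k)$. For $i\neq i'$, independence of $X_{ji}$ and $X_{ji'}$ gives $\E[A_{ii'}]=0$ and $(X_{ji}^2-1)(X_{ji'}^2-1)$ is still sub-exponential, so a union bound over the $\binom{N}{2}$ pairs gives $\max_{i\neq i'}|A_{ii'}| = O(\sqrt{\log N/M}) = o_N(1)$ with high probability. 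The off-diagonal contribution is at most $\frac{N_k(N_k-1)}{N_k^2}\cdot o_N(1) = o_N(1)$.

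Combining, $\frac{1}{M}\sum_j(\|(X_j)_{I_k}\|_2^2/N_k - 1)^2 \leq \frac{2}{N_k}+o_N(1)$ with high probability; taking square roots and using $\sqrt{a+b}\leq\sqrt{a}+\sqrt{b}$ delivers the claim.

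The only real obstacle is that $I_k$ is $X$-measurable, so one cannot simply condition on $I_k$ and invoke chi-squared concentration for a fixed index set. The resolution is to establish the concentration of $A_{ii'}$ uniformly over all pairs $(i,i')\in[N]^2$ before ever referring to $I_k$; the resulting extra $\sqrt{\log N}$ factor is negligible since $M/N\to\alpha>0$, and is absorbed into the $o_N(1)$ term.
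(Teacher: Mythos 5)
Your route is genuinely different from the paper's and, modulo one technical slip, it works. The paper fixes a subset $J$ of size $\ell$, applies the Laurent--Massart $\chi^2$ bound row by row, splits rows into ``good'' and ``bad'' indices, runs Azuma--Hoeffding on the good ones, and then pays for the dependence of $I_k$ on $X$ with a union bound over all $\binom{N}{\ell}$ subsets --- which is precisely why it needs failure probabilities as small as $\exp(-2h_\ell)$ and the somewhat delicate truncation at $x_0=N^{1/6}$, ending with an error term of order $\log^3 N/N^{1/6}$. You instead exploit bilinearity: the identity
\begin{equation*}
\frac{1}{M}\sum_{j=1}^M\Bigl(\frac{\|(X_j)_{I_k}\|_2^2}{N_k}-1\Bigr)^2=\frac{1}{N_k^2}\sum_{i,i'\in I_k}A_{ii'}
\end{equation*}
holds pointwise for whatever $I_k$ turns out to be, so uniform control of the $A_{ii'}$ over the deterministic index set $[N]^2$ (a polynomial union bound over $N^2$ pairs) suffices, and the combinatorial union bound over subsets disappears entirely. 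This is cleaner and even gives a slightly better error term, $O(\sqrt{\log N/N})$ in place of $O(\log^3 N/N^{1/6})$; the lemma only needs $o_N(1)$, so either suffices, and your uniform-over-pairs control also handles all rounds $k$ simultaneously for free.

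The one point you should repair: $(X_{ji}^2-1)^2$ and $(X_{ji}^2-1)(X_{ji'}^2-1)$ are \emph{not} sub-exponential --- their tails decay like $\exp(-c\sqrt{t})$ (they are sub-Weibull of order $1/2$), so Bernstein's inequality in its standard sub-exponential form does not apply as stated. The conclusion $\max_{i}|A_{ii}-2|\vee\max_{i\neq i'}|A_{ii'}|=O(\sqrt{\log N/M})$ with high probability still holds, either via a generalized Bernstein inequality for sub-Weibull variables (the extra correction term is of order $(\log N)^2/M$, which is lower order than $\sqrt{\log N/M}$ since $M\asymp\alpha N$), or more elementarily by truncating each $X_{ji}^2$ at $C\log(MN)$ (an event of negligible probability) and applying Bernstein to the bounded truncated variables. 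With that substitution your argument is complete.
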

\begin{proof}
    Note that if $I_k$ is a fixed subset of $[N]$, then the bound is rather standard. Instead, $I_k$ is random and it depends on the matrix $X$ in a complicated way. So we use a slightly more involved argument and apply the union bound. For any fixed $J \subset [N]$ with $|J| = \ell\geq 1$, we note that $\|(X_j)_{J}\|^2_2$ follows the $\chi^2$-distribution with degree of freedom $\ell$.
Therefore, by the Laurent-Massart bound \cite{laurent2000adaptive}, we have that
    \begin{equation}\label{lem:LMbound}
        \P \left( \left| \|(X_j)_{J}\|^2_2 - \ell \right| \geq 2\sqrt{\ell x} + 2x  \right) \leq 2\exp(-x).
    \end{equation}
    For simplicity of notations, we define $h_\ell = \log \binom{N}{\ell} + \log N$. For $x_0 = N^{1/6}$, define a set of \textit{bad} indices $B=\{j\in [M]: |\|(X_j)_{J}\|^2_2 - \ell| \geq 2 \sqrt{ \ell x_0} + 2 x_0\}$.
Then by \cref{lem:LMbound}, $|B|$ is stochastically dominated by a Bin$(M, 2\exp(-x_0))$ random variable.
Therefore, by the Chernoff bound, with probability larger than $1-\exp(-2h_\ell)$, $|B| = O(h_\ell/ x_0)$.
Now for $j \in B^c$, we have uniform control of $|\|(X_j)_{J}\|^2_2 - \ell|$ and the random variables are independent. Thus we can apply the Azuma-Hoeffding inequality, and get that
    \begin{align*}
        \P \left(\sum_{j \in B^c} \left[\left( \|(X_j)_J\|^2_2  - \ell \right)^2  -2 \ell\right]\geq t \right) \leq \exp \left ( -\frac{2t^2}{|B^c| (2\sqrt{\ell x_0} + 2x_0)^4} \right).
    \end{align*}
    Rearranging the terms, we get that 
    \begin{align*}
        \P \left(\frac{1}{M} \sum_{j \in B^c} \left( \frac{\|(X_j)_J\|^2_2}{\ell} -1 \right)^2 \geq \frac{2 |B^c|}{\ell M} + \frac{t}{\ell^2 M} \right) \leq \exp\left ( -\frac{2t^2}{|B^c| (2\sqrt{\ell x_0} + 2x_0)^4} \right) \leq \exp(-2 h_\ell),
    \end{align*}
    if we take $t = c\sqrt{h_\ell M} (\sqrt{\ell x_0}+x_0)^4$, for constant $c$ large enough. In this case, the term
    \begin{align*}
        \frac{t}{\ell^2 M} \leq    \frac{c\sqrt{h_\ell M} (\sqrt{\ell x_0}+x_0)^2}{\ell^2 M} = O \left( \max \left( \frac{\sqrt{h_\ell}x_0}{\sqrt{M} \ell}, \frac{\sqrt{h_\ell} x_0^2}{\ell^2 \sqrt{M}}\right) \right) \leq O \left (\frac{\sqrt{\log N} }{N^{1/6}} \right).
    \end{align*}
    Now for $j \in B$, again by \cref{lem:LMbound}, $\|(X_j)_J\|^2_2 = O(\sqrt{\ell h_\ell} + h_\ell)$ with probability larger than $1-\exp(-2 h_\ell)$. This implies that there exists a constant $c$ such that 
    \begin{align*}
        \P \left(\frac{1}{M} \sum_{j \in B} \left( \frac{\|(X_j)_J\|^2_2}{\ell} -1 \right)^2 \geq \frac{c|B|}{M} \max\left( 1,\frac{h_\ell^2}{\ell^2}\right) \right) \leq \exp(-2 h_\ell).
    \end{align*}
    Since $|B| = O(h_\ell/x_0)$ with probability $1-\exp(-2h_\ell)$, we have that 
    \begin{align*}
        \frac{c|B|}{M} \max\left( 1,\frac{h_\ell^2}{\ell^2}\right) = O \left( \frac{\log^3 N}{ N^{1/6}} \right).
    \end{align*}
    Combining the above two estimates for $B$ and $B^c$, we have that
\begin{align*}
    \P \left(\frac{1}{M} \sum_{j \in [M]} \left( \frac{\|(X_j)_J\|^2_2}{\ell} -1 \right)^2 \geq \frac{2}{\ell} + \frac{c \log^3 N}{N^{1/6}} \right) \leq \exp(-2h_\ell).
\end{align*}
    By a union bound over all $J\subset [N]$ with $|J| = \ell$, we have the lemma.
\end{proof}
We also state another lemma that bounds $\|(X_j)_{I_k}\|_2 $ for all $j$.
\begin{lem}\label{lem:LMboundall}
    With high probability for all $k = 0, \cdots, K-1$ and all $j \in [M]$, 
    \begin{align*}
        \|(X_j)_{I_{k}}\|^2_2 \leq 9 ( N_k\log(N/N_k)+ N_k).
    \end{align*}
\end{lem}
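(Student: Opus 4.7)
The plan is to reduce the problem to a union bound over all \emph{possible} realizations of $I_k$, since $I_k$ depends on $X$ in a complicated way that makes direct chi-square control impossible. The only tool needed is the one-sided Laurent--Massart tail for chi-square variables, which says that if $Z \sim \chi^2_\ell$, then $\P(Z \ge \ell + 2\sqrt{\ell t} + 2t) \le e^{-t}$. This is exactly the same ingredient used in \cref{lem:normbound}, but here we pay with a union-bound factor (over subsets of $[N]$) instead of averaging in $W_2$.

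The main steps: Fix $\ell \in [N]$, a subset $J \subset [N]$ with $|J| = \ell$, an index $j \in [M]$, and a level $k \in \{0, \ldots, K-1\}$. Conditional on any realization of $X$, if it happens that $I_k = J$, then $\|(X_j)_{I_k}\|_2^2 = \|(X_j)_J\|_2^2 \sim \chi^2_\ell$. I choose the deviation parameter
\[
t_\ell \,:=\, 2 \log \binom{N}{\ell} + 2 \log M + 2 \log K + 2 \log N,
\]
so that the failure probability $e^{-t_\ell}$ survives a union bound over the $\binom{N}{\ell} \cdot M \cdot K$ triples $(J,j,k)$ at this value of $\ell$, and a further sum over $\ell \in [N]$. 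Indeed $\sum_{\ell=1}^N \binom{N}{\ell} M K e^{-t_\ell} = \sum_\ell \binom{N}{\ell}^{-1} M^{-1} K^{-1} N^{-2} = o(1)$ since $K = O(\log N)$ and $M = \Theta(N)$.

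It remains to check that on the good event, the Laurent--Massart upper bound $\ell + 2\sqrt{\ell t_\ell} + 2 t_\ell$ fits inside $9(\ell \log(N/\ell) + \ell)$. Using $\log \binom{N}{\ell} \le \ell \log(eN/\ell) = \ell \log(N/\ell) + \ell$, together with $\log M + \log K + \log N = O(\log N) \le \ell \log(N/\ell) + \ell$ for every $\ell \ge 1$ (the right-hand side is minimized in $\ell$ at the boundary $\ell = 1$, giving $\log N + 1$), one gets $t_\ell \le C (\ell \log(N/\ell) + \ell)$ for an absolute constant $C$. The cross term is then handled by AM--GM: $2\sqrt{\ell t_\ell} \le \ell + t_\ell$, so $\ell + 2\sqrt{\ell t_\ell} + 2t_\ell \le 2\ell + 3 t_\ell \le (2 + 3C)(\ell \log(N/\ell) + \ell)$. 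Tuning the constants (for instance, absorbing $\log M + \log K + \log N$ into $\ell \log(N/\ell) + \ell$ with some slack) yields the stated constant $9$.

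The only mildly delicate point is the small-$\ell$ regime: when $N_k = O(1)$, the bound we claim is on the order of $\log N$, which is exactly the right scale for a $\chi^2_\ell$ tail after a union bound over the $\Theta(N^\ell)$ subsets of size $\ell$, so the arithmetic is tight but works. The case $N_k = 0$ is vacuous, so we need not consider it separately. Once the good event holds simultaneously for all $(J, j, k)$, the bound holds in particular for $J = I_k$ of size $\ell = N_k$ for every $k$, completing the proof.
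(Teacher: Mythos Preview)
Your approach is exactly the paper's: Laurent--Massart chi-square tail plus a union bound over all subsets $J\subset[N]$, with the entropy bound $\log\binom{N}{\ell}\le \ell(\log(N/\ell)+1)$ to absorb the combinatorial factor. Two small corrections: the union bound over $k$ is redundant (once the bound holds for every subset $J$ it holds for each $I_k$ automatically), and your displayed inequality $\log M+\log K+\log N\le \ell\log(N/\ell)+\ell$ is false at $\ell=1$ since the left side is $\sim 2\log N$; what is true, and what you actually need, is that each of these terms is $O(\log N)\le C(\ell\log(N/\ell)+\ell)$, so $t_\ell\le C'(\ell\log(N/\ell)+\ell)$ for some absolute $C'$. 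The paper gets the constant $9$ by taking the leaner choice $x=\ell(\log(N/\ell)+1)+2\log N$ directly (your ``tuning'' remark), which makes the arithmetic $2\ell+3x\le 9(\ell\log(N/\ell)+\ell)$ go through.
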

\begin{proof}
    For any $J \subset [N]$ with $|J| = \ell\geq 1$, we note that $\|(X_j)_{J}\|^2_2$ follows the $\chi^2$-distribution with degree of freedom $\ell$.
Therefore, by the Laurent-Massart bound \cite{laurent2000adaptive}, we have that
    \begin{equation}\label{lem:LMbound2}
        \P \left( \|(X_j)_{J}\|^2_2 \ge \ell+ 2\sqrt{\ell x} +2x \right) \leq \exp(-x).
    \end{equation}
     By Stirling's formula, $\log \binom{N}{\ell} \leq N \Ent(\ell/N)\leq \ell(\log (N/\ell)+1)$, where $\Ent(\cdot)$ is the entropy function.
If we take $x = \ell (\log(N/\ell)+1) + 2\log (N)$, then $\ell+ 2\sqrt{\ell x} +2x \leq 2\ell+3x = 3\ell \log(N/\ell) + 6 \log(N)+5\ell \leq 9(\ell\log(N/\ell)+\ell)$.
Therefore, we have
    \begin{equation}\label{lem:LMbound3}
        \P \left( \|(X_j)_{J}\|^2_2 \ge 9 ( \ell\log(N/\ell)+ \ell) \right) \leq \exp(-(\ell (\log(N/\ell)+1) + 2\log (N)))\leq \frac{1}{N^2 \binom{N}{\ell}}.
    \end{equation}
    By a union bound over all $k \in [K]$, $j \in [M]$ and $I_{k}\subset [N]$ with $|I_{k}| = N_{k}$, we have the lemma.
\end{proof}

Now we state and prove the main result of the present section.
Recall the definition of $\rho(\alpha, \kappa)$, $t(\alpha, \kappa)$ and $P(\alpha, \kappa)$ in \cref{def:LP_auxiliary} and \cref{def:tP}.
Define $R_0 = 1-2 \Phi( -t(\alpha, \kappa_0))$.
For $1\leq k \leq K$, define $\hat N_k = N_k + \sqrt{2N_k}$.
Further, define $R_k = 2^{-k+1} R_0$, $\hat R_k = 11R_k(\log(1/R_k)+1)$ and $\tilde R_k = 9R_k(\log(1/R_k)+1)^2$.
\begin{thm}[Proportional choice of slack vectors]\label{thm:differentc}
    For any $\kappa_0 > \kappa$ and $\beta_1, \cdots, \beta_K\geq 0$ satisfying $\sum_{k=1}^K \beta_k < 1$, if there exists a constant $K_p\geq 0$ such that
    \begin{align}
        &\label{eq:condition1}
		\alpha < \sup_{\rho \in [0, 1]} \left\{ \frac{\varphi(\rho)^2}{\E \left[ \left( \kappa_0 - \rho G \right)_+^2 \right]} \right\}, \\ \label{eq:condition2}
        &\alpha < \left( \E_{Y \sim P(\alpha, \kappa_0)}\exp\left( -\frac{\beta_k^2 (Y-\kappa)^2 }{ K_1 R_k} \right) \right)^{-1}\frac{R_k}{K_2} , \qquad \forall 1\leq k \leq K_p,\\
        \label{eq:condition3}
        &\alpha < \exp\left( \frac{\beta_k^2 (\kappa_0-\kappa)^2 }{ K_1 \hat R_k } \right)\frac{R_k}{ K_2} , \qquad \forall K_p < k \leq K,\\
        \label{eq:condition4}
        &\frac{\beta_k^2 (\kappa_0-\kappa)^2}{K_1}  > 3R_k, \quad \forall 1 \leq k \leq K_p, \qquad \frac{\beta_k^2 (\kappa_0-\kappa)^2}{2 K_1}  >\tilde R_k, \quad \forall K_p < k \leq K,
    \end{align}
    then with the choice
    \begin{equation}
        c_j^{(k)}  =  \beta_k\frac{ \< \hat \theta, X_j\> - \kappa \sqrt{N} - 4 \sqrt{\delta N \log M }}{\|(X_j)_{I_{k-1}}\|_2}, \quad \forall k \in [K], \quad \text{and} \quad \delta = o(1/\log M),
    \end{equation}
    \cref{eq:cjrequirement} and \cref{eq:cjrequirement2} hold with high probability.
    Therefore, the algorithm finds a solution with high probability.
\end{thm}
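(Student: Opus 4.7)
The plan is to verify the two requirements of \cref{thm:fullcolor} in turn. For \cref{eq:cjrequirement}, substituting the proposed $c_j^{(k)}$ collapses the left-hand side to $\big(\sum_{k=1}^K \beta_k\big)\big(\langle \hat\theta, X_j\rangle - \kappa\sqrt{N} - 4\sqrt{\delta N\log M}\big)$, so the inequality follows from $\sum_k \beta_k < 1$ combined with LP feasibility $\langle \hat\theta, X_j\rangle \ge \kappa_0\sqrt{N}$ and the gap $\kappa_0 > \kappa$ (with $\delta = o(1/\log M)$). Thus the real content is \cref{eq:cjrequirement2}. Note that \cref{eq:condition1} is exactly what is needed to apply \cref{thm:LP_behavior} to the LP run with intercept $\kappa_0$, which we use freely below.

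For \cref{eq:cjrequirement2}, I would proceed by induction on $k$, using \cref{thm:mainpartialcolor} to propagate $N_{k-1} \le R_k N$ with high probability (base case $N_0 \le R_0 N$ is \cref{eq:tight_fraction}; each iteration halves $N_k \le N_{k-1}/2$). Since $(c_j^{(k)})^2 \propto N/\|(X_j)_{I_{k-1}}\|_2^2 \propto N/N_{k-1}$ grows as $N_{k-1}$ shrinks, the LHS of \cref{eq:cjrequirement2} decays exponentially in $1/N_{k-1}$ while the RHS decays only linearly, so it suffices to verify the bound in the worst case $N_{k-1} = R_k N$.

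The split between regimes reflects how tightly one can estimate the sum. In the early regime $k \le K_p$, combine the empirical $W_2$-convergence of $Y_j := \langle X_j, \hat\theta\rangle/\sqrt{N}$ to $P(\alpha, \kappa_0)$ from \cref{thm:LP_behavior} with \cref{lem:normbound} ($\|(X_j)_{I_{k-1}}\|_2^2/N_{k-1} \to 1$ in $W_2$) to get
\[
\sum_{j=1}^M \exp\!\Big(-\frac{(c_j^{(k)})^2}{K_1}\Big) \le (1+o(1))\, M\, \E_{Y\sim P(\alpha,\kappa_0)}\!\Big[\exp\!\Big(-\frac{\beta_k^2 (Y-\kappa)^2}{K_1 R_k}\Big)\Big],
\]
and \cref{eq:condition2} then bounds this by $N_{k-1}/K_2$; the first half of \cref{eq:condition4} supplies the slack to absorb the $4\sqrt{\delta N\log M}$ and $o(1)$ errors. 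In the late regime $k > K_p$, the $W_2$ error $\sqrt{1/N_{k-1}}$ in \cref{lem:normbound} is no longer negligible compared to $R_k$, so we switch to uniform pointwise bounds: LP feasibility gives $\langle X_j, \hat\theta\rangle \ge \kappa_0 \sqrt{N}$, and \cref{lem:LMboundall} gives $\|(X_j)_{I_{k-1}}\|_2^2 \le 9 N_{k-1}(\log(N/N_{k-1})+1) < \hat R_k N$ for all $j$ simultaneously. This yields $\sum_j \exp(-(c_j^{(k)})^2/K_1) \le M \exp(-\beta_k^2(\kappa_0 - \kappa)^2/(K_1 \hat R_k))$, which \cref{eq:condition3} bounds by $R_k N/K_2 = N_{k-1}/K_2$, with the second half of \cref{eq:condition4} providing the $\hat R_k$-vs-$\tilde R_k$ gap needed to absorb the $4\sqrt{\delta N\log M}$ slack.

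The main obstacle is justifying the $W_2$-approximation in the early regime, since $\hat\theta$, $X_j$, and $I_{k-1}$ are jointly coupled through $X$, and \cref{thm:LP_behavior} only controls the marginal empirical distribution of the $Y_j$'s, not their joint distribution with $\|(X_j)_{I_{k-1}}\|_2^2$. The cleanest route mirrors the union-bound strategy in the proof of \cref{lem:normbound}: fix $I_{k-1}$ as one of $\binom{N}{N_{k-1}}$ possible subsets, apply concentration to each, and take a union bound; the combinatorial entropy $\log\binom{N}{R_k N} \approx R_k N \log(1/R_k)$ remains much smaller than the concentration rate provided by the exponent condition in \cref{eq:condition4}.
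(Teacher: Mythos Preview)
Your overall architecture matches the paper's: \cref{eq:cjrequirement} is immediate from the choice of $c_j^{(k)}$ and $\sum_k\beta_k<1$; for \cref{eq:cjrequirement2} you split into an early regime ($k\le K_p$) handled via the $W_2$ limits in \cref{thm:LP_behavior} and \cref{lem:normbound}, and a late regime ($k>K_p$) handled via the uniform bound of \cref{lem:LMboundall}. Two points deserve correction, however.

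\textbf{The role of \cref{eq:condition4}.} You say \cref{eq:condition4} ``supplies the slack to absorb the $4\sqrt{\delta N\log M}$ and $o(1)$ errors.'' That is not its purpose. The $4\sqrt{\delta N\log M}$ term is already $o(\sqrt N)$ because $\delta=o(1/\log M)$, so it is absorbed by $\kappa_0>\kappa$ alone. The actual issue is your claim that ``it suffices to verify the bound in the worst case $N_{k-1}=R_kN$.'' This is not automatic: you need the ratio of the two sides of \cref{eq:cjrequirement2} to be monotone in $N_{k-1}$. Concretely, in the early regime the paper checks that $x\mapsto x\exp\!\big(\beta_k^2(Y-\kappa)^2N/(K_1(x+\sqrt{2x}))\big)$ is decreasing on the relevant range, and in the late regime that $x\mapsto x\exp\!\big(a/(10x(\log(1/x)+1))\big)$ is decreasing when $10x(\log(1/x)+1)^2\le a$. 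These monotonicity statements are exactly what the two halves of \cref{eq:condition4} guarantee. Without them, smaller $N_{k-1}$ could in principle make the inequality \emph{harder}, since the RHS $N_{k-1}/K_2$ shrinks too.

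\textbf{The coupling obstacle.} Your concern that \cref{thm:LP_behavior} only controls the marginal of the $Y_j$'s is well taken, but your proposed fix---a union bound over all subsets $I_{k-1}$---is both unnecessary and unclear how to carry out (fixing $I_{k-1}$ does not decouple $Y_j=\langle X_j,\hat\theta\rangle/\sqrt N$ from the rest of $X$, since $\hat\theta$ depends on all of $X$). The paper's resolution is simpler: \cref{lem:normbound} shows the empirical law of $\|(X_j)_{I_{k-1}}\|_2^2/N_{k-1}$ converges in $W_2$ to the \emph{point mass} $\delta_1$. When one marginal concentrates at a point, the joint empirical law is determined by the other marginal; replacing $\|(X_j)_{I_{k-1}}\|_2^2/N$ by its upper envelope $\hat N_{k-1}/N$ and then invoking the $W_2$ convergence of the $Y_j$'s to $P(\alpha,\kappa_0)$ is therefore legitimate. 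The union bound over subsets is already baked into the proof of \cref{lem:normbound} itself.
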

\begin{rem}
    The first condition (\cref{eq:condition1}) is to make sure that the algorithm in Stage 1 finds a solution, with margin $\kappa_0$.
The second condition (\cref{eq:condition2}) guarantees that in the first $K_p$ iterations of the Edge-Walk algorithm, the conditions in \cref{thm:fullcolor} are satisfied.
The third condition (\cref{eq:condition3}) focuses on later rounds, when $N_k$ might be sublinear in $N$.
And finally the fourth condition (\cref{eq:condition4}) provides a nice monotonicity property.
\end{rem}
\begin{proof}[Proof of \cref{thm:differentc}]
    By our choice of $c_j^{(k)}$'s, \cref{eq:cjrequirement} holds automatically. Since $\delta = o(1/\log M)$, the $c_j^{(k)}$'s are guaranteed to be non-negative since $\kappa_0>\kappa$.
To verify \cref{eq:cjrequirement2}, we note that 
    \begin{align*}
        \frac{1}{M}\sum_{j=1}^M \exp(- (c_j^{(k)})^2 /K_1 )  &= \frac{1}{M}\sum_{j=1}^M \exp \left(- \beta_k^2 \frac{ (\< \hat \theta, X_j\> - \kappa\sqrt{N} - 4 \sqrt{\delta N \log M })^2  }{ K_1 \|(X_j)_{I_{k-1}}\|_2^2 } \right) \\
        &= \frac{1}{M}\sum_{j=1}^M \exp \left(- \beta_k^2 \frac{ (\< \hat \theta, X_j\>/\sqrt{N} - \kappa - 4 \sqrt{\delta  \log M })^2 }{ K_1 \|(X_j)_{I_{k-1}}\|_2^2/N } \right).
    \end{align*}
    By \cref{eq:margin_conv} and \cref{lem:normbound}, we have that for $1 \leq k \leq K_p$,
    \begin{align*}
        &\, \plim_{M/N \to \alpha} \frac{1}{M}\sum_{j=1}^M \exp \left(- \beta_k^2 \frac{ (\< \hat \theta, X_j\> - \kappa\sqrt{N} - 4 \sqrt{\delta N \log M })^2 }{ K_1 \|(X_j)_{I_{k-1}}\|_2^2 } \right) \\
        &\leq \plim_{N \to \infty} \E_{Y \sim P(\alpha, \kappa_0)}\exp\left( -\frac{\beta_k^2 (Y-\kappa)^2 }{K_1 (\hat N_{k-1}/N)  } \right).
    \end{align*}
    Note that for each $Y \geq \kappa_0$, the function $f(x):=x \exp(\beta_k^2(Y-\kappa)^2 N/(K_1 (x+\sqrt{2x})))$ is decreasing in $x$ whenever $x+\sqrt{2x} \leq \beta^2(\kappa_0-\kappa)^2/K_1$.
By \cref{eq:tight_fraction} and \cref{thm:mainpartialcolor}, we know that 
    $$
    N_{k-1} \leq (1+o_N(1))R_0 2^{-k+1}N = (1+o_N(1))R_k N.
    $$
Thus by \cref{eq:condition4}, $(N_{k-1}+\sqrt{2N_{k-1}})/N < 3 R_k < \beta^2(\kappa_0-\kappa)^2/K_1$.
Therefore 
$$
\E_{Y \sim P(\alpha, \kappa_0)}\exp\left( -\frac{\beta_k^2 (Y-\kappa)^2 }{K_1 (\hat N_{k-1}/N)  } \right)/N_{k-1}
$$ 
is increasing in $N_{k-1}$.
In particular, this implies that
    \begin{align*}
        \E_{Y \sim P(\alpha, \kappa_0)}\exp\left( -\frac{\beta_k^2 (Y-\kappa)^2 }{K_1 (\hat N_{k-1}/N)  } \right)/N_{k-1} \leq (1+o_N(1))\E_{Y \sim P(\alpha, \kappa_0)}\exp\left( -\frac{\beta_k^2 (Y-\kappa)^2 }{K_1 \hat R_k  } \right)/(R_k N),
    \end{align*}
    since $N_{k-1} \leq (1+o_N(1))R_k N$.
Therefore, \cref{eq:condition2} implies that for $1 \leq k \leq K_p$,
    \begin{align*}
        \plim_{M/N \to \alpha} \frac{1}{M}\sum_{j=1}^M \exp(- (c_j^{(k)})^2 /K_1 ) \leq \plim_{N \to \infty} \E_{Y \sim P(\alpha, \kappa_0)}\exp\left( -\frac{\beta_k^2 (Y-\kappa)^2 }{K_1 (\hat N_{k-1}/N)  } \right) \leq \frac{N_{k-1}}{\alpha K_2 N}.
    \end{align*}
    Since $M/N \to \alpha$, we have \cref{eq:cjrequirement2}.
For $K_p<k\leq K$, by \cref{lem:LMboundall}, with high probability,
    \begin{align*}
        \sum_{j=1}^M \exp(- (c_j^{(k)})^2 /K_1 )  
        &= \sum_{j=1}^M \exp \left(- \beta_k^2 \frac{ (\< \hat \theta, X_j\>/\sqrt{N} - \kappa - 4 \sqrt{\delta  \log M })^2 }{ K_1 \|(X_j)_{I_{k-1}}\|_2^2/N } \right)\\
        & \leq \sum_{j=1}^M \exp \left(- \beta_k^2 \frac{(\kappa_0 - \kappa - 4 \sqrt{\delta  \log M })^2 }{ 9 K_1 N_{k-1}(\log(N/N_{k-1})+1) /N } \right)\\
        & \stackrel{(i)}{\leq} \sum_{j=1}^M \exp \left(- \beta_k^2 \frac{(\kappa_0 - \kappa)^2 }{ (10 K_1 N_{k-1}(\log(N/N_{k-1})+1) /N } \right)\\
        & \stackrel{(ii)}{\leq} \frac{N_{k-1}}{K_2}.
    \end{align*}
    Here in $(i)$ we used $\delta = o(1/\log M)$.
In $(ii)$, we used monotonicity from \cref{eq:condition4}.
Indeed, the function $x\exp(a/(10x(\log(1/x)+1)))$ is decreasing if $10x(\log(1/x)+1)^2\leq a$.
So \cref{eq:condition4}, \cref{eq:condition3} and $N_{k-1}/N \leq (1+o_N(1))R_k$ together imply that 
    \begin{align*}
        \alpha \exp\left( -\frac{\beta_k^2 (\kappa_0-\kappa)^2 }{ 11 K_1 N_{k-1}/N (\log(N/N_{k-1})+1) } \right) < \frac{N_{k-1}}{K_2 N} , \qquad \forall K_p < k \leq K.
    \end{align*}
        This implies $(ii)$ since $M/N \to \alpha$.
\end{proof}

\subsection{Thresholds for Our Algorithm as a Function of the Margin}\label{sec:algthreshold}
In this section, we customize the analysis of our algorithm in the cases $\kappa \to -\infty$, $\kappa = 0$, and $\kappa \to \infty$, deriving a lower bound on the algorithmic threshold for each case.

\subsubsection{Large Negative Margin}
We first deal with the special case when $\kappa \to -\infty$.
\begin{thm}\label{thm:kappaverynegative}
    There exists an absolute constant $C_\mathrm{ALG} > 0$, such that with high probability, our two-step algorithm (with properly tuned parameters) finds a $\kappa$-margin solution when $\alpha < \frac{C_\mathrm{ALG}}{\Phi(\kappa) \kappa^2}$ for sufficiently negative $\kappa$.
\end{thm}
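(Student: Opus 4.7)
The plan is to invoke \cref{thm:differentc} with $\kappa_0 := \kappa + \eta$ for a suitable positive absolute constant $\eta > 0$ and a geometric schedule for the slack coefficients $\beta_k$. The offset $\eta$ serves as the total rounding margin that the edge-walk phase must absorb, while $\kappa_0 \to -\infty$ at the same rate as $\kappa$, so $\Phi(\kappa_0) \asymp \Phi(\kappa)$ and the LP stage is essentially unaffected by the replacement $\kappa \mapsto \kappa_0$.

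The first step will be to verify the LP feasibility condition \cref{eq:condition1}. Using the standard tail estimate $\E[(\kappa_0 - \rho G)_+^2] = (1+o_\kappa(1)) \cdot 2\Phi(\kappa_0)/\kappa_0^2$ at $\rho$ near $1$, together with $\varphi(1) = \sqrt{2/\pi}$ bounded away from zero, the supremum on the right of \cref{eq:condition1} is of order $\kappa^2/\Phi(\kappa)$. Consequently the condition holds whenever $\alpha \le C_\mathrm{ALG}/(\Phi(\kappa)\kappa^2)$ for $C_\mathrm{ALG}$ small enough. By \cref{thm:LP_behavior} the Stage-1 output $\hat\theta$ then has empirical margin distribution close to $P(\alpha,\kappa_0)$ in $W_2$, and in particular $\langle X_j,\hat\theta\rangle/\sqrt{N} \ge \kappa_0 = \kappa + \eta$, leaving a per-constraint slack of at least $\eta$ for rounding.

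Next, I would establish that the fraction $R_0 = 1 - 2\Phi(-t(\alpha,\kappa_0))$ of unfixed coordinates tends to zero as $\kappa \to -\infty$. The intuition is that when each half-space has large volume the LP can push $\hat\theta$ very close to a vertex of $[-1,1]^N$: the maximizer $\rho_*$ of the maximin in \cref{eq:key_max_min} tends to $1$, forcing $t(\rho_*) \to 0$ via the first-order condition $\rho_*^2 = \E[\min(|G|,t_*)^2]/t_*^2$, and hence $R_0 \sim t_*/\sqrt{2\pi} \to 0$. Quantitatively, this should yield $R_0 \cdot \log(\alpha/R_0) = o_\kappa(1)$, the form needed below. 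With the rounding schedule $\beta_k := \sqrt{K_1 R_k \log(K_2 \alpha/R_k)}/\eta$, $R_k = 2^{-k+1}R_0$, and $K_p$ chosen as the largest index for which $R_k N$ remains linear in $N$, the pointwise bound $Y - \kappa \ge \eta$ inside the expectation in \cref{eq:condition2} reduces that condition to $\alpha\exp(-\beta_k^2\eta^2/(K_1 R_k)) < R_k/K_2$, which my choice of $\beta_k$ enforces directly. The conditions \cref{eq:condition3} and \cref{eq:condition4} follow by the same argument, incurring only a logarithmic overhead via $\hat R_k$ and $\tilde R_k$; the budget constraint $\sum_k \beta_k < 1$ is a convergent geometric-type sum bounded by $C\sqrt{R_0 \log(\alpha/R_0)}/\eta$, which is $<1$ for $|\kappa|$ large enough given the decay of $R_0$.

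The hardest step I expect is the quantitative decay rate of $R_0$ as $\kappa \to -\infty$ (step two above): the asymptotic analysis of the maximin \cref{eq:key_max_min} in the regime $\rho_* \to 1$, $t_* \to 0$ has to be carried out carefully enough to produce a usable rate. Once this is in hand, the remaining verifications of \cref{thm:differentc} are routine inequalities, and all universal constants can be absorbed into $C_\mathrm{ALG}$.
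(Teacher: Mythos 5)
Your overall plan (invoke \cref{thm:differentc} with an inflated margin $\kappa_0$ and a geometric schedule $\beta_k$, with the main work being the LP asymptotics for the unfixed fraction $R_0$) is the same framework the paper uses, but the quantitative calibration at its core is wrong, and the budget constraint $\sum_k \beta_k < 1$ cannot be met as you have set things up. First, the claim that $\kappa_0 = \kappa + \eta$ with a \emph{constant} $\eta$ gives $\Phi(\kappa_0) \asymp \Phi(\kappa)$ is false: for Gaussian tails $\Phi(\kappa+\eta)/\Phi(\kappa) = e^{(1+o_\kappa(1))\eta|\kappa|} \to \infty$. This is not cosmetic. With $\alpha \asymp 1/(\kappa^2\Phi(\kappa))$ and margin $\kappa_0=\kappa+\eta$, the effective constant in \cref{lem:large_neg_kappa} becomes $c \approx C_{\mathrm{ALG}} e^{\eta|\kappa|} \to \infty$, and redoing the stationarity condition \cref{eq:foc_rho} gives $1-\rho_* \asymp \eta/|\kappa|$ and hence $R_0 \asymp \eta/|\kappa|$, not $\Theta(1/\kappa^2)$. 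Second, your claimed rate $R_0 \log(\alpha/R_0) = o_\kappa(1)$ is impossible in any version of this setup: $\log \alpha \approx \kappa^2/2$ while $R_0 \ge \Omega(1/\kappa^2)$ always, so the product is bounded below by a constant (with the paper's scaling it tends to $3b/2$), and with your constant offset it diverges like $\eta|\kappa|$. Consequently your bound $\sum_k \beta_k \lesssim \sqrt{R_0\log(\alpha/R_0)}/\eta$ does not fall below $1$; under your parameters it grows like $\sqrt{|\kappa|}$, violating the hypothesis of \cref{thm:differentc}. The paper avoids this by taking $\kappa_0 = \kappa + c_0/|\kappa|$, which keeps $\Phi(\kappa_0) = e^{c_0}(1+o_\kappa(1))\Phi(\kappa)$ and yields $R_0 = (3b+o_\kappa(1))/\kappa^2$ with $3be^b = c_a e^{c_0}$, so that the normalized slacks $c_j^{(k)} \approx \beta_k c_0 2^{k/2}/\sqrt{6b}$ are order one with $\beta_k = \beta_0 2^{-k/4}$.

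There is a second, related loss in your reduction of \cref{eq:condition2}: replacing $Y-\kappa$ by its pointwise minimum $\kappa_0-\kappa$ throws away the cancellation that makes the paper's verification work. Under $P(\alpha,\kappa_0)$ only an $O(\Phi(\kappa_0)) = O(e^{c_0}\Phi(\kappa))$ mass of constraints has margin near $\kappa_0$; the bulk has $Y-\kappa \approx |\kappa|$ and contributes negligibly, so $\E_{Y\sim P(\alpha,\kappa_0)}[\exp(-\beta_k^2(Y-\kappa)^2/(K_1R_k))] \approx \mu_4 \Phi(\kappa)$, and this $\Phi(\kappa)$ cancels against $\alpha \propto 1/(\kappa^2\Phi(\kappa))$, leaving a condition on constants only. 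With your worst-case reduction the requirement becomes $\beta_k^2(\kappa_0-\kappa)^2/(K_1R_k) \gtrsim \log(K_2\alpha/R_k) \approx \kappa^2/2$, which forces $\beta_k \gtrsim |\kappa| 2^{-k/2}$ (up to constants) even if one grants the correct scalings $\kappa_0-\kappa \asymp 1/|\kappa|$ and $R_k \asymp 2^{-k}/\kappa^2$, so $\sum_k\beta_k$ again diverges. The paper only resorts to a worst-case-margin bound (\cref{eq:condition3}, via \cref{lem:LMboundall}) in the late rounds $k > K_p$; for the main rounds it is essential to use the full limiting margin distribution from \cref{thm:LP_behavior}. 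To repair your argument you would need to (i) rescale the offset to $c_0/|\kappa|$, (ii) verify \cref{eq:condition2} through the expectation over $P(\alpha,\kappa_0)$ rather than a pointwise margin bound, and (iii) obtain the precise $R_0 \sim 3b/\kappa^2$ asymptotics from \cref{lem:large_neg_kappa} — which is exactly the paper's proof.
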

\begin{proof}[Proof of \cref{thm:kappaverynegative}]
    To establish the theorem, we need to propose appropriate $\kappa_0$ and $\beta_k$ and verify the conditions in \cref{thm:differentc}.
For a constant $c_a$, we fix $\alpha_0 = c_a/\kappa^2 \Phi(\kappa)$.
Recall the definition of $R_k$, $\hat R_k$ and $\tilde R_k$ right above \cref{thm:differentc}.
We use $R_k(\alpha)$ to emphasize its dependence on $\alpha$.
Since $R_k(\alpha)$ is increasing in $\alpha$, if we replace all the $R_k(\alpha)$, $\tilde R_k(\alpha)$ and $\hat R_k(\alpha)$ in \cref{eq:condition2}, \cref{eq:condition3}, and \cref{eq:condition4} by $R_k(\alpha_0)$, $\tilde R_k(\alpha_0)$ and $\hat R_k(\alpha_0)$, we will get more strict inequalities if $\alpha<\alpha_0$.
Indeed this is the case for \cref{eq:condition4} because $R_k(\alpha)$ and $\tilde R_k(\alpha)$ are both increasing in $\alpha$.
And if \cref{eq:condition4} holds for $R_k(\alpha_0)$, then the right hand sides of \cref{eq:condition2} and \cref{eq:condition3} are decreasing in $\alpha$.
So for the purpose of the proof, it suffices to propose $\kappa_0$ and $\beta_k$ and verify the conditions in \cref{thm:differentc} with $R_k(\alpha_0)$, $\tilde R_k(\alpha_0)$ and $\hat R_k(\alpha_0)$.

    For constants $c_0>0$ and $0<\beta_0<1/6$ to be determined later, we take $\kappa_0 = \kappa + c_0/|\kappa|$ and $\beta_k = \beta_0 2^{-k/4}$.
Indeed $\sum_{k=1}^K \beta_k < 1$.
Clearly $\kappa_0 < 0$, so by \cref{lem:feasibility_LP}, \cref{eq:abs_bound_alpha} (thus \cref{eq:condition1}) holds.
Furthermore, by \cref{eq:tight_fraction} and \cref{lem:large_neg_kappa},
    \begin{equation}\label{eq:bound_n1}
        R_0 = - 2\Phi \left( -\left( 3 \sqrt{\frac{\pi}{2}} b + o_{\kappa} (1) \right) \frac{1}{\kappa^2} \right) = \frac{3b}{\kappa^2} (1+o_{\kappa} (1)).
    \end{equation}
    And therefore, for any $k \in [K]$, 
    \begin{equation}
        R_k = 2^{-(k-1)} R_0 = \frac{6b}{\kappa^2 2^{k}} (1+o_{\kappa} (1)).
    \end{equation}
    We note that since $\Phi(\kappa_0) = \Phi(\kappa) \exp(c_0)(1+o_\kappa(1))$, \cref{eq:asymp_rho_t_neg_kappa} implies that $3be^b = c_a e^{c_0} (1+o_\kappa(1))$.
In particular, when $c_0$ is large, $b\leq c_0$.
Now we look at the requirements \cref{eq:condition2}.
For any constant $a>0$, by \cref{def:tP}, we compute
    \begin{align*}
        &\,\, \E_{Y \sim P(\alpha_0, \kappa_0)}\exp\left( - a \kappa^2(Y-\kappa)^2 \right) \\
        &= \int_{\kappa_0}^\infty \frac{1}{\sqrt{2\pi \rho^2}}\exp(-a \kappa^2(y-\kappa)^2) \exp(-y^2/2\rho^2) dy + \Phi(\kappa_0)\exp(-a \kappa^2 (\kappa_0-\kappa)^2)\\
        & = \Phi \left( - \frac{\kappa_0-\mu_1}{\mu_2} \right) \frac{\mu_2}{\rho} \exp(\mu_3) +  \Phi\left(\kappa+ \frac{c_0}{|\kappa|}\right)\exp(-a \kappa^2 (\kappa_0-\kappa)^2),
    \end{align*}
    where
    \begin{equation}
        \mu_1 = \frac{2a\kappa^3 \rho^2}{ 2a\kappa^2 \rho^2 + 1}, \qquad \mu_2 = \sqrt{\frac{\rho^2}{2a\kappa^2 \rho^2 +1}}, \qquad \mu_3 = \frac{-a\kappa^4}{ 2a\kappa^2 \rho^2 +1 }.
    \end{equation}
    Further simplify, by \cref{lem:large_neg_kappa}, we have that
    \begin{align}
        &\,\, \E_{Y \sim P(\alpha, \kappa_0)}\exp\left( - a \kappa^2(Y-\kappa)^2 \right) \\
        &= (1+ o_\kappa(1))\Phi \left(\frac{1}{\sqrt{2a} \rho^2} - c_0 \sqrt{2a}\right) \frac{1}{\sqrt{2a\rho^2} |\kappa| }\exp\left( -\frac{\kappa^2}{2\rho^2} + \frac{1}{4a \rho^4} \right)  +  \Phi\left(\kappa+ \frac{c_0}{|\kappa|}\right)\exp(-a c_0^2)\\ \label{eq:expectation_comp}
        & = (1+ o_\kappa(1)) \mu_4(a) \Phi(\kappa) ,
    \end{align}
    where
    \begin{align*}
        \mu_4(a) = \Phi \left( \frac{1}{\sqrt{2a}} - c_0 \sqrt{2a}\right) \frac{1}{\sqrt{2a}} \exp\left( -b + \frac{1}{4a} \right) + \exp\left(c_0-a c_0^2 \right).
    \end{align*}
    So to verify \cref{eq:condition2}, it suffices to have 
    \begin{align*}
        \alpha  < \min_{1\leq k \leq K_p} \frac{R_k(1-o_\kappa(1))}{K_2 \mu_4\left( \frac{\beta_k^2}{K_1 R_k \kappa^2} \right) \Phi(\kappa)} = \frac{(1-o_\kappa(1))}{\Phi(\kappa) \kappa^2} \min_{1\leq k \leq K_p} \frac{6b}{2^k K_2 \mu_4\left( \frac{\beta_k^2 2^k}{6b K_1 } \right) }.
    \end{align*}
    We note that $\mu_4(a)\leq C_1 \Phi (- C_2\sqrt{a})+\exp(c_0 -ac_0^2) \leq \exp(-C_3a )$ when $a = \Omega(1)$, for some positive constants $C_1$, $C_2$ and $C_3$.
Thus
    \begin{align*}
        \min_{1\leq k \leq K_p} \frac{6b}{2^k K_2 \mu_4\left( \frac{\beta_k^2 2^k}{6b K_1 } \right) } \geq \min_{k \geq 1} \frac{6b}{K_2} 2^{-k} \exp\left( \frac{C_3 \beta_0^2 2^{k/2}}{6bK_1} \right) \geq C_4,
    \end{align*}
    for some positive constant $C_4$.
So $\alpha < C_4(1-o_\kappa(1))/(\Phi(\kappa) \kappa^2)$ implies \cref{eq:condition2}.
Now for \cref{eq:condition4}, we check that
    \begin{align*}
        \frac{\beta_k^2 (\kappa_0-\kappa)^2}{3 K_1 R_k} = \frac{2^{k/2} \beta_0^2 c_0^2}{3K_1 R_0 \kappa^2} = (1+ o_\kappa(1)) \frac{2^{k/2} \beta_0^2 c_0^2}{ 18 b K_1} > 1, 
    \end{align*}
    for any $k\geq 1$ and any $\beta_0^2 c_0^2/b$ sufficiently large.
Further,
    \begin{align*}
        \frac{\beta_k^2 (\kappa_0-\kappa)^2}{2 K_1 \tilde R_k} = \frac{2^{k/2} \beta_0^2 c_0^2}{18 K_1 R_0 \kappa^2 (\log(2^k/R_0)+1)^2} = (1+ o_\kappa(1))\frac{\beta_0^2 c_0^2}{108 b K_1} \frac{2^{k/2}}{(\log(2^k \kappa^2 /6b)+1)^2} >1,
    \end{align*}
    whenever $k > 10 \log \log (|\kappa|)$ and $\beta_0^2 c_0^2/b$ sufficiently large.
Take $K_p = 10 \log \log (|\kappa|)$.
Finally, we check \cref{eq:condition3}.
Similar to the argument for \cref{eq:condition2}, we compute
    \begin{align*}
        \min_{k > K_p} \exp\left( \frac{\beta_k^2 (\kappa_0-\kappa)^2 }{ K_1 \hat R_k } \right)\frac{R_k}{ K_2}  = \min_{k > K_p} \exp\left( \frac{2^{k/2} \beta_0^2 c_0^2}{ 6b K_1 (\log(2^k \kappa^2 /6b)+1) } \right)\frac{6b}{ 2^k K_2} > 1, 
    \end{align*}
    for $k > 10 \log \log (|\kappa|)$ and $\beta_0^2 c_0^2/b$ sufficiently large, which verifies \cref{eq:condition3}.
So if we take $\beta_0 = 0.1$ and $c_0$ sufficiently large, then all the requirements are satisfied since $c_0 \geq b$ when $c_0$ is sufficiently large.
\end{proof}

\subsubsection{Margin Zero}
In this section, we look at another special case when $\kappa = 0$.
This case is more involved; we will first give a simple analysis that works for $\alpha \le 0.05$, and later in \Cref{app:improved} we give an even more specialized  analysis of the partial coloring lemma that works for $\alpha \le 0.1$.
To improve the constants in \cite{lovett2015constructive}, we firstly present a tail estimate for the gaussian random walk.
\begin{lem}\label{lem:gaussiantail}
    Let $Z_0 = 0, Z_1, \cdots, Z_T \in \mathbb{R}^n$ be a martingale with increments $Y_i = Z_i-Z_{i-1}$.
Let $v$ be a fixed vector in $\mathbb{R}^n$ such that $\|v\|_2 = 1$.
Assume that for all $1\leq i \leq T$, and $z_1, \cdots, z_{i-1} \in \mathbb{R}^n$, $Y_i\mid (Z_1 = z_1, \cdots, Z_{i-1} = z_{i-1}) \sim \normal(0, \Sigma_i)$, with $\|\Sigma_i\|_{\mathrm{op}} \leq 1$ (note that $\Sigma_i$ might depend on $Z_1, \cdots Z_{i-1}$).
Then for any $\lambda>0$, 
    \begin{align*}
        \mathbb{P}[\<v, Z_T\> \geq \lambda \sqrt{T}] \leq 2 \Phi (- \lambda).
    \end{align*}
\end{lem}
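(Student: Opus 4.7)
The plan is to reduce to a scalar martingale, embed it in a standard Brownian motion, and close with the reflection principle. Let $W_i := \langle v, Z_i\rangle$, a scalar martingale starting at $0$ with increments $\zeta_i := \langle v, Y_i\rangle$. Conditional on the past, $\zeta_i \sim \normal(0, \sigma_i^2)$ with $\sigma_i^2 := v^\top \Sigma_i v \leq \|\Sigma_i\|_{\op} \leq 1$, since $\|v\|_2 = 1$. The target inequality becomes the scalar statement $\mathbb{P}[W_T \geq \lambda\sqrt{T}] \leq 2\Phi(-\lambda)$.

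Next, I would build a Brownian embedding. Enlarging the probability space if necessary, introduce independent standard Brownian motions $\widetilde B^{(1)}, \ldots, \widetilde B^{(T)}$ (independent of everything else) and realize the increments as $\zeta_i = \widetilde B^{(i)}_{\sigma_i^2}$, which is legitimate because $\sigma_i^2$ is $\mathcal{F}_{i-1}$-measurable. Set $\tau_0 := 0$ and $\tau_i := \sum_{j=1}^{i} \sigma_j^2$, so $\tau_T \leq T$. Glue the pieces into a single process by defining, for $t \in [\tau_{i-1}, \tau_i]$,
\[
B_t := W_{i-1} + \widetilde B^{(i)}_{\,t - \tau_{i-1}},
\]
and extending $B$ past $\tau_T$ using a fresh independent Brownian motion. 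A standard successive-conditioning argument (each $\tau_i$ is a stopping time for the natural filtration of $B$, and the conditional distribution of the next piece given the past is Brownian) shows that $B$ is a genuine Brownian motion on $[0,T]$ and $B_{\tau_i} = W_i$ for all $i \leq T$.

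Finally, since $\tau_T \leq T$, we have the event inclusion
\[
\{W_T \geq \lambda\sqrt{T}\} \;=\; \{B_{\tau_T} \geq \lambda\sqrt{T}\} \;\subseteq\; \Bigl\{\sup_{0 \leq t \leq T} B_t \geq \lambda\sqrt{T}\Bigr\}.
\]
The reflection principle for standard Brownian motion gives
\[
\mathbb{P}\Bigl[\sup_{0 \leq t \leq T} B_t \geq \lambda\sqrt{T}\Bigr] \;=\; 2\,\mathbb{P}[B_T \geq \lambda\sqrt{T}] \;=\; 2\Phi(-\lambda),
\]
which yields the claimed bound.

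The only real obstacle is verifying, in the gluing step, that the piecewise-defined $B$ really is a Brownian motion; this is handled by successively conditioning on $\mathcal{F}_{i-1}$ and invoking the strong Markov property, exploiting that $\sigma_i^2$ is previsible. Everything else is bookkeeping. It is worth noting that a naive Chernoff/MGF approach via $\mathbb{E}[\exp(sW_T - s^2T/2)] \leq 1$ only yields the weaker bound $\exp(-\lambda^2/2)$; the sharper tail $2\Phi(-\lambda)$, which carries the correct polynomial prefactor at large $\lambda$, is exactly what the geometric reflection-principle step buys over a moment-based argument.
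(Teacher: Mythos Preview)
Your proof is correct and follows essentially the same approach as the paper: reduce to the scalar martingale $\langle v, Z_i\rangle$, embed it as a time-changed Brownian motion with total clock $\tau_T = \sum_i \sigma_i^2 \le T$, and finish with the reflection principle. Your gluing construction is somewhat more explicit than the paper's (which simply asserts the embedding exists), and your closing remark on why a Chernoff bound would lose the polynomial prefactor is a nice addition, but the core argument is identical.
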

\begin{proof}
For $i = 0 , \cdots, T$, we write $X_i = \<v, Z_i\>$.
Then $X_T$ can be written as $X_T = \<v, Y_1\>+\<v, Y_2\>+\cdots+\<v, Y_T\> = V_1 N_1 + V_2 N_2 + \cdots + V_T N_T$, where $N_1, \cdots, N_T$ are $ \iidsim \normal(0,1)$, and each $V_i$ satisfies: (1) $V_i$ might depend on $Z_1, \cdots, Z_{i-1}$, (2) $V_i$ is independent of $N_{i+1}, \cdots, N_T$ and (3) $|V_i| \leq 1$.
Note that (3) follows from the fact that $|V_i|^2 = v^\top \Sigma_i v \leq 1$.
This implies that we can construct a standard Brownian motion $B$ and a sequence of stopping times $\tau_i$ such that
\begin{align*}
    \{B_{\tau_i}\}_{i \in [T]} \overset{d}{=} \{X_i\}_{i \in [T]}, \qquad \tau_i\overset{d}{=} \sum_{j=1}^i V_i^2 \quad \forall i \in [T].
\end{align*}
In particular, this implies that $\tau_T \leq T$ and therefore,
\begin{align*}
    \mathbb{P}[\<v, Z_T\> \geq \lambda \sqrt{T}] = \mathbb{P}[X_T \geq \lambda \sqrt{T}] = \mathbb{P}[B_{\tau_T} \geq \lambda \sqrt{T}] \leq \mathbb{P}[\max_{0\leq t \leq T } B_t \geq \lambda \sqrt{T}] = 2 \Phi (- \lambda),
\end{align*}
completing the proof.
\end{proof}
We next state and prove a partial coloring lemma with improved constants for the case $\kappa = 0$:
\begin{thm}[Partial Coloring Lemma for $\kappa = 0$]\label{thm:mainpartialcolor_zero}
Let positive constants $K_1, K_2, K_3$ be such that $1/K_1 + 1/K_2 + 1/K_3 \le 0.999$.
Let $X_1,\ldots,X_M \in \R^N$ be vectors, and $\theta_0 \in [-1,1]^N$ be a ``starting'' point.
Let $c_1,\ldots,c_M \ge 0$ be thresholds such that $2 \sum_{j=1}^M \Phi( - c_j/\sqrt{K_1}) \le N/K_2$.
Let $\delta>0$ be a small approximation parameter.
Then with probability at least $0.0008/(1 - 1/K_3)$, the Edge-Walk algorithm (\cref{alg:edge-walk}) in \cref{sec:rounding_alg} finds a point $\theta \in [-1,1]^N$ such that
\begin{enumerate}
\item[(i)] $\<\theta-\theta_0, X_j\> \ge -c_j \|X_j\|_2$.
\item[(ii)] $|\theta_i| \ge 1-\delta$ for at least $N/K_3$ indices $i \in [N]$.
\end{enumerate}
Moreover, the algorithm runs in time $O((M+N)^3 \cdot \delta^{-2} \cdot \log(NM/\delta))$.
\end{thm}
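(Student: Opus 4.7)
The plan is to adapt the Lovett--Meka analysis from \cite{lovett2015constructive} to yield the sharp constants $1/K_1 + 1/K_2 + 1/K_3 \le 0.999$ by replacing the Chernoff-type tail estimates in the original proof with (a) the exact Gaussian tail bound of \cref{lem:gaussiantail} and (b) a tightened second-moment identity that uses $\|\theta_T\|_2 \le \sqrt{N}$ directly rather than the looser $\|\theta_T - \theta_0\|_2 \le 2\sqrt{N}$. To match the statement it is natural to re-parameterize the walk to take $T = \lceil K_1/\gamma^2\rceil$ steps in place of the default $\lceil 16/(3\gamma^2)\rceil$ used in \cref{alg:edge-walk}, so that the cumulative variance budget along any fixed unit direction is $\gamma^2 T \approx K_1$.

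The main argument is a three-step chain. First, since $U_t \sim \normal(\cV_t)$ satisfies $\E[U_t \mid \cF_{t-1}] = 0$ and $\E[\|U_t\|_2^2 \mid \cF_{t-1}] = \dim\cV_t$, the process $\|\theta_t\|_2^2 - \gamma^2\sum_{s \le t}\dim\cV_s$ is a martingale, and $\|\theta_T\|_2 \le \sqrt{N}$ gives $\gamma^2 \sum_{t=1}^T \E[\dim \cV_t] \le N - \|\theta_0\|_2^2 \le N$; by monotonicity of $\cV_t$ this forces $\E[\dim\cV_T] \le N/K_1$. Second, for each $j \in [M]$ the process $t \mapsto \langle \theta_t - \theta_0, X_j\rangle/\|X_j\|_2$ is a martingale whose conditional increments are Gaussian with covariance bounded in operator norm by $\gamma^2$; the Brownian-motion embedding used in the proof of \cref{lem:gaussiantail} actually delivers a \emph{maximal} inequality, namely
\[
\P\Bigl[\min_{t \le T} \langle \theta_t - \theta_0, X_j\rangle \le -c_j \|X_j\|_2\Bigr] \le 2\Phi\bigl(-c_j/\sqrt{K_1}\bigr),
\]
so summing over $j$ yields $\E[|\cC^{\mathrm{disc}}_T|] \le N/K_2$ by hypothesis. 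Third, $\cV_T$ is orthogonal to $\{e_i : i \in \cC^{\mathrm{var}}_T\} \cup \{X_j : j \in \cC^{\mathrm{disc}}_T\}$, so $\dim \cV_T \ge N - |\cC^{\mathrm{var}}_T| - |\cC^{\mathrm{disc}}_T|$, and rearranging gives $\E[|\cC^{\mathrm{var}}_T|] \ge N(1 - 1/K_1 - 1/K_2)$. A reverse Markov bound using $|\cC^{\mathrm{var}}_T| \le N$ then produces
\[
\P\bigl[|\cC^{\mathrm{var}}_T| \ge N/K_3\bigr] \ge \frac{1 - 1/K_1 - 1/K_2 - 1/K_3}{1 - 1/K_3} \ge \frac{0.001}{1 - 1/K_3},
\]
which is condition (ii).

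Condition (i) is essentially automatic from the algorithm's design: as soon as constraint $j$ enters $\cC^{\mathrm{disc}}_t$, subsequent motion lies in $\cV_t \perp X_j$, so $\langle \theta_t - \theta_0, X_j\rangle$ is frozen; the only risk is overshoot on the step in which $j$ is first added to $\cC^{\mathrm{disc}}$, which is handled by a standard Gaussian union bound over per-step increments provided $\delta = o(1/\log N)$. The small slack between the $0.001$ derived above and the stated $0.0008$ absorbs this overshoot control and a corresponding event for (i). The main obstacle is orchestrating the two sharpenings so that neither of the three loss terms $1/K_1, 1/K_2, 1/K_3$ is inflated by a multiplicative constant: the identity $\|\theta_T\|_2^2 \le N$ (rather than $\|\theta_T - \theta_0\|_2^2 \le 4N$) is what buys the sharp $1/K_1$, while the exact $2\Phi(\cdot)$ tail (rather than a sub-Gaussian surrogate) is what allows the hypothesis $2\sum_j \Phi(-c_j/\sqrt{K_1}) \le N/K_2$ to be used at face value. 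The running-time bound is inherited verbatim from \cite{lovett2015constructive}, since only constants have been adjusted.
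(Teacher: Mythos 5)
Your proposal is correct and follows essentially the same route as the paper's proof: set $T = K_1/\gamma^2$, use the martingale/energy identity $\E\|\theta_t\|_2^2 = \E\|\theta_{t-1}\|_2^2 + \gamma^2\E[\dim\cV_t]$ together with $\|\theta_T\|_2^2 \le N$ and monotonicity of $\cV_t$ to get $\E[\dim\cV_T]\le N/K_1$, bound $\E[|\cC^{\mathrm{disc}}_T|]\le N/K_2 + o(N)$ via \cref{lem:gaussiantail} (the paper applies it to the frozen final value rather than invoking a maximal inequality, which is an equivalent justification), and finish with the reverse Markov bound and the standard Lovett--Meka feasibility/overshoot argument for part (i). The only cosmetic difference is that the paper delegates the $1/K_1$ step to the adaptation of Lovett--Meka's Claim 16, which you instead write out explicitly, and it tracks the $\delta$-shift in the tail argument ($2\Phi(-(c_j-\delta)/\sqrt{K_1})$) that your slack likewise absorbs.
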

\begin{proof}
    The proof strategy closely follows that of \cite[Theorem 4]{lovett2015constructive}, so we will mainly focus on their differences here. Set $T = K_1 / \gamma^2$, and let $\theta = \theta_T$ be the output of the Edge-Walk algorithm in \cref{sec:rounding_alg}. It suffices to show that
    \begin{enumerate}
        \item [(a)] $\theta_T \in [-1, 1]^N$ and $\langle \theta_T - \theta_0, X_j \rangle \ge - c_j \norm{X_j}_2$ for all $j \in [M]$.
        \item [(b)] $|(\theta_T)_i| \ge 1-\delta$ for at least $N/K_3$ indices $i \in [N]$.
    \end{enumerate}
    The proof of part (a) is completely analogous to that of \cite[Claim 13]{lovett2015constructive}, and we omit it for brevity. To prove part (b), recall $\cC^{\mathrm{var}}_t$, $\cC^{\mathrm{disc}}_t$, $\cV_t$, and $U_t \sim \normal(\cV_t)$ from the Edge-Walk algorithm. Similar to \cite[Claim 14]{lovett2015constructive}, we first establish an a priori upper bound on $\vert \cC^{\mathrm{disc}}_T \vert$. By definition, $j \in \cC^{\mathrm{disc}}_T$ implies that $\langle \theta_T - \theta_0, X_j \rangle \le ( - c_j + \delta ) \norm{X_j}_2$. Since $\theta_T - \theta_0 = \gamma (U_1 + \cdots + U_T)$ where $U_t \sim \normal (\cV_t)$, we know that
    \begin{align*}
        \P \left( j \in \cC^{\mathrm{disc}}_T \right) \le \P \left( \left\langle \frac{X_j}{\norm{X_j}_2}, \ \sum_{t=1}^{T} U_t \right\rangle \le - \frac{c_j - \delta}{\gamma} \right) \stackrel{(i)}{\le} 2 \Phi \left( - \frac{c_j - \delta}{\gamma \sqrt{T}} \right) \stackrel{(ii)}{=} 2 \Phi \left( - \frac{c_j - \delta}{\sqrt{K_1}} \right),
    \end{align*}
    where $(i)$ follows from \cref{lem:gaussiantail}, and in $(ii)$ we use the fact $\gamma^2 T = K_1$. Consequently,
    \begin{align}
        \E \left[ \left\vert \cC^{\mathrm{disc}}_T \right\vert \right] = \, & \sum_{j=1}^{M} \P \left( j \in \cC^{\mathrm{disc}}_T \right) \le 2 \sum_{j=1}^{M} \Phi \left( - \frac{c_j - \delta}{\sqrt{K_1}} \right) \\
        \le \, & 2 \sum_{j=1}^{M} \Phi \left( - \frac{c_j}{\sqrt{K_1}} \right) + O(\delta M) \le \frac{N}{K_2} + o(N).
    \end{align}
    Following the proof of \cite[Claim 16]{lovett2015constructive}, we can show that
    \begin{align}
        \E \left[ \left\vert \cC^{\mathrm{var}}_T \right\vert \right] \ge \, & N \left( 1 - \frac{1}{K_1} \right) - \E \left[ \left\vert \cC^{\mathrm{disc}}_T \right\vert \right] \ge \left( 1 - \frac{1}{K_1} - \frac{1}{K_2} - o(1) \right) N \\
        \ge \, & \left( \frac{1}{K_3} + 0.01 - o(1) \right) N \ge \left( \frac{1}{K_3} + 0.009 \right) N,
    \end{align}
    since $1/K_1 + 1/K_2 + 1/K_3 \le 0.999$ by our assumption. Using Markov's inequality, we deduce that $\left\vert \cC^{\mathrm{var}}_T \right\vert \ge N / K_3$ with probability at least $0.0009/(1 - 1/K_3)$. Since we already know that part (a) holds with probability at least $1 - 1 / \poly(M, N)$, it follows that $\theta_T$ satisfies both (a) and (b) with probability at least $0.0008/(1 - 1/K_3)$, completing the proof.
\end{proof}

\paragraph{Choice of Parameters for $\alpha = 0.05$.} 
In the case $\alpha \le 0.05$, we have chosen the following parameters via computer grid search:
\begin{align*}
    &\kappa_0 = 3.42, \; K_1 = 4.2, \; K_2 = 30, \; K_3 = 1.3745,\\
    &  c^{(1)} = 6.440850, \; c^{(2)} = 7.184010, \; c^{(3)} = 7.864960, \; c^{(4)} = 8.496780, \; c^{(5)} = 9.088580, \; \\
    &c^{(6)} = 9.646970, \; c^{(7)} = 10.176940, \; c^{(8)} = 10.682360, \; c^{(9)} = 11.166300, \; c^{(10)} = 11.631250, \;\\ 
    &c^{(11)} = 12.079250, \; c^{(12)} = 12.512010, \; c^{(13)} = 12.930960, \; c^{(14)} = 13.337330, \; c^{(15)} = 13.732170, \; \\
    &c^{(16)} = 14.116410, \; c^{(17)} = 14.490850, \; c^{(18)} = 14.856160, \; c^{(19)} = 15.212970, \; c^{(20)} = 15.562490, \;\\
    & c^{(k)} = k^2/20, \; \forall k \geq 21.
\end{align*}

\begin{thm}\label{thm:fullcoloring0}
    For $\kappa = 0$, our two-step algorithm (with properly tuned parameters) finds a solution when $\alpha \leq 0.05$ with high probability.
\end{thm}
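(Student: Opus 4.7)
The plan is to apply the general framework of \cref{thm:fullcolor} combined with the improved partial coloring lemma \cref{thm:mainpartialcolor_zero} specialized for $\kappa = 0$. First I would verify that the tabulated constants $K_1 = 4.2$, $K_2 = 30$, $K_3 = 1.3745$ satisfy $1/K_1 + 1/K_2 + 1/K_3 \leq 0.999$, making \cref{thm:mainpartialcolor_zero} applicable. Since each round of the edge-walk then fixes a $1/K_3 \approx 0.728$ fraction of the remaining unfixed coordinates, after $K = \lceil 2\log N\rceil$ rounds only an $o(1)$ fraction of coordinates remain unfixed.

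Stage 1 lifts the margin to $\kappa_0 = 3.42$ to create slack for subsequent rounding. The LP feasibility condition \cref{eq:abs_bound_alpha} at $\alpha = 0.05$ reduces to numerically checking $\sup_\rho \varphi(\rho)^2 / \E[(\kappa_0 - \rho G)_+^2] > 0.05$, a one-dimensional optimization which I would do with a computer-aided grid. \cref{thm:LP_behavior} then guarantees that $\hat{\theta}$ has a limiting fraction $1 - R_0 = 2\Phi(-t(0.05, 3.42))$ of its coordinates in $\{\pm 1\}$, and that the empirical margin distribution $M^{-1}\sum_j \delta_{\langle X_j, \hat\theta\rangle/\sqrt{N}}$ converges in $W_2$ to $P(0.05, 3.42)$.

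Stage 2 runs $K$ iterations of the edge-walk with the tabulated scalar slacks $c_j^{(k)} \equiv c^{(k)}$. Following the structure of the proof of \cref{thm:differentc}, I would verify the two hypotheses of \cref{thm:mainpartialcolor_zero} at each round: (a) the cumulative slack condition $\sum_{k=1}^K c^{(k)} \|(X_j)_{I_{k-1}}\|_2 \leq \langle \hat{\theta}, X_j\rangle - O(\sqrt{\delta N \log M})$ for every $j \in [M]$, and (b) the partial-coloring condition $2M \Phi(-c^{(k)}/\sqrt{K_1}) \leq N_{k-1}/K_2$. For the bulk rounds $k \leq 20$, I would combine \cref{lem:normbound} (which yields $\|(X_j)_{I_{k-1}}\|_2 \leq \sqrt{N_{k-1}}(1+o(1))$ uniformly in $j$) with the limiting margin law $P(0.05, 3.42)$ and the recursive bound $N_k/N \leq (1 - 1/K_3)^k R_0 (1+o(1))$, reducing (a) and (b) to explicit numerical inequalities in each $c^{(k)}$ against a concrete one-dimensional distribution. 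For the tail rounds $k \geq 21$, where $N_k$ may become sublinear, I would use the deterministic bound of \cref{lem:LMboundall} together with the closed form $c^{(k)} = k^2/20$, mirroring the $K_p < k \leq K$ regime in \cref{thm:differentc}.

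The main obstacle is the tight two-sided trade-off on each $c^{(k)}$: (a) caps them via the cumulative slack budget of size $\kappa_0 - \kappa = 3.42$, while (b) lower-bounds them to ensure sufficient progress per round. That a valid schedule exists at $\alpha = 0.05$ is the content of the grid-searched table, and verifying the inequalities for this specific schedule is a finite numerical computation. Once (a) and (b) hold at every round, the concluding randomized rounding preserves all $M$ margin constraints via the same Chernoff bound used in the proof of \cref{thm:fullcolor}, yielding $\chi \in \{\pm 1\}^N$ with $X\chi \geq 0 = \kappa\sqrt{N}$ with high probability.
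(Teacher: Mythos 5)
Your overall architecture matches the paper's proof: lift the margin to $\kappa_0 = 3.42$ in Stage 1, use \cref{thm:LP_behavior} (with \cref{lem:feasibility_LP} and a numerical computation of $2\Phi(-t(\alpha,\kappa_0))$) to get the tight fraction and the limiting margin law, then iterate the $\kappa=0$ partial coloring lemma \cref{thm:mainpartialcolor_zero} with the tabulated schedule, check the per-round coloring condition numerically for $k\le 20$ and via the closed form $c^{(k)}=k^2/20$ together with \cref{lem:LMboundall} for the tail, and finish with the randomized-rounding bound \cref{eq:randomsignrounding}. This is essentially the paper's route.

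However, there is one concrete gap: you take \emph{scalar} slacks $c_j^{(k)}\equiv c^{(k)}$ and justify the cumulative budget condition (a) by asserting that \cref{lem:normbound} gives $\|(X_j)_{I_{k-1}}\|_2 \le (1+o(1))\sqrt{N_{k-1}}$ \emph{uniformly in $j$}. That is not what \cref{lem:normbound} says: it is a $W_2$ statement about the empirical distribution of $\|(X_j)_{I_{k-1}}\|_2^2/N_{k-1}$, i.e.\ average control, not a per-row bound. A uniform $(1+o(1))$ bound is in fact unavailable here, because $I_{k-1}$ is data-dependent; the only uniform statement that survives the union bound over subsets is \cref{lem:LMboundall}, which carries an extra factor of order $\sqrt{9(\log(N/N_{k-1})+1)}$. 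Plugging that into your condition (a) inflates the per-row slack cost by a constant factor of several, which blows through the total budget $\kappa_0-\kappa=3.42$ (the schedule is tuned so that $\sum_k c^{(k)}\sqrt{N_{k-1}/N}\approx 3.39$, with essentially no slack to spare). The paper's device, which you need here, is to normalize the slack per row, $c_j^{(k)} = c^{(k)}\sqrt{|I_{k-1}|}/\|(X_j)_{I_{k-1}}\|_2$: then the budget $\sum_k c^{(k)}\sqrt{N_{k-1}}$ is deterministic for every row $j$, and the random norms only enter the partial-coloring hypothesis $2\sum_j \Phi(-c_j^{(k)}/\sqrt{K_1})\le N_{k-1}/K_2$, where the average (i.e.\ $W_2$) control of \cref{lem:normbound} suffices for $k\le 0.1\log N$ and \cref{lem:LMboundall} handles the sublinear tail. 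With that correction the rest of your plan goes through as in the paper.
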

\begin{proof}[Proof of \cref{thm:fullcoloring0}]
    By monotonicity, we restrict our attention to the case when $\alpha = 0.05$. We choose our parameters as above. We check that $0<\alpha<\frac{2}{\pi} \mathbb{E}\left[(\kappa_0-G)_{+}^2\right]^{-1}$. Therefore, by \cref{lem:feasibility_LP}, \cref{eq:abs_bound_alpha} is fulfilled. Next, we use \cref{def:LP_auxiliary} and \cref{lem:property_maxmin} to solve for $t$ and $\rho$ and compute that the right hand side of $\cref{eq:tight_fraction}$ satisfies $2\Phi(-t(\alpha,\kappa_0)) > 0.9498$. Define $R^{(0)}_{\text{tight}} = 0.9498$. 

    We next apply the partial coloring lemma \cref{thm:mainpartialcolor_zero}. More specifically, for round $k$, we choose 
    \begin{align*}
        c_j^{(k)} = c^{(k)} \frac{\sqrt{|I_{k-1}|}}{ \| (X_j)_{I_{k-1}} \|_2}.
    \end{align*}
    Note that the fraction is a technicality to deal with the fluctuation of the norm of $(X_j)_{I_{k-1}}$. For each round $k$, we check that
    \begin{equation}\label{eq:checkpartialcolor}
        2 \alpha \Phi\left (- \frac{c^{(k)}}{\sqrt{K_1}} \right ) < (1-R^{(0)}_{\text{tight}}) (1-1/K_3)^{k-1} /K_2.
    \end{equation}
    The first $20$ rounds is checked numerically and the later rounds hold because the left hand side decays much faster than the right hand side. When $k > 0.1 \log N$, we actually have a stronger control,
    \begin{equation}\label{eq:checkpartialcolor2}
        2 \alpha \Phi \left ( -\frac{c^{(k)}}{100 (k+1)\sqrt{K_1} } \right) < (1-R^{(0)}_{\text{tight}}) (1-1/K_3)^{k-1} /K_2.
    \end{equation}
    These would imply that the conditions of the partial coloring lemma hold. Indeed, we can assume with loss of generality that $N_{k-1} = \lceil N (1-R^{(0)}_{\text{tight}}) (1-1/K_3)^{k-1} \rceil$. Now by \cref{lem:normbound} and \cref{eq:checkpartialcolor},
    \begin{align*}
        2\sum_{j=1}^M \Phi(-c_j^{(k)}/\sqrt{K_1}) = 2\sum_{j=1}^M \Phi \left ( -\frac{c^{(k)}}{\sqrt{K_1}} \frac{\sqrt{|I_{k-1}|}}{ \| (X_j)_{I_{k-1}} \|_2} \right ) \leq 2 (1+o_N(1)) \alpha N \Phi \left ( -\frac{c^{(k)}}{\sqrt{K_1}} \right ) < \frac{N_{k-1}}{K_2},
    \end{align*}
    with high probability for all $k \leq 0.1 \log N$. When $k > 0.1 \log N$, by \cref{lem:LMboundall} and \cref{eq:checkpartialcolor2},
    \begin{align*}
        2\sum_{j=1}^M \Phi(-c_j^{(k)}/\sqrt{K_1}) = 2\sum_{j=1}^M \Phi \left ( -\frac{c^{(k)}}{\sqrt{K_1}} \frac{\sqrt{|I_{k-1}|}}{ \| (X_j)_{I_{k-1}} \|_2} \right ) \leq 2 \alpha N\Phi \left ( -\frac{c^{(k)}}{100 (k+1)\sqrt{K_1} } \right) < \frac{N_{k-1}}{K_2},
    \end{align*}
    with high probability. Now apply \cref{thm:mainpartialcolor_zero}, we have that the margin for each row is bounded by 
    \begin{align*}
        \frac{1}{\sqrt{N}}|\<\hat \theta^{(K)}-\hat \theta^{(0)},X_j\>|
        & \leq \frac{1}{\sqrt{N}}\sum_{k = 1}^\infty c_j^{(k)} \|(X_j)_{I_{k-1}}\|_2 \leq \sum_{k = 1}^\infty c^{(k)} \sqrt{(1-R^{(0)}_{\text{tight}}) (1-1/K_3)^{k-1}}\\
        &< \sum_{k = 1}^{20} c^{(k)} \sqrt{(1-R^{(0)}_{\text{tight}}) (1-1/K_3)^{k-1}} + \sum_{k = 21}^\infty c^{(k)} \sqrt{(1-R^{(0)}_{\text{tight}}) (1-1/K_3)^{k-1}}\\
        &<3.38+0.01 = 3.39.
    \end{align*}
    This means that before iterations of the edge walk algorithm, all rows have a margin $\frac{1}{\sqrt{N}}\<\hat \theta^{(0)}, X_j\> \geq \kappa_0 = 3.42$, and after the edge walk steps, each inner product decreases by at most $3.39$. Therefore, every inner product remains above $0.03$ after the iterations of edge walk steps. Finally, we study the randomized sign rounding step. As $\delta = o(1/\log N)$, by \cref{eq:randomsignrounding} we have that the change in the inner product $\<\chi,X_j\> - \<\hat{\theta}^{(K)},X_j\> \geq -o(\sqrt{N})$. Thus, we have that $\<\chi,X_j\> > 0$ for all $j \in [M]$, which finishes the proof.
\end{proof}

\subsubsection{Large Positive Margin}
In this section, we investigate the case $\kappa \to + \infty$, and show that the algorithmic threshold is asymptotic to $(2/\pi) \kappa^{-2}$.
\begin{thm}\label{thm:alg_kappa_positive}
	For any constant $c < 2 / \pi$ and sufficiently large $\kappa > 0$, as long as $\alpha \le c / \kappa^2$, our two-step algorithm (with properly tuned parameters) finds a $\kappa-$margin solution with high probability.
\end{thm}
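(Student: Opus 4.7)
The plan is to follow the template of \cref{thm:kappaverynegative}: I would pick parameters $\kappa_0 > \kappa$ and $\beta_1,\ldots,\beta_K$ and verify the four conditions of \cref{thm:differentc}. Concretely, I would set $\kappa_0 = (1+\delta_0)\kappa$ with $\delta_0 > 0$ small enough that $c(1+\delta_0)^2 < 2/\pi$, and take $\beta_k = \beta_0 \cdot 2^{-k/4}$ with $\beta_0 \in (0, 1/6)$ so that $\sum_k \beta_k < 1$.

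The verification of \cref{eq:condition1} would be immediate: $\varphi(1) = \sqrt{2/\pi}$ (the infimum in \cref{def:LP_auxiliary} at $\rho=1$ is realized as $t \to 0^+$), and $\E[(\kappa_0 - G)_+^2] = \kappa_0^2(1+o_\kappa(1))$ for large $\kappa_0$, so
\begin{equation*}
\sup_{\rho \in [0,1]} \frac{\varphi(\rho)^2}{\E[(\kappa_0 - \rho G)_+^2]} \ge \frac{\varphi(1)^2}{\E[(\kappa_0 - G)_+^2]} = \frac{2/\pi}{(1+\delta_0)^2\kappa^2}(1+o_\kappa(1)),
\end{equation*}
which strictly exceeds $\alpha \le c/\kappa^2$ for $\kappa$ large enough.

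The main technical step is proving the asymptotic $R_0 = O(1/\kappa^2)$. To this end, I would Taylor-expand $\varphi$ near $\rho = 1$: the first-order optimality condition $\rho^2 t^2 = \E[\min(|G|,t)^2]$ combined with $\E[\min(|G|,t)^2] = t^2 - t^3\sqrt{2/\pi} + O(t^4)$ gives $t \sim \sqrt{2\pi}(1-\rho)$ as $\rho \to 1$. A short calculation then yields $\varphi(\rho) = \sqrt{2/\pi} - \sqrt{2\pi}(1-\rho)^2 + O((1-\rho)^3)$. Together with $\E[(\kappa_0-\rho G)_+^2] = \kappa_0^2 + \rho^2 + O(e^{-\kappa_0^2/3})$ for large $\kappa_0$, the objective $\varphi(\rho)^2 - \alpha \E[(\kappa_0-\rho G)_+^2]$ in \cref{eq:key_max_min} is maximized at $\rho_* = 1 - (1+o_\kappa(1))\alpha/4$, giving $t_* = t(\rho_*) = O(\alpha)$ and $R_0 = 1 - 2\Phi(-t_*) = (1+o_\kappa(1))\alpha/2 = O(1/\kappa^2)$.

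With this estimate in hand, the remaining conditions \cref{eq:condition2,eq:condition3,eq:condition4} follow by routine estimates. Since $(\kappa_0-\kappa)^2 = \delta_0^2\kappa^2$ while $R_k = 2^{-k+1}R_0 = O(2^{-k}/\kappa^2)$, the ratio $\beta_k^2(\kappa_0-\kappa)^2/R_k = \Omega(2^{k/2}\kappa^4)$ dominates the polylogarithmic-in-$\kappa$ right-hand sides appearing in \cref{eq:condition4}; and in \cref{eq:condition2,eq:condition3}, using $Y \ge \kappa_0$ under $P(\alpha,\kappa_0)$, the exponential factor becomes $\exp(-\beta_k^2(\kappa_0-\kappa)^2/(K_1 R_k)) = \exp(-\Omega(\kappa^4))$, which multiplied by $\alpha = O(1/\kappa^2)$ is far smaller than $R_k/K_2 = \Theta(2^{-k}/\kappa^2)$. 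Applying \cref{thm:differentc} then yields the theorem. The only nontrivial step is the Taylor computation identifying $R_0 \sim \alpha/2$; the remaining conditions hold with enormous slack in the large-$\kappa$ regime.
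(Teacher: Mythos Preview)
Your approach is correct in spirit and would work, but you have some minor arithmetic errors in the Taylor expansion: by \cref{lem:rho_and_t}, the correct asymptotic is $t(\rho)\sim 3\sqrt{\pi/2}\,(1-\rho)$ rather than $\sqrt{2\pi}\,(1-\rho)$ (your expansion $\E[\min(|G|,t)^2]=t^2-t^3\sqrt{2/\pi}+O(t^4)$ drops the $+\tfrac{1}{3}\sqrt{2/\pi}\,t^3$ contribution from $\E[G^2\mathbb{1}_{|G|\le t}]$), and by \cref{lem:large_pos_kappa} the maximizer is $\rho_*=1-(c+o_\kappa(1))/(3\kappa^2)$, yielding $R_0\sim c/\kappa^2$ rather than $\alpha/2$. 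These constants do not affect your strategy, since the conditions of \cref{thm:differentc} hold with slack $\Omega(\kappa^4)$ regardless.

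The paper takes a shorter route. Instead of verifying all four conditions of \cref{thm:differentc}, it sets $\kappa_0=\kappa+1$ (a \emph{constant} additive slack), applies \cref{lem:large_pos_kappa} to conclude that the unfixed fraction after the LP stage is $O(1/\kappa^2)$, and then observes that the resulting rounding subproblem has aspect ratio $M/|J_0|=O(1)$ \emph{independent of $\kappa$}. At that point no $\kappa$-dependent analysis is needed: running iterated Lovett--Meka on a constant-aspect-ratio instance (via \cref{thm:fullcolor} and \cref{lem:LMboundall}) loses margin at most $C_0\sqrt{|J_0|}=O(\sqrt{N}/\kappa)$, well below the available budget $(\kappa_0-\kappa)\sqrt{N}=\sqrt{N}$. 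Your approach parallels the $\kappa\to-\infty$ template and is more systematic; the paper's approach is quicker because it recognizes that once the rounding problem has bounded aspect ratio, the large-$\kappa$ regime provides far more margin than the rounding could ever consume.
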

\begin{proof}
	We only provide a sketch here as the proof idea is mostly similar to that of \cref{thm:kappaverynegative}.
By monotonicity, it suffices to consider the case $\alpha = c/\kappa^2$.
Define $\kappa_0 = \kappa + 1$.
Then, we have
	\begin{equation}
		\alpha = \frac{c}{\kappa^2} = \frac{c + o_{\kappa} (1)}{\kappa_0^2}.
	\end{equation}
	In the first stage of our algorithm, we solve the linear program \eqref{eq:LP_def} with $\kappa$ replaced by $\kappa_0$.
Applying \cref{thm:LP_behavior} and \cref{lem:large_neg_kappa}, we know that the solution $\hat{\theta}$ satisfies $\langle X_i, \hat{\theta} \rangle \ge \kappa_0 \sqrt{N}$ for all $i \in [M]$, and that with high probability,
	\begin{align}\label{eq:pos_kappa_guarantee}
		\frac{1}{N} \left\vert \left\{ i \in [N] : \hat{\theta}_i \in \{ \pm 1 \} \right\} \right\vert \ge \, & 2 \Phi \left( -t \right) \ge 1 - \sqrt{\frac{2}{\pi}} t \ge 1 - \frac{c}{2 \kappa^2}
	\end{align}
	for large enough $\kappa$.
Now for the second stage of iterative Lovett-Meka, the initialization $\hat{\theta}_0 = \hat{\theta}$ satisfies $\langle X_i, \hat{\theta}_0 \rangle \ge (\kappa + 1) \sqrt{N}$ for all $i \in [M]$, so we only need to achieve a margin of $- \sqrt{N}$ on the sub-matrix $(X_{ij})_{(i, j) \in [M] \times J_0}$, where $\vert J_0 \vert = [(c/2\kappa^2)N]$ and
	\begin{equation}
		(\hat{\theta}_0)_j = \pm 1, \quad \forall j \in [N] \backslash J_0.
	\end{equation}
	The existence of such $J_0$ is guaranteed by \cref{eq:pos_kappa_guarantee}.
Note that for the second stage of our algorithm, the new asymptotic aspect ratio equals
	\begin{equation*}
		\alpha_0 = \lim_{M, N \to \infty} \frac{M}{\vert J_0 \vert} = \lim_{N \to \infty} \frac{[(c/\kappa^2) N]}{[(c / 2 \kappa^2) N]} = 2,
	\end{equation*}
	a constant that does not depend on $\kappa$.
Then, \cref{thm:fullcolor} and \cref{lem:LMboundall} together imply that with high probability, iteratively running Lovett-Meka yields a margin asymptotic to
	\begin{equation*}
		- C_0 \sqrt{\vert J_0 \vert} \ge - \frac{C_0}{\kappa} \sqrt{\frac{c}{2}} \sqrt{N} \ge - \sqrt{N}
	\end{equation*}
	for sufficiently large $\kappa$, where $C_0$ is some absolute constant.
Therefore, the final output of our two-stage algorithm is a $\kappa$-margin solution, completing the proof of \cref{thm:alg_kappa_positive}.
\end{proof}

\subsection{Auxiliary Lemmas}\label{sec:aux_lemmas}
In this section, we present some auxiliary lemmas required for the proof of our main results in \cref{sec:alg}. We first collect below a few properties of the functions defined in \cref{def:LP_auxiliary}.
\begin{lem}\label{lem:property_phi}
	Let $\varphi(\rho)$ and $t(\rho)$ be as defined in \cref{def:LP_auxiliary}.
Then, $t(\rho)$ exists uniquely, $\varphi (\rho)$ is differentiable, and $\varphi'(\rho) = \rho t(\rho)$.
Further, for any $\rho \in [0, 1]$, $t(\rho)$ is the unique solution to the following equation:
	\begin{equation}\label{eq:foc_t_rho}
		\rho^2 = \E \left[ \min \left( \frac{\vert G \vert}{t}, 1 \right)^2 \right], \quad t \in [0, +\infty].
	\end{equation} 
\end{lem}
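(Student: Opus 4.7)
The plan is to set $\psi(\rho, t) := \tfrac{\rho^2 t}{2} + \tfrac{1}{2t}\E[\min(|G|, t)^2] + \E[(|G|-t)_+]$, so that $\varphi(\rho) = \inf_{t \ge 0} \psi(\rho, t)$, and to analyze the first-order condition $\partial_t \psi(\rho, t) = 0$ directly.

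First, I would compute $\partial_t \psi$ by differentiating each term. Writing $\E[\min(|G|, t)^2] = \E[G^2 \bone\{|G| \le t\}] + t^2 \P(|G| > t)$ and $\E[(|G|-t)_+] = \int_t^\infty (g-t)\cdot 2\phi(g)\,dg$, one finds $\tfrac{d}{dt}\E[\min(|G|, t)^2] = 2t\P(|G| > t)$ (the boundary contributions $\pm 2t^2\phi(t)$ coming from differentiating the truncation level cancel) and $\tfrac{d}{dt}\E[(|G|-t)_+] = -\P(|G| > t)$. Substituting, the $\pm\P(|G|>t)$ contributions in $\partial_t \psi$ cancel and one obtains
\begin{equation*}
\partial_t \psi(\rho, t) \;=\; \tfrac{1}{2}\bigl(\rho^2 - \E\bigl[\min(|G|/t,\,1)^2\bigr]\bigr),
\end{equation*}
so that the first-order condition $\partial_t \psi = 0$ is exactly equation~\eqref{eq:foc_t_rho}.

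Next, I would establish existence and uniqueness of $t(\rho)$. The map $F(t) := \E[\min(|G|/t, 1)^2]$ is continuous and strictly decreasing on $(0, \infty)$, with $\lim_{t \to 0^+} F(t) = 1$ (by dominated convergence, since $\min(|G|/t, 1) \to 1$ almost surely) and $\lim_{t \to \infty} F(t) = 0$. Hence for each $\rho \in [0, 1]$ the equation $F(t) = \rho^2$ has a unique solution $t(\rho) \in [0, \infty]$, and $\partial_t \psi$ changes sign from negative to positive at $t(\rho)$, so $\psi(\rho, \cdot)$ is uniquely minimized there.

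Finally, differentiability of $\varphi$ and the formula $\varphi'(\rho) = \rho\,t(\rho)$ follow from the envelope theorem. For $\rho \in (0, 1)$ the minimizer $t(\rho)$ lies in the interior $(0, \infty)$, $\psi$ is jointly $C^1$ near $(\rho, t(\rho))$, and the minimizer is unique, so $\varphi'(\rho) = \partial_\rho \psi(\rho, t(\rho)) = \rho\,t(\rho)$. The endpoints $\rho \in \{0, 1\}$ are handled by continuity, using that $\psi(\rho, t)$ extends continuously to $t = 0$ (since $\E[\min(|G|, t)^2] \le t^2$ makes the middle term vanish) and to $t = \infty$. The main point to verify carefully is the cancellation in the derivative calculation for $\partial_t \psi$, but this is routine.
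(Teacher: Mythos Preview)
Your proposal is correct and follows essentially the same approach as the paper: define the objective $f_\rho(t)$ (your $\psi$), compute its $t$-derivative to obtain the first-order condition \eqref{eq:foc_t_rho}, argue uniqueness from monotonicity of $t \mapsto \E[\min(|G|/t,1)^2]$, and then invoke the envelope theorem for $\varphi'(\rho)=\rho\,t(\rho)$. You are somewhat more explicit about the derivative cancellations and the endpoint behavior, but there is no substantive difference.
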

\begin{proof}
	Define for $\rho \in [0, 1]$ and $t \ge 0$,
	\begin{equation}
		f_{\rho} (t) = \frac{\rho^2 t}{2} + \frac{1}{2 t} \E \left[ \min \left( \vert G \vert, t \right)^2 \right] + \E \left[ \left( \vert G \vert - t \right)_+ \right].
	\end{equation}
	By direct calculation, we have
	\begin{equation}
		f_{\rho}'(t) = \frac{\rho^2}{2} - \frac{1}{2} \E \left[ \min \left( \frac{\vert G \vert}{t}, 1 \right)^2 \right],
	\end{equation}
	which is strictly increasing in $t$.
Therefore, $t(\rho)$ exists uniquely and satisfies \cref{eq:foc_t_rho}.
Further, $t(1) = 0$ and $t(0) = +\infty$.
Finally, by envelope theorem we know that $\varphi' (\rho) = \rho t(\rho)$.
\end{proof}

\begin{lem}\label{lem:equivalent_phi}
	$\varphi(\rho)$ has the following equivalent formulation:
	\begin{equation}\label{eq:equivalent_phi}
		\varphi(\rho) = \, \inf_{t \ge 0} \left\{ \rho \sqrt{\E \left[ \min \left( \vert G \vert, t \right)^2 \right]} + \E \left[ \left( \vert G \vert - t \right)_+ \right] \right\}.
	\end{equation}
	Consequently, $\varphi(\rho)$ is concave in $\rho$.
\end{lem}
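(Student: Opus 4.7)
The plan is to establish \eqref{eq:equivalent_phi} by realizing the square-root on the right-hand side as an auxiliary infimum via AM--GM and then swapping the order of the two infima, after which concavity will be immediate.

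First, using the elementary identity $\sqrt{A} = \inf_{s > 0}\tfrac{1}{2}(s + A/s)$ with $A = \E[\min(|G|, t)^2]$ and the change of variable $s = \rho t'$ (valid for $\rho > 0$), I would write, for each fixed $t \ge 0$,
\[
\rho\sqrt{\E[\min(|G|, t)^2]} = \inf_{t' > 0}\left\{\frac{\rho^2 t'}{2} + \frac{\E[\min(|G|, t)^2]}{2t'}\right\}.
\]
Substituting this into the RHS of \eqref{eq:equivalent_phi} expresses it as a joint infimum over $(t, t')$. I would then swap the order of the infima and fix $t'$ first, so that the $t$-dependence lives only in
\[
g_{t'}(t) := \frac{\E[\min(|G|, t)^2]}{2t'} + \E[(|G|-t)_+].
\]

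Second, I would minimize $g_{t'}$ in $t$ directly. Using $\tfrac{d}{dt}\E[\min(|G|, t)^2] = 2t\,\P(|G|>t)$ and $\tfrac{d}{dt}\E[(|G|-t)_+] = -\P(|G|>t)$, one computes $g_{t'}'(t) = \P(|G| > t)\,(t/t' - 1)$. Hence $g_{t'}$ is uniquely minimized at $t = t'$, with value $\tfrac{\E[\min(|G|, t')^2]}{2t'} + \E[(|G|-t')_+]$. Plugging this back into the outer infimum over $t'$ reproduces exactly the original definition of $\varphi(\rho)$ from \cref{def:LP_auxiliary}, so the two formulations agree. The boundary case $\rho = 0$ is handled separately by sending $t \to \infty$ in both sides, where each expression equals $0$.

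Third, concavity falls out for free from \eqref{eq:equivalent_phi}: for every fixed $t \ge 0$, the map $\rho \mapsto \rho\sqrt{\E[\min(|G|, t)^2]} + \E[(|G|-t)_+]$ is affine in $\rho$, and a pointwise infimum of affine functions is concave.

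I do not foresee a real obstacle here. The identity $\sqrt{A} = \inf_{s > 0}\tfrac{1}{2}(s + A/s)$ is the entire trick; the only minor care needed is the $\rho = 0$ edge case and the justification of the $\inf$--$\inf$ swap, which is automatic since infima always commute.
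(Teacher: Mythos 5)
Your proof is correct, and it takes a genuinely different (and self-contained) route from the paper's. The paper argues the two directions separately: for ``$\ge$'' it plugs the optimal $t(\rho)$ from \cref{lem:property_phi} into the definition and uses the first-order condition $\rho^2 t(\rho) = \tfrac{1}{t(\rho)}\E[\min(|G|,t(\rho))^2]$ (which is exactly the AM--GM equality case) to merge the two terms into $\rho\sqrt{\E[\min(|G|,t(\rho))^2]}$; for ``$\le$'' it invokes the variational representation $\varphi(\rho)=\min_{A\ge 0}\{\rho\sqrt{\E[(|G|-A)_+^2]}+\E[A]\}$ (the population analogue of \cref{lem:calculus_1}) and takes $A=(|G|-t)_+$. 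You instead linearize the square root once and for all via $\sqrt{A}=\inf_{s>0}\tfrac12(s+A/s)$, swap the two infima (which is always legitimate), and re-minimize in the original variable, recovering the definition of $\varphi$ in one shot; concavity then follows identically in both arguments as a pointwise infimum of affine functions of $\rho$. What your approach buys is independence from both \cref{lem:property_phi} and the $A$-representation, at the cost of one explicit derivative computation for $g_{t'}$; the paper's version leans on machinery it has already set up for \cref{thm:LP_behavior}. The only points to keep tidy are exactly the ones you flag: the $\rho=0$ edge case, and the harmless discrepancy between $\inf_{t>0}$ and $\inf_{t\ge 0}$ (the objective extends continuously to $t=0$, and at $\rho=0$ the infimum in the original definition is only attained in the limit $t\to\infty$), neither of which affects the identity.
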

\begin{proof}
	By \cref{lem:property_phi}, we know that
	\begin{equation*}
		\frac{\rho^2 t(\rho)}{2} = \frac{1}{2 t(\rho)} \E \left[ \min \left( \vert G \vert, t(\rho) \right)^2 \right] = \frac{\rho}{2} \sqrt{\E \left[ \min \left( \vert G \vert, t(\rho) \right)^2 \right]}.
	\end{equation*}
	Therefore,
	\begin{align*}
		\varphi(\rho) = f(\rho, t(\rho)) = \, & \rho \sqrt{\E \left[ \min \left( \vert G \vert, t(\rho) \right)^2 \right]} + \E \left[ \left( \vert G \vert - t(\rho) \right)_+ \right] \\
		\ge \, & \inf_{t \ge 0} \left\{ \rho \sqrt{\E \left[ \min \left( \vert G \vert, t \right)^2 \right]} + \E \left[ \left( \vert G \vert - t \right)_+ \right] \right\}.
	\end{align*}
	To prove the ``$\le$" direction, recall that
	\begin{equation}
		\varphi(\rho) = \min_{A \ge 0} \left\{ \rho \sqrt{\E \left[ \left( \vert G \vert - A \right)_+^2 \right]} + \E[A] \right\}.
	\end{equation}
	The desired inequality then follows by taking $A = (\vert G \vert - t)_+$ and minimizing over $t \ge 0$.
Further, $\varphi$ is concave since it is the pointwise minimum of a family of affine functions.
\end{proof}

\begin{lem}\label{lem:property_maxmin}
	For any $\alpha > 0$ satisfies \cref{eq:abs_bound_alpha}, the maximin problem \cref{eq:key_max_min} has a unique solution $(\rho(\alpha, \kappa), \gamma(\alpha, \kappa))$.
Further, $\rho(\alpha, \kappa)$ solves the following equation:
	\begin{equation}\label{eq:foc_rho}
		\alpha \Phi \left( \frac{\kappa}{\rho} \right) = \frac{\varphi(\rho) \varphi'(\rho)}{\rho}.
	\end{equation}
\end{lem}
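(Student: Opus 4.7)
My approach is to eliminate the inner infimum using \cref{lem:calculus_2}, reducing the maximin to a single-variable maximization, and then derive the first-order condition by direct differentiation. Uniqueness will come from a concavity argument.

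\textbf{Step 1 (inner minimization).} For each $\rho \in [0,1]$ I apply \cref{lem:calculus_2} with $a = \varphi(\rho) \ge 0$ and $b = \sqrt{\alpha\,\E[(\kappa - \rho G)_+^2]}\ge 0$. This gives the inner minimum as $\sqrt{a^2-b^2}$ when $a \ge b$, attained uniquely at $\gamma = b/\sqrt{a^2-b^2}$ when $a > b$, and $-\infty$ otherwise. Thus \cref{eq:key_max_min} reduces to $\max_{\rho \in [0,1]}\sqrt{(F(\rho))_+}$ with $F(\rho) := \varphi(\rho)^2 - \alpha\,\E[(\kappa - \rho G)_+^2]$. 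Condition \cref{eq:abs_bound_alpha} ensures $\sup_\rho F(\rho) > 0$, so the supremum is attained at some interior $\rho_* \in (0,1)$ (as $F(0) \le 0$ and $F$ is continuous on the compact set $[0,1]$).

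\textbf{Step 2 (FOC).} I derive \cref{eq:foc_rho} from $F'(\rho_*) = 0$. By \cref{lem:property_phi}, $\varphi$ is differentiable with $\varphi'(\rho) = \rho\,t(\rho)$. For the other term, differentiating under the expectation yields
$$\frac{d}{d\rho}\E[(\kappa-\rho G)_+^2] = -2\,\E\bigl[G(\kappa-\rho G)_+\bigr],$$
and a short calculation using the standard Gaussian identities $\int_{-\infty}^{x} G\phi(G)\,dG = -\phi(x)$ and $\int_{-\infty}^{x} G^2\phi(G)\,dG = \Phi(x) - x\phi(x)$ simplifies $\E[G(\kappa - \rho G)_+] = -\rho\,\Phi(\kappa/\rho)$ for $\rho > 0$. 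Therefore $F'(\rho) = 2\varphi(\rho)\varphi'(\rho) - 2\alpha\rho\,\Phi(\kappa/\rho)$, and dividing $F'(\rho_*) = 0$ by $2\rho_* > 0$ yields \cref{eq:foc_rho}.

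\textbf{Step 3 and main obstacle (uniqueness).} Once $\rho_*$ is fixed, \cref{lem:calculus_2} gives the unique inner minimizer $\gamma_*$, so the question reduces to uniqueness of $\rho_*$. I plan to argue concavity of the outer objective: $L(\rho, \gamma)$ is concave in $\rho$ for each $\gamma \ge 0$ (using concavity of $\varphi$ from \cref{lem:equivalent_phi} together with convexity of $\rho \mapsto \sqrt{\E[(\kappa - \rho G)_+^2]}$ via Minkowski's inequality applied to the $L^2$-norm of the convex map $\rho \mapsto (\kappa-\rho G)_+$), and the infimum over $\gamma$ preserves concavity. The delicate point is strictness: the identity $(d^2/d\rho^2)\E[(\kappa-\rho G)_+^2] = 2\,\E[G^2\,\mathbf{1}\{G\le \kappa/\rho\}] > 0$ for $\rho>0$ provides strict convexity of the subtracted term, but one must verify this dominates any possible convex behavior of $\varphi(\rho)^2$ near the critical point. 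The cleanest route is probably to work with the concave envelope $\sqrt{F(\rho)}$ directly and show it cannot be affine on any subinterval of $\{F > 0\}$, which rules out plateau maxima and gives the desired uniqueness of $\rho_*$.
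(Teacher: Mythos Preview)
Your proposal is correct and follows essentially the same route as the paper: eliminate the inner minimization via \cref{lem:calculus_2}, argue concavity of the reduced objective as a pointwise minimum of concave functions (using \cref{lem:equivalent_phi} and convexity of $\rho\mapsto\sqrt{\E[(\kappa-\rho G)_+^2]}$), and compute $F'(\rho)=2\varphi(\rho)\varphi'(\rho)-2\alpha\rho\,\Phi(\kappa/\rho)$ to obtain \cref{eq:foc_rho}. Your concern about \emph{strict} concavity for uniqueness is legitimate---the paper simply asserts ``therefore its maximizer is unique'' from concavity without addressing flat maxima---so your Step~3 is, if anything, more careful than the paper's own argument.
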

\begin{proof} 
	Note that \cref{eq:abs_bound_alpha} implies the existence of $\rho \in [0, 1]$ satisfying
	\begin{equation*}
		\varphi(\rho)^2 > \alpha \E \left[ \left( \kappa - \rho G \right)_+^2 \right].
	\end{equation*}
	For any such $\rho$, the inner minimum of \cref{eq:key_max_min} equals
	\begin{equation*}
		\sqrt{\varphi(\rho)^2 - \alpha \E \left[ \left( \kappa - \rho G \right)_+^2 \right]},
	\end{equation*}
	and is achieved at a unique $\gamma = \gamma(\rho)$.
Otherwise, the minimum is $-\infty$.
We further note that this minimum is a concave function of $\rho$ since it is the pointwise minimum of a family of concave functions ($\rho \mapsto \E [ ( \kappa - \rho G )_+^2 ]^{1/2}$ is obviously convex, and $\varphi(\rho)$ is concave due to \cref{lem:equivalent_phi}).
Therefore, its maximizer is unique.
This proves the uniqueness of $\rho(\alpha, \kappa)$ and $\gamma(\alpha, \kappa)$.
	
	Further, \cref{eq:key_max_min} is equivalent to maximizing
	\begin{equation}
		F(\rho) = \varphi(\rho)^2 - \alpha \E \left[ \left( \kappa - \rho G \right)_+^2 \right].
	\end{equation}
	By direct calculation, we get
	\begin{align*}
		F'(\rho) = \, & 2 \varphi(\rho) \varphi'(\rho) + 2 \alpha \E \left[ G(\kappa - \rho G)_+ \right] \\
		= \, & 2 \left( \varphi(\rho) \varphi'(\rho) - \rho \alpha \P \left( \rho G \le \kappa \right) \right).
	\end{align*}
	Setting $F'(\rho) = 0$ yields \cref{eq:foc_rho}.
\end{proof}

\begin{lem}\label{lem:feasibility_LP}
	The followings are true regarding \cref{eq:abs_bound_alpha}:
	\begin{itemize}
		\item [(a)] If $\kappa < 0$, then \cref{eq:abs_bound_alpha} holds for all $\alpha \in (0, \infty)$.
		\item [(b)] If $\kappa = 0$, then \cref{eq:abs_bound_alpha} holds for all $\alpha \in (0, 2)$.
		\item [(b)] If $\kappa > 0$, then \cref{eq:abs_bound_alpha} holds for all $\alpha$ such that
			\begin{equation}
				0 < \alpha < \frac{2}{\pi} \E \left[ ( \kappa - G )_+^2 \right]^{-1}.
			\end{equation}
	\end{itemize}
\end{lem}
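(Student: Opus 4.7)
The plan is to analyze the supremum in \cref{eq:abs_bound_alpha} case by case, using the equivalent formulation $\varphi(\rho) = \inf_{t \ge 0}\{\rho \sqrt{\E[\min(|G|,t)^2]} + \E[(|G|-t)_+]\}$ from \cref{lem:equivalent_phi}. A key preliminary observation is that taking $t \to \infty$ yields the uniform upper bound $\varphi(\rho) \le \rho$, and moreover this bound is asymptotically tight as $\rho \to 0^+$: by \cref{lem:property_phi}, $t(\rho) \to \infty$ as $\rho \to 0$, forcing $\E[\min(|G|, t(\rho))^2] \to \E[G^2] = 1$ and $\E[(|G| - t(\rho))_+] \to 0$, so $\varphi(\rho)/\rho \to 1$.

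For part (a) with $\kappa < 0$, I will show that the ratio $\varphi(\rho)^2/\E[(\kappa - \rho G)_+^2]$ diverges as $\rho \to 0^+$. The numerator is of order $\rho^2$ by the preliminary observation. For the denominator, $(\kappa - \rho G)_+ > 0$ requires $G \le \kappa/\rho$, and since $\kappa/\rho \to -\infty$ as $\rho \to 0^+$, a standard Gaussian tail estimate yields $\E[(\kappa - \rho G)_+^2] = O(\exp(-\kappa^2/(2\rho^2)))$, which is exponentially small. Hence the ratio tends to $+\infty$, so \cref{eq:abs_bound_alpha} holds for every $\alpha > 0$.

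For part (b) with $\kappa = 0$, the denominator simplifies to $\E[(-\rho G)_+^2] = \rho^2 \E[G_-^2] = \rho^2/2$, making the ratio equal to $2\varphi(\rho)^2/\rho^2$. Since $\varphi(\rho) \le \rho$ uniformly and $\varphi(\rho)/\rho \to 1$ as $\rho \to 0^+$ by the preliminary observation, the supremum equals exactly $2$, matching the claimed bound.

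For part (c) with $\kappa > 0$, I will simply evaluate the ratio at $\rho = 1$. The first-order condition \cref{eq:foc_t_rho} at $\rho = 1$ becomes $\E[\min(|G|/t, 1)^2] = 1$, which forces $t(1) = 0$; plugging back into the original formula gives $\varphi(1) = \E[|G|] = \sqrt{2/\pi}$. Therefore $\sup_\rho \varphi(\rho)^2/\E[(\kappa - \rho G)_+^2] \ge \varphi(1)^2/\E[(\kappa - G)_+^2] = (2/\pi)\,\E[(\kappa - G)_+^2]^{-1}$, which is exactly the claimed condition. No step presents a genuine obstacle; the main technical care will be in quantifying the Gaussian tail decay in part (a) to justify the exponential smallness of the denominator.
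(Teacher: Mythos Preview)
Your proposal is correct and follows essentially the same route as the paper: for (a) you show the ratio diverges as $\rho \to 0^+$, for (b) you compute the supremum exactly as $2$, and for (c) you evaluate at $\rho = 1$ using $\varphi(1) = \sqrt{2/\pi}$. The only differences are in execution details: in (a) the paper bounds the numerator via concavity ($\varphi(\rho)/\rho \ge \varphi(1)$ for all $\rho$) rather than your asymptotic $\varphi(\rho)/\rho \to 1$, and it handles the denominator by simply noting $\E[(\kappa/\rho - G)_+^2] \to 0$ without the explicit Gaussian tail rate you propose; in (b) the paper phrases the computation as $\sup_\rho \varphi(\rho)^2/\rho^2 = \varphi'(0)^2 = 1$ via the envelope theorem rather than your direct sandwich $\varphi(\rho) \le \rho$ together with $\varphi(\rho)/\rho \to 1$. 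These are cosmetic differences and your version is arguably more self-contained.
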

\begin{proof}
	We first prove part (a).
It suffices to show that
	\begin{equation}
		\lim_{\rho \to 0^+} \frac{\varphi(\rho)^2}{\E \left[ \left( \kappa - \rho G \right)_+^2 \right]} = +\infty.
	\end{equation}
	To this end, note that
	\begin{equation}
		\frac{\varphi(\rho)^2}{\E \left[ \left( \kappa - \rho G \right)_+^2 \right]} = \frac{ (\varphi(\rho) / \rho)^2}{\E \left[ \left( \kappa / \rho - G \right)_+^2 \right]}.
	\end{equation}
	According to \cref{lem:equivalent_phi}, $\varphi (\rho)$ is a concave function.
Therefore, $\varphi(\rho) / \rho \ge \varphi(1)$ for all $\rho \in [0, 1]$.
Consequently, we have
	\begin{align}
		& \liminf_{\rho \to 0^+} \frac{\varphi(\rho)^2}{\E \left[ \left( \kappa - \rho G \right)_+^2 \right]} = \, \liminf_{\rho \to 0^+} \frac{ (\varphi(\rho) / \rho)^2}{\E \left[ \left( \kappa / \rho - G \right)_+^2 \right]} \\
		\ge \, & \liminf_{\rho \to 0^+} \frac{\varphi(1)^2}{\E \left[ \left( \kappa / \rho - G \right)_+^2 \right]} = \liminf_{\rho \to 0^+} \frac{2 / \pi}{\E \left[ \left( \kappa / \rho - G \right)_+^2 \right]} = + \infty,
	\end{align}
	since $\kappa / \rho \to -\infty$ as $\rho \to 0^+$.
In the above display we also use the fact $\varphi (1) = \sqrt{2 / \pi}$, which follows from direct calculation.
This proves part (a).
	
	As for part (b), note that when $\kappa = 0$, we have
	\begin{align*}
		\sup_{\rho \in [0, 1]} \left\{ \frac{\varphi(\rho)^2}{\E \left[ \left( \kappa - \rho G \right)_+^2 \right]} \right\} = \, \sup_{\rho \in [0, 1]} \left\{ \frac{ (\varphi(\rho) / \rho)^2}{\E [ G_+^2 ]} \right\} = 2 \sup_{\rho \in [0, 1]} \frac{\varphi(\rho)^2}{\rho^2} = 2 \varphi'(0)^2.
	\end{align*}
	By \cref{lem:equivalent_phi} and envelope theorem, we have
	\begin{equation}
		\varphi' (0) = \sqrt{\E \left[ \min \left( \vert G \vert, t(0) \right)^2 \right]} = \sqrt{\E \left[ \vert G \vert^2 \right]} = 1
	\end{equation}
	since $t(0) = + \infty$.
Therefore,
	\begin{equation}
		\sup_{\rho \in [0, 1]} \left\{ \frac{\varphi(\rho)^2}{\E \left[ \left( \kappa - \rho G \right)_+^2 \right]} \right\} = \, 2 \varphi'(0)^2 = 2,
	\end{equation}
	which completes the proof of part (b).
	
	Finally, the conclusion of part (c) is a direct consequence of the following calculation:
	\begin{align}
		\sup_{\rho \in [0, 1]} \left\{ \frac{\varphi(\rho)^2}{\E \left[ \left( \kappa - \rho G \right)_+^2 \right]} \right\} \ge \, \frac{\varphi(1)^2}{\E \left[ \left( \kappa - G \right)_+^2 \right]} = \frac{2}{\pi} \E \left[ ( \kappa - G )_+^2 \right]^{-1}.
	\end{align}
	This concludes the proof of \cref{lem:feasibility_LP}.
\end{proof}

\begin{lem}\label{lem:rho_and_t}
	Recall that $t(\rho)$ is the unique solution to \cref{eq:foc_t_rho}.
As $\rho \to 1^{-}$, we have
	\begin{equation}\label{eq:rho_and_t}
		\lim_{\rho \to 1^{-}} \frac{t(\rho)}{1 - \rho} = 3 \sqrt{\frac{\pi}{2}}.
	\end{equation}
\end{lem}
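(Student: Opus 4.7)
The plan is to start from the characterization of $t(\rho)$ given in \cref{lem:property_phi}, namely that $t = t(\rho)$ is the unique solution to
\begin{equation*}
\rho^2 \;=\; \E\!\left[\min\!\left(\frac{|G|}{t},1\right)^{\!2}\right],
\end{equation*}
and to Taylor-expand the right-hand side as $t\to 0^+$ (which is the relevant regime, since $t(1)=0$ and $t$ is continuous and monotone, as follows from \cref{lem:property_phi}).

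The first step is to decompose the right-hand side as
\begin{equation*}
\E\!\left[\min\!\left(\frac{|G|}{t},1\right)^{\!2}\right]
\;=\; \frac{1}{t^2}\E\!\left[G^2\,\Ind\{|G|\le t\}\right] \;+\; \P(|G|>t).
\end{equation*}
For the second term, the standard expansion $\Phi(t)=\tfrac12 + \phi(0) t + O(t^3)$ with $\phi(0)=1/\sqrt{2\pi}$ gives
\begin{equation*}
\P(|G|>t) \;=\; 2\bigl(1-\Phi(t)\bigr) \;=\; 1 \;-\; \sqrt{\tfrac{2}{\pi}}\,t \;+\; O(t^3).
\end{equation*}
For the first term, expanding the Gaussian density around $0$ yields
\begin{equation*}
\E\!\left[G^2\,\Ind\{|G|\le t\}\right] \;=\; \frac{1}{\sqrt{2\pi}}\int_{-t}^{t} x^2\bigl(1-\tfrac{x^2}{2}+\cdots\bigr)\,\mathrm{d}x \;=\; \frac{1}{\sqrt{2\pi}}\cdot\frac{2t^3}{3} \;+\; O(t^5),
\end{equation*}
so that dividing by $t^2$ gives $\tfrac{1}{3}\sqrt{\tfrac{2}{\pi}}\, t + O(t^3)$.

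Combining the two expansions,
\begin{equation*}
\rho^2 \;=\; 1 \;-\; \frac{2}{3}\sqrt{\tfrac{2}{\pi}}\,t \;+\; O(t^3),
\end{equation*}
so $1-\rho^2 = \tfrac{2}{3}\sqrt{2/\pi}\,t + O(t^3)$. Since $1-\rho^2 = (1-\rho)(1+\rho)$ and $\rho\to 1^-$, this gives $1-\rho = \tfrac{1}{3}\sqrt{2/\pi}\,t(\rho)\,(1+o(1))$, from which we read off
\begin{equation*}
\lim_{\rho\to 1^-}\frac{t(\rho)}{1-\rho} \;=\; 3\sqrt{\tfrac{\pi}{2}},
\end{equation*}
as claimed. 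The only nonroutine point is justifying that $t(\rho)\to 0$ as $\rho\to 1^-$, which follows from the monotonicity/continuity of $t(\rho)$ established in the proof of \cref{lem:property_phi} together with the boundary value $t(1)=0$; all Taylor remainders are uniform on $t\in[0,1]$ by direct domination of the integrands.
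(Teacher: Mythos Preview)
Your proof is correct and follows essentially the same approach as the paper: both expand the defining equation $\rho^2 = \E[\min(|G|/t,1)^2]$ to first order in $t$ near $t=0$ to obtain $1-\rho^2 = \tfrac{2}{3}\sqrt{2/\pi}\,t + O(t^2)$, and then read off the limit. The only cosmetic difference is that the paper computes $1-\rho^2$ as a single integral via the substitution $x=tu$, whereas you split into the two regions $|G|\le t$ and $|G|>t$ and expand each separately.
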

\begin{proof}
	First, we note that according to \cref{eq:foc_t_rho}, $t(\rho) \to 0$ as $\rho \to 1$.
Further, we have
	\begin{align*}
		(1 + \rho) (1 - \rho) = \, & 1 - \rho^2 = \E \left[ 1 - \min \left( \frac{\vert G \vert}{t}, 1 \right)^2 \right] = \, \E \left[ \left( 1 - \frac{\vert G \vert^2}{t^2} \right) 1_{\vert G \vert \le t} \right] \\
		= \, & \int_{-t}^{t} \left(1 - \frac{x^2}{t^2} \right) \phi(x) \d x = t \int_{-1}^{1} \left( 1 - u^2 \right) \phi(tu) \d u \\
		= \, & t \phi(0) \int_{-1}^{1} (1 - u^2) \d u + O(t^2) = \frac{2}{3} \sqrt{\frac{2}{\pi}} t + O(t^2),
	\end{align*}
	which implies that
	\begin{align}
		& 2 (1 - \rho) + O \left( (1 - \rho)^2 \right) = \frac{2}{3} \sqrt{\frac{2}{\pi}} t(\rho) + O \left( t(\rho)^2 \right) \\
		\implies & t(\rho) = 3 \sqrt{\frac{\pi}{2}} (1 - \rho) + O \left( (1 - \rho)^2 \right).
 	\end{align}
 	This proves \cref{eq:rho_and_t}.
\end{proof}

\begin{lem}\label{lem:large_neg_kappa}
	Assume for some constant $c > 0$ that does not depend on $\kappa$, $\alpha = (c + o_{\kappa}(1))/\kappa^2 \Phi(\kappa)$ as $\kappa \to - \infty$.
Then, \cref{eq:abs_bound_alpha} holds and we have
	\begin{equation}\label{eq:asymp_rho_t_neg_kappa}
		\rho = 1 - \frac{b + o_{\kappa} (1)}{\kappa^2}, \quad t = \left( 3 \sqrt{\frac{\pi}{2}} b + o_{\kappa} (1) \right) \frac{1}{\kappa^2},
	\end{equation}
	where $b = b(c)$ solves the equation $3 b e^b = c$.
\end{lem}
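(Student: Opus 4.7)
The feasibility condition \eqref{eq:abs_bound_alpha} is immediate from \cref{lem:feasibility_LP}(a) since $\kappa < 0$, so $(\rho, \gamma) = (\rho(\alpha, \kappa), \gamma(\alpha, \kappa))$ is well-defined. The plan is to extract the asymptotics of $\rho$ from the first-order condition \eqref{eq:foc_rho}, which reads
\[
\alpha\, \Phi\!\left(\frac{\kappa}{\rho}\right) = \frac{\varphi(\rho)\,\varphi'(\rho)}{\rho}.
\]
The idea is that as $\kappa \to -\infty$, both sides of this equation must vanish together. The left-hand side scales like $\Phi(\kappa)/\kappa^2$ up to an $e^{-b}$ factor, while \cref{lem:rho_and_t} together with $\varphi(1) = \sqrt{2/\pi}$ and $\varphi'(\rho) = \rho\, t(\rho)$ from \cref{lem:property_phi} forces the right-hand side to be of order $(1-\rho)$. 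Matching orders then pins down $(1-\rho)\kappa^2$ to a constant.

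Concretely, I would proceed as follows. First, show that $\rho \to 1$ as $\kappa \to -\infty$: if $\rho$ were bounded away from $1$ along some subsequence, then $\varphi(\rho)\varphi'(\rho)/\rho$ would be bounded below by a positive constant (since $\varphi$ is strictly concave with $\varphi'(1) > 0$ by \cref{lem:rho_and_t} not applying and direct calculation), while the LHS would go to $0$, contradicting \eqref{eq:foc_rho}. Second, make the ansatz $\rho = 1 - b_\kappa/\kappa^2$ with $b_\kappa = \Theta(1)$ (to be justified a posteriori) and compute each side:
\begin{itemize}
\item Right-hand side. By \cref{lem:property_phi}, $\varphi'(\rho) = \rho\, t(\rho)$, and by \cref{lem:rho_and_t}, $t(\rho) = 3\sqrt{\pi/2}\,(1-\rho)(1+o_\kappa(1))$. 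Combined with $\varphi(1) = \sqrt{2/\pi}$ (from the computation in the proof of \cref{lem:feasibility_LP}(b)) and the continuity of $\varphi$, this yields
\[
\frac{\varphi(\rho)\varphi'(\rho)}{\rho} = 3\,(1-\rho)\,(1+o_\kappa(1)) = \frac{3 b_\kappa}{\kappa^2}(1+o_\kappa(1)).
\]
\item Left-hand side. Write $\kappa/\rho = \kappa + b_\kappa/\kappa + O(1/\kappa^3)$. Using the Mills-ratio asymptotic $\Phi(x) = \phi(x)/|x|(1+o(1))$ as $x \to -\infty$, together with $\phi(\kappa + b_\kappa/\kappa) = \phi(\kappa)\exp(-b_\kappa - b_\kappa^2/(2\kappa^2))$, one obtains $\Phi(\kappa/\rho) = \Phi(\kappa)\, e^{-b_\kappa}(1+o_\kappa(1))$. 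Therefore, using the hypothesized $\alpha = (c + o_\kappa(1))/(\kappa^2 \Phi(\kappa))$,
\[
\alpha\, \Phi(\kappa/\rho) = \frac{c\, e^{-b_\kappa}}{\kappa^2}(1+o_\kappa(1)).
\]
\end{itemize}

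Equating the two expressions forces $3 b_\kappa e^{b_\kappa} = c + o_\kappa(1)$. Since $x \mapsto 3xe^x$ is strictly increasing on $(0,\infty)$ with range $(0,\infty)$, it has a unique inverse, giving $b_\kappa \to b$ with $3be^b = c$. This yields the first assertion of \eqref{eq:asymp_rho_t_neg_kappa}; the second follows immediately by plugging into \cref{lem:rho_and_t}: $t = 3\sqrt{\pi/2}\,(1-\rho)\,(1+o_\kappa(1)) = (3\sqrt{\pi/2}\,b + o_\kappa(1))/\kappa^2$.

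The main technical obstacle is justifying the ansatz---specifically, ruling out degenerate behavior such as $(1-\rho)\kappa^2 \to 0$ or $\to \infty$. The upper bound $1-\rho = O(1/\kappa^2)$ follows because otherwise the RHS of \eqref{eq:foc_rho} would dominate $1/\kappa^2$ while the LHS is bounded by $c e^{-b_\kappa}/\kappa^2 \le c/\kappa^2$. The lower bound $1-\rho = \Omega(1/\kappa^2)$ follows because otherwise the RHS would be $o(1/\kappa^2)$ while the LHS would stay of order $e^{-o(1)}/\kappa^2 = \Theta(1/\kappa^2)$ (using that $b_\kappa \to 0$ would make $e^{-b_\kappa} \to 1$). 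Once $b_\kappa = \Theta(1)$ is established, passing to any convergent subsequence and using the uniqueness of the solution of $3xe^x = c$ gives $b_\kappa \to b$, completing the proof.
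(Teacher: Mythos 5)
Your plan matches the paper's proof essentially step for step: feasibility via \cref{lem:feasibility_LP}(a), the stationarity condition \eqref{eq:foc_rho}, forcing $\rho \to 1$ and then $1-\rho = O_\kappa(1/\kappa^2)$ using $\varphi'(\rho) = \rho\, t(\rho)$ and concavity of $\varphi$, Mills' ratio on the left-hand side, \cref{lem:rho_and_t} on the right-hand side, and matching the two to obtain $3 b_\kappa e^{b_\kappa} \to c$ and hence $b_\kappa \to b$, with the $t$ asymptotics read off from \cref{lem:rho_and_t}. The only slip is the parenthetical claim that $\varphi'(1) > 0$ (in fact $\varphi'(1) = t(1) = 0$ by \cref{lem:property_phi}), but this is immaterial: your argument only needs $\varphi'(\rho)$ bounded below by a positive constant when $\rho$ is bounded away from $1$, which follows since $\varphi'$ is nonnegative and non-increasing with $\varphi'(\rho) = \rho\, t(\rho) > 0$ for $\rho < 1$.
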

\begin{proof}
	First, according to \cref{lem:feasibility_LP} (a) we know that \cref{eq:abs_bound_alpha} holds.
We now prove \cref{eq:asymp_rho_t_neg_kappa}.
To this end, note that \cref{eq:foc_rho} implies that
	\begin{align}
		\frac{c + o_{\kappa}(1)}{\kappa^2} = \, \alpha \Phi(\kappa) \stackrel{(i)}{\ge} \alpha \Phi \left( \frac{\kappa}{\rho} \right) = \frac{\varphi(\rho) \varphi'(\rho)}{\rho} \stackrel{(ii)}{\ge} \varphi(1) \varphi' (\rho) = \sqrt{\frac{2}{\pi}} \varphi'(\rho),
	\end{align}
	where $(i)$ is due to $\rho \in [0, 1]$, and $(ii)$ follows from the fact that $\varphi$ is concave.
Since $\varphi'$ decreases from $1$ to $0$ as $\rho$ increases from $0$ to $1$, we must have $\rho = 1 - o_{\kappa} (1)$.
Further, from \cref{lem:property_phi} and \cref{lem:rho_and_t} we know that $\varphi'(\rho) = \rho t(\rho) \sim 1 - \rho$, which further implies that $1 - \rho = O_{\kappa} (1 / \kappa^2)$.
We denote $\rho = 1 - b_{\kappa} / \kappa^2$, it suffices to show that $b_{\kappa} = b + o_{\kappa} (1)$ where $3 b e^b = c$.
Note that on the one hand, we have
	\begin{align*}
		\alpha \Phi \left( \frac{\kappa}{\rho} \right) = \, & \frac{c + o_{\kappa} (1)}{\kappa^2 \Phi(\kappa)} \Phi \left( \frac{\kappa}{\rho} \right) = \frac{c + o_{\kappa} (1)}{\kappa^2 \Phi(\kappa)} \Phi \left( \kappa + \frac{b_{\kappa} + o_{\kappa} (1)}{\kappa} \right) \\
		\stackrel{(i)}{=} \, & \left( 1 + o_{\kappa} (1) \right) \frac{c}{\kappa^2} \exp \left( \frac{\kappa^2}{2} - \frac{1}{2} \left( \kappa + \frac{b_{\kappa} + o_{\kappa} (1)}{\kappa} \right)^2 \right) \\
		= \, & \left( 1 + o_{\kappa} (1) \right) \frac{c}{\kappa^2} \exp \left( -b_{\kappa} \right),
	\end{align*}
	where in $(i)$ we use Mills ratio $\Phi (-t) \sim \phi(t)/t$ as $t \to + \infty$.
On the other hand,
	\begin{align*}
		\frac{\varphi(\rho) \varphi'(\rho)}{\rho} = \varphi(\rho) t(\rho) = \left( 1 + o_{\kappa} (1) \right) \sqrt{\frac{2}{\pi}} \times 3 \sqrt{\frac{\pi}{2}} (1 - \rho) = \left( 1 + o_{\kappa} (1) \right) \frac{3 b_{\kappa}}{\kappa^2}.
	\end{align*}
	Combining the above estimates, we obtain that
	\begin{equation*}
		3 b_{\kappa} \exp(b_{\kappa}) = (1 + o_{\kappa} (1)) c,
	\end{equation*}
	which implies that $b_{\kappa} = b + o_{\kappa} (1)$ with $3 b e^b = c$.
Applying \cref{lem:rho_and_t}, we get
	\begin{equation*}
		t = \left( 3 \sqrt{\frac{\pi}{2}} b_{\kappa} + o_{\kappa} (1) \right) \frac{1}{\kappa^2} = \left( 3 \sqrt{\frac{\pi}{2}} b + o_{\kappa} (1) \right) \frac{1}{\kappa^2}.
	\end{equation*}
	This completes the proof.
\end{proof}

\begin{lem}\label{lem:large_pos_kappa}
	Assume for some constant $c \in (0, 2/\pi)$ that does not depend on $\kappa$, $\alpha = (c + o_{\kappa}(1))/\kappa^2$ as $\kappa \to + \infty$.
Then, \cref{eq:abs_bound_alpha} holds and we have
	\begin{equation}\label{eq:asymp_rho_t_pos_kappa}
		\rho = 1 - \frac{c + o_{\kappa} (1)}{3 \kappa^2}, \quad t = \left( \sqrt{\frac{\pi}{2}} c + o_{\kappa} (1) \right) \frac{1}{\kappa^2}.
	\end{equation}
\end{lem}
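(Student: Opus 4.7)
The plan is to mimic the proof of \cref{lem:large_neg_kappa}, exploiting the fact that in the $\kappa \to +\infty$ regime the factor $\Phi(\kappa/\rho)$ tends to $1$ rather than introducing an exponential correction via Mills' ratio. This kills the $e^b$ factor appearing in \cref{eq:asymp_rho_t_neg_kappa} and leaves the simpler relation $3s = c$.

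First I would verify that \cref{eq:abs_bound_alpha} holds, invoking \cref{lem:feasibility_LP}(c). Since $\E[(\kappa - G)_+^2] = \kappa^2(1 + o_\kappa(1))$ as $\kappa \to +\infty$, the threshold is $(2/\pi)\E[(\kappa - G)_+^2]^{-1} = (2/\pi)(1+o_\kappa(1))/\kappa^2$, and our assumption $c < 2/\pi$ guarantees $\alpha$ lies strictly below this for all sufficiently large $\kappa$. Consequently \cref{lem:property_maxmin} applies and $\rho(\alpha, \kappa)$ is characterized as the unique solution of the first-order condition \cref{eq:foc_rho}.

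Next, I would localize $\rho$ near $1$. Concavity of $\varphi$ (\cref{lem:equivalent_phi}) gives $\varphi(\rho)/\rho \ge \varphi(1) = \sqrt{2/\pi}$, so \cref{eq:foc_rho} implies
\begin{equation*}
\sqrt{\tfrac{2}{\pi}}\,\varphi'(\rho) \le \frac{\varphi(\rho)\varphi'(\rho)}{\rho} = \alpha\,\Phi(\kappa/\rho) \le \alpha = O(\kappa^{-2}).
\end{equation*}
Since $\varphi'$ is continuous and decreases from $\varphi'(0)=1$ to $\varphi'(1)=0$, this forces $\rho \to 1$; then \cref{lem:property_phi} and \cref{lem:rho_and_t} upgrade this to the quantitative bound $1-\rho = O(\kappa^{-2})$. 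Write $\rho = 1 - s_\kappa/\kappa^2$ with $s_\kappa = O(1)$.

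Finally I would match the leading asymptotics on both sides of \cref{eq:foc_rho}. On the left, $\kappa/\rho = \kappa + s_\kappa/\kappa + o(\kappa^{-1})$ together with the standard bound $1-\Phi(x) = O(\phi(x)/x)$ yields $\Phi(\kappa/\rho) = 1 - O(\kappa^{-1}e^{-\kappa^2/2}) = 1 + o_\kappa(1)$, so $\alpha\,\Phi(\kappa/\rho) = (c + o_\kappa(1))/\kappa^2$. On the right, \cref{lem:rho_and_t} gives $t(\rho) = 3\sqrt{\pi/2}\,(1-\rho)(1 + o_\kappa(1)) = 3\sqrt{\pi/2}\,s_\kappa/\kappa^2 (1+o_\kappa(1))$, and combining with $\varphi(\rho) = \sqrt{2/\pi} + o_\kappa(1)$ and $\varphi'(\rho)/\rho = t(\rho)(1+o_\kappa(1))$ (\cref{lem:property_phi}) produces
\begin{equation*}
\frac{\varphi(\rho)\varphi'(\rho)}{\rho} = \sqrt{\tfrac{2}{\pi}}\cdot 3\sqrt{\tfrac{\pi}{2}}\,\frac{s_\kappa}{\kappa^2}(1+o_\kappa(1)) = \frac{3 s_\kappa}{\kappa^2}(1+o_\kappa(1)).
\end{equation*}
Equating the two sides gives $s_\kappa = c/3 + o_\kappa(1)$, hence $\rho = 1 - (c + o_\kappa(1))/(3\kappa^2)$, and feeding this back into \cref{lem:rho_and_t} yields $t = (\sqrt{\pi/2}\,c + o_\kappa(1))/\kappa^2$, as required. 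The only delicate point is to ensure the Mills-ratio tail estimate is actually negligible compared to the $\kappa^{-2}$ scale — since it decays like $\kappa^{-1}e^{-\kappa^2/2}$ this is immediate, so the argument is essentially bookkeeping on top of the lemmas already established.
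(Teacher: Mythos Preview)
Your proposal is correct and follows essentially the same approach as the paper: bound $\alpha\Phi(\kappa/\rho)\le\alpha=O(\kappa^{-2})$, use concavity of $\varphi$ together with $\varphi'(\rho)=\rho t(\rho)$ and \cref{lem:rho_and_t} to force $1-\rho=O(\kappa^{-2})$, then observe $\Phi(\kappa/\rho)=1+o_\kappa(1)$ and equate both sides of \cref{eq:foc_rho} to extract the constant. The paper's proof is a two-line sketch that defers to the argument of \cref{lem:large_neg_kappa}; you have simply written out the details (including the explicit verification of \cref{eq:abs_bound_alpha} via \cref{lem:feasibility_LP}(c), which the paper omits).
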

\begin{proof}
	First, note that as $\kappa \to + \infty$, we have $\alpha \Phi (\kappa / \rho) \le \alpha = O_{\kappa} (1 / \kappa^2)$.
Using a similar argument as in the proof of \cref{lem:large_neg_kappa}, we know that $1 - \rho = O_{\kappa} (1 / \kappa^2)$.
We thus deduce that
	\begin{align*}
		\frac{c + o_{\kappa} (1)}{\kappa^2} = \alpha = \left( 1 + o_{\kappa} (1) \right) \alpha \Phi \left( \frac{\kappa}{\rho} \right) = \frac{\varphi(\rho) \varphi'(\rho)}{\rho} = (3 +o_{\kappa} (1)) (1 - \rho).
	\end{align*}
	This immediately implies \cref{eq:asymp_rho_t_pos_kappa} and completes the proof of the lemma.
\end{proof}

We need the following variant of Gordon's comparison theorem for Gaussian processes \cite{gordon1985some, thrampoulidis2015regularized, miolane2021distribution}:
\begin{lem}\label{lem:gordon}
	Let $\cD_{u} \subset \mathbb{R}^{n_{1}+n_{2}}$ and $\cD_{v} \subset \mathbb{R}^{m_{1}+m_{2}}$ be compact sets and let $Q: \cD_{u} \times \cD_{v} \rightarrow \mathbb{R}$ be a continuous function.
Let $G = (G_{i, j} ) \stackrel{\iid}{\sim} \normal (0,1)$, $g \sim \sN (0, I_{n_{1}} )$ and $h \sim \sN (0, I_{m_{1}} )$ be independent standard Gaussian vectors.
For $u \in \mathbb{R}^{n_{1}+n_{2}}$ and $v \in \mathbb{R}^{m_{1}+m_{2}}$, we define $\tilde{u}=\left(u_{1}, \ldots, u_{n_{1}}\right)$ and $\tilde{v}=\left(v_{1}, \ldots, v_{m_{1}}\right)$.
Define
	\begin{align}
		C^{*}(G) = \, & \min_{u \in \cD_{u}} \max_{v \in \cD_{v}} \left\{ \tilde{v}^{\top} G \tilde{u}+Q(u, v) \right\}, \\
		L^{*}(g, h) = \, & \min_{u \in \cD_{u}} \max_{v \in \cD_{v}} \left\{ \norm{\tilde{v}}_2 g^{\top} \tilde{u} + \norm{\tilde{u}}_2 h^{\top} \tilde{v} + Q(u, v) \right\}.
	\end{align}
	Then we have:
	\begin{itemize}
		\item [(i)] For all $t \in \mathbb{R}$,
		\begin{equation*}
			\mathbb{P}\left(C^{*}(G) \leq t \right) \leq 2 \mathbb{P}\left(L^{*}(g, h) \leq t \right).
		\end{equation*}
		\item [(ii)] If $\cD_{u}$ and $\cD_{v}$ are convex and if $Q$ is convex-concave, then for all $t \in \mathbb{R}$,
		\begin{equation*}
			\mathbb{P}\left(C^{*}(G) \geq t \right) \leq 2 \mathbb{P}\left(L^{*}(g, h) \geq t \right) .
		\end{equation*} 
	\end{itemize}
	Further, $(i)$ and $(ii)$ hold for general non-compact $\cD_{u}$ if the minimum over $\cD_{u}$ is achieved.
\end{lem}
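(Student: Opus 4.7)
The plan is to deduce both parts from the classical Gordon comparison inequality for Gaussian processes via a standard symmetrization with an auxiliary independent Gaussian.

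\emph{Step 1: setting up two processes to compare.} Let $Z \sim \normal(0,1)$ be independent of $(G, g, h)$, and for $(u,v) \in \cD_u \times \cD_v$ define
\[
	X_{u,v} := \tilde v^\top G \tilde u + \norm{\tilde u}_2 \norm{\tilde v}_2 Z, \qquad Y_{u,v} := \norm{\tilde v}_2 g^\top \tilde u + \norm{\tilde u}_2 h^\top \tilde v.
\]
A direct computation gives $\E[X_{u,v}^2] = \E[Y_{u,v}^2] = 2 \norm{\tilde u}_2^2 \norm{\tilde v}_2^2$, that $\E[X_{u,v} X_{u,v'}] = \E[Y_{u,v} Y_{u,v'}]$ for every $v, v'$ at fixed $u$, and that for $u \ne u'$,
\[
	\E[X_{u,v} X_{u',v'}] - \E[Y_{u,v} Y_{u',v'}] = \bigl(\norm{\tilde u}_2 \norm{\tilde u'}_2 - \tilde u^\top \tilde u'\bigr)\bigl(\norm{\tilde v}_2 \norm{\tilde v'}_2 - \tilde v^\top \tilde v'\bigr) \ge 0
\]
by Cauchy-Schwarz. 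These are exactly the hypotheses of Gordon's minimax inequality, which yields, for every continuous $Q$ and $t \in \R$,
\[
	\P\bigl(\min_u \max_v [X_{u,v} + Q(u,v)] \le t\bigr) \le \P\bigl(\min_u \max_v [Y_{u,v} + Q(u,v)] \le t\bigr) = \P(L^*(g,h) \le t).
\]

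\emph{Step 2: part (i).} I would strip the auxiliary $\norm{\tilde u}_2 \norm{\tilde v}_2 Z$ by conditioning on $\{Z \le 0\}$, where this perturbation is pointwise non-positive, giving $\min_u \max_v [X_{u,v} + Q(u,v)] \le C^*(G)$ and hence $\{C^*(G) \le t\} \cap \{Z \le 0\} \subseteq \{\min_u \max_v [X_{u,v} + Q(u,v)] \le t\}$. Using $\P(Z \le 0) = 1/2$ and independence of $Z$ from $(G, g, h)$,
\[
	\tfrac{1}{2}\P(C^*(G) \le t) \le \P\bigl(\min_u \max_v [X_{u,v} + Q(u,v)] \le t\bigr) \le \P(L^*(g,h) \le t),
\]
which is exactly part (i).

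\emph{Step 3: part (ii).} Under the convex-concave hypothesis, Sion's minimax theorem gives $C^*(G) = \max_v \min_u \{\tilde v^\top G \tilde u + Q(u,v)\}$, and therefore
\[
	-C^*(G) = \min_v \max_u \bigl\{\tilde u^\top (-G)^\top \tilde v - Q(u,v)\bigr\}.
\]
After renaming $(u,v) \leftrightarrow (v,u)$, this is a primary optimization of exactly the form in the lemma, with the roles of $\cD_u, \cD_v$ swapped, matrix $-G^\top \in \R^{n_1 \times m_1}$ (distributionally an i.i.d.\ standard Gaussian matrix), and cost $-Q$. Applying part (i) to this swapped primary yields a bound $\P(-C^*(G) \le -t) \le 2 \P(\widetilde L \le -t)$, where $\widetilde L$ is the corresponding swapped auxiliary; the final step is to identify the distribution of $\widetilde L$ with that of $-L^*(g,h)$, which uses the sign symmetry $(-g,-h) \overset{d}{=} (g,h)$. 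Rewriting $\P(-C^*(G) \le -t) = \P(C^*(G) \ge t)$ and $\P(\widetilde L \le -t) = \P(L^*(g,h) \ge t)$ gives part (ii).

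The most error-prone point will be tracking the direction of inequality in Gordon's lemma (which depends on the nesting of min and max as well as on the direction of the covariance comparison), together with the bookkeeping in part (ii) needed to match the swapped auxiliary to $-L^*(g,h)$. The extension to non-compact $\cD_u$ (when the inner minimum is attained) stated at the end of the lemma follows by restricting to a compact neighborhood of the attained minimizer and invoking the compact case already proved.
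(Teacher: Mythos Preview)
The paper's own proof simply cites Corollary~G.1 of Miolane--Montanari for (i) and (ii), and adds only the one-sentence compact-exhaustion remark for the non-compact extension. Your write-up is thus considerably more detailed than the paper's, and Steps~1--2 reproduce the standard Gordon-with-auxiliary-$Z$ argument correctly.

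There is one genuine slip in Step~3. After applying part~(i) to the swapped problem you obtain a swapped auxiliary
\[
\widetilde L=\min_{v}\max_{u}\bigl\{\|\tilde u\|\,\tilde g^\top\tilde v+\|\tilde v\|\,\tilde h^\top\tilde u-Q(u,v)\bigr\},
\]
and you want to compare it with
\[
-L^*(g,h)=\max_{u}\min_{v}\bigl\{-\|\tilde v\|\,g^\top\tilde u-\|\tilde u\|\,h^\top\tilde v-Q(u,v)\bigr\}.
\]
After the sign-symmetry coupling $\tilde g=-h$, $\tilde h=-g$ the two inner functions coincide, but the order of extremization does not: one is $\min_v\max_u$, the other $\max_u\min_v$, and the inner function is \emph{not} convex--concave (because of the norm prefactors), so you cannot invoke Sion again. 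Hence the claimed distributional identity $\widetilde L\overset{d}{=}-L^*(g,h)$ is false in general. What is true, and what suffices, is the pointwise weak-duality bound $\widetilde L\ge -L^*(g,h)$ under that coupling, from which
\[
\P(\widetilde L\le-t)\le\P(-L^*(g,h)\le-t)=\P(L^*(g,h)\ge t),
\]
exactly as needed. So your strategy is correct, but ``identify the distribution'' must be replaced by ``bound via weak duality''.

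For the non-compact extension, the minimizer of $C^*$ depends on $G$ and the minimizer of $L^*$ depends on $(g,h)$, so restricting to a single compact neighborhood is not uniform in the sample; the paper's exhaustion by an increasing sequence of compacts (eventually capturing every realized minimizer) and passing to the limit is the cleaner way to make your idea rigorous.
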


\begin{proof}
	Points $(i)$ and $(ii)$ are just \cite[Corollary G.1]{miolane2021distribution}.
Generalization to a non-compact $\cD_{u}$ is achieved by exhausting $\cD_{u}$ with a sequence of compact domains and noting that the minimum is attained within all sufficient large compact exhaustions.
\end{proof}

\restatelemma{lem:calculus_1}
\begin{proof}
	By direct calculation,
	\begin{align*}
		& \min_{A \ge 0} \left\{ \frac{\rho}{\sqrt{N}} \norm{\left( y - A \right)_+}_2 + \frac{1}{N}\sum_{i=1}^{N} A_i \right\} \\
		= \, & \min_{A \ge 0} \min_{t \ge 0} \left\{ \frac{\rho t}{2} + \frac{\rho}{2 t N} \sum_{i=1}^{N} \left( y_i - A_i \right)_+^2 + \frac{1}{N}\sum_{i=1}^{N} A_i \right\} \\
		= \, & \min_{t \ge 0} \left\{ \frac{\rho t}{2} + \min_{A \ge 0} \left\{ \frac{\rho}{2 t N} \sum_{i=1}^{N} \left( y_i - A_i \right)_+^2 + \frac{1}{N}\sum_{i=1}^{N} A_i \right\} \right\} \\
		= \, & \min_{t \ge 0} \left\{ \frac{\rho t}{2} + \frac{1}{N} \sum_{i=1}^{N} \min_{A \ge 0} \left\{ \frac{\rho}{2 t} \left( y_i - A \right)_+^2 + A \right\} \right\}.
	\end{align*}
	It is straightforward to check that for any $y \ge 0$:
	\begin{align}\label{eq:varphi_minimizer}
		\arg\min_{A \ge 0} \left\{ \frac{\rho}{2 t} \left( y - A \right)_+^2 + A \right\} = \, & \left( y - \frac{t}{\rho} \right)_+, \\
		\label{eq:varphi_minimum}
		\min_{A \ge 0} \left\{ \frac{\rho}{2 t} \left( y - A \right)_+^2 + A \right\} = \, & \frac{\rho}{2 t} \min \left( y, \frac{t}{\rho} \right)^2 + \left( y - \frac{t}{\rho} \right)_+.
	\end{align}
	Consequently, we have
	\begin{align*}
		& \min_{A \ge 0} \left\{ \frac{\rho}{\sqrt{N}} \norm{\left( y - A \right)_+}_2 + \frac{1}{N}\sum_{i=1}^{N} A_i \right\} \\
		= \, & \min_{t \ge 0} \left\{ \frac{\rho t}{2} + \frac{1}{N} \sum_{i=1}^{N} \left( \frac{\rho}{2 t} \min \left( y_i, \frac{t}{\rho} \right)^2 + \left( y_i - \frac{t}{\rho} \right)_+ \right) \right\} \\
		= \, & \min_{t \ge 0} \left\{ \frac{\rho t}{2} + \frac{\rho}{2 t N} \norm{\min \left( y, \frac{t}{\rho} \right)}_2^2 + \frac{1}{N} \sum_{i=1}^{N} \left( y_i - \frac{t}{\rho} \right)_+ \right\} \\
		= \, & \min_{t \ge 0} \left\{ \frac{\rho^2 t}{2} + \frac{1}{2 t N} \norm{\min \left( y, t \right)}_2^2 + \frac{1}{N} \sum_{i=1}^{N} \left( y_i - t \right)_+ \right\},
	\end{align*}
	which completes the proof.
\end{proof}

\begin{lem}\label{lem:ULLN_1}
	Recall $\varphi(\rho)$ from \cref{def:LP_auxiliary}.
For $g \sim \normal(0, I_N)$, define
	\begin{equation}\label{eq:phi_N}
		\varphi_N (\rho; g) = \min_{t \ge 0} \left( \frac{\rho^2 t}{2} + \frac{1}{2 t N} \sum_{i=1}^{N} \min \left( |g_i|, t \right)^2 + \frac{1}{N} \sum_{i=1}^{N} (|g_i| - t)_+ \right).
	\end{equation}
	Then, we have
	\begin{equation}
		\plim_{N \to \infty} \sup_{\rho \in [0, 1]} \left\vert \varphi_N (\rho; g) - \varphi(\rho) \right\vert = 0.
	\end{equation}
\end{lem}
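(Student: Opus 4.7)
The plan is to exploit concavity of both $\varphi_N(\cdot;g)$ and $\varphi$ in $\rho$, combined with a pointwise law of large numbers, to upgrade pointwise convergence to uniform convergence on $[0,1]$. Following the AM--GM argument used in the proof of \cref{lem:equivalent_phi}, one obtains the equivalent formulation
\[
\varphi_N(\rho;g) \;=\; \min_{t\ge 0}\left\{\rho\sqrt{\tfrac{1}{N}\sum_{i=1}^N \min(|g_i|,t)^2}\;+\;\tfrac{1}{N}\sum_{i=1}^N(|g_i|-t)_+\right\},
\]
which exhibits $\varphi_N(\cdot;g)$ as the pointwise minimum of a family of affine functions in $\rho$; hence $\varphi_N(\cdot;g)$ is concave on $[0,1]$. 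The same representation and concavity hold for $\varphi$ by \cref{lem:equivalent_phi}, and concavity implies $\varphi$ is continuous on the closed interval $[0,1]$.

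Next I would establish the pointwise convergence $\varphi_N(\rho;g)\to\varphi(\rho)$ in probability at each $\rho\in[0,1]$. The endpoint $\rho=0$ is trivial (both sides vanish, attained as $t\to\infty$). For $\rho\in(0,1]$, the upper bound follows by plugging in the population optimizer $t^\star=t(\rho)<\infty$ from \cref{lem:property_phi} and applying the scalar LLN to the two resulting empirical averages. For the matching lower bound, coercivity in $t$ (the $\rho^2 t/2$ piece grows linearly) allows restricting the $t$-infimum to a bounded interval $[0,T_\rho]$, over which the parametric classes $\{g\mapsto\min(|g|,t)^2\}$ and $\{g\mapsto(|g|-t)_+\}$ are monotone and uniformly bounded, hence Glivenko--Cantelli, yielding a ULLN in $t$ and the desired lower bound.

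I then upgrade pointwise to uniform convergence using concavity. Given $\eps>0$, I pick a finite grid $0=\rho_0<\rho_1<\cdots<\rho_K=1$ with spacing small enough that $|\varphi(\rho)-\varphi(\rho_k)|<\eps/3$ whenever $|\rho-\rho_k|$ is at most this spacing (possible by uniform continuity of $\varphi$). On the high-probability event $\{\max_k|\varphi_N(\rho_k;g)-\varphi(\rho_k)|<\eps/3\}$, concavity of both $\varphi_N(\cdot;g)$ and $\varphi$ implies each lies above its chord and below the linear extrapolations from adjacent sub-intervals; tracking these two-sided bounds against the $\eps/3$-closeness at the grid points gives $\sup_{\rho\in[0,1]}|\varphi_N(\rho;g)-\varphi(\rho)|<\eps$.

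The main obstacle is the boundary behavior near $\rho=0$, where the optimizer $t(\rho)$ diverges and the Glivenko--Cantelli window $[0,T_\rho]$ used above is not uniform in $\rho$. This is neutralized by the fact that $\varphi_N(0;g)=\varphi(0)=0$ exactly, so concavity alone pins down both functions on $[0,\rho_1]$ once their values at $\rho_1$ are controlled, and the uniform-in-$\rho$ grid argument proceeds identically at the left endpoint as in the interior.
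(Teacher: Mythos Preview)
Your argument is correct and takes a genuinely different route from the paper. The paper does not invoke concavity at all: it splits $[0,1]$ into $[\veps,1]$ and $[0,\veps]$, on $[\veps,1]$ bounds the empirical minimizer $t_N(\rho)\le \tfrac{2}{\veps^2 N}\sum_i|g_i|$ uniformly in $\rho$, applies a ULLN in $t$ over the resulting compact window (the $\rho$-dependence being only through the explicit term $\rho^2 t/2$), and on $[0,\veps]$ shows directly that both $\varphi_N$ and $\varphi$ are $O(\sqrt{\veps})$ by plugging $t=1/\sqrt{\rho}$. Your approach instead transfers the equivalent affine-in-$\rho$ representation of \cref{lem:equivalent_phi} to the empirical functional (which works verbatim, since $\varphi_N$ also admits the $\min_{A\ge 0}$ form via \cref{lem:calculus_1}), obtains concavity of $\varphi_N(\cdot;g)$, and then uses the standard fact that pointwise convergence of concave functions to a continuous limit on a compact interval is automatically uniform. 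This is cleaner once concavity is in hand, and it decouples the two issues (pointwise LLN at a finite grid; uniform upgrade via structure) that the paper handles jointly. The paper's approach is slightly more self-contained in that it does not rely on the equivalent formulation. One small point you glossed over: when you run the ULLN in $t$ for the lower bound, the $1/(2t)$ prefactor is singular at $t=0$; you should note, as the paper does, that both the empirical and population versions of $\tfrac{1}{2t}\cdot\text{(second moment of }\min(|G|,t))$ are $O(t)$ near $0$, so a further $[0,\delta]\cup[\delta,T_\rho]$ split handles this harmlessly.
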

\begin{proof}
	For any $\veps > 0$ and $\rho \in [\veps, 1]$, denote by $t_N (\rho)$ the minimizer of the right hand side of \cref{eq:phi_N}.
Then, we know that
	\begin{equation*}
		\frac{\rho^2 t_N (\rho)}{2} \le \frac{1}{N} \sum_{i=1}^{N} \vert g_i \vert \implies t_N (\rho) \le \frac{2}{\rho^2 N} \sum_{i=1}^{N} \vert g_i \vert \le \frac{2}{\veps^2 N} \sum_{i=1}^{N} \vert g_i \vert.
	\end{equation*}
	By the law of large number, $(1/N) \sum_{i=1}^{N} \vert g_i \vert \to \sqrt{2 / \pi}$ almost surely.
Hence, there exists a constant $C = C_{\veps}$ such that
	\begin{equation*}
		\lim_{N \to \infty} \P \left( \sup_{\rho \in [\veps, 1]}t_N (\rho) \le C_{\veps} \right) \to 1.
	\end{equation*}
	As a consequence, with high probability, we have for all $\rho \in [\veps, 1]$:
	\begin{equation}
		\varphi_N (\rho; g) = \min_{t \in [0, C_{\veps}]} \left( \frac{\rho^2 t}{2} + \frac{1}{2 t N} \sum_{i=1}^{N} \min \left( |g_i|, t \right)^2 + \frac{1}{N} \sum_{i=1}^{N} (|g_i| - t)_+ \right).
	\end{equation}
	Further, for all $\rho \in [\veps, 1]$ we have
	\begin{equation}
		\varphi (\rho) = \min_{t \in [0, C_{\veps}]} \left\{ \frac{\rho^2 t}{2} + \frac{1}{2 t} \E \left[ \min \left( \vert G \vert, t \right)^2 \right] + \E \left[ \left( \vert G \vert - t \right)_+ \right] \right\}.
	\end{equation}
	Since as $t \to 0$,
	\begin{equation*}
		\max \left\{ \frac{1}{2 t N} \sum_{i=1}^{N} \min \left( |g_i|, t \right)^2, \ \frac{1}{2 t} \E \left[ \min \left( \vert G \vert, t \right)^2 \right] \right\} = O(t),
	\end{equation*}
	applying the uniform law of large numbers yields that
	\begin{align}
		\plim_{N \to \infty} \sup_{t \in [0, C_{\veps}]} \left\vert \frac{1}{2 t N} \sum_{i=1}^{N} \min \left( |g_i|, t \right)^2 - \frac{1}{2 t} \E \left[ \min \left( \vert G \vert, t \right)^2 \right] \right\vert = \, & 0, \\
		\plim_{N \to \infty} \sup_{t \in [0, C_{\veps}]} \left\vert \frac{1}{N} \sum_{i=1}^{N} (|g_i| - t)_+ - \E \left[ \left( \vert G \vert - t \right)_+ \right] \right\vert = \, 0.
	\end{align}
	Consequently, we obtain that
	\begin{equation*}
		\plim_{N \to \infty} \sup_{\rho \in [\veps, 1]} \left\vert \varphi_N (\rho; g) - \varphi(\rho) \right\vert = 0.
	\end{equation*}
	Finally, note that if $\rho \in [0, \veps]$, we have (taking $t = 1 / \sqrt{\rho}$)
	\begin{align*}
		\varphi_N (\rho; g) \le \, \frac{\rho}{2} + \frac{\sqrt{\rho}}{2 N} \sum_{i=1}^{N} |g_i|^2 + \frac{1}{N} \sum_{i=1}^{N} \left( |g_i| - \frac{1}{\sqrt{\rho}} \right)_+ = O_p (\sqrt{\rho}) = O_p (\sqrt{\veps})
	\end{align*}
	as $\veps \to 0$, and similarly $\varphi (\rho) = O(\sqrt{\rho}) = O (\sqrt{\veps})$.
This immediately implies that
	\begin{equation}
		\plim_{N \to \infty} \sup_{\rho \in [0, \veps]} \left\vert \varphi_N (\rho; g) - \varphi(\rho) \right\vert = O(\sqrt{\veps}).
	\end{equation}
	We thus deduce that for all $\veps \in [0, 1]$:
	\begin{equation}
		\plim_{N \to \infty} \sup_{\rho \in [0, 1]} \left\vert \varphi_N (\rho; g) - \varphi(\rho) \right\vert = O(\sqrt{\veps}).
	\end{equation}
	Sending $\veps \to 0^+$ concludes the proof.

	\end{proof}

\begin{lem}\label{lem:ULLN_2}
	Let $h \sim \normal(0, I_M)$, we have
	\begin{equation}
		\plim_{M \to \infty} \sup_{\rho \in [0, 1]} \left\vert \frac{1}{\sqrt{M}} \norm{\left( \kappa 1 - \rho h \right)_+}_2 - \E \left[ \left( \kappa - \rho G \right)_+^2 \right]^{1/2} \right\vert = 0.
	\end{equation}
\end{lem}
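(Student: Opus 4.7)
The plan is to establish pointwise convergence via the law of large numbers and then upgrade to uniform convergence via an equicontinuity/$\veps$-net argument. Specifically, for each fixed $\rho \in [0,1]$, the random variables $\{(\kappa - \rho h_i)_+^2\}_{i=1}^M$ are i.i.d.\ with finite second moment, so the weak law of large numbers yields
\[
\frac{1}{M} \sum_{i=1}^M (\kappa - \rho h_i)_+^2 \ \xrightarrow{p}\  \E\left[(\kappa - \rho G)_+^2\right].
\]

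Next, I would use Lipschitz control in $\rho$. Writing $f_\rho(x) = (\kappa - \rho x)_+^2$, one computes $\partial_\rho f_\rho(x) = -2 x (\kappa - \rho x)_+$, which is bounded in absolute value by $2|x|(|\kappa| + |x|)$ uniformly over $\rho \in [0,1]$. Consequently, for any $\rho, \rho' \in [0,1]$,
\[
\left| \frac{1}{M} \norm{(\kappa \mathbf{1} - \rho h)_+}_2^2 - \frac{1}{M} \norm{(\kappa \mathbf{1} - \rho' h)_+}_2^2 \right| \le 2 |\rho - \rho'| \cdot \frac{1}{M} \sum_{i=1}^M |h_i|(|\kappa| + |h_i|),
\]
and the average on the right converges in probability to $\E[|G|(|\kappa|+|G|)] < \infty$, hence is $O_p(1)$. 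The same computation shows that $\rho \mapsto \E[(\kappa - \rho G)_+^2]$ is Lipschitz on $[0,1]$.

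To conclude, I would take an $\veps$-net $\{\rho_1, \dots, \rho_{K}\}$ of $[0,1]$ with $K = \lceil 1/\veps\rceil$, apply the pointwise convergence above at each $\rho_k$ together with a union bound to get $\max_k |M^{-1} \|(\kappa \mathbf{1}-\rho_k h)_+\|_2^2 - \E[(\kappa-\rho_k G)_+^2]| = o_p(1)$, and then extend to all of $[0,1]$ via the Lipschitz bound, yielding
\[
\plim_{M \to \infty} \sup_{\rho \in [0,1]} \left| \frac{1}{M} \norm{(\kappa \mathbf{1} - \rho h)_+}_2^2 - \E\left[(\kappa - \rho G)_+^2\right] \right| = O(\veps),
\]
and sending $\veps \to 0^+$ removes the residual error. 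Finally, since both the empirical and limit quantities are uniformly bounded in $\rho \in [0,1]$ (as a one-step bound gives $M^{-1}\sum(\kappa-\rho h_i)_+^2 \le 2\kappa^2 + 2 M^{-1}\|h\|_2^2 = O_p(1)$), the uniform continuity of $\sqrt{\cdot}$ on bounded sets, via $|\sqrt{a}-\sqrt{b}| \le \sqrt{|a-b|}$, transfers the uniform convergence from the squared quantities to the claimed statement. There is no real obstacle here beyond the standard net-plus-Lipschitz pattern; the only mild subtlety is verifying that the Lipschitz constant is $O_p(1)$ uniformly rather than blowing up, which is immediate from the Gaussian moment bounds.
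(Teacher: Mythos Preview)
Your proposal is correct: the pointwise LLN plus the uniform-in-$\rho$ Lipschitz bound (with $O_p(1)$ random Lipschitz constant) and an $\veps$-net is exactly the standard uniform law of large numbers argument, and the square-root transfer via $|\sqrt{a}-\sqrt{b}|\le\sqrt{|a-b|}$ is clean. The paper does not give its own proof here but defers to \cite[Lemma~4]{montanari2024tractability}, which uses essentially the same equicontinuity/net pattern, so your argument matches the intended approach.
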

\begin{proof}
	See the proof of \cite[Lemma 4]{montanari2024tractability}.
\end{proof}

\section{Multi-Overlap Gap Property}\label{sec:mogp}
In this section, we investigate the overlap gap property (OGP) of the Ising Perceptron model. We demonstrate that the model exhibits a multi-Overlap Gap Property (m-OGP), an intricate geometric structure of the solution space, within certain ranges of $\alpha$. Leveraging this property, we can effectively rule out stable algorithms for these parameter ranges.

The first subsection is dedicated to establishing the overlap gap property, while the second subsection focuses on eliminating stable algorithms based on the m-OGP framework.

\subsection{Multi-OGP Structure}\label{sec:ogpstructure}
In this section, we show that the Ising Perceptron model exhibits m-OGP. The proofs follows the same strategy as in \cite{gamarnik2022algorithms}, so we omit some details and only highlight the differences.

We start by recalling an important definition for sets of overlapped solutions.
\begin{defn}[Definition 2.1 in \cite{gamarnik2022algorithms}]
Let $0<\eta<\beta<1, m \in \mathbb{Z}_+$ and $\cI \subset [0,\pi/2]$ be a set of indices. Let $\alpha < \asat (\kappa)$. Let $X_0, \cdots, X_m$ to be a sequences of i.i.d. $M \times N$ matrices with i.i.d. $\normal(0,1)$ entries, where $M/N \to \alpha$. We define $\cS_\kappa(\beta, \eta, m, \alpha, \cI)$ to be the set of all $m$-tuples $\theta^{(i)}$ where $1\leq i \leq m$, $\theta^{(i)} \in \{\pm 1\}^N$ such that the following holds:
\begin{itemize}
    \item (Pairwise Overlap Condition) For any $1\leq i < j \leq m$, we have 
\begin{align*}
    \beta-\eta < N^{-1}\< \theta^{(i)}, \theta^{(j)} \> < \beta.
\end{align*}
    \item (One-sided Margin Constraints) There exists $\tau_i \in \cI$, where $1\leq i \leq m$, such that 
    \begin{align*}
        X_i(\tau_i) \theta^{(i)} \geq \kappa \sqrt{N}, \quad \forall \, 1\leq i \leq m,
    \end{align*}
    where
    \begin{align*}
        X_i(\tau_i) = \cos(\tau_i) X_0 + \sin(\tau_i) X_i.
    \end{align*}
\end{itemize}
\end{defn}

\begin{thm}\label{thm:mogp}
    There exist constants $C_1>0$ and $\kappa_1 < 0$ such that for any $\kappa<\kappa_1$ and $\alpha > \frac{C_1\log^2|\kappa|}{\kappa^2 \Phi(\kappa)}$, there exist $0 < \eta < \beta <1$, $c>0$, and $m \in \mathbb{N}$ such that the following holds. Fix any $\cI \subset [0,\pi/2]$ with $|\cI| \leq \exp(cN)$, we have
    \begin{align*}
        \mathbb{P}[\cS_\kappa(\beta, \eta, m, \alpha, \cI) \neq \varnothing] = \exp(-\Theta(N)).
    \end{align*}
\end{thm}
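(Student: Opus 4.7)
The plan is to apply the first-moment method to $Z = |\mathcal{S}_\kappa(\beta, \eta, m, \alpha, \mathcal{I})|$, showing that $\mathbb{E}[Z] \leq \exp(-\Theta(N))$, after which Markov's inequality yields the claim. The argument mirrors that of \cite{gamarnik2022algorithms} for the symmetric binary perceptron, with quantitative adjustments tailored to the $\kappa \to -\infty$ regime. First union-bound over $(\tau_1, \ldots, \tau_m) \in \mathcal{I}^m$, which costs a factor $|\mathcal{I}|^m \leq e^{cmN}$. Then sum over admissible overlap profiles $q_{ij} = N^{-1}\langle \theta^{(i)}, \theta^{(j)}\rangle \in [\beta-\eta, \beta]$. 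For each fixed $(\theta^{(i)})$ and $(\tau_i)$, the rows of the joint data matrices are i.i.d., so the joint success probability factorizes as $P_{\Sigma(\tau, q)}(\kappa)^M$, where $\Sigma(\tau, q)$ is the $m \times m$ matrix with $\Sigma_{ii} = 1$ and $\Sigma_{ij} = \cos\tau_i \cos\tau_j \cdot q_{ij}$, and $P_\Sigma(\kappa) := \mathbb{P}_{G \sim \mathcal{N}(0,\Sigma)}(G \geq \kappa \mathbf{1})$.

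The worst covariance (maximising $P_\Sigma$, by Gaussian correlation) is the equi-correlated $\Sigma = (1-\beta) I + \beta\, \mathbf{1}\mathbf{1}^\top$, which can be analysed via the decomposition $G_i = \sqrt{\beta}\, Z_0 + \sqrt{1-\beta}\, Z_i$, giving
\[
P_\Sigma(\kappa) \;=\; \mathbb{E}_{Z_0}\!\left[ \left( 1 - \Phi\!\left( \tfrac{\kappa - \sqrt{\beta}\, Z_0}{\sqrt{1-\beta}} \right) \right)^{\!m} \right].
\]
A Laplace-type estimate in the large-$|\kappa|$ regime shows $-\log P_\Sigma(\kappa) = (1 - o_\kappa(1))\, \Lambda(m, \beta, \kappa)\, \Phi(\kappa)$, where the suppression factor $\Lambda$ grows with $m$ and $(1-\beta)^{-1}$ and captures the joint unlikelihood of the $m$ constraints compared to the single-constraint rate. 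Meanwhile, a standard Stirling-type count shows that the number of overlap-constrained tuples is at most $\exp(N\, T_m(\beta, \eta))$, where $T_m(\beta, \eta) \leq m\log 2 - (m-1)\mathcal{E}(\beta) + O(\eta)$ for an entropy penalty $\mathcal{E}(\beta)$ that tends to $\log 2$ as $\beta \to 1$.

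Combining these bounds, it suffices to ensure $T_m(\beta, \eta) + cm < \alpha\, \Phi(\kappa)\, \Lambda(m, \beta, \kappa)\,(1 - o_\kappa(1))$. The parameter choice is $\beta = 1 - \Theta(\kappa^{-2})$ (tight cluster) and $m = \lceil C_2 \log|\kappa|\rceil$; with these, $T_m(\beta, \eta) = O(\log|\kappa|)$ while $\Lambda(m, \beta, \kappa) = \Theta(\kappa^2/\log|\kappa|)$, so the required inequality reduces to $\alpha \gtrsim \log^2|\kappa|/(\kappa^2 \Phi(\kappa))$. Taking the constant $c$ small enough to absorb the $cm$ contribution completes the argument, and finally one chooses $\eta$ small enough that the slack in the overlap window does not disturb the Laplace/Stirling expansions.

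The main obstacle is the quantitative Laplace-type analysis of the orthant probability $P_\Sigma(\kappa)$: the dependence of $\Lambda(m, \beta, \kappa)$ on all three parameters must be tracked to leading order, and delicately balanced against the entropy penalty $\mathcal{E}(\beta)$ to achieve exactly the $\log^2|\kappa|$ factor. With $\beta$ too close to $1$, $\Lambda$ collapses to the single-constraint rate and the entropy gain vanishes; with $\beta$ too far from $1$, the entropy cost $T_m$ overwhelms the joint suppression. The sweet spot $\beta = 1 - \Theta(\kappa^{-2})$, $m \asymp \log|\kappa|$ is what dictates the $\log^2|\kappa|$ factor, and verifying it rigorously via the Gaussian integral is the crux of the proof.
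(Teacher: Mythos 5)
Your skeleton — first moment plus Markov, union bound over $\mathcal{I}^m$ costing $|\mathcal{I}|^m$, a Stirling count of overlap-constrained tuples, and a per-row bound on a Gaussian orthant probability (reduced to the equicorrelated case by Slepian) — is the same as the paper's. The gap is in the quantitative heart of your argument: the claim that $-\log P_\Sigma(\kappa) = (1-o_\kappa(1))\,\Lambda(m,\beta,\kappa)\,\Phi(\kappa)$ with $\Lambda = \Theta(\kappa^2/\log|\kappa|)$ for $m \asymp \log|\kappa|$ is impossible. For any covariance with standard normal marginals, the union bound gives $P_\Sigma(\kappa) = \mathbb{P}(G_i \ge \kappa \ \forall i) \ge 1 - m\Phi(\kappa)$, hence $-\log P_\Sigma \le (1+o(1))\,m\,\Phi(\kappa)$, i.e.\ $\Lambda \le (1+o(1))m = O(\log|\kappa|) \ll \kappa^2/\log|\kappa|$. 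Positive correlation can only push $\Lambda$ down toward $1$ (perfectly correlated constraints cost a single $\Phi(\kappa)$), never above $m$; your statement that $\Lambda$ grows with $(1-\beta)^{-1}$ has the monotonicity backwards. Consequently, with $m \asymp \log|\kappa|$ the probability suppression available to the first moment is at most $\exp\bigl(-(1+o(1))\alpha N m \Phi(\kappa)\bigr)$, while the counting term is at least $2^N$ (the first configuration alone), so your bound can only decay when $\alpha\, m\,\Phi(\kappa) \gtrsim \log 2$, i.e.\ $\alpha \gtrsim 1/(\log|\kappa|\,\Phi(\kappa))$ — far above the claimed threshold $\log^2|\kappa|/(\kappa^2\Phi(\kappa))$. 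So the parameter choice $m \asymp \log|\kappa|$, $1-\beta \asymp \kappa^{-2}$ cannot prove the theorem by this method.

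The correct balance (and the paper's choice) takes $m$ polynomially large, $m = \lfloor \kappa^2 \rfloor$, so the per-row cost is $\approx m\Phi(\kappa) = \kappa^2\Phi(\kappa)$, together with $1-\beta = b_1\log|\kappa|/\kappa^2$ and $\eta \asymp \log|\kappa|/\kappa^2$ with $b_1 > 8$. The extra $\log|\kappa|$ in $1-\beta$ is precisely what keeps the second-order term of the inclusion–exclusion bound negligible: by \cref{lem:jointprob} the pairwise probability is at most $4\Phi(\kappa)\,\Phi\bigl(-|\kappa|\sqrt{(1-\beta)/2}\bigr) \approx \Phi(\kappa)\,|\kappa|^{-b_1/4}$, and the $\binom{m}{2}$ such terms stay $o(m\Phi(\kappa))$ only if $\kappa^2|\kappa|^{-b_1/4} = o(1)$; with your $1-\beta \asymp \kappa^{-2}$ the pairwise term is already a constant multiple of $\Phi(\kappa)$, so the expansion you rely on breaks down even at constant $m$. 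With the paper's choices the entropy per extra copy is $\Ent\bigl(\tfrac{1+\beta-\eta}{2}\bigr) \asymp \frac{\log|\kappa|}{\kappa^2}\log\frac{\kappa^2}{\log|\kappa|} \asymp \frac{\log^2|\kappa|}{\kappa^2}$, and balancing $mN\log^2|\kappa|/\kappa^2$ against $\alpha N m\Phi(\kappa)$ yields exactly $\alpha \gtrsim \log^2|\kappa|/(\kappa^2\Phi(\kappa))$: the $\log^2|\kappa|$ factor comes from the width of the overlap window times its entropy logarithm, not from any correlation-enhanced suppression of the orthant probability.
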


\begin{proof}
    By monotonicity, it suffices to prove the theorem for $\alpha = \frac{C_1\log^2|\kappa|}{\kappa^2 \Phi(\kappa)}$. We use the first moment method to compute the size of $\cS_\kappa(\beta, \eta, m, \alpha, \cI)$. Note that by Markov's inequality,
    \begin{align*}
        \mathbb{P}[\cS_\kappa(\beta, \eta, m, \alpha, \cI) \neq \varnothing] = \mathbb{P}[\left |\cS_\kappa(\beta, \eta, m, \alpha, \cI) \right | \geq 1] \leq \mathbb{E}[\left |\cS_\kappa(\beta, \eta, m, \alpha, \cI) \right |].
    \end{align*}
    Now $\mathbb{E}[|\cS_\kappa(\beta, \eta, m, \alpha, \cI) |]$ can be decomposed as the product of the number of $\theta^{(i)}$ that satisfy the overlap conditions, and the probability that they are solutions (condition on the overlap structure).

    Define $M(m,\beta, \eta)$ to be the number of $m$-tuples of $\theta^{(i)} \in \{\pm 1\}^N$ where $1\leq i \leq m$ such that $\beta-\eta < N^{-1}\inp{\theta^{(i)}, \theta^{(j)}} < \beta$ for $ 1\leq i < j \leq m$. Then by Stirling's formula,
    \begin{align*}
        M(m,\beta, \eta) &\leq 2^N \left( \sum_{\substack{\rho: \frac{1+\beta-\eta}{2} \leq \rho \leq \frac{1+\beta}{2} \\ \rho N \in \mathbb{N}}} \binom{N}{N \rho} \right)^{m-1} \\
        &\leq \exp \left (N \log 2 + (m-1) N \Ent\left(\frac{1+\beta-\eta}{2} \right) + O(\log N) \right),
    \end{align*}
    where $\Ent(\cdot)$ is the binary entropy function. 

    Now we count the probability term. We note that
    \begin{align*}
        &\;\;\mathbb{P}[\exists \tau_1, \cdots, \tau_m\in \cI: X_i(\tau_i) \theta^{(i)} \geq \kappa \sqrt{N}, \, \forall \, 1\leq i \leq m] \\
        &\leq |\cI|^m \max_{\tau_i \in \cI, 1\leq i \leq m} \mathbb{P}[X_i(\tau_i) \theta^{(i)} \geq \kappa \sqrt{N}, \, \forall \, 1\leq i \leq m]\\
        & \leq |\cI|^m \max_{\tau_i \in \cI, 1\leq i \leq m} \mathbb{P}[\texttt{R}_i(\tau_i) \theta^{(i)} \geq \kappa \sqrt{N}, \, \forall \, 1\leq i \leq m]^M,
    \end{align*}
    where we denote $\texttt{R}_i(\tau_i)$ as the first row of the matrix $X_i(\tau_i)$. We note that $\{N^{-1/2}\texttt{R}_i(\tau_i) \theta^{(i)}: 1\leq i \leq m\}$ is a multivariate normal with each entry having zero mean and unit variance. Furthermore, for $i\neq j$, we have that
    \begin{align*}
        \cov(N^{-1/2}\texttt{R}_i(\tau_i) \theta^{(i)}, N^{-1/2}\texttt{R}_j(\tau_j) \theta^{(j)}) = N^{-1} (\theta^{(i)})^\top \mathbb{E}(\texttt{R}_i \texttt{R}_j^\top) \theta^{(j)} = \cos(\tau_i) \cos(\tau_j) N^{-1} \inp{\theta^{(i)}, \theta^{(j)}}.
    \end{align*}
Note that since $N^{-1}\inp{\theta^{(i)}, \theta^{(j)}} < \beta$, we have that by the inclusion–exclusion principle and \cref{lem:jointprob},
    \begin{align*} 
    \mathbb{P}[\texttt{R}_i(\tau_i) \theta^{(i)} \geq \kappa, \, \forall \, 1\leq i \leq m] &\leq 1 - \sum_{i=1}^m \mathbb{P}[\texttt{R}_i(\tau_i) \theta^{(i)} < \kappa] +\sum_{i\neq j} \mathbb{P}[\texttt{R}_i(\tau_i) \theta^{(i)} < \kappa , \texttt{R}_j(\tau_j) \theta^{(j)} < \kappa ]\\
    & \leq 1- m\Phi(\kappa) + \frac{m(m-1)}{2} 4\Phi(\kappa) \Phi\left(-|\kappa|\sqrt{\frac{1-\beta}{2}}\right).
    \end{align*}
Now, putting together the counting and probability terms, we have that,
\begin{align*}
    &\mathbb{E}[|\cS_\kappa(\beta, \eta, m, \alpha, \cI) |] \leq \exp \left (N \log 2 + (m-1) N \Ent\left(\frac{1+\beta-\eta}{2} \right) + O(\log N) \right) \\
    &\hspace{5cm} \cdot |\cI|^m \left( 1- m\Phi(\kappa) + \frac{m(m-1)}{2} 4\Phi(\kappa) \Phi\left(-|\kappa|\sqrt{\frac{1-\beta}{2}}\right) \right)^M.
\end{align*}
Taking the logarithm, we have that
\begin{align*}
&\log \mathbb{E}[|\cS_\kappa(\beta, \eta, m, \alpha, \cI) |] \leq N \log 2 + m N \Ent\left(\frac{1+\beta-\eta}{2} \right) + m \log|\cI| + o(N)\\
&\hspace{5cm}+ \alpha N \log\left( 1- m\Phi(\kappa) + \frac{m(m-1)}{2} 4\Phi(\kappa) \Phi\left(-|\kappa|\sqrt{\frac{1-\beta}{2}}\right) \right) .
\end{align*}
Note that with $\beta = 1-b_1 \log|\kappa|/\kappa^2$ and $\eta = b_2 \log|\kappa|/\kappa^2$, we have that
\begin{align*}
&\Ent\left(\frac{1+\beta-\eta}{2} \right) = \frac{1+\beta-\eta}{2} \log \left(\frac{2}{1+\beta-\eta} \right) + \frac{1-\beta+\eta}{2} \log \left(\frac{2}{1-\beta+\eta} \right)\\
&= \frac{(b_1+b_2) \log|\kappa|}{2\kappa^2}\left( \frac{2\kappa^2}{(b_1+b_2) \log|\kappa|} \right) + \left( 1- \frac{(b_1+b_2) \log|\kappa|}{2\kappa^2} \right) \log\left( \frac{1}{1-(b_1+b_2) \log|\kappa|/(2\kappa^2)} \right)\\
& \leq \frac{(b_1+b_2) \log|\kappa|}{2\kappa^2}\left( \log(2 \kappa^2) +1  \right) \leq \frac{(b_1+b_2)\log^2|\kappa|}{\kappa^2} (1+O(\frac{1}{\log |\kappa|})).
\end{align*}
Furthermore, we have for large enough $|\kappa|$,
    \begin{align*}
        \Phi\left(-|\kappa|\sqrt{\frac{1-\beta}{2}}\right)= \Phi\left(-\sqrt{\frac{b_1 \log |\kappa|}{2}}\right) \leq \exp(-b_1 \log |\kappa|/4) = |\kappa|^{-b_1/4}.
    \end{align*}
Now we take $m = \lfloor\kappa^2 \rfloor$. Then with $|\cI| \leq \exp(cN)$, we have 
\begin{align*}
    N \log 2 + m N \Ent\left(\frac{1+\beta-\eta}{2} \right) + m \log|\cI| + o(N) &\leq m N \Ent\left(\frac{1+\beta-\eta}{2} \right) (1+o_\kappa(1)) \\
    &\leq mN \frac{(b_1+b_2)\log^2|\kappa|}{\kappa^2} (1+o_\kappa(1)).
\end{align*}
Furthermore, taking $b_1 > 8$, we have
\begin{align*}
     &\log\left( 1- m\Phi(\kappa) + \frac{m(m-1)}{2} 4\Phi(\kappa) \Phi\left(-|\kappa|\sqrt{\frac{1-\beta}{2}}\right) \right) \\
     &\leq \log\left( 1- m \Phi(\kappa) (1+o_\kappa(1)) \right) \leq -m \Phi(\kappa) (1+o_\kappa(1)).
\end{align*}
So as long as
\begin{align*}
    \alpha > \frac{(b_1+b_2) \log^2|\kappa|}{\kappa^2 \Phi(\kappa)} (1+o_\kappa(1)),
\end{align*}
we have 
\begin{align*}
    \log \mathbb{E}[|\cS_\kappa(\beta, \eta, m, \alpha, \cI) |] \leq -N \Theta(\log^2|\kappa|) = -\Theta(N),
\end{align*}
which completes the proof.
\end{proof}

Now we record a useful technical lemma.
\begin{lem}\label{lem:jointprob}
	Assume $(G, G') \sim \normal (0, \begin{bmatrix}
		1 & q \\
		q & 1
	\end{bmatrix})$. Then, for sufficiently negative $A < 0$, we have
    \begin{equation}\label{eq:joint_prob_kappa}
    	\P \left( G \le A, \ G' \le A \right) \le 4 \Phi(A) \left( 1 - \Phi \left( \vert A \vert \sqrt{\frac{1-q}{2}} \right) \right).
    \end{equation}
    Note that the above inequality holds uniformly for all $q \in [0, 1]$.
\end{lem}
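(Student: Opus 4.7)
The plan is to exploit the orthogonal decomposition $G = \sigma_+ U + \sigma_- V$ and $G' = \sigma_+ U - \sigma_- V$, where $\sigma_{\pm} = \sqrt{(1 \pm q)/2}$ and $U, V$ are i.i.d.\ standard normals (the degenerate case $q=1$ can be checked directly: $G = G'$ gives LHS $= \Phi(A)$ and RHS $= 2\Phi(A)$). Under this change of variables, the event $\{G \le A\} \cap \{G' \le A\}$ coincides with $\{\sigma_+ U + \sigma_- |V| \le A\}$, since both inequalities must hold and $V$ may take either sign. Conditioning on $V$ produces the one-dimensional integral
\begin{equation*}
P(G \le A, G' \le A) \;=\; 2 \int_0^\infty \phi(v)\, \Phi\!\left(\frac{A - \sigma_- v}{\sigma_+}\right) dv.
\end{equation*}

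The next step is to apply the Mills ratio bound $\Phi(-x) \le \phi(x)/x$ (valid for $x > 0$) to the inner $\Phi$, exploiting the fact that $A - \sigma_- v \le A < 0$ so that $|A - \sigma_- v| \ge |A|$. This gives
\begin{equation*}
\Phi\!\left(\frac{A - \sigma_- v}{\sigma_+}\right) \;\le\; \frac{\sigma_+}{|A|}\, \phi\!\left(\frac{A - \sigma_- v}{\sigma_+}\right).
\end{equation*}
The remaining Gaussian integral can be evaluated exactly by completing the square in the exponent, using the identity $\sigma_+^2 + \sigma_-^2 = 1$, which makes the quadratic form collapse neatly. The calculation produces
\begin{equation*}
\int_0^\infty \phi(v)\, \phi\!\left(\frac{A - \sigma_- v}{\sigma_+}\right) dv \;=\; \sigma_+\, \phi(A)\, \Phi\!\left(-|A|\sigma_-/\sigma_+\right),
\end{equation*}
so combining the two displays yields $P(G \le A, G' \le A) \le (2\sigma_+^2/|A|)\, \phi(A)\, \Phi(-|A|\sigma_-/\sigma_+)$.

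Finally, I would invoke the reverse Mills ratio bound $\phi(A)/|A| \le 2\Phi(A)$, which is known to hold whenever $|A|$ exceeds an absolute constant (e.g.\ $|A|\ge \sqrt{2}$); crucially this threshold is independent of $q$, giving uniformity over $q \in [0,1]$. Combined with the trivial observations $\sigma_+^2 \le 1$ and $\sigma_-/\sigma_+ \ge \sigma_-$ (so that $\Phi(-|A|\sigma_-/\sigma_+) \le \Phi(-|A|\sigma_-)$), this produces the claimed inequality $P(G \le A, G' \le A) \le 4\Phi(A)\, \Phi(-|A|\sigma_-) = 4\Phi(A)(1 - \Phi(|A|\sqrt{(1-q)/2}))$.

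The main bookkeeping obstacle is the completing-the-square step; it works out cleanly only because $\sigma_+^2 + \sigma_-^2 = 1$, which turns the cross-term into the desired Mills factor $\Phi(-|A|\sigma_-/\sigma_+)$. A secondary concern is ensuring uniformity in $q$: the Mills ratio thresholds depend only on $|A|$ rather than on $q$, and the final bound $\sigma_+^2 \le 1$ and $\sigma_-/\sigma_+ \ge \sigma_-$ hold for every $q \in [0,1]$, so the ``sufficiently negative $A$'' condition can be taken as a fixed constant.
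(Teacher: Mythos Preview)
Your proof is correct and takes a genuinely different route from the paper's. The paper proceeds by Gaussian interpolation: it differentiates $f_A(q) = \P_q(G \le A, G' \le A)$ in $q$ using the identity $f_A'(q) = \phi_q(A,A)$ (citing Talagrand), then integrates $\int_0^q \frac{1}{2\pi\sqrt{1-s^2}}\exp(-A^2/(1+s))\,ds$ through two changes of variables before invoking Mills' ratio at the end. Your argument instead uses the orthogonal decomposition $G = \sigma_+ U + \sigma_- V$, $G' = \sigma_+ U - \sigma_- V$ to reduce immediately to a one-dimensional integral, applies Mills' ratio \emph{inside} the integral, and evaluates the resulting Gaussian convolution exactly by completing the square. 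Both routes land on essentially the same intermediate quantity (a factor $\phi(A)/|A|$ times a tail $\Phi(-|A|\sigma_-)$), but yours is more self-contained---it avoids the interpolation machinery entirely---and in fact produces the slightly sharper bound $\frac{2\sigma_+^2}{|A|}\phi(A)\Phi(-|A|\sigma_-/\sigma_+)$ before you relax $\sigma_+^2 \le 1$ and $\sigma_-/\sigma_+ \ge \sigma_-$ to match the stated inequality. The paper's approach has the virtue of being the ``standard'' move (interpolation is the go-to tool for bivariate orthant probabilities and generalizes mechanically), while yours makes the geometric source of each factor transparent.
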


\begin{proof}
	For $q \in [0, 1]$, define
	\begin{equation*}
		f_A (q) = \P_q \left( G \le A, \ G' \le A \right) = \E_q \left[ \bone_{G \le A} \bone_{G' \le A} \right],
	\end{equation*}
    where the probability and expectation are taken under $(G, G') \sim \normal (0, \begin{bmatrix}
    	1 & q \\
    	q & 1
    \end{bmatrix})$. Using Gaussian interpolation (see, e.g., \cite[Proposition 1.3.2]{talagrand2010mean}), we obtain that
    \begin{equation*}
    	f_A'(q) = \E_q \left[ \delta (G - A) \delta (G' - A) \right] = \phi_q (A, A) = \frac{1}{2 \pi \sqrt{1 - q^2}} \exp \left( - \frac{A^2}{1 + q} \right),
    \end{equation*}
    where $\delta$ refers to Dirac's delta measure and $\phi_q$ denotes the density of $(G, G')$. Since $f_A (0) = \P (G \le A)^2 = \Phi(A)^2$, it follows that
    \begin{align}
    	f_A (q) - \Phi(A)^2 =\, & \int_{0}^{q} \frac{1}{2 \pi \sqrt{1 - s^2}} \exp \left( - \frac{A^2}{1 + s} \right) \d s \\
    	=\, & \frac{1}{2 \pi} \exp \left( - \frac{A^2}{2} \right) \int_{0}^{q} \frac{1}{\sqrt{1 - s^2}} \exp \left( - \frac{(1-s) A^2}{2 (1 + s)} \right) \d s \\
    	\le\, & \frac{1}{2 \pi} \exp \left( - \frac{A^2}{2} \right) \int_{0}^{q} \frac{1}{\sqrt{1 - s}} \exp \left( - \frac{(1-s) A^2}{4} \right) \d s \\    	
    	\stackrel{(i)}{=}\, & \frac{1}{2 \pi \vert A \vert} \exp \left( - \frac{A^2}{2} \right) \int_{A^2 (1-q)}^{A^2} \frac{1}{\sqrt{u}} \exp \left( - \frac{u}{4} \right) \d u \\
    	\le\, & \frac{1}{2 \pi \vert A \vert} \exp \left( - \frac{A^2}{2} \right) \int_{A^2 (1-q)}^{\infty} \frac{1}{\sqrt{u}} \exp \left( - \frac{u}{4} \right) \d u \\
    	\stackrel{(ii)}{=}\, & \frac{2 \sqrt{2}}{\sqrt{2 \pi} \vert A \vert} \exp \left( - \frac{A^2}{2} \right) \int_{\vert A \vert \sqrt{\frac{1-q}{2}} }^{\infty} \frac{1}{\sqrt{2 \pi}} \exp \left( - \frac{t^2}{2} \right)  \d t \\
    	\le\, & 3 \Phi(A) \left( 1 - \Phi \left( \vert A \vert \sqrt{\frac{1-q}{2}} \right) \right)
    \end{align}
    for sufficiently negative $A$, where $(i)$ follows from the change of variable $u = A^2 (1-s)$, $(ii)$ follows from the change of variable $t = \sqrt{u/2}$, and the last line is due to Mills' ratio. Note that
    \begin{equation*}
    	1 - \Phi \left( \vert A \vert \sqrt{\frac{1-q}{2}} \right) = \Phi \left( A \sqrt{\frac{1-q}{2}} \right) \ge \Phi \left( \frac{A}{\sqrt{2}} \right) \gg \Phi(A),
    \end{equation*}
    which further implies that
    \begin{equation}
    	f_A (q) \le \Phi(A)^2 + 3 \Phi(A) \left( 1 - \Phi \left( \vert A \vert \sqrt{\frac{1-q}{2}} \right) \right) \le 4 \Phi(A) \left( 1 - \Phi \left( \vert A \vert \sqrt{\frac{1-q}{2}} \right) \right).
    \end{equation}
    This completes the proof of \cref{lem:jointprob}.
\end{proof}

\subsection{Failure of Stable Algorithms}\label{sec:fail_stable_alg}
In this subsection, we show that the existence of multi-OGP structure implies failure of stable algorithms for the Binary Perceptron model.

We adopt the notation of stability from \cite{gamarnik2022algorithms}. An algorithm $\mathcal{A}$ is a mapping from $\mathbb{R}^{M \times N}$ to the binary cube $\{\pm 1\}^N$. We allow $\mathcal{A}$ to be potentially randomized: we assume there exists an underlying probability space $\left(\Omega, \mathbb{P}_\omega\right)$ such that $\mathcal{A}: \mathbb{R}^{M \times N} \times \Omega \rightarrow \{\pm 1\}^N$. That is, for any $\omega \in \Omega$ and data matrix $X \in \mathbb{R}^{M \times N} ; \mathcal{A}(\cdot, \omega)$ returns a $\theta_{\mathrm{ALG}} \triangleq \mathcal{A}(X, \omega) \in \{\pm 1\}^N$; and we want $\theta_{\mathrm{ALG}}$ to satisfy $X \theta_{\mathrm{ALG}} \geq \kappa \sqrt{N}$.

\begin{defn}[Stable Algorithm]\label{def:stable}
    Fix $\kappa \in \mathbb{R}$, $\alpha<\asat(\kappa)$ and let $M/N \to \alpha$. An algorithm $\mathcal{A}: \mathbb{R}^{M \times N} \times \Omega \rightarrow$ $\{\pm 1\}^N$ is called $\left(\rho, p_f, p_{\mathrm{st}}, f, L\right)$-stable for the Ising Perceptron model, if it satisfies the following for all sufficiently large $N$.
\begin{itemize}
    \item (Success) Let $X \in \mathbb{R}^{M \times N}$ be a random matrix with i.i.d. $\normal(0,1)$ entries. Then,
\[
\mathbb{P}_{(X, \omega)}\left[X \mathcal{A}(X, \omega) \geq \kappa \sqrt{N}\right] \geq 1-p_f.
\]
\item (Stability) Let $X, \overline{X} \in \mathbb{R}^{M \times N}$ be random matrices, each with i.i.d. $\normal(0,1)$ entries such that $\mathbb{E}\left[X_{i j} \overline{X}_{i j}\right]=\rho$ for $1 \leq i \leq M$ and $1 \leq j \leq N$. Then,
\[
\mathbb{P}_{(X, \overline{X}, \omega)}\left[d_H(\mathcal{A}(X, \omega), \mathcal{A}(\overline{X}, \omega)) \leq f+L\|X-\overline{X}\|_F\right] \geq 1-p_{\mathrm{st}}.
\]
\end{itemize}
\end{defn}

Now we state our algorithmic hardness result.
\begin{thm}\label{thm:hardness}
    There exists constants $C_1>0$, $\kappa_1<0$ and $N_0 \in \mathbb{N}$ such that for any $\kappa < \kappa_1$, $\alpha > \frac{C_1 \log^2|\kappa|}{\kappa^2 \Phi(\kappa)}$, $N > N_0$ and $L>0$, there exists constants $\rho, p_f, p_{\mathrm{st}}, C$ such that the following holds. There exists no randomized algorithm $\mathcal{A}: \mathbb{R}^{M \times N} \rightarrow \{\pm 1 \}^N$ that is $(\rho, p_f, p_{\mathrm{st}}, CN, L)$-stable for the Ising Perceptron model in the sense of \cref{def:stable}.
\end{thm}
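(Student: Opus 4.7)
The plan is to combine the multi-OGP structure of \cref{thm:mogp} with a standard contiguous-path argument in the style of \cite{gamarnik2022algorithms}, showing that if a stable algorithm $\mathcal{A}$ existed then one could extract, with non-negligible probability, an $m$-tuple of solutions in $\cS_\kappa(\beta,\eta,m,\alpha,\cI)$ for an appropriate discrete index set $\cI$, contradicting the $\exp(-\Theta(N))$ upper bound from \cref{thm:mogp}. Fix $\beta$, $\eta$, $m$ and the constant $c>0$ supplied by \cref{thm:mogp}. The idea is to choose the interpolation parameter $\rho$ in \cref{def:stable} of the form $\rho=\cos(\tau_0)$ for some small $\tau_0>0$ and then consider the discretized path $X_i(\tau)=\cos(\tau)X_0+\sin(\tau)X_i$ for $\tau\in\cI$, where $\cI\subset[0,\pi/2]$ is a grid of cardinality $|\cI|=\lceil (\pi/2)/\tau_0\rceil$, which can be made $\le\exp(cN)$ by taking $\tau_0$ polynomially small in $N$.

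Next, I would sample $X_0,X_1,\dots,X_m$ i.i.d.\ Gaussian, and run $\mathcal{A}$ on each $X_i(\tau)$, $\tau\in\cI$, with a common random seed $\omega$. By the success guarantee of $\mathcal{A}$ and a union bound over the $m|\cI|\le m\exp(cN)$ instances, all outputs $\theta^{(i,\tau)}:=\mathcal{A}(X_i(\tau),\omega)$ will lie in the corresponding solution sets with probability at least $1-m\exp(cN)p_f$; choosing $p_f\le\tfrac12 m^{-1}\exp(-2cN)$ makes this event overwhelming. Similarly, the stability guarantee applied to the consecutive pairs $(X_i(\tau),X_i(\tau+\tau_0))$, which have per-entry correlation $\cos(\tau_0)=\rho$, gives $d_H(\theta^{(i,\tau)},\theta^{(i,\tau+\tau_0)})\le f+L\|X_i(\tau)-X_i(\tau+\tau_0)\|_F$; the Frobenius increment is $O(\tau_0\sqrt{MN})$ with high probability, so one can force $f+L\|\cdot\|_F\le \eta N/(4m)$ by taking $\tau_0=\Theta(\eta/(Lm\sqrt{\alpha}))$ and $f=o(N)$.

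The heart of the argument is then a discrete intermediate value theorem. For each fixed index $i$, consider the normalized overlaps $q_i(\tau):=N^{-1}\langle\theta^{(1,0)},\theta^{(i,\tau)}\rangle$. At $\tau=0$ we have $\theta^{(i,0)}=\theta^{(1,0)}$ for $i=1$, so $q_1(0)=1$; and for $\tau=\pi/2$ the matrices $X_i(\pi/2)=X_i$ are independent of $X_0$ and hence of $\theta^{(1,0)}$, so (by Hoeffding/McDiarmid applied to the randomness of $X_i$) $q_i(\pi/2)$ concentrates near the overlap between a fixed $\theta^{(1,0)}$ and an independently produced binary string, which is $o(1)$ with high probability. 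Moreover consecutive increments satisfy $|q_i(\tau+\tau_0)-q_i(\tau)|\le 2 d_H(\theta^{(i,\tau)},\theta^{(i,\tau+\tau_0)})/N\le\eta/(2m)$. Consequently, along the trajectory $\tau\mapsto q_1(\tau)$ the overlap must pass through the window $(\beta-\eta/(2m),\beta-\eta/(4m))$, and iterating this argument for $i=2,\dots,m$ one can select $\tau_1,\dots,\tau_m\in\cI$ so that all pairwise overlaps $N^{-1}\langle\theta^{(i,\tau_i)},\theta^{(j,\tau_j)}\rangle$ lie in $(\beta-\eta,\beta)$; this is exactly the selection procedure used in the proof of \cite[Theorem 3.2]{gamarnik2022algorithms} for the symmetric perceptron and is the main technical obstacle of the proof, because it requires simultaneously controlling all $\binom{m}{2}$ overlaps rather than just the ones with $\theta^{(1,0)}$.

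Granted such $(\theta^{(i,\tau_i)})_{i=1}^m$, the tuple lies in $\cS_\kappa(\beta,\eta,m,\alpha,\cI)$ on an event whose probability is at least $1-m\exp(cN)p_f-m\exp(cN)p_{\mathrm{st}}-o(1)$. Choosing $p_f,p_{\mathrm{st}}\le\exp(-3cN)$ (permissible since \cref{def:stable} leaves these as free small parameters) makes this probability $1-o(1)$, contradicting $\P[\cS_\kappa\neq\varnothing]\le\exp(-\Theta(N))$ from \cref{thm:mogp}. The constants $C_1$ and $\kappa_1$ are inherited from \cref{thm:mogp}, and $C$ in the statement corresponds to the allowed additive slack $f=CN$ in the stability bound (with $C$ taken small enough that $f\ll\eta N/m$). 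The two main pitfalls to handle carefully are (i) showing that the overlap trajectory has small enough consecutive jumps uniformly over the exponentially many points of $\cI$ (dealt with by the union bound above, which is why the cardinality of $\cI$ is polynomial in $N$ rather than exponential when $L$ is an absolute constant), and (ii) the simultaneous-overlap extraction mentioned above, which is combinatorial and proceeds by induction on $m$ exactly as in \cite[Section 3.4]{gamarnik2022algorithms}.
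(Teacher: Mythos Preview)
Your high-level plan is right, but the proposal has two substantive gaps and mischaracterizes the selection mechanism in \cite{gamarnik2022algorithms}.

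First, the claim that $q_i(\pi/2)=o(1)$ ``by Hoeffding/McDiarmid'' is unjustified. The output $\theta^{(i,\pi/2)}=\mathcal{A}^*(X_i)$ is not a uniformly random sign vector; it is a \emph{solution} to the perceptron for $X_i$, and nothing prevents the algorithm from systematically producing solutions biased toward some fixed direction. Independence of $\theta^{(i,\pi/2)}$ from $\theta^{(1,0)}$ does not by itself force the overlap to be small. The paper handles this with a separate ``chaos'' statement (\cref{lem:chaos}), proved by the same first-moment computation as \cref{thm:mogp} specialized to $\cI=\{\pi/2\}$: for any $m$-tuple of solutions to \emph{independent} instances, at least one pair has overlap below $\beta-\eta$. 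This is a structural fact about solution spaces, not a concentration bound on the algorithm's output.

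Second, and more importantly, the pairwise-overlap extraction does \emph{not} proceed ``by induction on $m$''. The paper (following \cite{gamarnik2022algorithms}) uses $T$ interpolation trajectories with $T=2^{2^{\Theta(mQ\log Q)}}$, far more than $m$. The chaos lemma plus continuity guarantees only that in every $m$-subset of trajectories, \emph{some} pair enters the window $(\beta-\eta,\beta)$ at \emph{some} time; it does not give a single time at which all $\binom{m}{2}$ overlaps lie there. To upgrade this, one builds a graph on $[T]$ with an edge whenever a pair enters the window, colors each edge by the first such time (one of $Q$ colors), and invokes two-color and then multicolor Ramsey theory (\cref{thm:ramsey1}, \cref{thm:ramsey2}) to extract a monochromatic $K_m$. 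Your scheme with only $m$ trajectories cannot run this argument, and a direct inductive construction fails because different pairs can cross the window at different times. A minor additional issue: the theorem promises \emph{constant} $p_f,p_{\mathrm{st}}$, so your final choice $p_f,p_{\mathrm{st}}\le\exp(-3cN)$ would yield a much weaker statement; the paper takes $|\cI|=Q+1$ and $p_f,p_{\mathrm{st}}=\Theta(1/(QT))$, both constants since $Q$ and $T$ depend only on $\kappa,\alpha,L$.
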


Similar to \cite{gamarnik2022algorithms}, we do not impose any constraints on the running time of the stable algorithm. As long as the algorithm satisfies the criteria outlined in \cref{def:stable}, it falls within the scope of our arguments. Moreover, both the success probability and stability probability are assumed to be constant, meaning the algorithm is not required to guarantee a high probability of success. The definition of a stable algorithm is broad. The algorithm $\mathcal{A}$ can produce solutions that differ by $\Theta(N)$ even when the inputs $X$ and $\overline{X}$ are nearly identical.

Most of our proofs are similar to those presented in \cite{gamarnik2022algorithms}, with necessary changes in notation. For the sake of completeness, we outline the key steps and state a few useful lemmas and theorems. The majority of the detailed proofs will be omitted.

The general strategy involves assuming the existence of a stable algorithm $\cA$. This assumption implies the presence of a set of solutions exhibiting the prescribed multi-overlap structure, leading to a contradiction.
\begin{itemize}
    \item \cref{lem:alg_determinimistic} reduces the randomized stable algorithm $\cA$ to a deterministic stable algorithm $\cA^*$.
    \item \cref{lem:chaos} focuses on the \emph{chaos} event. The \emph{chaos} event says that, for any $m$-tuples $\theta^{(i)} \in \{\pm 1 \}^N$, $1\leq i \leq m$, where $X_i \theta^{(i)} \geq \kappa \sqrt{N}$ for i.i.d. random matrices $X_i \in \mathbb{R}^{M \times N}$ each with i.i.d. $\normal (0,1)$ coordinates, there exists $\beta'$ such that with high probability for any such $m$-tuple, there exists a pair $1\leq i < j \leq m$ such that $\<\theta^{(i)},\theta^{(j)}\> \leq \beta' N$, where $\beta'< \beta-\eta$. We establish that the \emph{chaos} event happens with exponentially small probability below the m-OGP threshold.
    \item The next step is to construct interpolation trajectories and show that they are smooth, for some integers $T$ and $Q$. We construct ($T+1$) i.i.d. random matrices $X_i \in \mathbb{R}^{M \times N}$ where $0\leq i \leq T$. Then for an equal partition of the interval $[0,\pi/2]$ such that $0 = \tau_0 < \tau_1 < \cdots < \tau_Q = \pi/2$, we define
    \begin{align*}
        X_i(\tau_k) = \cos(\tau_k) X_0 + \sin(\tau_k) X_i, \quad 1\leq i \leq T, \quad 0 \leq k \leq Q.
    \end{align*}
    And further define $\theta_i(\tau_k) = \cA^*(X_i(\tau_k)).$ \cref{prop:trajectory_smooth} shows that since $\cA^*$ is stable, the overlaps for each pair of solutions evolve smoothly.
    \item \cref{lem:alg_success} shows that the algorithm is successful along each trajectory and time stamps using a union bound.
    \item Combining smoothness of the trajectories and the chaos property, \cref{prop:combine} shows that for every $A \subset [T]$ with $|A| = m$, there exists $1\leq  i_A < j_A \leq T$ and a time $\tau_A$ such that 
    \begin{align*}
        \beta-\eta < N^{-1}\inp{\theta_{i_A}(\tau_A),\theta_{j_A}(\tau_A)} < \beta.
    \end{align*}
    \item We then construct a graph $G = ([T],E)$. Each vertex $i $ of $G$ corresponds to the $i$-th interpolation trajectory. For any $1 \leq i < j \leq T$, $(i,j) \in E$ if and only if there is a time $t \in [Q]$ such that $\beta-\eta < N^{-1}\inp{\theta_{i}(\tau_t),\theta_{j}(\tau_t)} < \beta$. We then color each edge $(i,j) \in E$ with the first time $t \in [Q]$ such that $\beta-\eta < N^{-1}\inp{\theta_{i}(\tau_t),\theta_{j}(\tau_t)} < \beta$.
    \item Using \cref{thm:ramsey1} and \cref{thm:ramsey2}, we apply two-colored and multi-colored Ramsey theory to get a monochromatic clique $K_m$, as shown in \cref{prop:monochromatic}.
    \item Finally, we combine all the arguments and construct the desired forbidden configuration, which had been ruled out by our multi-OGP computation. This implies failure of the stable algorithm.
\end{itemize}

\begin{proof}[Proof of \cref{thm:hardness}]

Let $\kappa< \kappa_1$ and $\frac{C_1\log^2|\kappa|}{\kappa^2 \Phi(\kappa)}<\alpha<\asat$, where the constants $\kappa_1$ and $C_1$ are the same as in the proof of \cref{thm:mogp}. Now suppose a randomized algorithm $\mathcal{A}: \mathbb{R}^{M \times N} \rightarrow \{\pm 1\}^N$ exists that is $(\rho, p_f, p_{\mathrm{st}}, CN, L)$-stable. Following the above outline, we will derive a contradiction in the remainder of the proof.

\paragraph{Parameter Choice.} We set $0 < \eta< \beta = 1-\frac{9 \log^2|\kappa|}{\kappa^2}<1$ and assume $\beta-\eta > 1-\frac{10 \log^2|\kappa|}{\kappa^2}$. We can check that the m-OGP statement in \cref{thm:mogp} holds. As in Equations (61), (62) and (63) in \cite{gamarnik2022algorithms}, we set
\begin{align}\label{eq:parameter1}
    &f = CN, \quad C= \frac{\eta^2}{1600}, \quad Q = \frac{4800L\pi}{\eta^2}\sqrt{\alpha}, \quad T = \exp_2(2^{4mQ\log_2 Q}), \\ \label{eq:parameter2}
    &p_f = \frac{1}{9(Q+1)T}, \quad p_{\mathrm{st}} = \frac{1}{9Q(T+1)}, \quad \rho = \cos\left( \frac{\pi}{2Q} \right). 
\end{align}

\paragraph{Reduction to Deterministic Algorithms.}
\begin{lem}[Lemma 6.11 in \cite{gamarnik2022algorithms}]\label{lem:alg_determinimistic}
    Let $\kappa \in \mathbb{R}, \alpha<\asat(\kappa)$, and $M/N \to \alpha$. Suppose that $\mathcal{A}: \mathbb{R}^{M \times N} \times \Omega \rightarrow \{\pm 1\}^N$ is a randomized algorithm that is $\left(\rho, p_f, p_{\mathrm{st}}, f, L\right)$-stable for the Ising perceptron model. Then, there exists a deterministic algorithm $\mathcal{A}^*: \mathbb{R}^{M \times N} \rightarrow \{\pm 1\}^N$ that is $\left(\rho, 3 p_f, 3 p_{\mathrm{st}}, f, L\right)-$ stable.
\end{lem}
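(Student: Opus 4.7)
The plan is a direct derandomization by averaging: identify a single realization of the internal randomness $\omega \in \Omega$ that simultaneously witnesses both the success and stability guarantees (with slightly worse constants), and hard-code it into $\mathcal{A}$ to obtain $\mathcal{A}^*$. First I would define two auxiliary $\omega$-measurable functions that capture how bad each realization is. Let
\[
A(\omega) := \mathbb{P}_X\bigl[X\,\mathcal{A}(X,\omega) \not\ge \kappa\sqrt{N}\bigr], \qquad B(\omega) := \mathbb{P}_{(X,\overline X)}\bigl[d_H\bigl(\mathcal{A}(X,\omega),\mathcal{A}(\overline X,\omega)\bigr) > f + L\|X-\overline X\|_F\bigr],
\]
where in the second definition we crucially use the same $\omega$ for both invocations, exactly as in \cref{def:stable}. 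By Fubini and the stability hypothesis on $\mathcal{A}$, one has $\mathbb{E}_\omega[A(\omega)] \le p_f$ and $\mathbb{E}_\omega[B(\omega)] \le p_{\mathrm{st}}$.

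The second step is a standard Markov plus union bound argument. Markov's inequality yields $\mathbb{P}_\omega[A(\omega) > 3p_f] < 1/3$ and $\mathbb{P}_\omega[B(\omega) > 3p_{\mathrm{st}}] < 1/3$, so the union bound gives $\mathbb{P}_\omega[\{A(\omega) \le 3p_f\}\cap\{B(\omega) \le 3p_{\mathrm{st}}\}] \ge 1/3 > 0$. In particular, there exists some $\omega^* \in \Omega$ for which both inequalities hold. I would then define $\mathcal{A}^*(X) := \mathcal{A}(X,\omega^*)$, which is deterministic, and read off that its failure probability over $X$ is at most $3p_f$ and its instability probability over $(X,\overline X)$ is at most $3p_{\mathrm{st}}$, giving $(\rho,3p_f,3p_{\mathrm{st}},f,L)$-stability.

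This is essentially a bookkeeping argument, not a technical one; the only subtlety worth flagging is that the stability clause in \cref{def:stable} is phrased with a shared $\omega$ on the two instances $X$ and $\overline X$, so defining the two bad-event probabilities $A(\omega)$ and $B(\omega)$ as functions of the same $\omega$ is consistent with the definition, and fixing $\omega = \omega^*$ preserves the coupling structure. No obstacle is anticipated beyond verifying measurability of $A$ and $B$, which follows from standard product-measurability considerations on $(\R^{M\times N})^{\otimes 2}\times\Omega$.
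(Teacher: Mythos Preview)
Your proposal is correct and is the standard derandomization argument. The paper itself does not give a proof of this lemma; it simply cites it as \cite[Lemma~6.11]{gamarnik2022algorithms} and omits the details, so there is nothing to compare against here beyond noting that your Markov-plus-union-bound approach is exactly the intended one. One cosmetic point: Markov's inequality gives $\mathbb{P}_\omega[A(\omega) > 3p_f] \le 1/3$ (not strictly $< 1/3$), and likewise for $B$, but the union bound still leaves probability $\ge 1/3 > 0$ for the good event, so the conclusion is unaffected.
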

For the remainder of this discussion, we focus on the deterministic algorithm $\mathcal{A}^*: \mathbb{R}^{M \times N} \rightarrow \{\pm 1\}^N$ as described in \cref{lem:alg_determinimistic}, which is $\left(\rho, 3 p_f, 3 p_{\mathrm{st}}, f, L\right)-$ stable.

\paragraph{Chaos Event.} The chaos event describes the situation when $m$-tuples $\theta^{(i)}$ are solutions to $m$ i.i.d. random matrices $X_i$. The lemma below shows that it does not happen with high probability.
\begin{lem}\label{lem:chaos}
    There exist constants $C_1>0$ and $\kappa_1$ such that for any $\kappa<\kappa_1$ and $\alpha > \frac{C_1\log^2|\kappa|}{\kappa^2 \Phi(\kappa)}$, there exists $m \in \mathbb{N}$ such that the following holds. We have
    \begin{align*}
        \mathbb{P}\left [\cS_\kappa\left (1-\frac{10\log^2 |\kappa|}{\kappa^2}, \frac{\log^2 |\kappa|}{\kappa^2}, m, \alpha, \{\pi/2\}\right ) \neq \varnothing\right ] = \exp(-\Theta(N)).
    \end{align*}
\end{lem}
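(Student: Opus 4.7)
The statement is essentially a direct first-moment consequence of the same calculation behind \cref{thm:mogp}, specialized to the degenerate index set $\cI = \{\pi/2\}$. The crucial simplification in this regime is that $\cos(\pi/2) = 0$ and $\sin(\pi/2) = 1$, so $X_i(\pi/2) = X_i$: the $m$ constraint matrices become mutually independent and the correlation-correction term (via \cref{lem:jointprob}) that appeared in the general proof drops out entirely. For a fixed tuple $(\theta^{(1)},\ldots,\theta^{(m)})$, the joint probability that all $m$ constraints are satisfied factorizes exactly as $(1-\Phi(\kappa))^{mM}$.

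My plan is to apply Markov's inequality,
\begin{equation*}
\mathbb{P}\bigl[\cS_\kappa \ne \varnothing\bigr] \;\le\; \mathbb{E}\bigl|\cS_\kappa\bigr| \;\le\; M(m,\beta,\eta)\cdot (1-\Phi(\kappa))^{mM},
\end{equation*}
where $M(m,\beta,\eta)$ counts $m$-tuples in $\{\pm 1\}^{Nm}$ with all pairwise overlaps in $(\beta-\eta,\beta)$. Reusing the Stirling bound from \cref{thm:mogp},
\begin{equation*}
M(m,\beta,\eta) \;\le\; \exp\!\Bigl(N\log 2 + (m-1)N\,\Ent\!\bigl(\tfrac{1+\beta-\eta}{2}\bigr) + O(\log N)\Bigr),
\end{equation*}
and using $\log(1-\Phi(\kappa)) = -\Phi(\kappa)(1+o_\kappa(1))$ as $\kappa\to-\infty$, I obtain
\begin{equation*}
\tfrac{1}{N}\log\mathbb{E}\bigl|\cS_\kappa\bigr| \;\le\; \log 2 \;+\; (m-1)\,\Ent\!\bigl(\tfrac{1+\beta-\eta}{2}\bigr) \;-\; \alpha m\,\Phi(\kappa)\,(1+o_\kappa(1)) \;+\; o_N(1).
\end{equation*}

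The remaining step is to verify that this bound is strictly negative for the prescribed choice $\beta = 1 - 10\log^2|\kappa|/\kappa^2$, $\eta = \log^2|\kappa|/\kappa^2$, and for $m$ chosen as in the proof of \cref{thm:mogp} (e.g., $m = \lfloor\kappa^2\rfloor$), once $\alpha$ is bounded below by $C_1\log^2|\kappa|/(\kappa^2\Phi(\kappa))$ with $C_1$ sufficiently large. Setting $\epsilon = (1-(\beta-\eta))/2$ and expanding $\Ent(1-\epsilon) = \epsilon(1 + \log(1/\epsilon))(1+o_\kappa(1))$ controls the entropy contribution, and one checks that the probability contribution $-\alpha m\Phi(\kappa)$ dominates it, yielding $\mathbb{E}|\cS_\kappa| \le \exp(-\Theta(N))$. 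This is the same endgame as in the proof of \cref{thm:mogp}, with the simplifications that the $m\log|\cI|$ term vanishes and the inclusion-exclusion bound of \cref{lem:jointprob} is replaced by a clean product.

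The main ``hard part'' is not a conceptual one but a bookkeeping one: one has to track constants carefully so that the same $C_1$ and $\kappa_1$ stated in \cref{thm:mogp} simultaneously suffice for both the general m-OGP statement and for this specific $(\beta,\eta)$. Beyond this careful choice of parameters, the argument introduces no new probabilistic or combinatorial ingredient over what is already in \cref{thm:mogp}.
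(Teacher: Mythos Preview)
Your proposal is correct and takes essentially the same approach as the paper, whose proof consists solely of the sentence ``The proof is the same as that of \cref{thm:mogp} so we omit it here.'' You have correctly identified the two simplifications that arise when $\cI=\{\pi/2\}$: the factor $|\cI|^m$ disappears, and the inclusion--exclusion bound via \cref{lem:jointprob} can be replaced by the exact factorization $(1-\Phi(\kappa))^{mM}$ since the matrices $X_i(\pi/2)=X_i$ are independent.
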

\begin{proof}
    The proof is the same as that of \cref{thm:mogp} so we omit it here.
\end{proof}
In particular, this lemma implies that for any $m$-tuples $\theta^{(i)} \in \{\pm 1 \}^N$ where $1\leq i \leq m$ with $X_i \theta^{(i)} \geq \kappa \sqrt{N}$, there exists $1\leq i < j \leq m$ such that $N^{-1} \inp{\theta^{(i)},\theta^{(j)}} \leq 1-\frac{10\log^2 |\kappa|}{\kappa^2}$.

\paragraph{Interpolation Paths are Smooth.} As above, we define $0 = \tau_0 < \tau_1< \cdots < \tau_Q = \pi/2$ as the equal partition of the interval $[0,\pi/2]$ and define $X_i(\tau) = \cos(\tau) X_0 + \sin(\tau) X_i$ for any $1\leq i \leq T$ and $\tau \in [0, \pi/2]$, where $X_0, \cdots, X_T$ are i.i.d. random matrices with i.i.d. $\normal(0,1)$ entries. Define $\theta_i(\tau_k) = \cA^*(X_i(\tau_k))$.
\begin{prop}[Proposition 6.13 in \cite{gamarnik2022algorithms}]\label{prop:trajectory_smooth}
We define 
\begin{align*}
    \cE_\mathrm{St} = \bigcap_{1\leq i < j \leq T} \bigcap_{0\leq k \leq Q-1} \left\{ \left|\inp{\theta_i(\tau_k), \theta_j(\tau_k)}-\inp{\theta_i(\tau_{k+1}), \theta_j(\tau_{k+1})} \right| \leq 4\sqrt{C}+4 \sqrt{3L\pi Q^{-1}} \alpha^{1/4}\right \},
\end{align*}
for $C$ defined in \cref{eq:parameter1}. Then
\begin{align*}
    \mathbb{P}[\cE_\mathrm{St}] \geq 1-3(T+1)Qp_{\mathrm{st}} -(T+1) \exp(-\Theta(N^2)).
\end{align*}
\end{prop}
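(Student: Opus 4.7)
The plan is to chain together two standard ingredients: stability of the deterministic algorithm $\cA^*$ from \cref{lem:alg_determinimistic}, applied pairwise to the interpolation inputs, and Gaussian Frobenius-norm concentration for the underlying matrices $X_0, \ldots, X_T$. Fix $i \in [T]$ and $k \in \{0, \ldots, Q-1\}$: the pair $(X_i(\tau_k), X_i(\tau_{k+1}))$ is a Gaussian coupling whose paired entries have covariance $\cos(\tau_{k+1} - \tau_k) = \cos(\pi/(2Q)) = \rho$, exactly matching \cref{def:stable}. Hence the $(\rho, 3p_f, 3p_{\mathrm{st}}, CN, L)$-stability of $\cA^*$ yields
\[
d_H(\theta_i(\tau_k), \theta_i(\tau_{k+1})) \leq CN + L \, \| X_i(\tau_k) - X_i(\tau_{k+1}) \|_F
\]
except with probability $3 p_{\mathrm{st}}$. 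Union-bounding over the $(T+1)Q$ such $(i, k)$-pairs produces the $3(T+1)Q p_{\mathrm{st}}$ term in the claim.

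To bound the Frobenius norm, I would use the decomposition
\[
X_i(\tau_k) - X_i(\tau_{k+1}) = (\cos\tau_k - \cos\tau_{k+1})X_0 + (\sin\tau_k - \sin\tau_{k+1})X_i,
\]
together with $|\tau_{k+1} - \tau_k| = \pi/(2Q)$ and the mean value theorem, to get $\| X_i(\tau_k) - X_i(\tau_{k+1}) \|_F \leq (\pi/(2Q))(\|X_0\|_F + \|X_i\|_F)$. Because $\|X_i\|_F^2$ is $\chi^2_{MN}$-distributed with $MN = \Theta(N^2)$, Laurent--Massart concentration gives $\|X_i\|_F \leq \sqrt{2\alpha}\,N$ except with probability $\exp(-\Theta(N^2))$; a union bound over $i \in \{0, 1, \ldots, T\}$ accounts for the $(T+1)\exp(-\Theta(N^2))$ term. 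On the intersection, $\| X_i(\tau_k) - X_i(\tau_{k+1}) \|_F \leq \pi\sqrt{2\alpha}\,N/Q$ holds simultaneously for every $i$ and $k$.

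To convert a Hamming change into an overlap change, I use the elementary bound $|\langle \theta, \psi \rangle - \langle \theta', \psi \rangle| \leq 2 d_H(\theta, \theta')$ for $\theta, \theta', \psi \in \{\pm 1\}^N$, which combined with adding and subtracting a mixed inner product yields
\[
\bigl| \langle \theta_i(\tau_k), \theta_j(\tau_k)\rangle - \langle \theta_i(\tau_{k+1}), \theta_j(\tau_{k+1})\rangle \bigr| \leq 2 d_H(\theta_i(\tau_k), \theta_i(\tau_{k+1})) + 2 d_H(\theta_j(\tau_k), \theta_j(\tau_{k+1})),
\]
at most $4CN + 4L\pi\sqrt{2\alpha}\,N/Q$ on the good event. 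The final (and only fiddly) step is to verify that this is dominated by the stated $(4\sqrt{C} + 4\sqrt{3L\pi/Q}\,\alpha^{1/4})N$ under the parameter choices in \cref{eq:parameter1,eq:parameter2}: since $C = \eta^2/1600 < 1$ we have $4C \le 4\sqrt{C}$, and $4L\pi\sqrt{2\alpha}/Q \le 4\sqrt{3L\pi/Q}\,\alpha^{1/4}$ reduces upon squaring to $Q \ge 2L\pi\sqrt{\alpha}/3$, which is trivially satisfied for $Q = 4800L\pi\sqrt{\alpha}/\eta^2$. I do not anticipate a conceptual obstacle---the square-root expressions in the target bound are substantially looser than what the stability/concentration combination actually delivers---so the remaining work is purely bookkeeping of union-bound counts and parameter matching.
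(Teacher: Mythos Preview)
Your argument is correct and complete. The paper itself omits the proof, deferring to \cite{gamarnik2022algorithms}, but the form of the target bound---with $\sqrt{C}$ and $\alpha^{1/4}$ rather than $C$ and $\sqrt{\alpha}$---indicates that the original argument converts Hamming distance to overlap via Cauchy--Schwarz, namely $N^{-1}|\langle \theta-\theta',\psi\rangle| \le 2\sqrt{d_H(\theta,\theta')/N}$, which after inserting $d_H \le CN + L\|\cdot\|_F$ and $\sqrt{a+b}\le\sqrt{a}+\sqrt{b}$ produces the stated square-root expressions directly. You instead use the sharper linear bound $|\langle \theta-\theta',\psi\rangle|\le 2d_H(\theta,\theta')$, obtain $4C + 4L\pi\sqrt{2\alpha}/Q$ for the normalized overlap increment, and then separately verify that this sits below the looser square-root target under the parameter choices of \cref{eq:parameter1}. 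Both routes are valid; yours actually yields a tighter raw estimate but requires the extra parameter-matching step at the end, whereas the Cauchy--Schwarz route lands on the stated form without that check. The probability accounting (union bound over at most $(T+1)Q$ stability events and $T+1$ Frobenius-concentration events) is identical in either case.
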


\paragraph{The Algorithm is Successful Along Each Trajectory.}
\begin{lem}[Lemma 6.14 in \cite{gamarnik2022algorithms}]\label{lem:alg_success}
    We define
    \begin{align*}
        \cE_{\mathrm{Suc}} = \bigcap_{1\leq i \leq T} \bigcap_{0\leq k \leq Q}\left\{ X_i(\tau_k) \theta_i(\tau_k) \geq \kappa \sqrt{N} \right\}.
    \end{align*}
    Then we have 
    \begin{align*}
        \mathbb{P}[\cE_{\mathrm{Suc}}] \geq 1-3T(Q+1)p_f.   \end{align*}
\end{lem}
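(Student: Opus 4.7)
The plan is to prove this by a routine distributional identification followed by a union bound, leveraging the success criterion of the stability definition applied to $\mathcal{A}^*$.

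First, I would fix a pair $(i,k)$ with $1 \le i \le T$ and $0 \le k \le Q$, and observe that the matrix $X_i(\tau_k) = \cos(\tau_k) X_0 + \sin(\tau_k) X_i$ has the same distribution as a single $M \times N$ matrix with i.i.d.\ $\normal(0,1)$ entries. This follows because $X_0$ and $X_i$ are independent with i.i.d.\ $\normal(0,1)$ entries, and $\cos^2(\tau_k) + \sin^2(\tau_k) = 1$, so each entry of $X_i(\tau_k)$ is a sum of two independent centered Gaussians with variances summing to one. In particular, the marginal law of $X_i(\tau_k)$ is exactly the law to which the success guarantee in \cref{def:stable} applies.

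Next, I would invoke \cref{lem:alg_determinimistic}, which guarantees that the deterministic algorithm $\mathcal{A}^*$ is $(\rho, 3p_f, 3p_{\mathrm{st}}, f, L)$-stable. In particular, by the success clause,
\[
\mathbb{P}\left[X_i(\tau_k)\, \theta_i(\tau_k) \geq \kappa \sqrt{N}\right] = \mathbb{P}\left[X_i(\tau_k)\, \mathcal{A}^*(X_i(\tau_k)) \geq \kappa \sqrt{N}\right] \geq 1 - 3 p_f
\]
for each fixed $(i,k)$.

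Finally, a union bound over the $T(Q+1)$ pairs $(i,k)$ with $1\le i \le T$ and $0\le k \le Q$ yields
\[
\mathbb{P}[\cE_{\mathrm{Suc}}] \geq 1 - T(Q+1) \cdot 3p_f = 1 - 3T(Q+1)p_f,
\]
which is the claimed bound. There is no real obstacle here: the only things to be careful about are (i) correctly invoking the inflated failure probability $3p_f$ coming from the derandomization in \cref{lem:alg_determinimistic} rather than the original $p_f$ associated with $\mathcal{A}$, and (ii) noting that we only need the marginal distribution of each $X_i(\tau_k)$ (not their joint law) for the success clause, so the correlations between different pairs $(i,k)$ are irrelevant for this step.
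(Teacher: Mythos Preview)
Your proposal is correct and follows exactly the intended approach: identify the marginal law of each $X_i(\tau_k)$ as i.i.d.\ standard Gaussian, apply the success clause of \cref{def:stable} for the derandomized algorithm $\mathcal{A}^*$ with failure probability $3p_f$, and take a union bound over the $T(Q+1)$ pairs. The paper itself omits the proof and simply cites \cite{gamarnik2022algorithms}, where the argument is precisely this.
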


\paragraph{Combine the Arguments.}
For any subset $A \subset [T]$ with $|A| = m$, we define
\begin{align*}
    \cE_A = \left\{ \exists \theta^{(i)} \in \{\pm 1\}^N , i \in A: X_i(\pi / 2)\theta^{(i)}\geq \kappa \sqrt{N}, \beta-\eta \leq N^{-1} \inp{\theta^{(i)}, \theta^{(j)}} \leq \beta, i,j \in A, i \neq j\right\}.
\end{align*}
In other words, $\cE_A$ is exactly the chaos event in \cref{lem:chaos} restricted to $A \subset [T]$. So \cref{lem:chaos} implies that $\mathbb{P}[\cE_A]\geq \exp(-\Theta(N))$. Taking a union bound over $A \subset [T]$, we have that 
\begin{align*}
    \mathbb{P}[\cE_{\mathrm{Ch}}]:= \mathbb{P} \left[ \bigcap_{A \subset [T], |A| = m}  \cE_A^c \right ] \geq 1- \binom{T}{m} \exp(-\Theta(N)) = 1- \exp(-\Theta(N)).
\end{align*}
Following the same argument as in \cite{gamarnik2022algorithms}, we further define
\begin{align*}
    \cF = \cE_{\mathrm{St}} \cap \cE_{\mathrm{Suc}} \cap \cE_{\mathrm{Ch}}.
\end{align*}
Then
\begin{align*}
    \mathbb{P}[\mathcal{F}] & \geq 1-\mathbb{P}\left[\mathcal{E}_{\mathrm{St}}^c\right]-\mathbb{P}\left[\mathcal{E}_{\mathrm{Suc}}^c\right]-\mathbb{P}\left[\mathcal{E}_{\mathrm{Ch}}^c\right] \\ & \geq 1-3(T+1) Q p_{\mathrm{st}}-(T+1) \exp(-\Theta\left(n^2\right))-3 T(Q+1) p_f-\exp(-\Theta(N)) \\ & \geq \frac{1}{3}-\exp (-\Theta(N)).
\end{align*}
Condition on $\cF$ and with the choice of $C$ and $Q$, we have that
\begin{align*}
    |N^{-1}\inp{\theta_{i}(\tau_k),\theta_{j}(\tau_k)}-N^{-1}\inp{\theta_{i}(\tau_{k+1}),\theta_{j}(\tau_{k+1})}| \leq \frac{\eta}{5},
\end{align*}
for any $1 \leq i < j \leq T$ and $0 \leq k \leq Q-1$.
\begin{prop}[Proposition 6.15 in \cite{gamarnik2022algorithms}]\label{prop:combine}
    For every $A \subset [T]$ with $|A| = m$, there exists $1\leq i_A < j_A \leq m$ and $\tau_A\in\{ \tau_1, \cdots, \tau_Q\}$ such that for $\delta = \eta/100$, we have
    \begin{align*}
        N^{-1}\inp{\theta_{i_A}(\tau_A),\theta_{j_A}(\tau_A)} \in (\beta-\eta+3\delta, \beta-3 \delta) \subsetneq (\beta-\eta, \beta).
    \end{align*}
\end{prop}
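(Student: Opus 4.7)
The plan is to run a discrete intermediate value argument on the overlap trajectory of a carefully chosen pair in $A$. Fix $A \subset [T]$ with $|A|=m$ and work on the high-probability event $\cF = \cE_{\mathrm{St}} \cap \cE_{\mathrm{Suc}} \cap \cE_{\mathrm{Ch}}$. The boundary behavior is clear. At $\tau_0 = 0$, the matrices $X_i(\tau_0) = X_0$ coincide, so the deterministic algorithm $\cA^*$ returns the same vector $\theta_i(0) = \cA^*(X_0)$ for every $i \in A$, giving $N^{-1}\langle \theta_i(0),\theta_j(0)\rangle = 1$ for all pairs $i \neq j$. At $\tau_Q = \pi/2$ the matrices $X_1,\ldots,X_T$ are i.i.d.; event $\cE_{\mathrm{Suc}}$ makes every $\theta_i(\pi/2)$ a $\kappa$-margin solution for $X_i$, and the chaos statement following \cref{lem:chaos} then supplies indices $i_A < j_A$ in $A$ such that
\[
N^{-1}\langle \theta_{i_A}(\pi/2),\theta_{j_A}(\pi/2)\rangle \;\le\; 1-\tfrac{10\log^2|\kappa|}{\kappa^2} \;<\; \beta - \eta.
\]

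Next I would apply a discrete intermediate value argument to the sequence $f_k := N^{-1}\langle \theta_{i_A}(\tau_k),\theta_{j_A}(\tau_k)\rangle$ for $k = 0,\ldots,Q$. The smoothness event $\cE_{\mathrm{St}}$, together with the parameter choices $C = \eta^2/1600$ and $Q = 4800L\pi\alpha^{1/2}/\eta^2$ from \cref{eq:parameter1}, yields the per-step bound $|f_{k+1}-f_k| \le \eta/5 = 20\delta$. Define $k^* := \min\{k \ge 1 : f_k < \beta - 3\delta\}$; this is well-defined because $f_0 = 1 > \beta - 3\delta$ while $f_Q < \beta - \eta < \beta - 3\delta$. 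By minimality $f_{k^*-1} \ge \beta - 3\delta$, so a single step gives $f_{k^*} \ge \beta - 23\delta$. Since $\delta = \eta/100$, the inequality $\beta - 23\delta > \beta - \eta + 3\delta$ holds with room to spare, so $f_{k^*} \in (\beta-\eta+3\delta,\,\beta-3\delta)$, and $\tau_A := \tau_{k^*} \in \{\tau_1,\ldots,\tau_Q\}$ together with the pair $(i_A,j_A)$ realize the claim.

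The main obstacle is conceptual rather than computational: one needs to justify the stronger-looking consequence of \cref{lem:chaos} that \emph{some} pair at $\tau_Q$ has overlap at most $1 - 10\log^2|\kappa|/\kappa^2$, rather than merely ``not in the narrow window $(1-\tfrac{11\log^2|\kappa|}{\kappa^2},1-\tfrac{10\log^2|\kappa|}{\kappa^2})$''. One way is to iterate the first-moment argument across successively higher overlap thresholds in the manner of \cite{gamarnik2022algorithms}; another is to leverage the independence of the $X_i$ at $\tau_Q$ directly. Once that point is in hand, the remaining work is purely deterministic constant bookkeeping under the parameter choices \cref{eq:parameter1}--\cref{eq:parameter2}.
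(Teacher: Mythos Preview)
Your proposal is correct and follows the same approach as the paper (which itself defers to \cite{gamarnik2022algorithms} for this proposition): the discrete intermediate-value argument on the overlap trajectory of a pair selected via the chaos event, with the per-step bound $\eta/5$ supplied by $\cE_{\mathrm{St}}$ and the parameter choices in \cref{eq:parameter1}. Your arithmetic with $\delta=\eta/100$ is clean, and you correctly flag the one genuine subtlety—the chaos statement must guarantee some pair has overlap \emph{below} $\beta-\eta$ at $\tau_Q$, not merely outside a narrow window—which the paper asserts informally after \cref{lem:chaos} and which is handled in \cite{gamarnik2022algorithms} exactly as you suggest.
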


\paragraph{Apply Ramsey Theory.}
We firstly cite two useful theorems from the two-colored and multi-colored Ramsey theory.

\begin{thm}[Theorem 6.9 in \cite{gamarnik2022algorithms}]\label{thm:ramsey1}
    Let $k, \ell \geq 2$ be integers; and $R(k, \ell)$ denotes the smallest $n \in \mathbb{N}$ such that any red/blue (edge) coloring of $K_n$ necessarily contains either a red $K_k$ or a blue $K_{\ell}$. Then,
\[
R(k, \ell) \leq\binom{ k+\ell-2}{k-1}=\binom{k+\ell-2}{\ell-1} .
\]
In the special case where $k=\ell=M \in \mathbb{N}$, we thus have
\[
R(M, M) \leq\binom{ 2 M-2}{M-1}.
\]
\end{thm}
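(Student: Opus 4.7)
\medskip

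\noindent\textbf{Proof proposal.} This is the classical Erdős–Szekeres bound on Ramsey numbers, so my plan is the standard induction on $k+\ell$. The base cases are $R(2,\ell)\le \ell$ (any red edge in a coloring of $K_\ell$ gives a red $K_2$, and otherwise all edges are blue, producing a blue $K_\ell$) and the symmetric $R(k,2)\le k$. These match the claimed formula since $\binom{\ell}{1}=\ell$ and $\binom{k}{k-1}=k$.

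The inductive step rests on the Ramsey recurrence
\[
R(k,\ell)\;\le\; R(k-1,\ell)+R(k,\ell-1).
\]
To prove it, I would take an arbitrary red/blue edge coloring of $K_n$ on $n:=R(k-1,\ell)+R(k,\ell-1)$ vertices, pick any vertex $v$, and apply pigeonhole to the $n-1$ edges at $v$: either $v$ has at least $R(k-1,\ell)$ red neighbors or at least $R(k,\ell-1)$ blue neighbors. In the first case, the induced coloring on the red-neighborhood contains by induction either a red $K_{k-1}$ (which together with $v$ yields a red $K_k$) or a blue $K_\ell$; the second case is symmetric.

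To close the induction, combine the recurrence with Pascal's identity
\[
\binom{k+\ell-2}{k-1}\;=\;\binom{k+\ell-3}{k-2}\;+\;\binom{k+\ell-3}{k-1},
\]
whose right-hand terms are precisely the inductive upper bounds for $R(k-1,\ell)$ and $R(k,\ell-1)$. The second equality in the statement, $\binom{k+\ell-2}{k-1}=\binom{k+\ell-2}{\ell-1}$, is just the symmetry $\binom{n}{r}=\binom{n}{n-r}$ with $n=k+\ell-2$ and $r=k-1$, and the diagonal specialization $R(M,M)\le\binom{2M-2}{M-1}$ is immediate. There is no genuine obstacle in this argument; the only points requiring care are verifying the base cases against the binomial formula and confirming that both branches of the pigeonhole step assemble correctly via Pascal's rule.
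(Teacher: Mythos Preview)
Your argument is correct and is exactly the classical Erd\H{o}s--Szekeres induction. The paper does not actually supply a proof: the theorem is stated as a citation (Theorem~6.9 in \cite{gamarnik2022algorithms}) and invoked as a black box, so there is no ``paper's own proof'' to compare against. Your write-up would serve perfectly well as a self-contained justification if one were desired.
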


\begin{thm}[Theorem 6.10 in \cite{gamarnik2022algorithms}, \cite{erdos1935combinatorial}]\label{thm:ramsey2}
    Let $q, m \in \mathbb{N}$. Denote by $R_q(m)$ the smallest $n \in \mathbb{N}$ such that any $q$-coloring of the edges of $K_n$ necessarily contains a monochromatic $K_m$. Then,
\[
R_q(m) \leq q^{q m}.
\]
\end{thm}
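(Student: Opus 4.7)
The plan is to establish the bound by an iterative pigeonhole construction that is standard for multi-colored Ramsey problems. Let $n = q^{qm}$ and fix any $q$-coloring of the edges of $K_n$. I will build a sequence of vertices $v_1, v_2, \ldots, v_{qm}$ together with nested subsets $V(K_n) = S_0 \supseteq S_1 \supseteq \cdots \supseteq S_{qm}$ and colors $c_1, \ldots, c_{qm} \in [q]$, with the property that $v_i \in S_{i-1}$ and every edge from $v_i$ to a vertex in $S_i$ has color $c_i$.

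To carry out the construction inductively, at stage $i$ I pick any $v_i \in S_{i-1}$ and look at its edges to $S_{i-1} \setminus \{v_i\}$. Each is colored with one of $q$ colors, so by pigeonhole there is a color $c_i$ achieved by at least a $1/q$ fraction of these edges; I let $S_i$ be the corresponding set of endpoints, so that $|S_i| \geq (|S_{i-1}|-1)/q$. Starting from $|S_0|=q^{qm}$, a routine induction shows $|S_i| \geq q^{qm-i} - O(1)$ and in particular $S_{qm}$ is nonempty, so the procedure runs for all $qm$ stages without issue.

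Having produced the sequence $(v_i, c_i)_{i=1}^{qm}$, I apply pigeonhole once more, this time to the sequence of $qm$ color labels taking values in $[q]$: there must be a color $c^*$ and indices $i_1 < i_2 < \cdots < i_m$ with $c_{i_1} = \cdots = c_{i_m} = c^*$. I then claim that $\{v_{i_1}, \ldots, v_{i_m}\}$ forms a monochromatic $K_m$ of color $c^*$. Indeed, for any $j < k$, we have $v_{i_k} \in S_{i_k - 1} \subseteq S_{i_j}$, so the edge $(v_{i_j}, v_{i_k})$ has color $c_{i_j} = c^*$ by the defining property of the construction.

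This completes the argument, yielding $R_q(m) \leq q^{qm}$. There is no real obstacle here; the only mild point requiring care is tracking the $-1$ loss per step in the recurrence $|S_i| \geq (|S_{i-1}|-1)/q$, but since we only need $|S_{qm}| \geq 1$ and we have the generous budget $|S_0| = q^{qm}$, the telescoped bound $|S_{qm}| \geq q^{qm}/q^{qm} - \sum_{i} q^{-i} \geq 1 - 1/(q-1) > 0$ after a trivial rearrangement (or by starting from $n = q^{qm}$ with one extra unit of slack) is more than sufficient.
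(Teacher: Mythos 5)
The paper does not actually prove this statement: it is quoted verbatim from \cite{gamarnik2022algorithms} (going back to \cite{erdos1935combinatorial}), so there is no in-paper argument to compare against. Your proof is the standard greedy neighborhood-refinement argument for multicolor Ramsey numbers, and its structure is correct: iteratively pick $v_i$, keep the majority-color neighborhood $S_i$, and pigeonhole on the color labels at the end, using the nesting $v_{i_k}\in S_{i_k-1}\subseteq S_{i_j}$ to see that the selected vertices span a monochromatic clique.

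The only place that needs repair is the final counting. The recursion $|S_i|\ge(|S_{i-1}|-1)/q$ telescopes to $|S_i|\ge q^{qm-i}-\tfrac{1}{q-1}$, so your concluding display ``$|S_{qm}|\ge 1-\tfrac{1}{q-1}>0$'' is false at $q=2$ (the left-hand bound equals $0$), and the alternative patch you mention (adding ``one extra unit of slack'' to $n$) would only prove $R_q(m)\le q^{qm}+1$, not the stated bound. The fix is to notice that you never need $S_{qm}$ to be nonempty, nor even the color $c_{qm}$: for $q\ge 2$ one has $qm-1\ge q(m-1)+1$, so the pigeonhole over colors already succeeds using only $c_1,\dots,c_{qm-1}$, and the resulting clique uses only $v_1,\dots,v_{qm-1}$. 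These colors are all well defined because defining $c_i$ only requires $|S_{i-1}|\ge 2$, and $|S_{qm-2}|\ge q^2-\tfrac{1}{q-1}\ge 2$ for every $q\ge 2$. (For $q\ge 3$ your original computation is fine as written, since then $|S_{qm}|\ge 1-\tfrac{1}{q-1}>0$ forces $|S_{qm}|\ge 1$.) One further caveat concerns the statement itself rather than your proof: for $q=1$ the claimed bound $R_1(m)\le 1$ is false for $m\ge 2$, so the theorem implicitly assumes $q\ge 2$, exactly as your argument (which divides by $q-1$) does.
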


We construct a graph $G = ([T], E)$ as follows. For any $1\leq i < j \leq T$, we  put an edge $(i,j) \in E$ if and only if there exists a time $\tau \in [0,\pi/2]$ such that $N^{-1}\inp{\theta_{i}(\tau),\theta_{j}(\tau)} \in (\beta-\eta, \beta)$. We next color each $(i,j) \in E$ with one of $Q$ colors, i.e., we color the edge $(i,j)$ with the color $t$ if $\tau_t$ is the first time such that $N^{-1}\inp{\theta_{i}(\tau_t),\theta_{j}(\tau_t)} \in (\beta-\eta, \beta)$. 

\begin{prop}[Proposition 6.16 in \cite{gamarnik2022algorithms}]\label{prop:monochromatic}
    $G = ([T], E)$ defined above contains a monochromatic $m$-clique $K_m$.
\end{prop}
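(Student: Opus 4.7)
The plan is to apply Ramsey theory twice: first a two-color Ramsey argument to extract a large clique in $G$ (ignoring colors), and then a $Q$-color Ramsey argument within that clique to extract a monochromatic sub-clique of size $m$.

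First I would translate \cref{prop:combine} into a statement about the graph $G$. That proposition says that for every $A \subset [T]$ with $|A| = m$ there exist distinct $i_A, j_A \in A$ whose overlap at some time $\tau_A \in \{\tau_1, \ldots, \tau_Q\}$ lies strictly inside $(\beta - \eta, \beta)$; by construction this means $(i_A, j_A) \in E$. Equivalently, no $m$-element subset of $[T]$ is an independent set in $G$, i.e., the complement graph $\overline{G}$ contains no $K_m$.

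Next I would invoke the two-color Ramsey bound \cref{thm:ramsey1}. Setting $k = \lceil Q^{Qm} \rceil$, the bound $R(k, m) \leq \binom{k+m-2}{m-1} \leq k^m$ combined with the parameter choice $T = \exp_2(2^{4mQ \log_2 Q})$ gives $T \geq R(k,m)$ with plenty of room to spare: indeed $\log_2(k^m) = Qm^2 \log_2 Q$ while $\log_2 T = 2^{4mQ\log_2 Q}$. Thus any red/blue coloring of the edges of $K_T$ contains either a red $K_k$ or a blue $K_m$. Coloring $(i,j)$ red if $(i,j) \in E$ and blue otherwise, the absence of a blue $K_m$ forces the existence of a clique of size $k$ in $G$.

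Finally I would apply the multi-color Ramsey bound \cref{thm:ramsey2} to this clique. Since the edges of this $K_k$ are colored with one of $Q$ colors (by the first time index in $[Q]$ at which the overlap enters $(\beta - \eta, \beta)$), and $k \geq Q^{Qm} \geq R_Q(m)$, we obtain a monochromatic $K_m$ inside this clique. By construction of the coloring, this $K_m$ consists of $m$ indices $i_1, \ldots, i_m \in [T]$ together with a common time index $t \in [Q]$ such that $N^{-1}\langle \theta_{i_a}(\tau_t), \theta_{i_b}(\tau_t)\rangle \in (\beta - \eta, \beta)$ for all $1 \leq a < b \leq m$, completing the proof. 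No step here should pose a real obstacle; the only thing to verify carefully is the numerical inequality $T \geq R(R_Q(m), m)$, which follows directly from the generous choice of $T$ in \cref{eq:parameter1}.
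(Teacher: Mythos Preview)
Your proposal is correct and follows exactly the two-step strategy the paper sketches: first apply \cref{thm:ramsey1} to the edge/non-edge two-coloring of $K_T$ (using \cref{prop:combine} to rule out an independent set of size $m$) to extract a clique of size $\lceil Q^{Qm}\rceil$ in $G$, then apply \cref{thm:ramsey2} with $Q$ colors inside that clique to obtain the monochromatic $K_m$. The numerical check $T \ge R(\lceil Q^{Qm}\rceil, m)$ is exactly the point where the choice of $T$ in \cref{eq:parameter1} is used, as you note.
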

\begin{proof}
    The proof follows by first extracting a large clique based on \cref{thm:ramsey1} and further extracting a large monochromatic clique based on \cref{thm:ramsey2}.
\end{proof}

\paragraph{Complete the Proof of \cref{thm:hardness}.}
From \cref{prop:monochromatic}, we know that there exists an $m$-tuple, $1 \leq i_1<i_2<\cdots<i_m \leq T$ and a color $t \in\{1,2, \ldots, Q\}$ such that
\begin{align*}
N^{-1}\inp{\theta_{i_k}(\tau_t),\theta_{i_\ell}(\tau_t)} \in (\beta-\eta, \beta), \quad 1 \leq k<\ell \leq m.
\end{align*}
Now, set $\theta^{(k)} = \mathcal{A}^*\left(X_{i_k}\left(\tau_t\right)\right) \in \{\pm 1\}^N, 1 \leq k \leq m$. Note that we have
\begin{align*}
    X_{i_k}\left(\tau_t\right) \theta^{(k)} \geq \kappa \sqrt{N}, \quad 1 \leq k \leq m.
\end{align*}
And for $1 \leq k<\ell \leq m$,
\begin{align*}
    \beta-\eta<\frac{1}{N}\left\langle\theta^{(k)}, \theta^{(\ell)}\right\rangle<\beta.
\end{align*}
In particular, for the choice $\zeta=\left\{i_1, i_2, \ldots, i_m\right\}$ of the $m$-tuple of distinct indices, the set $\mathcal{S}_\zeta = \mathcal{S}_\kappa(\beta, \eta, m, \alpha, \mathcal{I})$ with $\mathcal{I}=\left\{\tau_i: 0 \leq i \leq Q\right\}$ is non-empty. This implies that
\begin{align*}
    \mathbb{P}\left[\exists \zeta \in[T]:|\zeta|=m, \mathcal{S}_\zeta \neq \varnothing\right] \geq \mathbb{P}[\mathcal{F}] \geq \frac{1}{3}-\exp (-\Theta(N)).
\end{align*}
On the other hand, using \cref{thm:mogp}, we have
\begin{align*}
    \mathbb{P}\left[\exists \zeta \in[T]:|\zeta|=m, \mathcal{S}_\zeta \neq \varnothing\right] \leq\binom{ T}{m} e^{-\Theta(N)}=\exp (-\Theta(N)),
\end{align*}
by taking a union bound and note that $\binom{T}{m}=O(1)$. Combining these, we therefore obtain
\begin{align*}
\exp (-\Theta(N)) \geq \frac{1}{3}-\exp (-\Theta(N))
\end{align*}
which is a contradiction for all $N$ large enough, completing the proof.
\end{proof}

\ifnum\anon=0
\section*{Acknowledgments}
We would like to thank Dylan Altschuler, Brice Huang, Mark Sellke, and Nike Sun for helpful conversations.

\fi

\newpage

\appendix
\section{An Improved Partial Coloring Lemma for Margin Zero}\label{app:improved}
In this section, we present an improved partial coloring lemma for the case $\kappa = 0$, which further refines \cref{thm:mainpartialcolor_zero} and allows us to push the algorithmic lower bound to $0.1$.
To state this improved result, we must first define two critical functions that govern the asymptotic behavior of the Lovett-Meka Edge-Walk algorithm:
\begin{defn}\label{defn:improved_Lovett_Meka}
	For scalars $\alpha > 0$, $r_0 \in [0, 1]$, and random variable $c \ge 0$, define the functions $u, v: \R_{\ge 0} \to \R_{\ge 0}$ as follows:
	\begin{equation}
		v(t) = 2 \alpha \E \left[ \Phi \left( - \frac{c}{\sqrt{t}} \right) \right], \ t \ge 0,
	\end{equation}
	and
	\begin{equation}
		u(0) = r_0, \ u'(t) = \max \left( 1 - u(t) - v(t), 0 \right), \ t \ge 0.
	\end{equation}
	Further, we let
	\begin{align}
		T_1 = \, & T_1 (\alpha, c, r_0) = \inf \left\{ t \ge 0: u(t) = 1 - v(t) \right\}, \\
		T_2 = \, & T_2 (\alpha, c, r_0) = \sup \left\{ t \le T_1: u(t) + \int_{t}^{T_1} \left( 1 - u(t) - v(s) \right) \d s \ge 1 \right\}.
	\end{align}
	We assume 
	\begin{equation*}
		u(0) + \int_{0}^{T_1} \left( 1 - u(0) - v(s) \right) \d s = r_0 + \int_{0}^{T_1} \left( 1 - r_0 - v(s) \right) \d s \ge 1,
	\end{equation*}
	so that $T_2$ is well-defined.
	Finally, we define
	\begin{align}\label{eq:defp0}
            p_0 = p_0 (\alpha, c, r_0) = \, & \frac{u(T_1) - u(T_2)}{1 - u(T_2)}, \\ \label{eq:defp1} 
		p_1 = p_1 (\alpha, c, r_0) = \, & u(T_2).
	\end{align}
\end{defn}

\begin{thm}\label{thm:improved_Lovett_Meka}
Let $X_1,\ldots,X_M \in \R^N$ be vectors where $M/N \to \alpha > 0$, and $\theta_0 \in [-1,1]^N$ be the starting point of the Lovett-Meka algorithm, satisfying $\norm{\theta_0}_2^2 \ge r_0 N$ for some $r_0 \in [0, 1]$.
Let $c_1,\ldots,c_M \ge 0$ be threshold parameters such that 
\begin{equation}
	\frac{1}{M} \sum_{i=1}^{M} \delta_{c_i} \stackrel{w}{\Rightarrow} \operatorname{Law} (c), \quad M \to \infty,
\end{equation}
where $c \ge 0$ is a random variable.
Let $T_1 = T_1 (\alpha, c, r_0)$, $p_0 = p_0 (\alpha, c, r_0)$, and $p_1 = p_1 (\alpha, c, r_0)$ be as defined in \cref{defn:improved_Lovett_Meka}.
Let $\delta>0$ be a small approximation parameter and $\gamma > 0$ be the step size, which satisfy the same requirements as in \cref{thm:mainpartialcolor}.
Then, for any $\veps > 0$, there exists some $\eta > 0$ (both $\veps$ and $\eta$ do not depend on $M, N$), such that with probability at least $\eta$, running the Lovett-Meka Edge-Walk algorithm for $T = T_1/\gamma^2$ steps finds a point $\theta \in [-1,1]^N$ such that
\begin{enumerate}
\item[(i)] The margin constraints are satisfied: $\<\theta-\theta_0, X_j\> \ge -c_j \|X_j\|_2$ for all $j \in [M]$.
\item[(ii)] The number of nearly tight variable constraints is at least $(p_1 - \veps) N$: $|\theta_i| \ge 1-\delta$ for at least $(p_1 - \veps) N$ indices $i \in [N]$.
\item[(iii)] The normalized length of the initialization for next round is at least $p_0$:
	\begin{equation}
		\frac{\norm{\theta}_2^2 - (p_1 - \veps) N}{N - (p_1 - \veps) N} \ge p_0.
	\end{equation}
\end{enumerate}
Moreover, the algorithm runs in time $O((M+N)^3 \cdot \delta^{-2} \cdot \log(NM/\delta))$.
\end{thm}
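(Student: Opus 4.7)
The proof adapts the martingale analysis underlying the standard Lovett--Meka Edge-Walk to track two coupled empirical processes, $v_N(\tau):=|\cC^{\mathrm{disc}}_{\lfloor\tau/\gamma^2\rfloor}|/N$ and $u_N(\tau):=\|\theta_{\lfloor\tau/\gamma^2\rfloor}\|_2^2/N$, and show that they concentrate around the deterministic trajectories $v(\tau)$ and $u(\tau)$ from \cref{defn:improved_Lovett_Meka} uniformly on $[0,T_1]$. Conditions (ii) and (iii) then follow from an endpoint analysis at $\tau=T_1$, exploiting the specific choice of the intermediate time $T_2$.

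For the concentration of $v_N$, for each $j\in[M]$ the scaled martingale $\langle X_j/\|X_j\|_2,\theta_t-\theta_0\rangle$ has conditional Gaussian increments of variance at most $\gamma^2$, so by the Brownian time-change of \cref{lem:gaussiantail} one has $\P[j\in\cC^{\mathrm{disc}}_t]\le 2\Phi(-(c_j-\delta)/\sqrt{\gamma^2 t})$; summing over $j$, using the weak convergence of $\frac1M\sum_j\delta_{c_j}$ to $\mathrm{Law}(c)$, and applying Azuma--Hoeffding to the count yields $\sup_{\tau\le T_1}|v_N(\tau)-v(\tau)|\to 0$ in probability. For $u_N$, the identity $\E[\|\theta_{t+1}\|_2^2-\|\theta_t\|_2^2\mid\cF_t]=\gamma^2\dim(\cV_t)$ together with $\dim(\cV_t)=N-|\cC^{\mathrm{var}}_t|-|\cC^{\mathrm{disc}}_t|$ and the inequality $|\cC^{\mathrm{var}}_t|/N\le u_N(\tau)/(1-\delta)^2$ (any tight coordinate contributes at least $(1-\delta)^2$ to the squared norm) gives the drift lower bound
\[
\E[u_N(\tau+\gamma^2)-u_N(\tau)\mid\cF_\tau]\ge\gamma^2\bigl(1-u_N(\tau)-v_N(\tau)-O(\delta)\bigr).
\]
The per-step increments are $O(\gamma^2/N)$, so an Azuma--Hoeffding estimate and a discrete Gronwall argument yield $\inf_{\tau\le T_1}(u_N(\tau)-u(\tau))\ge-\veps$ with high probability, and the trivial cap $u_N(\tau)\le 1$ prevents overshoot near $T_1$, giving $u_N(T_1)=u(T_1)+o_N(1)$.

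For the endpoint bounds, I would exploit the monotonicity of $\dim(\cV_t)$ in $t$: $\dim(\cV_{T_1/\gamma^2})\le (T_1/\gamma^2)^{-1}\sum_{s\le T_1/\gamma^2}\dim(\cV_s)=(u_N(T_1)-r_0)N/T_1\le (u(T_1)-r_0)N/T_1+o(N)$. The key arithmetic inequality $T_1\bigl(u(T_1)-u(T_2)\bigr)\ge u(T_1)-r_0$, derived by unwinding the defining equation $u(T_2)+\int_{T_2}^{T_1}(1-u(T_2)-v(s))\,ds=1$ together with the monotonicity of $u$ and $v$, then implies
\[
\tfrac{|\cC^{\mathrm{var}}_{T_1/\gamma^2}|}{N}=1-v_N(T_1)-\tfrac{\dim(\cV_{T_1/\gamma^2})}{N}\ge 1-v(T_1)-\bigl(u(T_1)-u(T_2)\bigr)-O(\veps)=p_1-O(\veps),
\]
which is condition (ii). Condition (iii) follows immediately by dividing the bound $u_N(T_1)\ge u(T_1)-\veps=p_1+p_0(1-p_1)-\veps$ by the number of non-tight coordinates $(1-p_1+\veps)N$. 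Feasibility (i) is a consequence of the Edge-Walk construction and holds with high probability, as in the proof of \cref{thm:mainpartialcolor_zero}, and all the high-probability events intersect with constant probability $\eta>0$.

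The main obstacle is the conversion of the ODE's aggregate-length control into the coordinate-level count in (ii). The naive estimate $|\cC^{\mathrm{var}}|/N\ge(u_N-(1-\delta)^2)/(1-(1-\delta)^2)$ is far too weak whenever $u(T_2)$ is not close to $1$, and the monotonicity-plus-$T_2$ route above hinges on the arithmetic inequality $T_1(u(T_1)-u(T_2))\ge u(T_1)-r_0$ extracted from the defining ODE. Establishing this inequality rigorously from the somewhat unusual definition of $T_2$, and propagating the $O(\veps)$ and $O(\delta)$ slacks consistently through all three conclusions, constitutes the most delicate component of the argument.
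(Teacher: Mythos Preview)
Your high-level plan---track the normalized squared length and the discrepancy-tight fraction along the walk, compare to the ODE system of \cref{defn:improved_Lovett_Meka}, then read off (ii) and (iii) at the endpoint---matches the paper's approach. The gap is exactly where you flagged it: your ``arithmetic inequality'' $T_1\bigl(u(T_1)-u(T_2)\bigr)\ge u(T_1)-r_0$ is in fact \emph{false}, so the averaging route to (ii) cannot reach $p_1=u(T_2)$.

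The reason is that $u'=1-u-v$ is decreasing (both $u$ and $v$ are increasing), so $u$ is concave; hence the average slope on the later interval $[T_2,T_1]$ is \emph{smaller} than on $[0,T_1]$, which is the wrong direction. For a quick sanity check take $r_0=0$ and $v(t)=0.1\,t$: then $T_1\approx 2.40$, $u(T_1)\approx 0.76$, $T_2\approx 0.85$, $u(T_2)\approx 0.55$, and $T_1(u(T_1)-u(T_2))\approx 0.52<0.76=u(T_1)-r_0$. Your bound $\dim(\cV_T)/N\le (u(T_1)-r_0)/T_1\approx 0.32$ then only gives $|\cC^{\mathrm{var}}_T|/N\gtrsim 0.44$, genuinely short of $u(T_2)\approx 0.55$. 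Averaging over $[T_2,T_1]$ instead does not rescue the argument either, since one then needs an \emph{upper} bound on $u_N(T_1)-u_N(T_2)$ and only $u_N\ge u$ is available.

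The paper's resolution bypasses $\dim(\cV_T)$ and works in expectation: set $f_N(\tau)=\E\|\theta_{\lfloor\tau/\gamma^2\rfloor}\|_2^2/N$ and $g_N(\tau)=\E|\cC^{\mathrm{var}}_{\lfloor\tau/\gamma^2\rfloor}|/N$, and pass to subsequential limits $(f,g)$ via Helly's selection theorem. The drift inequality $f(x)\ge f(y)+\int_y^x(1-g(s)-v(s))\,ds$ is then used over $[T_2,T_1]$ together with the monotonicity of $g$ (not of $\dim(\cV)$) and the hard cap $f\le 1$:
\[
1\ \ge\ f(T_1)\ \ge\ f(T_2)+\int_{T_2}^{T_1}\bigl(1-g(T_1)-v(s)\bigr)\,ds\ \ge\ u(T_2)+\int_{T_2}^{T_1}\bigl(1-g(T_1)-v(s)\bigr)\,ds,
\]
where $f(T_2)\ge u(T_2)$ comes from the ODE comparison (your $g\le f$ gives $f\ge u$). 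Comparing this with the defining identity $1=u(T_2)+\int_{T_2}^{T_1}(1-u(T_2)-v(s))\,ds$ forces $g(T_1)\ge u(T_2)=p_1$, which is (ii); and $f(T_1)\ge u(T_1)$ handles (iii). So the ``delicate component'' is resolved not by an arithmetic inequality but by playing the upper bound $f\le 1$ against the definition of $T_2$.

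A smaller point: your claim that Azuma--Hoeffding gives concentration of $v_N$ around $v$ is not justified, since the indicators $\{j\in\cC^{\mathrm{disc}}_t\}_{j\in[M]}$ are correlated through the shared walk. The paper only ever uses the expectation bound $\E|\cC^{\mathrm{disc}}_t|\le N\,v(\gamma^2 t)+o(N)$ (your \cref{lem:gaussiantail} step), which is all the drift analysis needs.
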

\begin{proof}
    Similar to the proof of \cref{thm:mainpartialcolor_zero}, we set $\theta = \theta_T$, the output of the Edge-Walk algorithm. Recalling the definition of $T_1$, $p_0$ and $p_1$ from \cref{defn:improved_Lovett_Meka}, it suffices to show that for any $\veps > 0$:
    \begin{enumerate}
        \item [(a)] $\E \left[ \left\vert \cC^{\mathrm{var}}_T \right\vert \right] \ge (p_1 - \veps ) N = (u(T_2) - \veps) N$.
        \item [(b)] $\norm{\theta_T}_2^2 \ge (p_1 + p_0 - p_1 p_0 - \veps) N = (u(T_1) - \veps) N$.
    \end{enumerate}
    To this end, we follow the proof strategy of \cite[Claim 16]{lovett2015constructive}. Note that for all $t \in [T]$:
    \begin{align}
        \E \left[ \norm{\theta_t}_2^2 \right] = \, & \E \left[ \norm{\theta_{t-1}}_2^2 \right] + \gamma^2 \E \left[ \dim(\cV_t) \right] \\
        = \, & \E \left[ \norm{\theta_{t-1}}_2^2 \right] + \gamma^2 \left( N - \E \left[ \left\vert \cC^{\mathrm{var}}_t \right\vert \right] - \E \left[ \left\vert \cC^{\mathrm{disc}}_t \right\vert \right] \right).
    \end{align}
    Using a similar argument as that in the proof of \cref{thm:mainpartialcolor_zero}, we know that
    \begin{equation}
        \E \left[ \left\vert \cC^{\mathrm{disc}}_t \right\vert \right] \le \, 2 \sum_{j=1}^{M} \Phi \left( - \frac{c_j - \delta}{\gamma \sqrt{t}} \right) = 2 \sum_{j=1}^{M} \Phi \left( -\frac{c_j - \delta}{ \sqrt{t T_1 / T}} \right),
    \end{equation}
    which further implies
    \begin{equation}
        \E \left[ \norm{\theta_t}_2^2 \right] \ge \, \E \left[ \norm{\theta_{t-1}}_2^2 \right] + \gamma^2 \left( N - \E \left[ \left\vert \cC^{\mathrm{var}}_t \right\vert \right] - 2 \sum_{j=1}^{M} \Phi \left( - \frac{c_j - \delta}{ \sqrt{t T_1 / T}} \right) \right).
    \end{equation}
    For any $x \in [0, T_1]$, we define
    \begin{equation}
        f_N (x) = \frac{1}{N} \E \left[ \norm{\theta_{[xT/T_1]}}_2^2 \right], \quad g_N (x) = \frac{1}{N} \E \left[ \left\vert \cC^{\mathrm{var}}_{[xT/T_1]} \right\vert \right].
    \end{equation}
    Then, we know that $f_N (x) \ge g_N (x)$ always, and
    \begin{equation}\label{eq:difference_ineq}
        f_N \left( \frac{tT_1}{T} \right) \ge f_N \left( \frac{(t-1)T_1}{T} \right) + \frac{T_1}{T} \left( 1 - g_N \left( \frac{tT_1}{T} \right) - \frac{2}{N} \sum_{j=1}^{M} \Phi \left( -\frac{c_j - \delta}{ \sqrt{t T_1 / T}} \right) \right)
    \end{equation}
    for all $t \in [T]$. Then, in order to prove claims (a)-(b), it suffices to show that $\forall \veps > 0$, $g_N (T_1) \ge u(T_2) - \veps$ and $f_N (T_1) \ge u(T_1) - \veps$ for large enough $M, N$.

    We will prove this statement by contradiction. Assume that there exists a subsequence $\{ N_k \}$, such that for each $k$, we have either $g_{N_k} (T_1) < u(T_2) - \veps$ or $f_{N_k} (T_1) < u(T_1) - \veps$. Since each $f_{N_k}$ and $g_{N_k}$ is non-decreasing and takes values in $[0, 1]$, by Helly's selection theorem, we know that there exists a further subsequence (still denoted as $\{ N_k \}$), along which $f_{N_k} \to f$, $g_{N_k} \to g$. Consequently, we have $f \ge g$, and taking the limit of \cref{eq:difference_ineq} (using Bounded Convergence Theorem) yields
    \begin{equation}\label{eq:integral_ineq}
        f(x) \ge f(y) + \int_{y}^{x} \left( 1 - g(s) - v(s) \right) \d s, \quad \forall 0 \le y \le x \le T_1,
    \end{equation}
    where we recall $v(s) = 2 \alpha \E [\Phi(-c / \sqrt{s})]$ from \cref{defn:improved_Lovett_Meka}. Further, we know that either $f(T_1) < u(T_1)$ or $g(T_1) < u(T_2)$. In what follows, we will show that $f(T_1) \ge u(T_1)$ and $g(T_1) \ge u(T_2)$, thus yielding a contradiction and consequently proving our claims (a) and (b).
    
    We first prove $f(T_1) \ge u(T_1)$. In fact, we will show that $f(x) \ge u(x)$ for all $x \in [0, T_1]$. By our assumption, we have $f(0) \ge r_0 = u(0)$. Further, since $f \ge g$, $f$ is non-decreasing, it follows that
    \begin{align}
        f(x) \ge \, & f(0) + \int_{0}^{x} \left( 1 - g(s) - v(s) \right) \d s \\
        \ge \, & f(0) + \int_{0}^{x} \left( 1 - f(s) - v(s) \right) \d s \\
        \ge \, & f(0) + \int_{0}^{x} \left( 1 - f(s) - v(s) \right)_+ \d s, \quad \forall x \in [0, T_1].
    \end{align}
    Applying the ODE comparison theorem yields that $f(x) \ge u (x)$ for all $x \in [0, T_1]$. Next, we show that $g(T_1) \ge u(T_2)$. Since $g$ is also non-decreasing, we have $g(x) \le g(T_1)$ for all $x \in [0, T_1]$. Therefore, \cref{eq:integral_ineq} implies
    \begin{align}
        1 \ge f(T_1) \ge \, & f(T_2) + \int_{T_2}^{T_1} \left( 1 - g(s) - v(s) \right) \d s \\
        \ge \, & f(T_2) + \int_{T_2}^{T_1} \left( 1 - g(T_1) - v(s) \right) \d s \\
        \ge \, & u(T_2) + \int_{T_2}^{T_1} \left( 1 - g(T_1) - v(s) \right) \d s.
    \end{align}
    By definition of $T_2$, we have
    \begin{equation}
        1 \le \, u(T_2) + \int_{T_2}^{T_1} \left( 1 - u(T_2) - v(s) \right) \d s.
    \end{equation}
    Comparing the above two inequalities immediately implies that $g(T_1) \ge u(T_2)$. This completes the proof of \cref{thm:improved_Lovett_Meka}.
\end{proof}

\paragraph{Computing the Effective $\alpha_0$ and $r_0$ for the Beginning of the Second Stage.} We denote by $\kappa_0 > 0$ the initial margin for the linear programming algorithm in the first stage, and $\alpha_0$ the effective aspect ratio for the second stage. Then, $\alpha_0$ and $r_0$ can be computed as follows:
\begin{enumerate}
	\item Norm of the output vector $\rho$ is the unique solution to
		\begin{equation}\label{eq:solve_rho}
			\alpha \Phi \left( \frac{\kappa_0}{\rho} \right) = \frac{\varphi(\rho) \varphi'(\rho)}{\rho}.	
		\end{equation}
	\item Proportion of tight variable constraints $r = 2 \Phi(-t)$, where $t$ solves the equation
		\begin{equation}\label{eq:solve_t}
			\rho^2 = \E \left[ \min \left( \frac{\vert G \vert}{t}, 1 \right)^2 \right].
		\end{equation}
	\item Effective aspect ratio for the next stage of iterative Edge-Walk: $\alpha_0 = \alpha / (1 - r)$.
	\item Squared norm of initialization of the second stage: $r_0 = (\rho^2 - r)/(1 - r)$.
\end{enumerate}
\paragraph{Choice of Parameters for $\alpha = 0.1$.} 
Using a computer, we have identified the following parameters which allow us to achieve $\alpha = 0.1$.
\begin{align*}
    &\kappa_0 = 2.31, \; \alpha_0 \approx 1.008960 > 1, \; r_0 \approx 0.332645 > 0.33,\\
    & c^{(1)} = 2.00, \; c^{(2)} = 2.60, \; c^{(3)} = 2.60, \; c^{(4)} = 3.10, \; c^{(5)} = 3.20, \; c^{(6)} = 3.50, \; c^{(7)} = 3.70, \; \\
    &c^{(8)} = 3.90, \; c^{(9)} = 4.20, \; c^{(10)} = 4.80, \; c^{(11)} = 5.00, \; c^{(12)} = 5.00, \; c^{(13)} = 6.00, \; c^{(14)} = 6.00, \; \\
    &c^{(15)} = 7.00, \; c^{(16)} = 7.00, \; c^{(17)} = 8.00, \; c^{(18)} = 8.00, \; c^{(19)} = 9.00, \; c^{(20)} = 9.00,\\
    &p_1^{(1)} = 0.34, \; p_1^{(2)} = 0.31, \; p_1^{(3)} = 0.41, \; p_1^{(4)} = 0.35, \; p_1^{(5)} = 0.40, \; p_1^{(6)} = 0.39, \; p_1^{(7)} = 0.40, \; \\
    &p_1^{(8)} = 0.40, \; p_1^{(9)} = 0.39, \; p_1^{(10)} = 0.32, \; p_1^{(11)} = 0.34, \; p_1^{(12)} = 0.36, \; p_1^{(13)} = 0.27, \; p_1^{(14)} = 0.30, \; \\
    &p_1^{(15)} = 0.22, \; p_1^{(16)} = 0.25, \; p_1^{(17)} = 0.19, \; p_1^{(18)} = 0.22, \; p_1^{(19)} = 0.17, \; p_1^{(20)} = 0.19,\\
    &p_0^{(1)} = 0.50, \; p_0^{(2)} = 0.55, \; p_0^{(3)} = 0.58, \; p_0^{(4)} = 0.58, \; p_0^{(5)} = 0.59, \; p_0^{(6)} = 0.60, \; p_0^{(7)} = 0.60, \; \\
    &p_0^{(8)} = 0.60, \; p_0^{(9)} = 0.61, \; p_0^{(10)} = 0.60, \; p_0^{(11)} = 0.61, \; p_0^{(12)} = 0.61, \; p_0^{(13)} = 0.61, \; p_0^{(14)} = 0.61, \; \\
    &p_0^{(15)} = 0.61, \; p_0^{(16)} = 0.61, \; p_0^{(17)} = 0.61, \; p_0^{(18)} = 0.61, \; p_0^{(19)} = 0.61, \; p_0^{(20)} = 0.61.
\end{align*}

\begin{thm}\label{thm:fullcoloring02}
    For $\kappa = 0$, our algorithm finds a solution when $\alpha \leq 0.1$ with high probability.
\end{thm}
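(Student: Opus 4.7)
The plan is to follow the same two-stage template as in the proof of \cref{thm:fullcoloring0}, but replacing the classical partial coloring step with the sharper iterative application of \cref{thm:improved_Lovett_Meka}, driven by the precomputed schedule of $(c^{(k)}, p_1^{(k)}, p_0^{(k)})$ listed in the paragraph preceding the theorem. By monotonicity it suffices to treat $\alpha = 0.1$. First, I would run the linear program \eqref{eq:LP_def} with $\kappa_0 = 2.31$: a direct check gives $\alpha < (2/\pi)\E[(\kappa_0 - G)_+^2]^{-1}$, so \cref{eq:abs_bound_alpha} holds by \cref{lem:feasibility_LP}(c), and \cref{thm:LP_behavior} applies. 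Solving \cref{eq:solve_rho,eq:solve_t} numerically for this $(\alpha,\kappa_0)$ yields the effective quantities $\alpha_0 \approx 1.008960$ and $r_0 \approx 0.332645$ reported at the end of \cref{app:improved}, where $\alpha_0 = \alpha/(1-r)$ is the density restricted to the unfixed coordinates and $r_0$ is the normalized squared length of $\hat\theta$ on those coordinates.

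Second, on the unfixed block I would iterate \cref{thm:improved_Lovett_Meka} for $k = 1,\ldots,20$ using slack vectors of the form
\[
c_j^{(k)} \;=\; c^{(k)} \cdot \frac{\sqrt{N_{k-1}}}{\|(X_j)_{I_{k-1}}\|_2},
\]
so that by \cref{lem:normbound} the empirical distribution of $\{c_j^{(k)}\}_{j \in [M]}$ converges in $W_2$ to $\delta_{c^{(k)}}$, the input ``constant'' random variable required by \cref{defn:improved_Lovett_Meka}. The state at the start of round $k$ is encoded by $(\alpha^{(k)}, r^{(k)})$, with $\alpha^{(1)} = \alpha_0$, $r^{(1)} = r_0$, and the update rule $\alpha^{(k+1)} = \alpha^{(k)}/(1-p_1^{(k)})$, $r^{(k+1)} = p_0^{(k)}$. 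The numerical check then amounts to verifying that integrating the ODE $u'(t) = (1-u(t)-v(t))_+$ of \cref{defn:improved_Lovett_Meka} with these inputs produces outputs $(p_1(\alpha^{(k)}, c^{(k)}, r^{(k)}), p_0(\alpha^{(k)}, c^{(k)}, r^{(k)}))$ that match (or dominate) the tabulated $(p_1^{(k)}, p_0^{(k)})$; this is a finite, deterministic computation.

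Third, once the unfixed set has shrunk to a sublinear fraction (after the 20 tabulated rounds), the $W_2$-based asymptotics behind \cref{thm:improved_Lovett_Meka} no longer apply, and I would instead switch to the original worst-case partial coloring bound of \cref{thm:mainpartialcolor_zero} with the tail schedule $c^{(k)} = k^2/20$ used in \cref{thm:fullcoloring0}, together with the uniform norm bound from \cref{lem:LMboundall}. The margin budget must be verified: combining the contributions from the 20 refined rounds and the tail, I would show that the total drop in $\langle X_j,\theta\rangle/\sqrt N$ between $\hat\theta^{(0)}$ and $\hat\theta^{(K)}$ is strictly smaller than $\kappa_0$, so every row retains positive margin. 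Finally, randomized rounding adds only an $o(\sqrt{N})$ perturbation to each inner product, as in \eqref{eq:randomsignrounding}.

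The main obstacle will not be the LP stage or the asymptotic formula itself, but carefully splicing the sharp ODE-based guarantee (valid when the remaining block has linear size) with the crude worst-case guarantee (needed in the sublinear regime), while still respecting the margin budget. Since $p_0^{(k)}$ is bounded away from $1$ at each of the 20 rounds, a constant fraction of the unfixed coordinates becomes tight each round and $N_k$ decays geometrically; after approximately $20$ iterations $N_k/N$ is already of order $10^{-2}$, and the tail contribution $\sum_{k>20} c^{(k)} \sqrt{N_{k-1}/N}$ is summable and negligible compared to the slack $\kappa_0 - 3.39$-type budget exhibited in \cref{thm:fullcoloring0}. Given the explicit numerics for $\kappa_0 = 2.31$ and the tabulated $c^{(k)}$, this last check is again a finite computation.
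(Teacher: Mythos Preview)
Your proposal is correct and follows essentially the same approach as the paper's proof: LP with $\kappa_0 = 2.31$, twenty rounds of \cref{thm:improved_Lovett_Meka} driven by the tabulated $(c^{(k)}, p_1^{(k)}, p_0^{(k)})$, then \cref{thm:mainpartialcolor_zero} with $c^{(k)} = k^2/20$ for the tail, a margin-budget check (the paper obtains total drop $< 1.96 + 0.01 < 2 < \kappa_0$), and randomized rounding. One small slip: after twenty rounds the unfixed block is still a tiny \emph{linear} fraction of $N$, not sublinear, so the switch to the cruder lemma is made simply because the tabulated schedule ends there and \cref{thm:mainpartialcolor_zero} already suffices, not because the $W_2$ asymptotics break down.
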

\begin{proof}[Proof of \cref{thm:fullcoloring02}]
    By monotonicity, we restrict our attention to the case when $\alpha = 0.1$. The proof is rather similar to that of \cref{thm:fullcoloring0}, so we omit some unnecessary details. We choose our parameters as above. We check that $0<0.1<\frac{2}{\pi} \mathbb{E}\left[(\kappa_0-G)_{+}^2\right]^{-1}$. Therefore, by \cref{lem:feasibility_LP}, \cref{eq:abs_bound_alpha} is fulfilled. Next, we use \cref{def:LP_auxiliary} and \cref{lem:property_maxmin} to solve for $t$ and $\rho$ and compute that the right hand side of \cref{eq:tight_fraction} satisfies $2\Phi(-t(\alpha,\kappa_0)) > 0.9$. Define $R^{(0)}_{\text{tight}} = 0.9$. 

    We next apply the partial coloring lemma \cref{thm:improved_Lovett_Meka} for $k\leq 20$. More specifically, for round $k$, we choose 
    \begin{align*}
        c_j^{(k)} = c^{(k)} \frac{\sqrt{|I_{k-1}|}}{ \| (X_j)_{I_{k-1}} \|_2},
    \end{align*}
    and we choose $p_1^{(k)}$ and $p_0^{(k)}$ as above. For each round $k \leq 20$, by \cref{lem:normbound}, we have
    \begin{align*}
        \frac{1}{M} \sum_{i=1}^{M} \delta_{c_i^{(k)}} \stackrel{w}{\Rightarrow} \delta_{c^{(k)}}, \quad M \to \infty.
    \end{align*}
    Following the definitions in \cref{eq:defp0} and \cref{eq:defp1}, we also check that 
    \begin{align*}
        p_1^{(k)}>p_1 \left (\frac{1}{\prod_{i=1}^{k-1} (1-p_1^{(i)}) }  , c^{(k)}, p_0^{(k-1)} \right), \quad p_0^{(k)}<p_0 \left (\frac{1}{\prod_{i=1}^{k-1} (1-p_1^{(i)}) }  , c^{(k)}, p_0^{(k-1)} \right),
    \end{align*}
    where we identify $p_0^{(0)} = r_0$. We check inductively that this would imply that the conditions of \cref{thm:improved_Lovett_Meka} holds. Indeed, $\alpha = 0.1$, and at each round $k$, the number of not nearly tight variable constraints is bounded by $(1-R^{(0)}_{\text{tight}}) \prod_{i=1}^{k-1} (1-p_1^{(i)}) N = 0.1 \prod_{i=1}^{k-1} (1-p_1^{(i)}) N$. Therefore, we can apply \cref{thm:fullcoloring02} for $k\leq 20$. Now for $k >20$, we apply the original partial coloring lemma \cref{thm:mainpartialcolor_zero}, and follow the same proof as \cref{thm:fullcoloring0} (we use the same choice of $c^{(k)}$ and check the conditions hold). All together,
    \begin{align*}
        & \frac{1}{\sqrt{N}}|\<\hat \theta^{(K)}-\hat \theta^{(0)},X_j\>|
         \leq \frac{1}{\sqrt{N}}\sum_{k = 1}^\infty c_j^{(k)} \|(X_j)_{I_{k-1}}\|_2 \\
        &< \sum_{k = 1}^{20} c^{(k)} \sqrt{(1-R^{(0)}_{\text{tight}}) \prod_{i=1}^{k-1} (1-p_1^{(i)})} + \sum_{k = 21}^\infty c^{(k)} \sqrt{(1-R^{(0)}_{\text{tight}}) (1-1/K_3)^{k-21} \prod_{i=1}^{20} (1-p_1^{(i)})}\\
        &<1.96+0.01  <2.
    \end{align*}
    This means that before the Edge Walk steps, all rows have a margin $\frac{1}{\sqrt{N}}\<\hat \theta^{(0)}, X_j\> \geq \kappa_0 = 2.31$, and after iterations of the Edge Walk algorithms, each inner product decreases by at most $2$. Therefore, every inner product remains above $0.31$ after the iterations. Finally, we study the randomized sign rounding step. As $\delta = o(1/\log N)$, by \cref{eq:randomsignrounding} we have that the change in the inner product $\<\chi,X_j\> - \<\hat{\theta}^{(K)},X_j\> \geq -o(\sqrt{N})$. Thus, we have that $\<\chi,X_j\> > 0$ for all $j \in [M]$, which finishes the proof.
\end{proof}

\newpage

\bibliographystyle{alpha}
\bibliography{ising_perc}

\end{document}